\documentclass{llncs}

\pdfoutput=1

\usepackage{xspace}
\usepackage{xcolor}
\usepackage{gensymb}
\usepackage[english]{babel}
\usepackage{graphicx}
\usepackage{amssymb}
\usepackage{enumerate}
\usepackage{amsmath}
\usepackage{fullpage}

\newcommand{\constant}{10}

\newtheorem{claimx}{Claim}


\renewenvironment{proof}
{{\bf Proof:}}{\hspace*{\fill}$\Box$\par\vspace{2mm}}

\begin{document}
\title{Planar Rectilinear Drawings of Outerplanar Graphs\\ in Linear Time\thanks{Partially supported by MIUR Project ``AHeAD'' under PRIN 20174LF3T8, by H2020-MSCA-RISE Proj.\ ``CONNECT'' n$^\circ$ 734922, and by Roma Tre University Azione 4 Project ``GeoView''.}}
\author{Fabrizio Frati}
\institute{Roma Tre University, Italy -- \email{frati@dia.uniroma3.it}\\
}
\maketitle

\begin{abstract}
We show how to test in linear time whether an outerplanar graph admits a planar rectilinear drawing, both if the graph has a prescribed plane embedding that the drawing has to respect and if it does not. Our algorithm returns a planar rectilinear drawing if \mbox{the graph admits one}. 
\end{abstract}

\section{Introduction}\label{le:intro}

The problem of constructing planar orthogonal graph drawings with a minimum number of bends has been studied for decades. In 1987, Tamassia~\cite{t-eggmnb-87} proved that, for a planar graph with a prescribed plane embedding, a planar orthogonal drawing that has the minimum number of bends and that respects the prescribed plane embedding can be constructed in polynomial time, thereby answering a question of Storer~\cite{st-ncmegpg-80} and establishing a result that lies at the very foundations of the graph drawing research area. The running time of Tamassia's algorithm is $O(n^2 \log n)$ for a graph with $n$ vertices. This bound has been improved to $O(n^{7/4}\sqrt{\log n})$ by Garg and Tamassia~\cite{DBLP:conf/gd/GargT96a} and then to $O(n^{3/2})$ by Cornelsen and Karrenbauer~\cite{ck-abm-12}. However, achieving a linear running time is still an elusive goal. 

Bend minimization in the variable embedding setting is a much harder problem; indeed, Garg and Tamassia~\cite{gt-ccurpt-01} proved that testing whether a graph admits a planar orthogonal drawing with zero bends is NP-hard. However, there are some natural restrictions on the input that make the problem tractable. A successful story is the one about degree-$3$ planar graphs. In~1998, Di Battista et al.~\cite{dlv-sood-98} proved that a planar orthogonal drawing with the minimum number of bends for an $n$-vertex planar graph with maximum degree $3$ can be constructed in $O(n^5\log n)$ time. After some improvements~\cite{cy-bmod-17,dlp-bmodqt-18}, a recent breakthrough result by Didimo et al.~\cite{dlop-ood-20} has shown that $O(n)$ time is indeed sufficient. Di Battista et al.~\cite{dlv-sood-98} also presented an $O(n^4)$-time algorithm for minimizing the number of bends in a planar orthogonal drawing of an $n$-vertex $2$-connected series-parallel graph. This result was first extended to not necessarily $2$-connected series-parallel graphs by Bl\"asius et al.~\cite{blr-odie-16} and then improved to an $O(n^3 \log n)$ running time by Di Giacomo et al.~\cite{glm-sr-19}.

Evidence has shown that the bend-minimization problem is not much easier if one is only interested in the construction of planar orthogonal drawings with zero bends; these are also called \emph{planar rectilinear drawings} (see Figures~\ref{fig:introduction}(a) and~\ref{fig:introduction}(b) for two such drawings). Namely, the cited NP-hardness proof of Garg and Tamassia~\cite{gt-ccurpt-01} is designed for planar rectilinear drawings. Further, almost every efficient algorithm for testing the existence of planar rectilinear drawings has been eventually subsumed by an algorithm in the more general bend-minimization scenario. This was indeed the case for degree-$3$ series-parallel graphs with a variable embedding (see~\cite{ren-nb-06} and~\cite{nz-odspg-08}), for degree-$3$ planar graphs with a fixed embedding (see~\cite{rnn-odpwb-03} and~\cite{rn-bmod-02}), and for degree-$3$ planar graphs with a variable embedding (see~\cite{hr-nbocd-19} and~\cite{dlop-ood-20}). A notable exception is that of planar graphs with a fixed embedding, for which the fastest known algorithm for the bend-minimization problem runs in $O(n^{3/2})$ time (this is the already cited result by Cornelsen and Karrenbauer~\cite{ck-abm-12}), while the fastest known algorithm for the rectilinear-planarity testing problem runs in $O(n \log^3 n)$ time; the latter algorithm can be obtained by applying recent results of Borradaile et al.~\cite{bkmnw-msms-17} in the Tamassia's flow network formulation of the bend-minimization problem~\cite{t-eggmnb-87}. 

\begin{figure}[tb]\tabcolsep=4pt
	\centering
	\begin{tabular}{c c c}
		\includegraphics[scale=0.7]{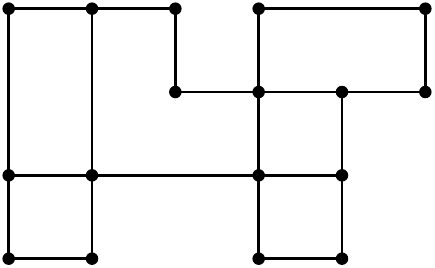} \hspace{3mm} &
		\includegraphics[scale=0.7]{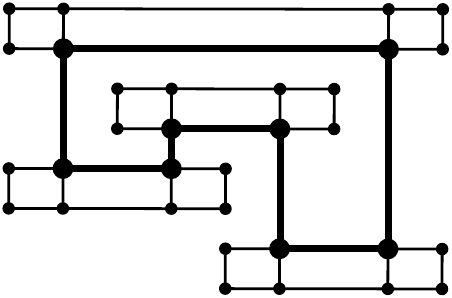} \hspace{3mm} &
		\includegraphics[scale=0.7]{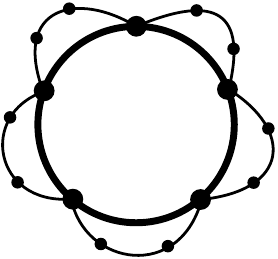} \\
		(a) \hspace{3mm} & (b) \hspace{3mm} & (c)\\
	\end{tabular}
	\caption{(a) An outerplanar rectilinear drawing. (b) A planar rectilinear drawing of an outerplanar graph $G$. The graph $G$ has no outerplanar rectilinear drawing. Namely, if all the edges incident to the vertices of the $8$-cycle $\mathcal C$, shown by thick lines and large disks, would lie outside the polygon $\mathcal P$ representing $\mathcal C$, then the sum of the angles in the interior of $\mathcal P$ would be $8 \cdot 90\degree=720\degree$, while it has to be $(8-2)\cdot 180\degree=1080\degree$. (c) An outerplanar graph $G$ that has no $3$-cycle and no planar rectilinear drawing. Namely, let $\mathcal C$ be the $5$-cycle of $G$. In any planar rectilinear drawing of $G$, the sum of the angles in the interior of the polygon $\mathcal P$ representing $\mathcal C$ would be $(5+2k)\cdot 90\degree$, where $k$ is the number of $4$-cycles of $G$ inside $\mathcal P$. However, the sum of the angles in the interior of $\mathcal P$ has to be $(5-2)\cdot 180\degree=6 \cdot  90\degree$ and $5+2k\neq 6$ for every integer $k$.}
	\label{fig:introduction}
\end{figure}

In this paper, we show that the existence of a planar rectilinear drawing can be tested in linear time for an outerplanar graph; further, if a planar rectilinear drawing exists, then our algorithm constructs one such a drawing in linear time. Our result does not assume any restriction on the maximum degree of the outerplanar graph; further, it covers both the fixed embedding scenario, where the previously best known running time for a testing algorithm was $O(n \log^3 n)$~\cite{bkmnw-msms-17,t-eggmnb-87}, and the variable embedding scenario, where the previously best known running time for a testing algorithm was $O(n^3 \log n)$~\cite{glm-sr-19}. We also present a linear-time algorithm that tests whether an outerplanar graph admits an outerplanar rectilinear drawing and constructs one such a drawing if it exists. 

Given how common it is to study outerplanar graphs for a problem which is too difficult or too computationally expensive on general planar graphs, it is quite surprising that a systematic study of planar orthogonal drawings and of planar rectilinear drawings of outerplanar graphs has not been done before. The only result we are aware of that is tailored for outerplanar graphs is the one by Nomura et al.~\cite{ntu-odog-05}, which states that an outerplanar graph with maximum degree $3$ admits a planar rectilinear drawing if and only if it does not contain any $3$-cycle. This characterization is not true for outerplanar graphs with vertices of degree $4$; see, e.g., Figure~\ref{fig:introduction}(c). 

We outline our algorithm that tests for the existence of a planar rectilinear drawing of an outerplanar graph with a variable embedding. 

The first, natural, idea is to reduce the problem to the case in which the input graph is $2$-connected. This reduction builds on (an involved version of) a technique introduced by Didimo et al.~\cite{dlop-ood-20} that, roughly speaking, allows us to perform postorder traversals of the block-cut-vertex tree of the graph in total linear time so that each edge is traversed in both directions; during these traversals, information is computed that allows us to decide whether solutions for the subproblems associated to the blocks of the graph can be combined into a solution for the entire graph. This reduction to the $2$-connected case comes at the expense of having to solve a harder problem, in which some vertices of the graph have restrictions on their incident angles in the sought planar rectilinear drawing.


An analogous technique allows us to reduce the problem to the case in which the input graph is $2$-connected and has a prescribed edge that is required to be incident to the outer face in the sought planar rectilinear drawing. The role that in the previous reduction is played by the block-cut-vertex tree is here undertaken by the ``extended dual tree'' of the outerplanar graph. Each edge of this tree is dual to an edge of the outerplanar graph; the latter edge splits the outerplanar graph into two smaller outerplanar graphs. These smaller outerplanar graphs define the sub-instances whose solutions might be combined into a solution for the entire graph; whether this combination is possible is decided based on information that is computed during the traversals of the extended dual tree.

The core of our algorithm consists of an efficient solution for the problem of testing whether a $2$-connected outerplanar graph admits a planar rectilinear drawing in which a prescribed edge is required to be incident to the outer face. The starting point for our solution is a characterization of the positive instances of the problem in terms of the existence of a sequence of numerical values satisfying some conditions; these values represent certain geometric angles in the sought planar rectilinear drawing. Some of these numerical values can be chosen optimally, based on recursive solutions to smaller subproblems; further, a constant number of these numerical values have to be chosen in all possible ways; finally, we reduce the problem of finding the remaining numerical values to the one of testing for the existence of a set of integers, each of which is required to be in a certain interval, so that a linear equation on these integers is satisfied. We characterize the solutions to the latter problem so that not only it can be solved efficiently, but a solution can be modified in constant time if the interval associated to each integer changes slightly; this change corresponds to a different edge chosen to be incident to the outer face.

\begin{table}[!tb]\footnotesize
	\centering
	\linespread{1.2}
	\begin{tabular}{|c|c|c|c|c|c|c|c|c|}
		\cline{2-9}
		\multicolumn{1}{c|}{} & \multicolumn{4}{c|}{Fixed Embedding}  & \multicolumn{4}{c|}{Variable Embedding} \\
		\cline{2-9}
		\multicolumn{1}{c|}{} & \emph{$\Delta=3$} & Ref. & \emph{$\Delta=4$} & Ref. & \emph{$\Delta=3$} & Ref. & \emph{$\Delta=4$} & Ref. \\
		\hline
		{\em Planar } & $O(n)$ & \cite{rnn-odpwb-03} & $O(n \log^3 n)$ & \cite{bkmnw-msms-17} & $O(n)$ & \cite{hr-nbocd-19} & NP-hard & \cite{gt-ccurpt-01}\\
		\hline
		{\em Series-Parallel } & $O(n)$ & \cite{rnn-odpwb-03} & $O(n \log^3 n)$ & \cite{bkmnw-msms-17} & $O(n)$ & \cite{ren-nb-06} & $O(n^3 \log n)$ & \cite{glm-sr-19} \\
		\hline
		{\em Outerplanar } & $O(n)$ & \cite{rnn-odpwb-03}  & $O(n)$ & This paper & $O(n)$ & \cite{ntu-odog-05} & $O(n)$ & This paper\\
		\hline
	\end{tabular}
	\vspace{2mm}
	\caption{\small Complexity of testing for the existence of a planar rectilinear drawing. The letter $\Delta$ denotes the maximum degree of the graph. Each reference shows the first paper to achieve the corresponding bound. The bound for series-parallel graphs with maximum degree $4$ and with fixed embedding is actually $O(n)$ if $2$-connectivity is assumed~\cite{dklo-rpt-20}.}
	\label{ta:straight-line}
\end{table}

Table~\ref{ta:straight-line} shows the best known time bounds for deciding whether a graph admits a planar rectilinear~drawing. Simultaneously with our paper, another paper on the rectilinear-planarity testing problem appeared. Namely, Didimo et al.\ presented in~\cite{dklo-rpt-20} an $O(n)$-time algorithm which tests whether an $n$-vertex $2$-connected series-parallel graph with fixed embedding admits a planar rectilinear drawing and, in the positive case, constructs one such drawing; the work of Didimo et al.\ is based on techniques which are different from those we use in this paper.

Throughout the paper, we assume every considered graph to be connected. This is not a loss of generality, as a graph admits a planar rectilinear drawing if and only if every connected component of it does. Further, we assume that the maximum degree of any vertex is $4$, as if it is larger than that, then the graph has no planar rectilinear drawing; the existence of such high-degree vertices can be clearly tested in linear time.  

The rest of the paper is organized as follows. In Section~\ref{se:preliminaries} we introduce some preliminaries. In Section~\ref{se:fixed} we present an algorithm that tests for the existence of a planar rectilinear drawing of an outerplanar graph with a prescribed plane embedding; we also show how this algorithm allows us to test for the existence of an outerplanar rectilinear drawing of an outerplanar graph. In Section~\ref{se:variable} we present an algorithm that decides whether an outerplanar graph with a variable embedding admits a planar rectilinear drawing. Finally, in Section~\ref{se:conclusions}, we conclude and present some open problems. 


\section{Preliminaries} \label{se:preliminaries}

A {\em cut-vertex} of a connected graph $G$ is a vertex whose removal disconnects $G$. A graph is {\em $2$-connected} if it has no cut-vertex. A {\em block} of a graph $G$ is a maximal (in terms of vertices and edges) $2$-connected subgraph of $G$; a block is {\em trivial} if it is a single edge, it is {\em non-trivial} otherwise. The {\em block-cut-vertex tree} $T$ of a connected graph $G$~\cite{h-gt-69,ht-aeagm-73} is the tree that has a {\em B-node} for each block of $G$ and a {\em C-node} for each cut-vertex of $G$; a B-node $b$ and a C-node $c$ are adjacent in $T$ if the cut-vertex $c$ lies in the block corresponding to $b$ (we often identify a C-node of $T$ and the corresponding cut-vertex of $G$).

A {\em drawing} of a graph maps each vertex to a point in the plane and each edge to a curve between its endpoints. A drawing is {\em planar} if no two edges intersect, except at common endpoints, and it is \emph{rectilinear} if each edge is either a horizontal or a vertical segment. A planar drawing divides the plane into topologically connected regions, called {\em faces}; the only unbounded face is the {\em outer face}, while all the other faces are {\em internal}. A graph is {\em planar} if it admits a planar drawing. 


Two planar drawings $\Gamma_1$ and $\Gamma_2$ of a connected planar graph $G$ are {\em equivalent} if: (i) for each vertex $w$ of $G$, the clockwise order of the edges incident to $w$ is the same in $\Gamma_1$ and $\Gamma_2$; and (ii) the clockwise order of the edges incident to the outer face is the same in $\Gamma_1$ and $\Gamma_2$. A {\em plane embedding} is an equivalence class of planar drawings. Two drawings that correspond to the same plane embedding have faces delimited by the same walks; for this reason, we often speak about \emph{faces of a plane embedding}. We denote by $f^*_{\mathcal E}$ the outer face of a plane embedding $\mathcal E$. The \emph{reflection} of a plane embedding $\mathcal E$ is the plane embedding that is obtained from $\mathcal E$ by (i) inverting the clockwise order of the edges incident to each vertex, and (ii) inverting the clockwise order of the edges incident to the outer face. Given a plane embedding $\mathcal E$ of a planar graph $G$ and given a subgraph $G'$ of $G$, the \emph{restriction} of $\mathcal E$ to $G'$ is the plane embedding $\mathcal E'$ defined as follows. Consider any planar drawing $\Gamma$ within the equivalence class $\mathcal E$. Remove from $\Gamma$ the vertices and edges not in $G'$, obtaining a planar drawing $\Gamma'$ of $G'$. Then $\mathcal E'$ is the equivalence class of $\Gamma'$. We also say that a face $f'$ of $\mathcal E'$ \emph{corresponds} to a face $f$ of $\mathcal E$ if the region $f$ in $\Gamma$ is a subset of the region $f'$ in $\Gamma'$.

Two planar rectilinear drawings of a $2$-connected planar graph $G$ are {\em equivalent} if they correspond to the same plane embedding $\mathcal E$ of $G$ and if, for every face $f$ of $\mathcal E$ and for every vertex $w$ incident to $f$, the angle at $w$ in the interior of $f$ is the same in both drawings. Then a {\em rectilinear representation} of $G$ is an equivalence class of planar rectilinear drawings of $G$. Thus, a rectilinear representation of a planar graph $G$ is a pair $(\mathcal E,\phi)$, where $\mathcal E$ is a plane embedding of $G$ and $\phi$ is a function assigning an {\em angle} $\phi(w,f)\in \{90\degree,180\degree,270\degree\}$ to every pair $(w,f)$ such that $w$ is a vertex incident to $f$ in $\mathcal E$ (see Fig.~\ref{fig:outerplanarrepresentation}(a)). 

For connected planar graphs that are not $2$-connected, the notions of equivalence and of rectilinear representation are similar, however since a vertex $w$ might have several occurrences $w^1,\dots,w^x$ on the boundary of a face $f$, the function $\phi$ assigns an angle to every pair $(w^k,f)$, for $k\in \{1,\dots,x\}$; further, the value $360\degree$ is admissible for an angle $\phi(w^k,f)$.  

Tamassia~\cite{t-eggmnb-87} proved that\footnote{We present here a specialized version of Tamassia's characterization which applies to rectilinear drawings, while his result deals, more in general, with orthogonal drawings that might possibly have bends.} a pair $(\mathcal E,\phi)$ defines a rectilinear representation of a $2$-connected planar graph $G$ if and only if the following two conditions are satisfied: 

\begin{figure}[tb]\tabcolsep=4pt
	\centering
	\begin{tabular}{c c}
		\includegraphics[scale=0.7]{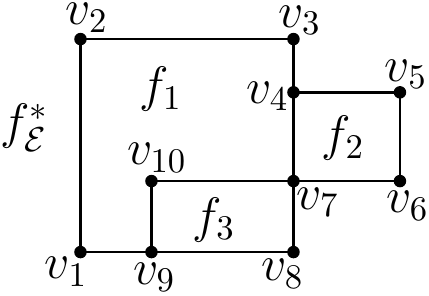} \hspace{3mm} &
		\includegraphics[scale=0.7]{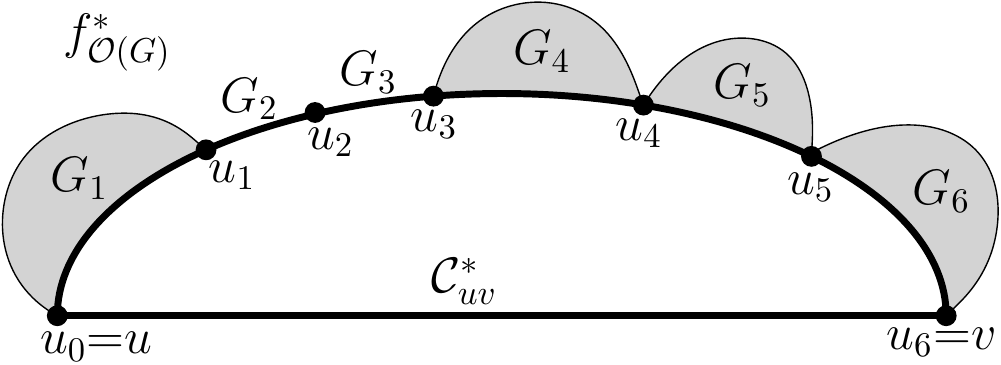} \\
		(a) \hspace{3mm} & (b)\\
	\end{tabular}
	\caption{(a) A planar rectilinear drawing. The corresponding rectilinear representation $(\mathcal E,\phi)$ has: (i) $\phi(w,f^*_{\cal E})=90\degree$ for $w\in \{v_4,v_7\}$, $\phi(v_9,f^*_{\cal E})=180\degree$, $\phi(w,f^*_{\cal E})=270\degree$ for $w\in \{v_1,v_2,v_3,v_5,v_6,v_8\}$; (ii) $\phi(w,f_1)=90\degree$ for $w\in \{v_1,v_2,v_3,v_7,v_9\}$, $\phi(v_4,f_1)=180\degree$, $\phi(v_{10},f_1)=270\degree$; (iii) $\phi(w,f_2)=90\degree$ for $w\in \{v_4,v_5,v_6,v_7\}$; and (iv) $\phi(w,f_3)=90\degree$ for $w\in \{v_7,v_8,v_9,v_{10}\}$. (b) A rooted $2$-connected outerplanar graph $G$. The cycle ${\mathcal C}^*_{uv}$ is represented by a thick line. The interior of the $uv$-subgraphs of $G$ is colored gray; $G_2$ and $G_3$ are trivial, while $G_1$, $G_4$, $G_5$, and $G_6$ are not.}
	\label{fig:outerplanarrepresentation}
\end{figure}

\begin{enumerate}[(1)]
	\item for every vertex $w$ in $G$, we have that $\sum_f \phi(w,f)=360\degree$, where the sum is over all the faces $f$ of $\mathcal E$ incident to $w$; and
	\item for every face $f$ of $\mathcal E$, we have that $\sum_w (2-\phi(w,f)/90\degree)=+4$ or $\sum_w (2-\phi(w,f)/90\degree)=-4$, depending on whether $f$ is an internal face or the outer face of $\mathcal E$, respectively, where the sum is over all the vertices $w$ incident to $f$.
\end{enumerate}

For a simply-connected planar graph $G$, the above characterization is as follows:

\begin{enumerate}[(1)]
	\item for every vertex $w$ in $G$, we have that $\sum_{f,k} \phi(w^k,f)=360\degree$, where the sum is over all the faces $f$ incident to $w$ and all the occurrences $w^k$ of $w$ on the boundary of $f$; and
	\item for every face $f$ of $\mathcal E$, we have that $\sum_{w,k} (2-\phi(w^k,f)/90\degree)=+4$ or $\sum_{w,k} (2-\phi(w^k,f)/90\degree)=-4$, depending on whether $f$ is an internal face or the outer face of $\mathcal E$, respectively, where the sums are over all the occurrences $w^k$ on the boundary of $f$ of each vertex $w$ incident to $f$.
\end{enumerate}

Tamassia~\cite{t-eggmnb-87} also proved that, given a rectilinear representation $(\mathcal E, \phi)$ of an $n$-vertex  planar graph $G$, a planar rectilinear drawing of $G$ within the equivalence class $(\mathcal E,\phi)$ can be constructed in $O(n)$ time. This allows us to often shift our attention from planar rectilinear drawings to rectilinear representations. 

Let $G$ be a graph and $G'$ be a subgraph of $G$. Given a rectilinear representation $(\mathcal E,\phi)$ of $G$, the \emph{restriction} of $(\mathcal E,\phi)$ to $G'$ is the rectilinear representation $(\mathcal E',\phi')$ of $G'$ defined as follows. Consider any planar rectilinear drawing $\Gamma$ within the equivalence class $(\mathcal E,\phi)$. Remove from $\Gamma$ the vertices and edges not in $G'$, obtaining a planar rectilinear drawing $\Gamma'$ of $G'$. Then $(\mathcal E',\phi')$ is the equivalence class of $\Gamma'$. More in general, let $(\mathcal E,\phi)$ be a pair, where $\mathcal E$ is a plane embedding of $G$ and $\phi$ is a function that assigns an angle $\phi(w^k,f) \in \{90\degree,180\degree,270\degree,360\degree\}$ to every occurrence $w^k$ of a vertex $w$ of $G$ along the boundary of a face $f$ of $\mathcal E$; note that $(\mathcal E,\phi)$ is not necessarily a rectilinear representation of $G$. Then the \emph{restriction} $(\mathcal E',\phi')$ of $(\mathcal E,\phi)$ to $G'$ is defined as follows (see Figure~\ref{fig:restriction}). First, $\mathcal E'$ is the restriction of $\mathcal E$ to $G'$. Second, consider any occurrence $w^*$ of a vertex $w$ of $G'$ along the boundary of a face $f'$ of $\mathcal E'$ and let $wu$ and $wv$ be the edges defining $w^*$, where $w^*$ is to the left of the path $(u,w,v)$. Let $e_1=wu,e_2,\dots,e_h=wv,\dots,e_k,e_{k+1}=e_1$ be the clockwise order of the edges incident to $w$ in $\mathcal E$ and let $w^{1},\dots,w^{k}$ be the occurrences of $w$ along the boundaries of faces of $\mathcal E$, where $w^i$ is defined by the edges $e_i$ and $e_{i+1}$. Finally, let $f_1,\dots,f_k$ be the faces of $\mathcal E$ the occurrences $w^{1},\dots,w^{k}$ appear along, respectively. Then we have $\phi'(w^*,f')=\sum_{i=1}^{h-1}(w^i,f_i)$.

\begin{figure}[tb]\tabcolsep=4pt
	\centering
	\begin{tabular}{c c}
		\includegraphics[scale=0.7]{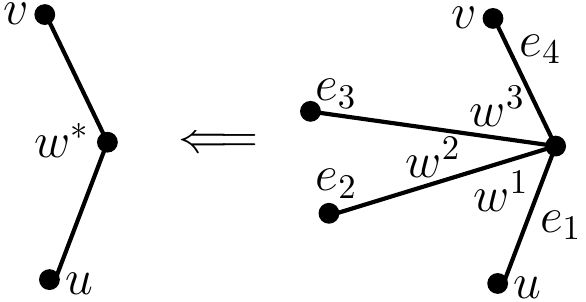}
	\end{tabular}
	\caption{Illustration for the definition of the restriction $(\mathcal E', \phi')$ of a pair $(\mathcal E,\phi)$ to a subgraph $G'$ of $G$.}
	\label{fig:restriction}
\end{figure}

Given a rectilinear representation $(\mathcal E, \phi)$ of a planar graph $G$ and given a vertex $u$ incident to $f^*_{\mathcal E}$, we denote by $\phi^{\mathrm{int}}(u)$ the sum of the \emph{internal angles} incident to $u$, that is, $\phi^{\mathrm{int}}(u)=\sum_f \phi(u,f)$ over all the internal faces $f$ of $\mathcal E$ incident to $u$. Note that $\phi^{\mathrm{int}}(u)=360\degree -\phi(u,f^*_{\mathcal E})$. 

An {\em outerplanar drawing} is a planar drawing such that all the vertices are incident to the outer face. 
An {\em outerplane embedding} is a plane embedding such that all the vertices are incident to the outer face.
A graph is {\em outerplanar} if it admits an outerplanar drawing. It will be handy to work with {\em rooted} outerplanar graphs (see Fig.~\ref{fig:outerplanarrepresentation}(b)): Given a $2$-connected outerplanar graph $G$ with outerplane embedding $\mathcal O$, select any edge $uv$ incident to $f^*_{\mathcal O}$ as the {\em root} of $G$, where $u$ immediately precedes $v$ in counterclockwise order along the boundary of $f^*_{\mathcal O}$. Denote by ${\mathcal C}^*_{uv}$ the cycle delimiting the internal face of $\mathcal O$ incident to $uv$. The blocks of the graph obtained from $G$ by removing the edge $uv$ are the {\em $uv$-subgraphs} of $G$. The root of each $uv$-subgraph of $G$ is its unique edge in ${\mathcal C}^*_{uv}$; note that each $uv$-subgraph can be either trivial or non-trivial. 

The \emph{extended dual tree} $\mathcal T$ of an outerplane embedding $\mathcal O$ of a $2$-connected outerplanar graph is defined as follows. The {\em dual graph} $\mathcal D$ of $\mathcal O$ has a vertex for each face of $\mathcal O$ and has an edge between two vertices if the corresponding faces have a common edge on their boundaries. Then $\mathcal T$ is obtained from $\mathcal D$ by removing the vertex of $\mathcal D$ corresponding to $f^*_{\mathcal O}$ and by letting its incident edges end in $n$ new degree-$1$ nodes (see Fig.~\ref{fig:outerplanardual}). 	

\begin{figure}[tb]
	\centering
	\includegraphics[scale=0.7]{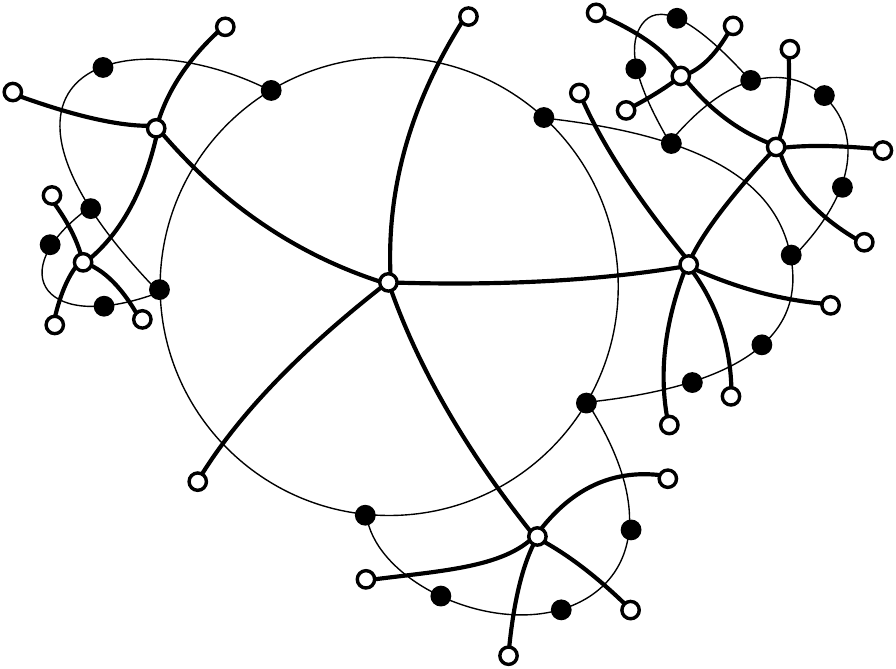}
	\caption{The extended dual tree $\mathcal T$ of an outerplane embedding $\mathcal O$ of a $2$-connected outerplanar graph; the nodes and the edges of $\mathcal T$ are represented by white disks and thick lines, respectively.}
	\label{fig:outerplanardual}
\end{figure}

\subsection{Join of Rectilinear Representations} \label{sse:join}

When dealing with an outerplanar graph $G$ containing cut-vertices, we often split $G$ along a cut-vertex~$c$, construct rectilinear representations of the resulting subgraphs, all of which contain $c$, and then combine such rectilinear representations into a rectilinear representation of $G$. We now introduce a tool to do this combination, called \emph{join}; this applies to graphs that are not necessarily outerplanar.

Let $G$ be a connected graph, let $c$ be a cut-vertex of $G$, and let $T$ be the block cut-vertex tree of $G$. Let $T_1,\dots,T_k$ be the trees obtained by removing $c$ from $T$ and, for $i=1,\dots,k$, let $G_i$ be the subgraph of $G$ composed of the blocks of $G$ corresponding to B-nodes in $T_i$; note that any two distinct graphs $G_i$ and $G_j$ share $c$ and no other vertex or edge. Finally, let $(\mathcal E_1,\phi_1),\dots,(\mathcal E_k,\phi_k)$ be rectilinear representations of $G_1,\dots,G_k$, respectively. 
A \emph{join} of $(\mathcal E_1,\phi_1),\dots,(\mathcal E_k,\phi_k)$ is a pair $(\mathcal E,\phi)$ satisfying the following properties. 

\begin{enumerate}[(a)]
	\item $\mathcal E$ is a plane embedding of $G$.
	\item For every occurrence $c^x$ of $c$ along the boundary of a face $f$ of $\mathcal E$, we have $\phi(c^x,f)\in \{90\degree,180\degree,270\degree\}$.
	\item $\sum_{f,x} \phi(c^x,f)=360\degree$, where the sum is over all the faces $f$ of $\mathcal E$ incident to $c$ and all the occurrences $c^x$ of $c$ on the boundary of $f$.
	\item For $i=1,\dots,k$, the restriction of $(\mathcal E,\phi)$ to $G_i$ is $(\mathcal E_i,\phi_i)$.
\end{enumerate}

We have the following.

\begin{lemma} \label{le:preliminaries-subgraphs-composition}
	A join $(\mathcal E,\phi)$ of $(\mathcal E_1,\phi_1),\dots,(\mathcal E_k,\phi_k)$ is a rectilinear representation of $G$.  
\end{lemma}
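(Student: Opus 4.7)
To show that the join $(\mathcal{E}, \phi)$ is a rectilinear representation of $G$, I will verify the two conditions of Tamassia's characterization (as stated earlier in the excerpt for connected planar graphs).

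\emph{Condition (1): angle sums of $360\degree$ at each vertex.} For $w = c$, this is exactly property~(c) of the join. For $w \neq c$, since the $G_i$'s pairwise share only $c$, the vertex $w$ lies in a unique $G_i$ and every edge incident to $w$ is in $G_i$; in particular, no wedge at $w$ in $\mathcal{E}$ is a junction between two different $G_j$'s. Thus by property~(d) and the definition of restriction, each angle $\phi(w^k, f)$ equals an angle $\phi_i(w^k, f')$ in $(\mathcal{E}_i, \phi_i)$ without any summation, and the identity at $w$ carries over from the rectilinear representation $(\mathcal{E}_i, \phi_i)$.

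\emph{Condition (2): alternating angle sum $\pm 4$ at each face.} The key structural observation is that each $G_i$ is connected, so its edges incident to $c$ form a single contiguous arc $A_i$ in the cyclic order at $c$ in $\mathcal{E}$; hence the wedges at $c$ in $\mathcal{E}$ split into \emph{internal} wedges (within a single $A_i$) and $k$ \emph{boundary} wedges (between consecutive arcs). For each face $\hat f$ of $\mathcal{E}$, denote by $\ell$ the number of distinct indices $i$ such that an edge of $G_i$ appears on $\partial \hat f$, and for each such index $i_j$ let $f^{(j)}$ be the face of $\mathcal{E}_{i_j}$ whose region (in $\mathcal{E}_{i_j}$) contains $\hat f$'s region (in $\mathcal{E}$). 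When $\ell = 1$, the face $\hat f$ shares its boundary, angles, and outer-face designation with $f^{(1)}$ (the latter because the restriction respects the outer face), so condition~(2) for $\hat f$ is inherited from $(\mathcal{E}_{i_1}, \phi_{i_1})$.

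When $\ell \geq 2$, the face $\hat f$ is a \emph{merged} face using at least one boundary wedge at $c$. Here I would perform a direct bookkeeping calculation, rewriting each $\phi_{i_j}(c^y, f^{(j)})$ as the sum of $\phi$-values at the consecutive $\mathcal{E}$-wedges it restricts from (property~(d)) and using property~(c) to replace the $\ell$ copies of $360\degree$ contributed by $c$'s vertex-sum in each $\mathcal{E}_{i_j}$ by the single $360\degree$ in $\mathcal{E}$; this yields
\[
\sum_{w,k}\!\left(2 - \frac{\phi(w^k, \hat f)}{90\degree}\right) \;=\; \sum_{j=1}^{\ell}\sum_{w,k}\!\left(2 - \frac{\phi_{i_j}(w^k, f^{(j)})}{90\degree}\right) + 4(\ell - 1).
\]
Each inner sum on the right is $+4$ or $-4$ by the Tamassia condition for $(\mathcal{E}_{i_j}, \phi_{i_j})$.

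The main obstacle is the sign analysis. Since the restriction of $(\mathcal{E}, \phi)$ to each $G_{i_j}$ preserves the outer-face designation, the outer face of every $\mathcal{E}_{i_j}$ is the face of $\mathcal{E}_{i_j}$ into which the outer face of $\mathcal{E}$ maps under restriction. It follows that $\hat f$ is the outer face of $\mathcal{E}$ iff every $f^{(j)}$ is the outer face of $\mathcal{E}_{i_j}$, in which case the right-hand side evaluates to $-4\ell + 4(\ell-1) = -4$; otherwise $\hat f$ is internal, and by the ``nested insertion'' structure at $c$ exactly one of the $f^{(j)}$'s is internal (the ``host'') while the remaining $\ell - 1$ are outer, giving $4 - 4(\ell-1) + 4(\ell-1) = 4$, as required.
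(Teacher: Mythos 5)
Your proposal is correct and follows essentially the same route as the paper's proof: Tamassia's condition (1) via Properties (c) and (d), condition (2) by inheritance for faces meeting a single $G_i$, and for merged faces the identity $S=\sum_j S^{(j)}+4(\ell-1)$ (which the paper obtains through the auxiliary equality $\sum_{i}\phi_{\lambda(i)}(c,f_{\lambda(i)})=(h-1)\cdot 360\degree+\sum_x\phi(c^x,f)$, derived by counting how many times each occurrence of $c$ is picked up across the $h$ restrictions) together with the same sign analysis that exactly one corresponding face is internal when $f$ is. The only pieces left implicit are the full bookkeeping behind that displayed identity and the routine check that every angle of $\phi$ lies in $\{90\degree,180\degree,270\degree,360\degree\}$, both of which follow from Properties (b)--(d) as you indicate.
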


\begin{proof}
	By Property~(a) of $(\mathcal E,\phi)$, we have that $\mathcal E$ is a plane embedding, hence we can talk about its faces. By Property~(d) of $(\mathcal E,\phi)$, we have that, for $i=1,\dots,k$, the restriction of $(\mathcal E,\phi)$ to $G_i$ is $(\mathcal E_i,\phi_i)$; for a face $f$ of $\mathcal E$ and for any $i\in \{1,\dots,k\}$, let $f_i$ be the face of $\mathcal E_i$ corresponding to $f$. 
	
	Consider any vertex $w$ of $G$ and consider any occurrence $w^x$ of $w$ along the boundary of a face $f$ of $\mathcal E$. If $w=c$, then we have $\phi(w^x,f)\in \{90\degree,180\degree,270\degree\}$ by Property~(b) of $(\mathcal E,\phi)$. If $w\neq c$, let $G_i$ be the graph among $G_1,\dots,G_k$ the vertex $w$ belongs to; then $\phi(w^x,f)\in \{90\degree,180\degree,270\degree,360\degree\}$, given that $\phi(w^x,f)=\phi_i (w^x,f_i)$, by Property~(d) of $(\mathcal E,\phi)$, and given that $(\mathcal E_i,\phi_i)$ is a rectilinear representation of $G_i$. 
	
	
	We now prove that $(\mathcal E,\phi)$ satisfies condition (1) of Tamassia's characterization~\cite{t-eggmnb-87}. Consider any vertex $w$ of $G$. If $w=c$, then, by Property~(c) of $(\mathcal E,\phi)$, we have $\sum_{f,x} \phi(c^x,f)=360\degree$, as requested, where the sum is over all the faces $f$ of $\mathcal E$ incident to $c$ and all the occurrences $c^x$ of $c$ on the boundary of $f$. If $w\neq c$, let $G_i$ be the graph among $G_1,\dots,G_k$ the vertex $w$ belongs to; then $\sum_{f,x} \phi(w^x,f)=360\degree$, where the sum is over all the faces $f$ incident to $w$ and all the occurrences $w^x$ of $w$ on the boundary of $f$, given that $\phi(w^x,f)=\phi_i (w^x,f_i)$, by Property~(d) of $(\mathcal E,\phi)$, and given that $(\mathcal E_i,\phi_i)$ is a rectilinear representation of $G_i$.

	We next prove that $(\mathcal E,\phi)$ satisfies condition (2) of Tamassia's characterization~\cite{t-eggmnb-87}. For every face $f$ of $\mathcal E$, we need to prove the following condition. Let $S:=\sum_{w,x} (2-\phi(w^x,f)/90\degree)$, where the sum is over all the occurrences $w^x$ on the boundary of $f$ of each vertex $w$ incident to $f$. Then $S=+4$ or $S=-4$, depending on whether $f$ is an internal face or the outer face of $\mathcal E$, respectively. 
	
	\begin{itemize}
		\item Suppose first that $f$ is only incident to edges of $G_i$, for some $i\in \{1,\dots,k\}$. By Property~(d) of $(\mathcal E,\phi)$, for each occurrence $w^x$ of a vertex $w$ along the boundary of $f$, we have $\phi(w^x,f)=\phi_i(w^x,f_i)$, and hence $S=\sum_{w,x} (2-\phi_i(w^x,f_i)/90\degree)$. Since $(\mathcal E_i,\phi_i)$ is a rectilinear representation of $G_i$, we have that $\sum_{w,x} (2-\phi_i(w^x,f_i)/90\degree)=+4$ or $\sum_{w,x} (2-\phi_i(w^x,f_i)/90\degree)=-4$, depending on whether $f_i$ is an internal face or the outer face of $\mathcal E_i$, respectively.
		\item Suppose next that $f$ is incident to edges of $h$ graphs among $G_1,\dots,G_k$, for some $h\geq 2$; this implies that $c$ is incident to $f$. Denote by $G_{\lambda(1)},G_{\lambda(2)},\dots,G_{\lambda(h)}$ the graphs containing edges incident to $f$. Then the faces of $\mathcal E_{\lambda(1)},\mathcal E_{\lambda(2)},\dots,\mathcal E_{\lambda(h)}$ corresponding to $f$ are $f_{\lambda(1)},f_{\lambda(2)},\dots,f_{\lambda(h)}$, respectively. Note that $c$ has one occurrence along the boundary of $f_{\lambda(i)}$, for each $i=1,\dots,h$; namely, it has at most one occurrence since it is not a cut-vertex of $G_{\lambda(i)}$ and it has at least one occurrence as otherwise $G_{\lambda(i)}$ would not contain any edges incident to $f$. For $i=1,\dots,h$, let $S_{\lambda(i)}=\sum_{w,x} (2-\phi_{\lambda(i)}(w^x,f_{\lambda(i)})/90\degree)$, where the sum is over all the occurrences $w^x$ on the boundary of $f_{\lambda(i)}$ of each vertex $w$ of $G_{\lambda(i)}$ incident to $f_{\lambda(i)}$.

		\begin{figure}[tb]
			\centering
			\includegraphics[scale=0.7]{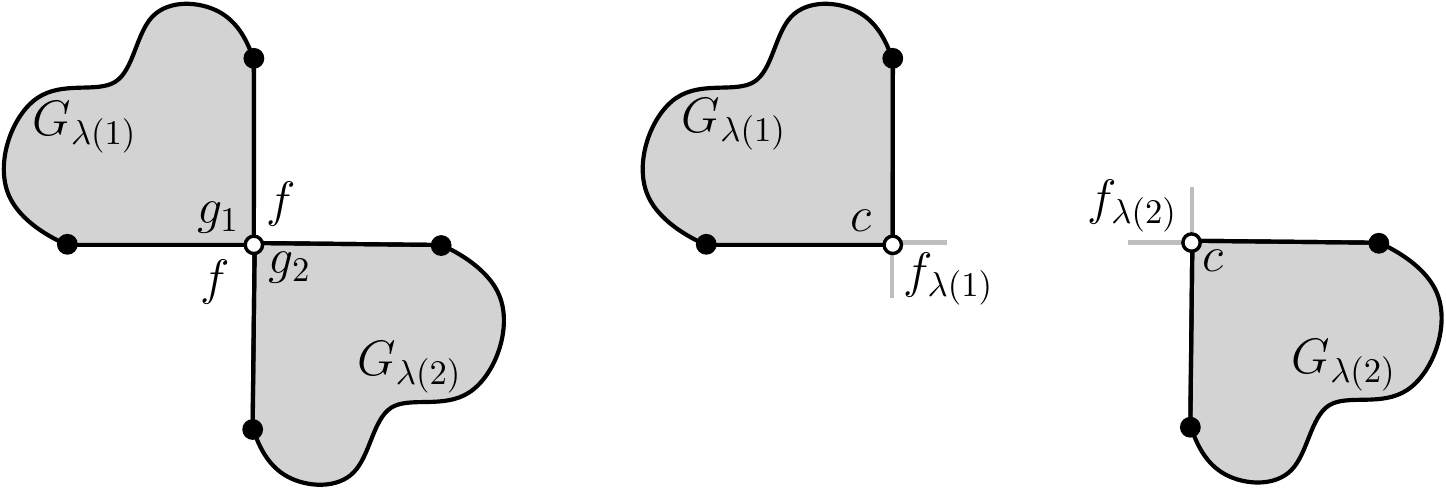}
			\caption{Illustration for the proof of the equality $\sum_{i=1}^h \phi_{\lambda(i)}(c,f_{\lambda(i)})=(h-1)\cdot 360\degree + \sum_{x}\phi(c^x,f)$. In this example $h=2$. The figure shows $\mathcal E$ (left), $\mathcal E_{\lambda(1)}$ (center), and $\mathcal E_{\lambda(2)}$ (right). Note that, for $i=1,2$, the face of $\mathcal E_{\lambda(i)}$ corresponding to $f$ is $f_{\lambda(i)}$. On the other hand, for every face $g\neq f$ of $\mathcal E$ incident to $c$, there exists an index $j\in \{1,\dots,h\}$ such that the face of $\mathcal E_{\lambda(j)}$ corresponding to $g$ is not $f_{\lambda(j)}$; in this example, the face of $\mathcal E_{\lambda(1)}$ corresponding to $g_1$ is not $f_{\lambda(1)}$ and the face of $\mathcal E_{\lambda(2)}$ corresponding to $g_2$ is not $f_{\lambda(2)}$.}
			\label{fig:join}
		\end{figure}
		
		We now prove the following auxiliary equality (refer to Figure~\ref{fig:join}): 
		\begin{eqnarray*}
			\sum_{i=1}^h \phi_{\lambda(i)}(c,f_{\lambda(i)})=(h-1)\cdot 360\degree + \sum_{x}\phi(c^x,f),
		\end{eqnarray*}
		where the latter sum is over all the occurrences $c^x$ of $c$ along the boundary of $f$. For $i=1,\dots,h$, by Property~(d) of $(\mathcal E,\phi)$, the value $\phi_{\lambda(i)}(c,f_{\lambda(i)})$ is equal to $\sum_{g,x}\phi(c^x,g)$, where the sum is over all the faces $g$ of $\mathcal E$ whose corresponding face in $\mathcal E_{\lambda(i)}$ is $f_{\lambda(i)}$. Hence
		\begin{eqnarray*}
			\sum_{i=1}^h \phi_{\lambda(i)}(c,f_{\lambda(i)})=\sum_{i=1}^h\sum_{g,x}\phi(c^x,g).
		\end{eqnarray*}
		Since, for $i=1,\dots,h$, we have that $f_{\lambda(i)}$ is the face of $\mathcal E_{\lambda(i)}$ corresponding to $f$, it follows that every occurrence $c^x$ of $c$ along the boundary of $f$ contributes  $h \cdot \phi(c^x,f)$ to the sum $\sum_{i=1}^h\sum_{g,x}\phi(c^x,g)$. Further, every occurrence $c^x$ of $c$ along the boundary of a face $g\neq f$ of $\mathcal E$ contributes  $(h-1) \cdot \phi(c^x,g)$ to the sum $\sum_{i=1}^h\sum_{g,x}\phi(c^x,g)$; indeed, there exists one index $j\in \{1,\dots,h\}$ such that the face of $\mathcal E_{\lambda(j)}$ corresponding to $g$ is not $f_{\lambda(j)}$, while for $i=1,\dots,h$ with $i\neq j$, the face of $\mathcal E_{\lambda(i)}$ corresponding to $g$ is $f_{\lambda(i)}$. Hence, 
		\begin{eqnarray*}
			\sum_{i=1}^h \phi_{\lambda(i)}(c,f_{\lambda(i)})=h\cdot \sum_{x}\phi(c^x,f)+(h-1)\cdot \sum_{g,x}\phi(c^x,g),
		\end{eqnarray*}
		where the first sum in the right term is over all the occurrences $c^x$ of $c$ along the boundary of $f$ and  the second sum in the right term is over all the faces $g\neq f$ of $\mathcal E$ incident to $c$ and over all the occurrences $c^x$ of $c$ along the boundary of $g$. Rearranging the right term we get
		\begin{eqnarray*}
			\sum_{i=1}^h \phi_{\lambda(i)}(c,f_{\lambda(i)})=(h-1)\cdot \sum_{g,x}\phi(c^x,g)+\sum_{x}\phi(c^x,f),
		\end{eqnarray*}
		where the first sum in the right term is over all the faces $g$ of $\mathcal E$ incident to $c$, including $f$, and over all the occurrences $c^x$ of $c$ along the boundary of $g$. By Property~(c) of $(\mathcal E,\phi)$, we have that $\sum_{g,x}\phi(c^x,g)=360\degree$, which concludes the proof of the equality $\sum_{i=1}^h \phi_{\lambda(i)}(c,f_{\lambda(i)})=(h-1)\cdot 360\degree + \sum_{x}\phi(c^x,f)$.	
		
		We can now proceed to the proof that $S=-4$ or $S=+4$ if $f$ is the outer face of $\mathcal E$ or an internal face of $\mathcal E$, respectively. If $f$ is the outer face of $\mathcal E$, then $f_{\lambda(i)}$ is the outer face of $\mathcal E_{\lambda(i)}$, for every $i=1,\dots,h$. Conversely, if $f$ is an internal face of $\mathcal E$, then there exists an index $j\in \{1,2,\dots,h\}$ such that $f_{\lambda(j)}$ is an internal face of $\mathcal E_{\lambda(j)}$, while $f_{\lambda(i)}$ is the outer face of $\mathcal E_{\lambda(i)}$, for every $i=1,\dots,h$ with $i\neq j$. In the following we distinguish these two cases.
		
		\begin{itemize}
			\item Suppose first that $f$ is the outer face of $\mathcal E$. Hence, we have $S_{\lambda(1)}=\dots=S_{\lambda(h)}=-4$.  Separating the contributions to $S$ given by the occurrences along $f$ of $c$ from the ones given by the occurrences along $f$ of  vertices different from $c$, we get 
			\begin{eqnarray*}
				S=\sum_{w,x} (2-\phi(w^x,f)/90\degree) + \sum_x(2-\phi(c^x,f)/90\degree),
			\end{eqnarray*} where the first sum is over all the occurrences $w^x$ on the boundary of $f$ of each vertex $w\neq c$ of $G$ incident to $f$, and the second sum is over all the occurrences $c^x$ on the boundary of $f$ of $c$. Since $c$ has $h$ occurrences along the boundary of $f$, the previous equality can be rewritten as 
			\begin{eqnarray*}
				S=\sum_{w,x} (2-\phi(w^x,f)/90\degree) - \sum_x(\phi(c^x,f)/90\degree) + 2h.
			\end{eqnarray*}
			Note that each vertex $w\neq c$ belongs to a unique graph $G_{\lambda(i)}$; then, by Property~(d) of $(\mathcal E,\phi)$, we have $\phi(w^x,f)=\phi_{\lambda(i)}(w^x,f_{\lambda(i)})$. Hence, we have
			\begin{eqnarray*}
				S=\sum_{i=1}^h\sum_{w,x} (2-\phi_{\lambda(i)}(w^x,f_{\lambda(i)})/90\degree) - \sum_x(\phi(c^x,f)/90\degree) + 2h,
			\end{eqnarray*}
			where the second sum is over all the occurrences $w^x$ on the boundary of $f_{\lambda(i)}$ of each vertex $w\neq c$ of $G_{\lambda(i)}$ incident to $f_{\lambda(i)}$.	Adding and subtracting $\sum_{i=1}^h (2-\phi_{\lambda(i)}(c,f_{\lambda(i)})/90\degree)$ from the right side of the previous equality, we~get
			\begin{eqnarray*}
				S=\sum_{i=1}^h\sum_{w,x} (2-\phi_{\lambda(i)}(w^x,f_{\lambda(i)})/90\degree) - \sum_x(\phi(c^x,f)/90\degree)+ 2h-\sum_{i=1}^h (2-\phi_{\lambda(i)}(c,f_{\lambda(i)})/90\degree),
			\end{eqnarray*}
			
			where the second sum is now over all the occurrences $w^x$ on the boundary of $f_{\lambda(i)}$ of each vertex $w$ of $G_{\lambda(i)}$ incident to $f_{\lambda(i)}$, including $w=c$. We can now exploit $S_{\lambda(i)}=\sum_{w,x} (2-\phi_{\lambda(i)}(w^x,f_{\lambda(i)})/90\degree)=-4$ and $\sum_{i=1}^h \phi_{\lambda(i)}(c,f_{\lambda(i)})=(h-1)\cdot 360\degree + \sum_{x}\phi(c^x,f)$ to get 
			\begin{eqnarray*}
				S&=&\sum_{i=1}^h S_{\lambda(i)} - \sum_x(\phi(c^x,f)/90\degree)+\sum_{i=1}^h \phi_{\lambda(i)}(c,f_{\lambda(i)})/90\degree=\\
				&=& -4h - \sum_x(\phi(c^x,f)/90\degree)+4(h-1)+\sum_{x}(\phi(c^x,f)/90\degree)=-4.
			\end{eqnarray*}
			\item Suppose next that $f$ is an internal face of $\mathcal E$. Then there exists an index $j\in \{1,2,\dots,h\}$ such that $S_{\lambda(j)}=+4$, while $S_{\lambda(i)}=-4$ for every $i\in \{1,2,\dots,h\}$ with $i\neq j$. Exactly as in the previous case, we can derive 
			\begin{eqnarray*}
				S&=&\sum_{i=1}^h S_{\lambda(i)} - \sum_x(\phi(c^x,f)/90\degree)+\sum_{i=1}^h \phi_{\lambda(i)}(c,f_{\lambda(i)})/90\degree.
			\end{eqnarray*}
			The sum $\sum_{i=1}^h S_{\lambda(i)}$ is now equal to $-4(h-1)+4=-4h+8$. Hence
			\begin{eqnarray*}
				S&=&-4h+8 - \sum_x(\phi(c^x,f)/90\degree)+4(h-1)+\sum_{x}(\phi(c^x,f)/90\degree)=+4.
			\end{eqnarray*}	
		\end{itemize}
	\end{itemize}

	This completes the proof that $(\mathcal E,\phi)$ satisfies condition (2) of Tamassia's characterization~\cite{t-eggmnb-87}. It follows that $(\mathcal E,\phi)$ is a rectilinear representation of $G$.
\end{proof}

\section{Fixed Embedding} \label{se:fixed}

In this section we show how to test in $O(n)$ time whether an $n$-vertex outerplanar graph $G$ with a fixed plane embedding $\mathcal E$ has a planar rectilinear drawing. In Section~\ref{se:fixed-biconnected} we assume that $G$ is $2$-connected; then, in Section~\ref{se:fixed-cutvertices}, we show how to extend our algorithm to handle the presence of cut-vertices.


\subsection{$2$-Connected Outerplanar Graphs} \label{se:fixed-biconnected}

%


Let $G$ be an $n$-vertex $2$-connected outerplanar graph with a prescribed plane embedding $\mathcal E$. We want to design an $O(n)$-time algorithm to test whether $G$ admits a rectilinear representation $(\mathcal E,\phi)$, for some function $\phi$. We will actually solve a more general problem, which is defined in the following and which will be used in Section~\ref{se:fixed-cutvertices} in order to deal with simply-connected planar graphs.  

Suppose that, for each face $f$ of $\cal E$ and each vertex $w$ incident to $f$, a value $\ell(w,f)\in \{90\degree,180\degree,270\degree\}$ is given. Our goal is to test whether an {\em $\ell$-constrained} representation $(\cal E,\phi)$ of $G$ exists, that is, a rectilinear representation $(\cal E,\phi)$ such that, for each face $f$ of $\cal E$ and each vertex $w$ incident to $f$, we have $\phi(w,f)\geq \ell(w,f)$. Note that the original question about the existence of a rectilinear representation $(\cal E,\phi)$ of $G$ coincides with the special case of this problem in which $\ell(w,f)= 90\degree$, for each face $f$ of $\cal E$ and each vertex $w$ incident to $f$. Intuitively speaking, the lower bound on the angle $\phi(w,f)$ given by $\ell(w,f)$ ensures that $\phi(w,f)$ is large enough to accommodate distinct blocks incident to $w$ inside $f$; this will be explained in detail in Section~\ref{se:fixed-cutvertices}. 

We start by proving that, if $G$ admits a planar rectilinear drawing with plane embedding $\mathcal E$, then an edge incident to $f^*_{\mathcal E}$ is also incident to $f^*_{\mathcal O}$.

\begin{lemma} \label{le:no-outer-edge}
	Suppose that no edge incident to $f^*_{\mathcal E}$ is also incident to $f^*_{\mathcal O}$. Then $G$ admits no planar rectilinear drawing with plane embedding $\mathcal E$.
\end{lemma}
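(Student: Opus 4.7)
The plan is to argue by contradiction: assume that $G$ admits a planar rectilinear drawing with plane embedding $\mathcal{E}$, so that there is a rectilinear representation $(\mathcal{E}, \phi)$ of $G$, and then derive an inconsistency from the degrees of the vertices on the outer face.

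I would first recall the structure imposed by the outerplane embedding $\mathcal{O}$ of the $2$-connected graph $G$. The boundary of $f^*_{\mathcal{O}}$ is a simple cycle passing through all vertices of $G$, namely a Hamiltonian cycle $H$; every edge of $G$ that does not lie on $H$ is a chord of $H$. Since $G$ is $2$-connected, the boundary of $f^*_{\mathcal{E}}$ is likewise a simple cycle $C$. By the hypothesis of the lemma, $C$ shares no edge with $H$, so every edge of $C$ is a chord of $H$.

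Next I would show that every vertex $v \in V(C)$ has degree exactly $4$. Since $v$ lies on $H$, it is incident to two edges of $H$; since $v$ lies on $C$, it is incident to two further edges, which are chords of $H$ by the previous step and therefore distinct from the two $H$-edges at $v$. Thus $\deg(v) \geq 4$, and combining this with the standing assumption that the maximum degree of $G$ is $4$ gives $\deg(v) = 4$.

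The contradiction now follows from Tamassia's characterization. At a degree-$4$ vertex the four angles are each at least $90\degree$ and sum to $360\degree$ by condition~(1), so each equals $90\degree$; in particular $\phi(v, f^*_{\mathcal{E}}) = 90\degree$ for every $v \in V(C)$. Summing over $V(C)$ gives $\sum_{v \in V(C)} \phi(v, f^*_{\mathcal{E}}) = 90\degree \cdot |V(C)|$, whereas condition~(2) applied to the outer face demands that this sum equal $(|V(C)| + 2) \cdot 180\degree$; the equation $90 k = 180 (k+2)$ clearly has no positive integer solution $k$, which is the desired contradiction. The only delicate point in the argument is observing that the hypothesis of the lemma \emph{forces} every vertex of $C$ to have degree exactly $4$; once that is in hand, the angle-balance contradiction is mechanical.
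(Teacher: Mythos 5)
Your proof is correct, but it takes a genuinely different route from the paper's. You exploit the standing maximum-degree-$4$ assumption: since the boundary of $f^*_{\mathcal O}$ is a Hamiltonian cycle $H$ and, under the hypothesis, the boundary cycle $C$ of $f^*_{\mathcal E}$ consists entirely of chords of $H$, every vertex of $C$ carries two $H$-edges plus two $C$-edges and hence has degree exactly $4$; this pins every angle at such a vertex to $90\degree$ and the outer-face angle sum then violates Tamassia's condition~(2). The paper instead never invokes the degree bound: for each edge $v_iv_{i+1}$ of $C$ it considers the outerplanar subgraph $\mathcal K_i$ that hangs inside $C$ off that edge, observes that $\mathcal K_i$ contributes at least $90\degree$ of internal angle at each of $v_i$ and $v_{i+1}$, and concludes that each vertex of $C$ has internal angle at least $180\degree$, so the internal angle sum is at least $k\cdot 180\degree$ instead of the required $(k-2)\cdot 180\degree$. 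Your argument yields sharper per-vertex information ($\phi(v,f^*_{\mathcal E})=90\degree$ exactly) at the cost of leaning on the global degree assumption, whereas the paper's argument is self-contained at the level of the embedding structure and would survive even without that assumption; both contradictions are equally immediate. The only point worth making explicit in your write-up is that a degree-$4$ vertex of a $2$-connected plane graph is incident to exactly four faces, so that condition~(1) indeed forces all four angles to be $90\degree$.
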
 

\begin{proof}
	Refer to Fig.~\ref{fig:edge-outer-face}. Let $\mathcal C=(v_0,v_1,\dots,v_{k-1})$ denote the cycle delimiting $f^*_{\mathcal E}$. By assumption, the edge $v_iv_{i+1}$  is an internal edge of $\mathcal O$, for every $i=0,\dots,k-1$, where the indices are modulo $k$. Thus, the removal of the vertices $v_i$ and $v_{i+1}$ splits $G$ into two outerplanar graphs; let $H_i$ be the one containing $v_0,v_1,\dots,v_{i-1},v_{i+2},,\dots,v_{k-1}$ and let $K_i$ be the other one. Denote by $\mathcal H_i$ (by $\mathcal K_i$) the subgraph of $G$ induced by $v_i$, $v_{i+1}$, and the vertex set of $H_i$ (of $K_i$). Since $\mathcal C$ delimits $f^*_{\mathcal E}$, we have that $\mathcal K_i$ lies inside $\mathcal C$ (except for the edge $v_iv_{i+1}$, which is shared by $\mathcal K_i$ and $\mathcal C$) in the plane embedding $\mathcal E$.  
	
	\begin{figure}[htb]
		\centering
		\includegraphics[]{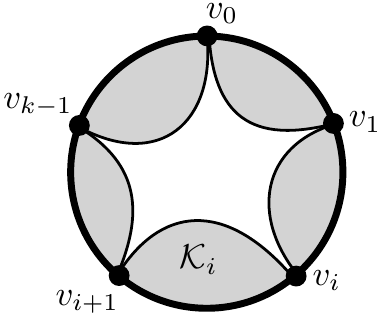}
		\caption{Illustration for the proof of Lemma~\ref{le:no-outer-edge}.}
		\label{fig:edge-outer-face}
	\end{figure}
	
	In any planar rectilinear drawing of $\mathcal K_i$ in which the edge $v_i v_{i+1}$ is incident to the outer face, the internal angles at $v_i$ and $v_{i+1}$ are at least $90\degree$ each. Hence, in any planar rectilinear drawing of $G$ with plane embedding $\mathcal E$, the sum of the internal angles at $v_0,v_1,\dots,v_{k-1}$ is at least $k \cdot 180 \degree$, while the sum of the internal angles of the polygon representing $\mathcal C$ should be $(k-2) \cdot 180 \degree$. This proves that no planar rectilinear drawing of $G$ with plane embedding $\mathcal E$ exists.
\end{proof}

Because of Lemma~\ref{le:no-outer-edge}, we can assume that an edge $uv$ incident to $f^*_{\mathcal E}$ is also incident to $f^*_{\mathcal O}$ and root $G$ at $uv$. Denote by $f^{uv}_{\mathcal E}$ the internal face of $\mathcal E$ incident to $uv$.

Our algorithm will test whether $G$ admits an $\ell$-constrained representation $(\mathcal E,\phi)$ such that the values $\phi^{\mathrm{int}}(u)$ and $\phi^{\mathrm{int}}(v)$ are prescribed. For example, the algorithm will test whether an $\ell$-constrained representation $(\mathcal E,\phi)$ exists such that $\phi^{\mathrm{int}}(u)=90\degree$ and $\phi^{\mathrm{int}}(v)=270\degree$. Since each of $\phi^{\mathrm{int}}(u)$ and $\phi^{\mathrm{int}}(v)$ can only be a value in $\{90\degree,180\degree,270\degree\}$, the algorithm will test for the existence of nine $\ell$-constrained representations of $G$. Formally, we define the following.



\begin{definition} \label{def:alpha-beta}
	An $(\mathcal E,\ell,\mu,\nu)$-representation of $G$ is an $\ell$-constrained representation $(\mathcal E,\phi)$ such that $\phi^{\mathrm{int}}(u)=\mu$ and $\phi^{\mathrm{int}}(v)=\nu$.
\end{definition}

It turns out that dealing with $(\mathcal E,\ell,\mu,\nu)$-representations rather than general $\ell$-constrained representations is easier, as in order to test whether an $(\mathcal E,\ell,\mu,\nu)$-representation of $G$ exists, we can exploit the information about the existence of an $(\mathcal E_i,\ell_i,\mu_i,\nu_i)$-representation of each $uv$-subgraph $G_i$ of $G$, for each value of $\mu_i$ and $\nu_i$, where $\mathcal E_i$ and $\ell_i$ are the restrictions of $\mathcal E$ and $\ell$ to $G_i$, respectively. This can be done due to the following main structural lemma; refer to Fig.~\ref{fig:structural-fixed-embedding}. 

Let $u=u_0,u_1,\dots,u_k=v$ be the clockwise order of the vertices of ${\mathcal C}^*_{uv}$ in $\mathcal O$. Further, for $i=1,\dots,k$, let $G_i$ be the $uv$-subgraph of $G$ with root $u_{i-1}u_i$, let $\mathcal E_i$ be the restriction of $\mathcal E$ to $G_i$, and let $\ell_i$ be the restriction of $\ell$ to $G_i$. The latter is defined as follows: Consider a vertex $w$ that belongs to $G_i$; if the faces $f_1,\dots,f_h$ of $\mathcal E$ are incident to $w$ and become a unique face $f$ in $\mathcal E_i$ because some edges incident to $w$ do not belong to $G_i$, then $\ell_i(w,f)=\sum_{j=1}^h \ell(w,f_j)$. 

\begin{figure}[tb]\tabcolsep=4pt
	\centering
	\includegraphics[width=.5\textwidth]{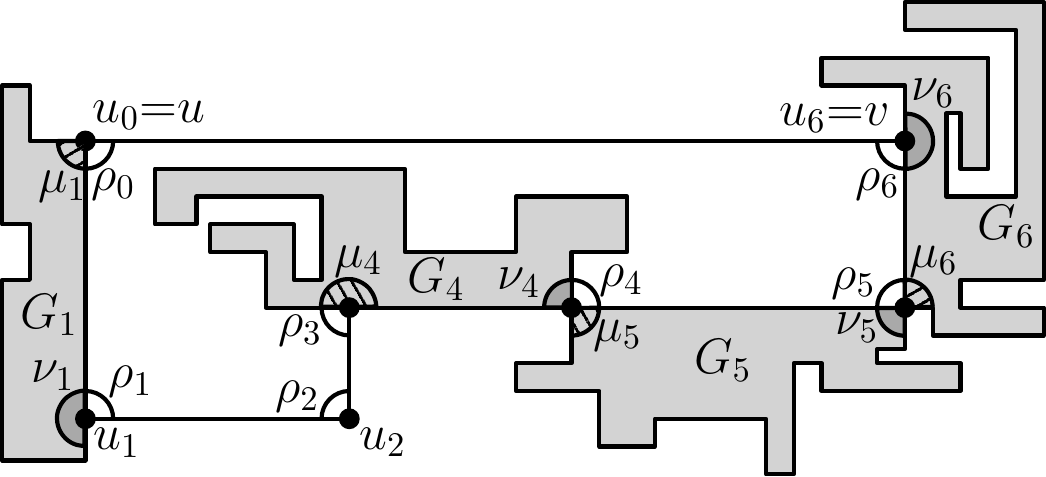}
	\caption{A planar rectilinear drawing of the outerplanar graph $G$ in Fig.~\ref{fig:outerplanarrepresentation}(b). The corresponding rectilinear representation is an $(\mathcal E,\ell,180\degree,270\degree)$-representation, where $\mathcal E$ is a plane (and not outerplane) embedding of~$G$.}
	\label{fig:structural-fixed-embedding}
\end{figure}

\begin{lemma} \label{le:structural-fixed-embedding}
	For any $\mu,\nu\in \{90\degree,180\degree,270\degree\}$, we have that $G$ admits an $(\mathcal E,\ell,\mu,\nu)$-representation if and only if values $\rho_0$, $\rho_1$, $\mu_1$, $\nu_1$, $\rho_2$, $\mu_2$, $\nu_2$, $\dots$, $\rho_k$, $\mu_k$, $\nu_k\in \{0\degree,90\degree,180\degree,270\degree\}$ exist with the following properties:
	
	\begin{enumerate}
		\item[Pr1:] for $i=0,\dots,k$, we have $\rho_i \geq \ell(u_i,f^{uv}_{\mathcal E})$;
		\item[Pr2:] for $i=1,\dots,k$, if $G_i$ is trivial then $\mu_i=\nu_i=0\degree$, otherwise $\mu_i,\nu_i \in \{90\degree,180\degree\}$ and $G_i$ admits an $(\mathcal E_i,\ell_i,\mu_i,\nu_i)$-representation;
		\item[Pr3:] for $i=1,\dots,k-1$, we have that $\nu_i+\rho_i+\mu_{i+1}\leq 360\degree-\ell(u_i,f^*_{\cal E})$; further, $\rho_0+\mu_1\leq 360\degree-\ell(u_0,f^*_{\cal E})$ and $\rho_k+\nu_k\leq 360\degree-\ell(u_k,f^*_{\cal E})$;
		\item[Pr4:] $\rho_0+\mu_1=\mu$ and $\rho_k+\nu_k=\nu$; and
		\item[Pr5:] for $i=1,\dots,k$, if $G_i$ is trivial or if it lies outside ${\mathcal C}^*_{uv}$ in $\cal E$, then let $\sigma_i=0\degree$, otherwise let $\sigma_i=\mu_i+\nu_i$; then $\sum_{i=0}^k \rho_i + \sum_{i=1}^k \sigma_i = (k-1) \cdot 180\degree$. 
	\end{enumerate}
\end{lemma}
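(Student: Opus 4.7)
I plan to prove this biconditional by extracting the parameters $\rho_i,\mu_i,\nu_i$ from a given representation in the necessity direction and reconstructing a representation from given parameters in the sufficiency direction, using Tamassia's conditions (1) and (2) from the preliminaries as the verification criterion throughout.

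For the necessity direction, starting from an $(\mathcal E,\ell,\mu,\nu)$-representation $(\mathcal E,\phi)$ of $G$, I would set $\rho_i := \phi(u_i, f^{uv}_{\mathcal E})$ when $u_i$ is incident to $f^{uv}_{\mathcal E}$ and $\rho_i := 0\degree$ otherwise; set $\mu_i = \nu_i := 0\degree$ when $G_i$ is trivial; and for non-trivial $G_i$ take the restriction $(\mathcal E_i,\phi_i)$ of $(\mathcal E,\phi)$ to $G_i$ and set $\mu_i := \phi_i^{\mathrm{int}}(u_{i-1})$, $\nu_i := \phi_i^{\mathrm{int}}(u_i)$. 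Pr1 is then the $\ell$-constraint on $\phi$ at $(u_i, f^{uv}_{\mathcal E})$. Pr2 follows since the restriction of a rectilinear representation is again a rectilinear representation (the face-merging that occurs when edges outside $G_i$ are deleted matches the aggregation defining $\ell_i$); the bound $\mu_i,\nu_i \le 180\degree$ is obtained by writing the outer-face angle of $\mathcal E_i$ at $u_{i-1}$ as $\rho_{i-1} + \phi(u_{i-1}, f^*_{\mathcal E}) + \nu_{i-1}$ (the total angle at $u_{i-1}$ in $\mathcal E$ outside the internal faces of $G_i$) and observing that $u_{i-1}$'s incidences to $f^{uv}_{\mathcal E}$, to $f^*_{\mathcal E}$, or to the internal faces of a non-trivial $G_{i-1}$ supply at least $180\degree$ in each geometric configuration. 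Pr3 is the $360\degree$ angle-distribution identity at $u_i$ combined with the $\ell$-lower bound on $f^*_{\mathcal E}$. Pr4 is the direct unpacking of $\phi^{\mathrm{int}}$ at the endpoints $u = u_0$ and $v = u_k$. Pr5 is the polygon interior-angle-sum identity $(k-1)\cdot 180\degree$ for the $(k{+}1)$-gon bounded by $\mathcal C^*_{uv}$: at each $u_i$ the polygon's interior angle decomposes as $\rho_i$ plus the inside-of-$\mathcal C^*_{uv}$ contributions from any $G_i$ (equal to $\nu_i$) and from any $G_{i+1}$ (equal to $\mu_{i+1}$) that is embedded inside, so summing over $i$ yields $\sum_i \rho_i + \sum_i \sigma_i$.

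For the sufficiency direction, I would use Pr2 to pick an $(\mathcal E_i,\ell_i,\mu_i,\nu_i)$-representation $(\mathcal E_i,\phi_i)$ of each non-trivial $G_i$ and define $\phi$ on $\mathcal E$ as follows: each internal face of $\mathcal E$ corresponding to an internal face of some $\mathcal E_i$ inherits $\phi_i$; at every $u_i$ set $\phi(u_i, f^{uv}_{\mathcal E}) := \rho_i$ (when $u_i$ is incident to $f^{uv}_{\mathcal E}$) and set the $f^*_{\mathcal E}$-angle at $u_i$ as the residual making the angles around $u_i$ sum to $360\degree$; the remaining occurrences on $\partial f^{uv}_{\mathcal E}$ or $\partial f^*_{\mathcal E}$ inherit the outer-face $\phi_i$-values of the inside or outside $G_i$'s respectively. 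Tamassia's condition (1) is then by construction at each $u_i$ and inherited at every other vertex; condition (2) is inherited for internal faces within some $G_i$, is the restatement of Pr5 for $f^{uv}_{\mathcal E}$ (combined with the outer-face identities of the inside $G_j$'s), and follows for $f^*_{\mathcal E}$ from the fact that the face-by-face Tamassia-(2) identities sum, via Euler's formula, to a total that is forced once condition (1) at each vertex and condition (2) at every other face have been verified. The $\ell$-constraint is enforced by Pr1 on $f^{uv}_{\mathcal E}$, by Pr3 on $f^*_{\mathcal E}$, and by each $(\mathcal E_i,\phi_i)$ on the internal faces inherited from it; Pr4 then directly gives $\phi^{\mathrm{int}}(u)=\mu$ and $\phi^{\mathrm{int}}(v)=\nu$.

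The main obstacle is the case distinction whether each non-trivial $uv$-subgraph $G_i$ is embedded inside or outside $\mathcal C^*_{uv}$ in $\mathcal E$. In the inside case, $f^{uv}_{\mathcal E}$ is not bounded by $\mathcal C^*_{uv}$ alone but makes excursions along the outer boundaries of the inside $G_j$'s, a vertex $u_i$ may fail to be incident to $f^{uv}_{\mathcal E}$ (forcing $\rho_i = 0\degree$) or to $f^*_{\mathcal E}$, and the $\ell$-aggregation performed during restriction has to be interpreted accordingly. The $\sigma_i$-bookkeeping in Pr5 and the convention that $\ell(u_i, f^*_{\mathcal E})$ is read as $0\degree$ in Pr3 when $u_i \notin \partial f^*_{\mathcal E}$ are engineered precisely to absorb this case distinction, and carefully tracking the correspondence between the faces of $\mathcal E$ and the faces of each $\mathcal E_i$ in the restriction/merging process is the technical heart of the proof.
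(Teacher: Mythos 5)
Your proposal follows the same route as the paper's proof: in the necessity direction you extract $\rho_i=\phi(u_i,f^{uv}_{\mathcal E})$ and take $\mu_i,\nu_i$ to be the internal-angle sums of the restriction $(\mathcal E_i,\phi_i)$, and in the sufficiency direction you reassemble $\phi$ from the $(\mathcal E_i,\ell_i,\mu_i,\nu_i)$-representations together with the $\rho_i$'s and verify Tamassia's two conditions; your Euler-formula shortcut for condition (2) at $f^*_{\mathcal E}$ is a legitimate simplification of the paper's direct computation there, and your Pr2/Pr3/Pr5 justifications match the paper's. The one genuine problem is the case you single out as ``the main obstacle'': a vertex $u_i$ can in fact never fail to be incident to $f^{uv}_{\mathcal E}$ or to $f^*_{\mathcal E}$, and the fallback conventions you introduce for that case ($\rho_i:=0\degree$ when $u_i\notin\partial f^{uv}_{\mathcal E}$, reading $\ell(u_i,f^*_{\mathcal E})$ as $0\degree$) are inconsistent with the lemma itself --- Pr1 demands $\rho_i\geq\ell(u_i,f^{uv}_{\mathcal E})$ and $\ell$ only takes values in $\{90\degree,180\degree,270\degree\}$ --- so if the case you envision could occur, your extracted values would violate Pr1 and the necessity direction would collapse. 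What is needed instead is the observation that the case is vacuous: the $uv$-subgraphs are the blocks of $G$ minus the edge $uv$, their block-cut-vertex tree is a path, and so each of $u_1,\dots,u_{k-1}$ is a cut-vertex separating $u$ from $v$ in $G$ minus $uv$; since the boundary of each of the two faces incident to $uv$ consists of $uv$ plus a $u$--$v$ path avoiding $uv$, every $u_i$ lies on both $f^{uv}_{\mathcal E}$ and $f^*_{\mathcal E}$ (even when some $G_j$ is embedded inside $\mathcal C^*_{uv}$, in which case the boundary of $f^{uv}_{\mathcal E}$ detours along the outer boundary of $G_j$ but still visits every $u_i$ exactly once). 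With that fact established, your definitions coincide with the paper's and the remainder of your argument goes through.
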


Before proving the lemma, we describe its statement. For $i=0,\dots,k$, the value $\rho_i$ represents the angle incident to $u_i$ inside $f^{uv}_{\mathcal E}$. Hence, Property~Pr1 ensures that such an angle is at least $\ell(u_i,f^{uv}_{\mathcal E})$, as required. For $i=1,\dots,k$, the value $\mu_i$ represents the sum of the angles incident to $u_{i-1}$ inside the internal faces of $\mathcal E_i$; similarly, the value $\nu_i$ represents the sum of the angles incident to $u_{i}$ inside the internal faces of $\mathcal E_i$. Hence, Property~Pr2 ensures that each non-trivial $uv$-subgraph $G_i$ of $G$ admits an $(\mathcal E_i,\ell_i,\mu_i,\nu_i)$-representation. Property~Pr3 ensures that the sum of the angles around each vertex $u_i$ does not exceed $360\degree$, taking into account that the angle incident to $u_i$ in $f^*_{\mathcal E}$ is at least $\ell(u_i,f^*_{\cal E})$. Property~Pr4 ensures that the sum of the internal angles incident to $u_0$ is $\mu$ and the sum of the internal angles incident to $u_k$ is $\nu$. Finally, Property~Pr5 ensures that the sum of the angles internal to the polygon representing $\mathcal C^*_{uv}$ is equal to $(k-1) \cdot 180\degree$. 

\medskip
\begin{proof}
	$(\Longrightarrow)$	We first prove the necessity. Suppose that $G$ admits an $(\mathcal E,\ell,\mu,\nu)$-representation $(\mathcal E,\phi)$. For $i=0,\dots,k$, we set $\rho_i=\phi(u_i,f^{uv}_{\mathcal E})$. Further, if $G_i$ is trivial, then we set $\mu_i=\nu_i=0\degree$, otherwise, we set $\mu_i= \sum_f \phi(u_{i-1},f)$, where the sum is over all the internal faces $f$ of $\mathcal E_i$ incident to $u_{i-1}$, and we set $\nu_i= \sum_f \phi(u_{i},f)$, where the sum is over all the internal faces $f$ of $\mathcal E_i$ incident to $u_{i}$. For $i=0,\dots,k$, we have $\rho_i\in \{90\degree,180\degree,270\degree\}$, given that $\phi(u_i,f^{uv}_{\mathcal E})\geq 90\degree$ and $\phi(u_i,f^{*}_{\mathcal E})\geq 90\degree$. For the same reason, we have $\mu_i\in \{0\degree,90\degree,180\degree\}$ and $\nu_i\in \{0\degree,90\degree,180\degree\}$, for $i=1,\dots,k$. We now show that the values $\rho_0,\rho_1,\mu_1,\nu_1,\rho_2,\mu_2,\nu_2,\dots,\rho_k,\mu_k,\nu_k$ satisfy Properties Pr1--Pr5. 
	
	\begin{enumerate}[(1)]
		\item Since $(\mathcal E,\phi)$ is an $(\mathcal E,\ell,\mu,\nu)$-representation, for each $i=0,\dots,k$ we have $\phi(u_i,f^{uv}_{\mathcal E})\geq \ell(u_i,f^{uv}_{\mathcal E})$ and hence $\rho_i \geq \ell(u_i,f^{uv}_{\mathcal E})$. Property Pr1 follows. 
		\item If $G_i$ is trivial, then $\mu_i=\nu_i=0\degree$, by construction. Otherwise, let $(\mathcal E_i,\phi_i)$ be the restriction of $(\mathcal E,\phi)$ to $G_i$. Since $(\mathcal E,\phi)$ is an $\ell$-constrained representation of $G$, it follows that $(\mathcal E_i,\phi_i)$ is an $\ell_i$-constrained representation of $G_i$. By the definition of $\mu_i$ and $\nu_i$, we have $\phi^{\mathrm{int}}_i(u_{i-1})=\mu_i$ and $\phi^{\mathrm{int}}_i(u_{i})=\nu_i$, hence $(\mathcal E_i,\phi_i)$ is an $(\mathcal E_i,\ell_i,\mu_i,\nu_i)$-representation of $G_i$. 
		\item For $i=1,\dots,k-1$, by the definition of $\rho_i$, $\nu_i$, and $\mu_{i+1}$, we have $\nu_i+\rho_i+\mu_{i+1}+\phi(u_{i},f^*_{\mathcal E})=360\degree$, hence Property Pr3 follows, given that $\phi(u_{i},f^*_{\mathcal E})\geq \ell(u_i,f^*_{\cal E})$.
		\item By the definition of $\rho_0$ and $\mu_1$, we have $\rho_0+\mu_1=\phi(u_0,f^{uv}_{\mathcal E})+ \sum_f \phi(u_0,f)=\phi^{\mathrm{int}}(u_0)$, where the sum is over all the internal faces $f$ of $\mathcal E_1$ incident to $u_0$; since $(\mathcal E,\phi)$ is an $(\mathcal E,\ell,\mu,\nu)$-representation, it follows that $\phi^{\mathrm{int}}(u_0)=\mu$ and hence $\rho_0+\mu_1=\mu$. Analogously, $\rho_k+\nu_k=\nu$, and Property Pr4 follows. 
		\item Finally, by construction, $\sum_{i=0}^k \rho_i + \sum_{i=1}^k \sigma_i$ is equal to the sum of the angles $\phi(u_i,f)$ such that $f$ is a face of $\mathcal E$ incident to $u_i$ inside ${\mathcal C}^*_{uv}$. Since ${\mathcal C}^*_{uv}$ has $k+1$ vertices, the sum of its internal angles in $(\mathcal E,\phi)$ is $(k-1) \cdot 180\degree$, and Property Pr5 follows.
	\end{enumerate}

	$(\Longleftarrow)$	We now prove the sufficiency. Assume that values $\rho_0,\rho_1,\mu_1,\nu_1,\dots,\rho_k,\mu_k,\nu_k$ in $\{0\degree,90\degree,180\degree,270\degree\}$ exist satisfying Properties Pr1--Pr5. For $i=1,\dots,k$, if $G_i$ is non-trivial, let $(\mathcal E_i,\phi_i)$ be an $(\mathcal E_i,\ell_i,\mu_i,\nu_i)$-representation of $G_i$; this exists by Property~Pr2. 
	
	We define a function $\phi$ such that $(\mathcal E,\phi)$ is an $(\mathcal E,\ell,\mu,\nu)$-representation of $G$. Consider any face $f$ of $\mathcal E$ and any vertex $w$ incident to $f$. For each $i=1,\dots,k$ such that $G_i$ is non-trivial, let $f_i$ be the face of $\mathcal E_i$ corresponding to $f$. We distinguish four cases.
	
	\begin{itemize}
		\item Case~1: $w\notin \{u_0,u_1,\dots,u_k\}$; then $w$ belongs to a unique $uv$-subgraph $G_i$ of $G$. Let $\phi(w,f)=\phi_i(w,f_i)$. 
		\item Case~2: for some $i\in \{1,\dots,k\}$, we have that $f_i$ is an internal face of $\mathcal E_i$. Let $\phi(w,f)=\phi_i(w,f_i)$. 
		\item Case~3: $w=u_i$, for some $i\in \{0,1,\dots,k\}$, and $f=f^{uv}_{\mathcal E}$. Let $\phi(w,f)=\rho_i$. 
		\item Case~4: $w=u_i$, for some $i\in \{0,1,\dots,k\}$, and $f=f^*_{\mathcal E}$. Let $\phi(w,f)=360\degree - \nu_{i}-\rho_i-\mu_{i+1}$ (if $1\leq i\leq k-1$) or $\phi(w,f)=360\degree - \rho_0-\mu_1$ (if $i=0$) or $\phi(w,f)=360\degree - \nu_k-\rho_k$ (if $i=k$). 
	\end{itemize}
	
	First, we have $\phi(w,f)\in \{90\degree,180\degree,270\degree\}$, for every face $f$ of $\mathcal E$ and every vertex $w$ incident to $f$. This is because $\phi_i(w,f_i)\in \{90\degree,180\degree,270\degree\}$ if $\phi(w,f)$ was set according to Cases~1 or~2, it is because $\rho_i \in \{90\degree,180\degree,270\degree\}$ (by Property~Pr1) if $\phi(w,f)$ was set according to Case~3, and it comes from Properties~Pr1 and~Pr3 and from $\ell(w,f^*_{\mathcal E})\geq 90\degree$ if $\phi(w,f)$ was set according to Case~4. 
	
	Second, we have $\phi(w,f)\geq \ell(w,f)$, for every face $f$ of $\mathcal E$ and every vertex $w$ incident to $f$. This comes from $\phi_i(w,f_i)\geq \ell_i(w,f_i)$ if $w\notin \{u_0,u_1,\dots,u_k\}$ or if $f_i$ is a face internal to $\mathcal E_i$, by Cases~1 and~2 and since $(\mathcal E_i,\phi_i)$ is an $(\mathcal E_i,\ell_i,\mu_i,\nu_i)$-representation of $G_i$. Further, $\phi(u_i,f^{uv}_{\mathcal E})\geq \ell(u_i,f^{uv}_{\mathcal E})$ comes from Property~Pr1, by Case~3. Finally, $\phi(u_i,f^*_{\mathcal E})\geq \ell(u_i,f^*_{\mathcal E})$ comes from Property~Pr3, by Case~4. 	
	
	Third, we have $\phi^{\mathrm{int}}(u)=\mu$. Namely, by Case~3 we have $\phi(u,f^{uv}_{\mathcal E})=\rho_0$. Further, by Cases~1 and~2, we have $\sum_{f}\phi(u,f)=\sum_{f_1}\phi_1(u,f_1)=\phi_1^{\mathrm{int}}(u)=\mu_1$, where $f$ ranges over all the faces of $\mathcal E$ whose corresponding faces $f_1$ in $\mathcal E_1$ are internal to $\mathcal E_1$ and incident to $u$; the equality $\phi_1^{\mathrm{int}}(u)=\mu_1$ descends from the fact that $(\mathcal E_1,\phi_1)$ is a $(\mathcal E_1,\ell_1,\mu_1,\nu_1)$-representation, by Property~Pr2. Finally, by~Property~Pr4 we have $\rho_0+\mu_1=\mu$. It can be analogously proved that $\phi^{\mathrm{int}}(v)=\nu$.
	
	In order to prove that $(\mathcal E,\phi)$ is a rectilinear representation, we prove that it satisfies conditions~(1) and (2) of Tamassia's characterization (see~\cite{t-eggmnb-87} and Section~\ref{se:preliminaries}).
	
	We first prove that $(\mathcal E,\phi)$ satisfies condition~(1) of Tamassia's characterization~\cite{t-eggmnb-87}. Consider any vertex $w\notin \{u_0,u_1,\dots,u_k\}$ and let $G_i$ be the $uv$-subgraph of $G$ the vertex $w$ belongs to. We have $\sum_f \phi(w,f)=360\degree$, where the sum is over all the faces $f$ of $\mathcal E$ incident to $w$, given that $\phi(w,f)=\phi_i(w,f_i)$, according to Cases~1 and~2, and given that $\sum_{f_i} \phi_i(w,f_i)=360\degree$. Second, consider any vertex $u_i$, with $i\in \{0,1,\dots,k\}$. Any face $f$ incident to $u_i$ in $\mathcal E$ is of one of the following four types: (a) the face $f_i$ of $\mathcal E_i$ corresponding to $f$ is internal to $\mathcal E_i$; (b) the face $f_{i+1}$ of $\mathcal E_{i+1}$ corresponding to $f$ is internal to $\mathcal E_{i+1}$; (c) $f=f^{uv}_{\mathcal E}$; or (d) $f=f^*_{\mathcal E}$. By Case~2 and by Property~Pr2, the sum of the angles incident to $u_i$ inside faces of type~(a) and~(b) is $\nu_i+\mu_{i+1}$ (the first or the last of such two terms has to be neglected if $i=0$ or $i=k$, respectively). By Case~3, the angle incident to $u_i$ inside $f^{uv}_{\mathcal E}$ is $\rho_i$. Finally, by Case~4, the angle incident to $u_i$ inside $f^*_{\mathcal E}$ is $\phi(u_i,f^*_{\mathcal E})=360\degree - \nu_{i}-\rho_i-\mu_{i+1}$ (if $1\leq i\leq k-1$) or $\phi(u_i,f^*_{\mathcal E})=360\degree - \rho_0-\mu_1$ (if $i=0$) or $\phi(u_i,f^*_{\mathcal E})=360\degree - \nu_k-\rho_k$ (if $i=k$). Hence, we have $\sum_f \phi(u_i,f)=360\degree$, where the sum is over all the faces $f$ of $\mathcal E$ incident to $u_i$. It follows that $(\mathcal E,\phi)$ satisfies condition~(1) of Tamassia's characterization~\cite{t-eggmnb-87}.
	
	We finally prove that $(\mathcal E,\phi)$ satisfies condition~(2) of Tamassia's characterization~\cite{t-eggmnb-87}. Consider any face $f$ of $\mathcal E$ whose corresponding face $f_i$ of $\mathcal E_i$ is internal to $\mathcal E_i$, for some $i\in \{1,\dots,k\}$. Then $\sum_w (2-\phi(w,f)/90\degree)=+4$, where the sum is over all the vertices $w$ incident to $f$, given that $\phi(w,f)=\phi_i(w,f_i)$, according to Cases~1 and~2, and given that $\sum_w (2-\phi_i(w,f_i)/90\degree)=+4$. 
	
	It remains to deal with $f^{uv}_{\mathcal E}$ and $f^*_{\mathcal E}$. For every non-trivial $uv$-subgraph $G_i$ of $G$, we have that $\sum_w (2-\phi_i(w,f^{*}_{\mathcal E_i})/90\degree)=-4$, where the sum is over all the vertices $w$ of $G_i$ incident to $f^*_{\mathcal E_i}$. Note that, if $w\neq u_{i-1},u_i$, then $w$ is incident to $f^{uv}_{\mathcal E}$ or $f^*_{\mathcal E}$, depending on whether $G_i$ lies inside or outside $\mathcal C^*_{uv}$, respectively.
	
	We show that $\sum_w (2-\phi(w,f^{uv}_{\mathcal E})/90\degree)=+4$, where the sum is over all the vertices $w$ of $G$ incident to $f^{uv}_{\mathcal E}$. Consider any non-trivial $uv$-subgraph $G_i$ of $G$ that lies inside $\mathcal C^*_{uv}$. By Case~1, for every vertex $w\neq u_{i-1},u_i$ of $G_i$ incident to $f^{uv}_{\mathcal E}$, we have $\phi(w,f^{uv}_{\mathcal E})=\phi_i(w,f^{*}_{\mathcal E_i})$; further, by Property~Pr2 we have $\phi_i(u_{i-1},f^{*}_{\mathcal E_i})=360\degree-\mu_i$ and $\phi_i(u_{i},f^{*}_{\mathcal E_i})=360\degree-\nu_i$. It follows that $\sum_w (2-\phi(w,f^{uv}_{\mathcal E})/90\degree)$, where the sum is over all the vertices $w\neq u_{i-1},u_i$ of $G_i$ incident to $f^{uv}_{\mathcal E}$, is equal to $\sum_w (2-\phi_i(w,f^*_{\mathcal E_i})/90\degree)$, where the sum is over all the vertices $w$ of $G_i$ incident to $f^*_{\mathcal E_i}$, minus $(2 - (360\degree-\mu_i)/90\degree) + (2 - (360\degree-\nu_i)/90\degree)$, hence it is equal to $-4 + (2-\mu_i/90\degree) + (2-\nu_i/90\degree)=-\sigma_i / 90\degree$. Further, by Case~3, we have $\phi(u_i,f^{uv}_{\mathcal E})=\rho_i$, for $i=0,\dots,k$. Hence, $\sum_w (2-\phi(w,f^{uv}_{\mathcal E})/90\degree)$, where the sum is over all the vertices $w$ of $G$ incident to $f^{uv}_{\mathcal E}$, is equal to $2\cdot (k+1) - (\sum_{i=0}^k \rho_i +\sum_{i=1}^k \sigma_i)/90\degree=2\cdot (k+1)-2\cdot (k-1)=+4$, where we used Property~Pr5.
	
	We next show that $\sum_w (2-\phi(w,f^*_{\mathcal E})/90\degree)=-4$, where the sum is over all the vertices of $G$ incident to $f^*_{\mathcal E}$. Consider any non-trivial $uv$-subgraph $G_i$ of $G$ that lies outside $\mathcal C^*_{uv}$. By Case~1, for every vertex $w\neq u_{i-1},u_i$ of $G_i$ incident to $f^*_{\mathcal E}$, we have $\phi(w,f^*_{\mathcal E})=\phi_i(w,f^{*}_{\mathcal E_i})$; further, by Property~Pr2 we have $\phi_i(u_{i-1},f^{*}_{\mathcal E_i})=360\degree-\mu_i$ and $\phi_i(u_{i},f^{*}_{\mathcal E_i})=360\degree-\nu_i$. It follows that $\sum_w (2-\phi(w,f^*_{\mathcal E})/90\degree)$, where the sum is over all the vertices $w\neq u_{i-1},u_i$ of $G_i$ incident to $f^*_{\mathcal E_i}$, is equal to $\sum_w (2-\phi_i(w,f^*_{\mathcal E_i})/90\degree)$, where the sum is over all the vertices $w$ of $G_i$ incident to $f^*_{\mathcal E_i}$, minus $(2 - (360\degree-\mu_i)/90\degree) + (2 - (360\degree-\nu_i)/90\degree)$, hence it is equal to $-4 + (2-\mu_i/90\degree) + (2-\nu_i/90\degree)=-(\mu_i+\nu_i) / 90\degree$. Further, by Case~4, for $i=0,\dots,k$ we have $\phi(u_i,f^*_{\mathcal E})=360\degree-\rho_i-\nu_i-\mu_{i+1}$ (the third or fourth term has to be neglected if $i=0$ or $i=k$, respectively). Hence, $\sum_w (2-\phi(w,f^*_{\mathcal E})/90\degree)$, where the sum is over all the vertices $w$ incident to $f^*_{\mathcal E}$, is equal to $2\cdot (k+1) - 4 \cdot (k+1) + (\sum_{i=0}^k \rho_i + \sum_{i=1}^k \mu_i + \sum_{i=1}^k \nu_i)/90\degree - (\sum \mu_i + \sum \nu_i)/90\degree$, where the last two sums range over the indices $i$ such that $G_i$ lies outside $\mathcal C^*_{uv}$ in $\mathcal E$. Hence, $\sum_w (2-\phi(w,f^*_{\mathcal E})/90\degree)=2\cdot (k+1) - 4 \cdot (k+1) + (\sum_{i=0}^k \rho_i + \sum_{i=1}^k \sigma_i)/90\degree= 2\cdot (k+1) - 4 \cdot (k+1) + 2 \cdot (k-1) = -4$, where we used Property~Pr5.
\end{proof}

The main ingredient of our algorithm that tests whether $G$ admits an $(\mathcal E,\ell,\mu,\nu)$-representation is an $O(k)$-time solution to the following problem. Assume that, for each $i=1,\dots,k$, the pairs $(\mu_i,\nu_i)$ such that the $uv$-subgraph $G_i$ of $G$ admits an $(\mathcal E_i,\ell_i,\mu_i,\nu_i)$-representation are known. Do values $\rho_0,\rho_1,\mu_1,\nu_1,\rho_2$, $\mu_2,\nu_2,\dots,\rho_k,\mu_k,\nu_k$ exist such that Properties~Pr1--Pr5 of Lemma~\ref{le:structural-fixed-embedding} are satisfied? 

Observe that, if the number of non-trivial $uv$-subgraphs of $G$ is $t$, there might be $4^t$ distinct assignments of pairs $(\mu_i,\nu_i)$ to such subgraphs with $\mu_i,\nu_i \in \{90\degree,180\degree\}$. Hence, if $t$ is ``large'' we can not just test, for each of such assignments, whether values $\rho_0,\rho_1,\dots,\rho_k$ exist satisfying Properties Pr1--Pr5 of Lemma~\ref{le:structural-fixed-embedding}. We overcome this problem by establishing some criteria for choosing an ``optimal'' pair $(\mu_i,\nu_i)$ to assign to almost every non-trivial $uv$-subgraph $G_i$ of $G$. This is done in the following two lemmata. 


%

\begin{lemma}\label{le:smaller-is-better}
	Suppose that a sequence $\mathcal S$ of values $\rho_0,\rho_1,\mu_1,\nu_1,\dots,\rho_k,\mu_k,\nu_k$ exists satisfying Properties~Pr1--Pr5 of Lemma~\ref{le:structural-fixed-embedding}. Further, suppose that $G_i$ is a non-trivial $uv$-subgraph of $G$, for some $i\in \{2,\dots,k-1\}$. 
	
	If $G_i$ admits an $(\mathcal E_i,\ell_i,\mu'_i,\nu'_i)$-representation, where $\mu'_i,\nu'_i\in \{90\degree,180\degree\}$, $\mu'_i\leq \mu_i$, and $\nu'_i\leq \nu_i$, then there exist values $\rho'_{i-1}$ and $\rho'_i$ such that the sequence $\mathcal S'$ obtained from $\mathcal S$ by replacing the values $\rho_{i-1}$, $\rho_i$, $\mu_i$, and $\nu_i$ with $\rho'_{i-1}$, $\rho'_i$, $\mu'_i$, and $\nu'_i$, respectively, satisfies Properties~Pr1--Pr5 of Lemma~\ref{le:structural-fixed-embedding}. 
\end{lemma}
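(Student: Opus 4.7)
\medskip
\noindent\textbf{Proof plan.} The plan is to define $\rho'_{i-1}$ and $\rho'_i$ by distinguishing two cases according to whether the $uv$-subgraph $G_i$ lies inside or outside ${\mathcal C}^*_{uv}$ in $\mathcal E$, and then to verify Properties~Pr1--Pr5 in each case. Intuitively, when we shrink $\mu_i$ to $\mu'_i$ and $\nu_i$ to $\nu'_i$ we create slack at $u_{i-1}$ and at $u_i$; we must decide whether this slack is absorbed by the angles $\rho_{i-1}, \rho_i$ inside $f^{uv}_{\mathcal E}$ or by the angles at $u_{i-1}, u_i$ inside $f^*_{\mathcal E}$, and this choice is dictated by Property~Pr5.

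\medskip
\noindent\textbf{Case A: $G_i$ lies inside ${\mathcal C}^*_{uv}$.} Then $\sigma_i=\mu_i+\nu_i$ and the new $\sigma'_i=\mu'_i+\nu'_i$ is smaller by $(\mu_i-\mu'_i)+(\nu_i-\nu'_i)$, so to preserve the equality in Pr5 I would set
\[
\rho'_{i-1}:=\rho_{i-1}+(\mu_i-\mu'_i), \qquad \rho'_i:=\rho_i+(\nu_i-\nu'_i).
\]
Then Pr5 is immediate by direct substitution. Property~Pr1 is preserved since $\rho'_{i-1}\geq \rho_{i-1}$ and $\rho'_i\geq \rho_i$. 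For Pr3, the two relevant constraints read $\nu_{i-1}+\rho'_{i-1}+\mu'_i=\nu_{i-1}+\rho_{i-1}+\mu_i\leq 360\degree-\ell(u_{i-1},f^*_{\mathcal E})$ and $\nu'_i+\rho'_i+\mu_{i+1}=\nu_i+\rho_i+\mu_{i+1}\leq 360\degree-\ell(u_i,f^*_{\mathcal E})$, both holding by the original Pr3. Property~Pr4 is unaffected because $i\in\{2,\dots,k-1\}$ leaves $\rho_0,\mu_1,\rho_k,\nu_k$ untouched, and Pr2 holds by the hypothesis that $G_i$ admits an $({\mathcal E}_i,\ell_i,\mu'_i,\nu'_i)$-representation with $\mu'_i,\nu'_i\in\{90\degree,180\degree\}$.

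\medskip
\noindent\textbf{Case B: $G_i$ lies outside ${\mathcal C}^*_{uv}$.} Then $\sigma_i=\sigma'_i=0\degree$, so Pr5 is already balanced and I would simply set $\rho'_{i-1}:=\rho_{i-1}$ and $\rho'_i:=\rho_i$. Properties~Pr1, Pr4 and Pr5 are trivially inherited, and Pr3 becomes strictly easier since $\mu'_i\leq \mu_i$ and $\nu'_i\leq \nu_i$; Pr2 again holds by hypothesis.

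\medskip
\noindent\textbf{Main obstacle.} The only non-trivial check is that the new values $\rho'_{i-1}$ and $\rho'_i$ in Case~A still lie in $\{0\degree,90\degree,180\degree,270\degree\}$, since we are adding a positive quantity to them. I would bound $\rho_{i-1}$ from above using the original Property~Pr3, namely $\rho_{i-1}\leq 360\degree-\ell(u_{i-1},f^*_{\mathcal E})-\nu_{i-1}-\mu_i\leq 270\degree-\mu_i$; this yields $\rho'_{i-1}=\rho_{i-1}+\mu_i-\mu'_i\leq 270\degree-\mu'_i\leq 180\degree$, and the same argument for $u_i$ gives $\rho'_i\leq 180\degree$. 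Since the differences $\mu_i-\mu'_i$ and $\nu_i-\nu'_i$ are multiples of $90\degree$, the new $\rho$ values land in the required discrete set, completing the verification.
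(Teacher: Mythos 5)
Your proof is correct and follows essentially the same route as the paper: the same case split on whether $G_i$ lies inside or outside ${\mathcal C}^*_{uv}$, and the same choices $\rho'_{i-1}=\rho_{i-1}+\mu_i-\mu'_i$, $\rho'_i=\rho_i+\nu_i-\nu'_i$ in the inside case and $\rho'_{i-1}=\rho_{i-1}$, $\rho'_i=\rho_i$ in the outside case. Your explicit verification that the new values still lie in $\{0\degree,90\degree,180\degree,270\degree\}$ (via the upper bound from Pr3) is a detail the paper leaves implicit, and it is carried out correctly.
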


\begin{proof}
	We distinguish two cases.
	
	If $G_i$ lies outside $\mathcal C^*_{uv}$ in $\cal E$, then we let $\rho'_{i-1}=\rho_{i-1}$ and $\rho'_{i}=\rho_{i}$. Then Properties~Pr1--Pr5 are satisfied by $\mathcal S'$ since they are satisfied by $\mathcal S$. In particular, $\nu_{i-1}+\rho'_{i-1}+\mu'_i \leq \nu_{i-1}+\rho_{i-1}+\mu_i$ and $\nu'_{i}+\rho'_{i}+\mu_{i+1} \leq \nu_{i}+\rho_{i}+\mu_{i+1}$, hence Property~Pr3 is satisfied by $\mathcal S'$ since it is satisfied by $\mathcal S$. Further, Property~Pr4 is satisfied by $\mathcal S'$ since the values $\rho_0$, $\mu_1$, $\rho_k$, and $\nu_k$ are the same in $\mathcal S'$ as in $\mathcal S$, given that $2\leq i\leq k-1$. Finally, since $G_i$ lies outside $\mathcal C^*_{uv}$, the sum $\sum_{i=1}^k \sigma_i$ from Property~Pr5 has the same value in $\mathcal S$ as in $\mathcal S'$.
	
	If $G_i$ lies inside $\mathcal C^*_{uv}$ in $\cal E$, then we let $\rho'_{i-1}=\rho_{i-1}+\mu_i-\mu'_i$ and $\rho'_{i}=\rho_{i}+\nu_i-\nu'_i$. Then Properties~Pr1--Pr5 are satisfied by $\mathcal S'$ since they are satisfied by $\mathcal S$. In particular, by Property Pr1 of $\mathcal S$ and by construction, we have $\rho'_{i-1} \geq \rho_{i-1} \geq \ell(u_{i-1},f^{uv}_{\mathcal E})$ and $\rho'_i \geq \rho_i \geq \ell(u_i,f^{uv}_{\mathcal E})$, hence Property~Pr1 follows. Further, $\nu_{i-1}+\rho'_{i-1}+\mu'_i = \nu_{i-1}+\rho_{i-1}+\mu_i$ and $\nu'_{i}+\rho'_{i}+\mu_{i+1} = \nu_{i}+\rho_{i}+\mu_{i+1}$, hence Property~Pr3 is satisfied by $\mathcal S'$ since it is satisfied by $\mathcal S$. Property~Pr4 is satisfied by $\mathcal S'$ as in the case in which $G_i$ lies outside $\mathcal C^*_{uv}$ in $\cal E$. Finally, the sum $\sum_{i=1}^k \sigma_i$ is smaller in $\mathcal S'$ than in $\mathcal S$ by $\mu_i+\nu_i-\mu'_i-\nu'_i$, however the sum $\sum_{i=0}^k \rho_i$ is larger in $\mathcal S'$ than in $\mathcal S$ by the same amount, hence Property~Pr5 is satisfied by $\mathcal S'$.
\end{proof}

%
%
%


\begin{lemma}\label{le:same-size-is-the-same}
	Suppose that a sequence $\mathcal S$ of values $\rho_0,\rho_1,\mu_1,\nu_1,\dots,\rho_k,\mu_k,\nu_k$ exists satisfying Properties~Pr1--Pr5 of Lemma~\ref{le:structural-fixed-embedding}. Further, suppose that $G_i$ is a non-trivial $uv$-subgraph of $G$, for some $i\in \{2,\dots,k-1\}$, that $G_{i-1}$ and $G_{i+1}$ are both trivial $uv$-subgraphs of $G$, and that $\ell(u_{i-1},f^{uv}_{\mathcal E})=\ell(u_{i-1},f^{*}_{\mathcal E})=\ell(u_{i},f^{uv}_{\mathcal E})=\ell(u_{i},f^*_{\mathcal E})=90\degree$.
	
	If $G_i$ admits an $(\mathcal E_i,\ell_i,\mu'_i,\nu'_i)$-representation, where $\mu'_i,\nu'_i\in \{90\degree,180\degree\}$ and $\mu'_i+\nu'_i\leq \mu_i+\nu_i$, then there exist values $\rho'_{i-1}$ and $\rho'_i$ such that the sequence $\mathcal S'$ obtained from $\mathcal S$ by replacing the values $\rho_{i-1}$, $\rho_i$, $\mu_i$, and $\nu_i$ with $\rho'_{i-1}$, $\rho'_i$, $\mu'_i$, and $\nu'_i$, respectively, satisfies Properties~Pr1--Pr5 of Lemma~\ref{le:structural-fixed-embedding}. 
\end{lemma}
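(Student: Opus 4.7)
The plan is to follow the same outline as the proof of Lemma~\ref{le:smaller-is-better}, but the choice of $\rho'_{i-1}$ and $\rho'_i$ will have to be more delicate since now we only control the sum $\mu'_i+\nu'_i$ rather than the two values individually. Before defining $\rho'_{i-1}$ and $\rho'_i$ I would note that Properties~Pr2 and~Pr4 are preserved for free: Pr2 by the existence of an $(\mathcal E_i,\ell_i,\mu'_i,\nu'_i)$-representation of $G_i$, and Pr4 because $2\le i\le k-1$ leaves $\rho_0,\mu_1,\rho_k,\nu_k$ untouched. Moreover, thanks to the hypotheses $G_{i-1},G_{i+1}$ trivial (so $\nu_{i-1}=\mu_{i+1}=0\degree$) and $\ell(u_{i-1},\cdot)=\ell(u_i,\cdot)=90\degree$, the only instances of~Pr3 that can be affected are the two saying $\rho'_{i-1}+\mu'_i\le 270\degree$ and $\rho'_i+\nu'_i\le 270\degree$.

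Let $T:=\rho_{i-1}+\rho_i$ and $\Delta:=(\mu_i+\nu_i)-(\mu'_i+\nu'_i)\ge 0$. I would split as in Lemma~\ref{le:smaller-is-better}: if $G_i$ lies outside ${\mathcal C}^*_{uv}$ then $\sigma_i=\sigma'_i=0\degree$, so Pr5 requires $\rho'_{i-1}+\rho'_i=T$; if $G_i$ lies inside ${\mathcal C}^*_{uv}$ then $\sigma_i=\mu_i+\nu_i$ and $\sigma'_i=\mu'_i+\nu'_i$, so Pr5 requires $\rho'_{i-1}+\rho'_i=T+\Delta$. In both cases, call the required value $S$. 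Note that $S\le 540\degree-\mu'_i-\nu'_i$: in Case~A this follows from the original bound $T\le(270\degree-\mu_i)+(270\degree-\nu_i)=540\degree-\mu_i-\nu_i\le 540\degree-\mu'_i-\nu'_i$; in Case~B the added $\Delta=(\mu_i+\nu_i)-(\mu'_i+\nu'_i)$ cancels the difference. Also $S\ge T\ge 180\degree$. Since $\rho'_{i-1},\rho'_i\ge 90\degree$ (Pr1) and $\rho'_{i-1}\le 270\degree-\mu'_i\le 180\degree$, $\rho'_i\le 270\degree-\nu'_i\le 180\degree$, the admissible pairs lie in $\{90\degree,180\degree\}^2$ and their sums lie in $\{180\degree,270\degree,360\degree\}$.

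The core step is then the short case analysis on $S\in\{180\degree,270\degree,360\degree\}$. For $S=180\degree$ the choice $(\rho'_{i-1},\rho'_i)=(90\degree,90\degree)$ is always admissible. For $S=270\degree$ the two candidates $(90\degree,180\degree)$ and $(180\degree,90\degree)$ require $\nu'_i=90\degree$ and $\mu'_i=90\degree$, respectively; if both failed we would have $\mu'_i+\nu'_i=360\degree$, contradicting $S\le 540\degree-\mu'_i-\nu'_i=180\degree$. For $S=360\degree$ the only candidate is $(180\degree,180\degree)$, which requires $\mu'_i=\nu'_i=90\degree$; otherwise $\mu'_i+\nu'_i\ge 270\degree$, contradicting $S\le 540\degree-\mu'_i-\nu'_i=180\degree$. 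In every case a valid pair exists, fulfilling Pr1, Pr3 at $u_{i-1}$ and $u_i$, and Pr5 simultaneously.

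The main obstacle is precisely this case analysis: unlike in Lemma~\ref{le:smaller-is-better}, where we could just inflate $\rho_{i-1}$ and $\rho_i$ by $\mu_i-\mu'_i$ and $\nu_i-\nu'_i$, the componentwise decrease is no longer available, and a ``crossed'' decrease (say $\mu'_i$ large but $\nu'_i$ small) cannot be absorbed locally; it must instead be redistributed through the sum constraint. The triviality of $G_{i-1}$ and $G_{i+1}$ and the minimal $\ell$-values at $u_{i-1}$ and $u_i$ are exactly what allows this redistribution, and verifying that the upper bound $S\le 540\degree-\mu'_i-\nu'_i$ rules out precisely the infeasible combinations is the crux of the argument.
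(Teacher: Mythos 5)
Your proof is correct, but it takes a genuinely different route from the paper's. The paper first dispatches the componentwise case ($\mu'_i\leq\mu_i$ and $\nu'_i\leq\nu_i$) to Lemma~\ref{le:smaller-is-better}, then observes that the only remaining possibility under $\mu_i,\nu_i,\mu'_i,\nu'_i\in\{90\degree,180\degree\}$ and $\mu'_i+\nu'_i\leq\mu_i+\nu_i$ is the exact ``crossed'' swap $\mu_i=90\degree,\nu_i=180\degree,\mu'_i=180\degree,\nu'_i=90\degree$, which it handles by simply setting $\rho'_{i-1}=\rho_i$ and $\rho'_i=\rho_{i-1}$; Pr5 is then immediate because both $\mu_i+\nu_i$ and $\rho_{i-1}+\rho_i$ are unchanged. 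You instead avoid invoking Lemma~\ref{le:smaller-is-better} altogether and argue feasibility directly: you reduce Pr1, Pr3, and Pr5 to the existence of a pair $(\rho'_{i-1},\rho'_i)$ with $90\degree\leq\rho'_{i-1}\leq 270\degree-\mu'_i$, $90\degree\leq\rho'_i\leq 270\degree-\nu'_i$, and prescribed sum $S\in\{T,T+\Delta\}$, and then the two-sided bound $180\degree\leq S\leq 540\degree-\mu'_i-\nu'_i$ makes the case analysis on $S\in\{180\degree,270\degree,360\degree\}$ close. All the steps check out (in particular, only the Pr3 instances at $u_{i-1}$ and $u_i$ are affected, and the triviality of $G_{i-1},G_{i+1}$ plus the $90\degree$ lower bounds are exactly what collapses them to $\rho'_{i-1}+\mu'_i\leq 270\degree$ and $\rho'_i+\nu'_i\leq 270\degree$). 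Your argument is slightly longer but more self-contained and uniform: it handles the dominated, crossed, and strictly-decreasing-sum cases by one mechanism, whereas the paper's proof is shorter but leans on the enumeration of the value set $\{90\degree,180\degree\}$ and on the previous lemma.
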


\begin{proof}
	If $\mu'_i\leq \mu_i$ and $\nu'_i\leq \nu_i$, then Lemma~\ref{le:smaller-is-better} implies the statement. We can hence assume that $\mu'_i>\mu_i$ and $\nu'_i<\nu_i$, or that $\mu'_i<\mu_i$ and $\nu'_i>\nu_i$. Assume the former, as the proof for the latter is symmetric. Recall that $\mu_i,\nu_i\in \{90\degree,180\degree\}$, since $\mathcal S$ satisfies Property~Pr2, and that $\mu'_i,\nu'_i\in \{90\degree,180\degree\}$, by assumption. Hence, we have $\mu_i=90\degree$, $\nu_i=180\degree$, $\mu'_i=180\degree$, and $\nu'_i=90\degree$. Further, since $G_{i-1}$ and $G_{i+1}$ are both trivial, we have $\nu_{i-1}=\mu_{i+1}=0\degree$. 
	
	We let $\rho'_{i-1}=\rho_{i}$ and $\rho'_{i}=\rho_{i-1}$. Then Properties~Pr1--Pr5 are satisfied by $\mathcal S'$ since they are satisfied by $\mathcal S$. In particular, $\rho'_{i-1}\geq \ell(u_{i-1},f^{uv}_{\mathcal E})=90\degree$, given that $\rho_{i}\geq \ell(u_i,f^{uv}_{\mathcal E})=90\degree$; similarly, $\rho'_{i}\geq \ell(u_{i},f^{uv}_{\mathcal E})=90\degree$, hence Property~Pr1 is satisfied by $\mathcal S'$ since it is satisfied by $\mathcal S$. Further, $\nu_{i-1}+\rho'_{i-1}+\mu'_i=\mu_{i+1}+\rho_{i}+\nu_i\leq 360\degree-\ell(u_{i},f^*_{\mathcal E})=360\degree-\ell(u_{i-1},f^*_{\mathcal E})$, where the inequality comes from the fact that $\mathcal S$ satisfies Property~Pr3; similarly, $\nu'_{i}+\rho'_{i}+\mu_{i+1} \leq 360\degree-\ell(u_i,f^*_{\mathcal E})$, hence Property~Pr3 is satisfied by $\mathcal S'$ since it is satisfied by $\mathcal S$. Further, Property~Pr4 is satisfied by $\mathcal S'$ since the values $\rho_0$, $\mu_1$, $\rho_k$, and $\nu_k$ are the same in $\mathcal S'$ as in $\mathcal S$, given that $2\leq i\leq k-1$. Finally, the sum $\sum_{i=1}^k \sigma_i$ from Property~Pr5 has the same value in $\mathcal S$ as in $\mathcal S'$, given that $\mu_i+\nu_i=\mu'_i+\nu'_i$, and the sum $\sum_{i=0}^k \rho_i$ from Property~Pr5 has the same value in $\mathcal S$ as in $\mathcal S'$, given that $\rho'_{i-1}+\rho'_{i}=\rho_{i-1}+\rho_{i}$; it follows that $\mathcal S'$ satisfies Property~Pr5.
\end{proof}

Lemmata~\ref{le:smaller-is-better} and~\ref{le:same-size-is-the-same} allow us either to conclude that $G$ admits no $(\mathcal E,\ell,\mu,\nu)$-representation or to select optimal values $\mu_2,\nu_2,\mu_3,\nu_3,\dots,\mu_{k-1},\nu_{k-1}$. This is the first step of the following main algorithmic tool.

\begin{lemma} \label{le:find-values}
	Suppose that, for every $uv$-subgraph $G_i$ of $G$, the following information is known: 
	\begin{enumerate}[(i)]
		\item whether $G_i$ is trivial or not;
		\item in case $G_i$ is not trivial, whether it lies inside or outside $\mathcal C^*_{uv}$ in $\mathcal E$; and
		\item in case $G_i$ is not trivial, whether it admits an $(\mathcal E_i,\ell_i,\mu_i,\nu_i)$-representation or not, for every pair $(\mu_i,\nu_i)$ with $\mu_i,\nu_i \in \{90\degree,180\degree\}$.
	\end{enumerate}  
	Then it is possible to determine in $O(k)$ time whether $G$ admits an $(\mathcal E,\ell,\mu,\nu)$-representation; in case it does, values $\rho_0,\rho_1,\mu_1,\nu_1,\rho_2,\mu_2,\nu_2,\dots,\rho_k,\mu_k,\nu_k$ satisfying Properties~Pr1--Pr5 of Lemma~\ref{le:structural-fixed-embedding} can be determined in $O(k)$ time.
\end{lemma}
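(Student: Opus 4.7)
The plan is to reduce the problem to a one-dimensional interval-sum problem for the variables $\rho_1,\dots,\rho_{k-1}$, after fixing all the $\mu_i$ and $\nu_i$ in $O(k)$ total time. Since the optimality lemmas (Lemmas \ref{le:smaller-is-better} and \ref{le:same-size-is-the-same}) only apply when $i \in \{2,\dots,k-1\}$, I would enumerate the boundary pairs $(\mu_1,\nu_1)$ and $(\mu_k,\nu_k)$ exhaustively: by Property~Pr2 each component lies in $\{0\degree,90\degree,180\degree\}$, so this is a constant number of outer iterations.

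For each such outer choice, I would proceed in three sub-steps. First, consult information~(iii) to verify that $G_1$ and $G_k$ (when non-trivial) admit the requested representations; then use Property~Pr4 to set $\rho_0 := \mu-\mu_1$ and $\rho_k := \nu-\nu_k$, discarding the iteration if either value falls outside $\{90\degree,180\degree,270\degree\}$ or violates Property~Pr1. Second, for each non-trivial $G_i$ with $i\in\{2,\dots,k-1\}$, commit in $O(1)$ time to a canonical pair $(\mu_i^\star,\nu_i^\star)$: by Lemma \ref{le:smaller-is-better} the chosen pair should be Pareto-minimal among those admitted by $G_i$, so that, e.g., $(90\degree,90\degree)$ is preferred whenever available. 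Third, once every $\mu_i,\nu_i$ is fixed, Properties~Pr1 and~Pr3 translate into an interval $\rho_i\in [L_i,U_i]$ for each $i\in\{1,\dots,k-1\}$, with $L_i=\ell(u_i,f^{uv}_{\mathcal{E}})$ and $U_i=360\degree-\ell(u_i,f^*_{\mathcal{E}})-\nu_i-\mu_{i+1}$, while Property~Pr5 imposes a single linear constraint $\sum_{i=1}^{k-1}\rho_i=T$, where $T=(k-1)\cdot 180\degree-\rho_0-\rho_k-\sum_{i=1}^{k}\sigma_i$ is determined by the fixed values.

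The interval-sum subproblem is easy: because each $\rho_i$ ranges over the discrete set $\{90\degree,180\degree,270\degree\}$, feasibility is equivalent to $\sum_i L_i \leq T \leq \sum_i U_i$, and a witness is produced in $O(k)$ time by initializing $\rho_i:=L_i$ and then raising individual variables by $90\degree$ at a time, stopping when the running total reaches $T$. Combined with the $O(1)$ outer enumeration and the $O(k)$ per-iteration work in the three sub-steps above, this yields the claimed $O(k)$ total running time. Correctness follows because Lemmas \ref{le:smaller-is-better} and \ref{le:same-size-is-the-same} guarantee that, if any sequence satisfying Pr1--Pr5 exists, then one exists in which the internal $(\mu_i,\nu_i)$ are the canonical pairs chosen in the second sub-step.

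The main obstacle I anticipate is justifying that the canonical choice in the second sub-step is correct in the tie-breaking case where both $(90\degree,180\degree)$ and $(180\degree,90\degree)$ are Pareto-minimal and $(90\degree,90\degree)$ is not admissible. Lemma \ref{le:same-size-is-the-same} handles this cleanly when both $G_{i-1},G_{i+1}$ are trivial and the relevant $\ell$-values are $90\degree$, but outside that regime the two options interact with the neighboring constraints differently via Property~Pr3. I would therefore carry out a short case analysis, depending on the triviality of $G_{i-1},G_{i+1}$ and on the local $\ell$-values at $u_{i-1},u_i$, to identify a canonical orientation in each sub-case and argue via another swap argument (in the spirit of the proof of Lemma \ref{le:smaller-is-better}) that the interval bounds on $\rho_{i-1},\rho_i$ are at least as permissive under the canonical choice as under the alternative.
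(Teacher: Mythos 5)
Your overall strategy---fix the interior pairs $(\mu_i,\nu_i)$ canonically in $O(k)$ total time, enumerate the boundary tuple $(\mu_1,\nu_1,\mu_k,\nu_k)$ in a constant number of outer iterations, and reduce Properties Pr1/Pr3/Pr5 to an interval-sum feasibility test $\sum_i L_i\le T\le\sum_i U_i$ with a greedy witness---is exactly the route the paper takes, and your third sub-step is sound since every quantity involved is a multiple of $90\degree$ and each interval consists of consecutive multiples of $90\degree$.

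The one point you flag as an obstacle is real, and your sketch does not yet contain the observation that closes it: a tie between the incomparable pairs $(90\degree,180\degree)$ and $(180\degree,90\degree)$ can only survive in the regime where Lemma~\ref{le:same-size-is-the-same} already applies, so no new swap lemma is needed. Concretely, if $G_{i-1}$ is non-trivial, then Property~Pr2 gives $\nu_{i-1}\ge 90\degree$ and Properties~Pr1 and~Pr3 give $\nu_{i-1}+\rho_{i-1}+\mu_i\le 360\degree-\ell(u_{i-1},f^*_{\mathcal E})\le 270\degree$ with $\rho_{i-1}\ge\ell(u_{i-1},f^{uv}_{\mathcal E})\ge 90\degree$, which forces $\mu_i=90\degree$; symmetrically $\nu_i=90\degree$ is forced when $G_{i+1}$ is non-trivial. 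So at a boundary with a non-trivial neighbor one of the two incomparable pairs is simply infeasible and is discarded, not swapped away. If both neighbors are trivial but some local $\ell$-value at $u_{i-1}$ (resp.\ $u_i$) exceeds $90\degree$, the bound $\mu_i\le 360\degree-\ell(u_{i-1},f^{uv}_{\mathcal E})-\ell(u_{i-1},f^*_{\mathcal E})\le 90\degree$ (resp.\ its counterpart for $\nu_i$) again forces that component to $90\degree$. The only case in which both $(90\degree,180\degree)$ and $(180\degree,90\degree)$ remain feasible is therefore: both neighbors trivial and all four local $\ell$-values equal to $90\degree$---precisely the hypothesis of Lemma~\ref{le:same-size-is-the-same}, under which minimizing $\mu_i+\nu_i$ with an arbitrary tie-break is without loss of generality. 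With this observation your anticipated case analysis collapses to the paper's rules (minimize $\mu_i+\nu_i$ subject to the $\alpha$-bounds when both neighbors are trivial; set the interior boundary values of a maximal run of non-trivial subgraphs to $90\degree$ and minimize only the end values via Lemma~\ref{le:smaller-is-better}), and the rest of your argument goes through as written.
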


\begin{proof}
	By Lemma~\ref{le:structural-fixed-embedding}, we have that $G$ admits an $(\mathcal E,\ell,\mu,\nu)$-representation if and only if there exist values $\rho_0,\rho_1,\mu_1,\nu_1,\rho_2,\mu_2,\nu_2,\dots,\rho_k,\mu_k,\nu_k$ satisfying Properties~Pr1--Pr5 of Lemma~\ref{le:structural-fixed-embedding}. We now present an algorithm that establishes the existence of such values. First, the algorithm fixes the values $\mu_2,\nu_2,\mu_3,\nu_3,\dots,\mu_{k-1},\nu_{k-1}$ without loss of generality; this is done by exploiting Lemmata~\ref{le:smaller-is-better} and~\ref{le:same-size-is-the-same}. Then the algorithm considers all the possible tuples $(\mu_1,\nu_1,\mu_k,\nu_k)$ and, for each such tuple, determines whether values $\rho_0,\rho_1,\dots,\rho_k$ exist such that Properties~Pr1--Pr5 are satisfied.
	
	First, we fix $\mu_i=\nu_i=0\degree$ for every $i\in \{2,\dots,k-1\}$ such that $G_i$ is a trivial $uv$-subgraph of $G$  (as required by Property~Pr2). 
	
	Second, we consider each non-trivial $uv$-subgraph $G_i$ of $G$ such that $i\in \{2,\dots,k-1\}$ and such that $G_{i-1}$ and $G_{i+1}$ are both trivial. By Properties~Pr1 and~Pr3, we have $\mu_i\leq \alpha_{i-1}:=360\degree - \ell(u_{i-1},f^{uv}_{\mathcal E}) - \ell(u_{i-1},f^{*}_{\mathcal E})$ and $\nu_i\leq \alpha_i:=360\degree - \ell(u_{i},f^{uv}_{\mathcal E}) - \ell(u_{i},f^{*}_{\mathcal E})$; note that $\alpha_{i-1} \leq 180\degree$ and $\alpha_i \leq 180\degree$. Hence, if $G_i$ admits no $(\mathcal E_i,\ell_i,\mu_i,\nu_i)$-representation with $\mu_i,\nu_i\in \{90\degree,180\degree\}$, with $\mu_i\leq \alpha_{i-1}$, and with $\nu_i\leq \alpha_i$, then we conclude that $G$ admits no $(\mathcal E,\ell,\mu,\nu)$-representation (by Properties Pr1,~Pr2, and~Pr3). Otherwise, we fix $\mu_i$ and $\nu_i$ so that $G_i$ admits an $(\mathcal E_i,\ell_i,\mu_i,\nu_i)$-representation, so that $\mu_i,\nu_i\in \{90\degree,180\degree\}$, so that $\mu_i\leq \alpha_{i-1}$ and $\nu_i\leq \alpha_i$, and so that $\mu_i+\nu_i$ is minimum. This is not a loss of generality by Lemma~\ref{le:same-size-is-the-same}. 
	
	Third, we consider each maximal sequence $G_i,G_{i+1},\dots,G_j$ of non-trivial $uv$-subgraphs of $G$ such that 
	$i,j\in \{1,\dots,k\}$ and such that $i<j$. For $h=i,\dots,j-1$, by Properties~Pr1 and~Pr3, we have $\nu_h+\mu_{h+1}\leq \alpha_h:=360\degree - \ell(u_{h},f^{uv}_{\mathcal E}) - \ell(u_h,f^{*}_{\mathcal E})$; note that $\alpha_h \leq 180\degree$. Further, by Property~Pr2, we have $\nu_h\geq 90\degree$ and $\mu_{h+1}\geq 90\degree$. Hence, if $\ell(u_{h},f^{uv}_{\mathcal E})>90\degree$ or if $\ell(u_h,f^{*}_{\mathcal E})>90\degree$, then we conclude that $G$ admits no $(\mathcal E,\ell,\mu,\nu)$-representation, otherwise  we can fix $\nu_i=\mu_{i+1}=\nu_{i+1}= \dots=\mu_j =90\degree$ without loss of generality (since the values $\nu_1$ and $\mu_k$ will be dealt with later, we actually only fix $\nu_i=90\degree$ and $\mu_j=90\degree$ if $i\geq 2$ and $j\leq k-1$, respectively). Now, if $G_h$ admits no $(\mathcal E_h,\ell_h,90\degree,90\degree)$-representation, for some $h\in \{i+1,i+2,\dots,j-1\}$, then we conclude that $G$ admits no $(\mathcal E,\ell,\mu,\nu)$-representation. If $i\geq 2$, we also fix the value of $\mu_i$ as follows (if $j\leq k-1$, the value of $\nu_j$ is fixed symmetrically to the one of $\mu_i$); note that $G_{i-1}$ is trivial, by the maximality of the sequence $G_i,G_{i+1},\dots,G_j$. By Properties~Pr1 and~Pr3, we have $\mu_i\leq \alpha_{i-1}:=360\degree - \ell(u_{i-1},f^{uv}_{\mathcal E}) - \ell(u_{i-1},f^{*}_{\mathcal E})$; note that $\alpha_{i-1} \leq 180\degree$.
	We check whether $G_i$ admits an $(\mathcal E_i,\ell_i,\mu_i,90\degree)$-representation with $\mu_i\in \{90\degree,180\degree\}$ and with $\mu_i\leq \alpha_{i-1}$. If that is not the case, then we conclude that $G$ admits no $(\mathcal E,\ell,\mu,\nu)$-representation, otherwise we fix $\mu_i$ so that $G_i$ admits an $(\mathcal E_i,\ell_i,\mu_i,90\degree)$-representation, so that $\mu_i\in \{90\degree,180\degree\}$, and so that $\mu_i$ is minimum. This is not a loss of generality by Lemma~\ref{le:smaller-is-better}.

	%
	
	The values of $\mu_2,\nu_2,\dots,\mu_{k-1},\nu_{k-1}$ have now been fixed without loss of generality.  We check each of the $3^4\in O(1)$ tuples $(\mu_1,\nu_1,\mu_k,\nu_k)$ such that $\mu_1,\nu_1,\mu_k, \nu_k\in \{0\degree,90\degree,180\degree\}$. We discard a tuple $(\mu_1,\nu_1,\mu_k,\nu_k)$~if:
	\begin{itemize}
		\item  $G_1$ is trivial and $\max\{\mu_1,\nu_1\} > 0\degree$ (by Property~Pr2);
		\item $G_k$ is trivial and $\max\{\mu_k,\nu_k\} > 0\degree$ (by Property~Pr2); 
		\item  $G_1$ is non-trivial and $\min\{\mu_1,\nu_1\} = 0\degree$ (by Property~Pr2);
		\item $G_k$ is non-trivial and $\min\{\mu_k,\nu_k\} = 0\degree$ (by Property~Pr2); 
		\item $G_1$ is non-trivial and does not admit an $(\mathcal E_1,\ell_1,\mu_1,\nu_1)$-representation (by Property~Pr2);
		\item $G_k$ is non-trivial and does not admit an $(\mathcal E_k,\ell_k,\mu_k,\nu_k)$-representation (by Property~Pr2);
		\item $\mu_1 + \ell(u_0,f^{uv}_{\mathcal E})>\mu$ (by Properties~Pr1 and~Pr4); 
		\item $\nu_k + \ell(u_k,f^{uv}_{\mathcal E})>\nu$ (by Properties~Pr1 and~Pr4);
		\item $\nu_1+\mu_2>360\degree - \ell(u_1,f^{uv}_{\mathcal E}) - \ell(u_1,f^{*}_{\mathcal E})$ (by Properties~Pr1 and~Pr3); or
		\item $\nu_{k-1}+\mu_k> 360\degree - \ell(u_{k-1},f^{uv}_{\mathcal E}) - \ell(u_{k-1},f^{*}_{\mathcal E})$ (by Properties~Pr1 and~Pr3).
	\end{itemize}
	
	If we discarded all the tuples $(\mu_1,\nu_1,\mu_k,\nu_k)$, then we conclude that $G$ admits no $(\mathcal E,\ell,\mu,\nu)$-representation. Otherwise, there is a constant number of sequences $\mu_1,\nu_1,\mu_2,\nu_2,\dots,\mu_{k},\nu_{k}$ such that $\mu_2,\nu_2,\dots,\mu_{k-1},\nu_{k-1}$ are the values fixed as above and such that the tuple $(\mu_1,\nu_1,\mu_k,\nu_k)$ was not discarded. We call {\em feasible} each of these sequences. 
	
	By the above arguments, we have that there exist values $\rho_0, \rho_1, \mu_1, \nu_1, \rho_2, \mu_2$, $\nu_2, \dots, \rho_k, \mu_k, \nu_k$ satisfying Properties~Pr1--Pr5 of Lemma~\ref{le:structural-fixed-embedding} only if the values $\mu_1,\nu_1,\mu_2,\nu_2$, $\dots$, $\mu_{k},\nu_{k}$ form a feasible sequence. Hence, in the following we check, for each feasible sequence $\mu_1,\nu_1,\mu_2,\nu_2,\dots,\mu_{k},\nu_{k}$, whether values $\rho_0,\rho_1,\dots,\rho_k$ exist so that the values $\rho_0,\rho_1,\mu_1,\nu_1,\rho_2,\mu_2,\nu_2,\dots,\rho_k,\mu_k,\nu_k$ satisfy Properties~Pr1--Pr5 of Lemma~\ref{le:structural-fixed-embedding}. 
	
	The values $\mu_1,\nu_1,\mu_2,\nu_2,\dots,\mu_{k},\nu_{k}$ determine lower and upper bounds for the values $\rho_0,\rho_1,\dots,\rho_k$. In particular, for each $i=1,\dots,k-1$, we have $\rho_i\geq {\mathcal L}_i :=\ell(u_i,f^{uv}_{\mathcal E})$ (by Property~Pr1) and $\rho_i\leq {\mathcal U}_i := 360\degree-\ell(u_i,f^*_{\cal E}) - \nu_i - \mu_{i+1}$ (by Property~Pr3); the choices we performed above for the values $\mu_1,\nu_1,\mu_2,\nu_2,\dots,\mu_{k},\nu_{k}$ guarantee that ${\mathcal L}_i\leq {\mathcal U}_i$, as we did not conclude that $G$ admits no $(\mathcal E,\ell,\mu,\nu)$-representation. Further, we have a tight bound of ${\mathcal L}_0:={\mathcal U}_0=\mu-\mu_1$ for $\rho_0$ and a tight bound of ${\mathcal L}_k:={\mathcal U}_k=\nu-\nu_k$ for $\rho_k$ (by Property~Pr4). 
	
	Let $\mathcal L=\sum_{i=0}^k {\mathcal L}_i$ and $\mathcal U=\sum_{i=0}^k {\mathcal U}_i$. We claim that there exist values $\rho_0,\rho_1,\dots,\rho_k$ such that the sequence $\rho_0,\rho_1,\mu_1,\nu_1,\rho_2,\mu_2,\nu_2,\dots,\rho_k,\mu_k,\nu_k$ satisfies Properties~Pr1--Pr5 of Lemma~\ref{le:structural-fixed-embedding} if and only if $\mathcal L + \sum_{i=1}^k {\sigma}_i\leq (k-1)\cdot 180\degree \leq \mathcal U + \sum_{i=1}^k {\sigma}_i$. 
	
	We now prove this claim. The necessity trivially comes from Property~Pr5, given that $\mathcal L\leq \sum_{i=0}^k {\rho}_i \leq\mathcal U$. For the sufficiency, let $\mathcal A:= (k-1)\cdot 180\degree - \mathcal L - \sum_{i=1}^k {\sigma}_i$; then $\mathcal A\geq 0$. We fix the values $\rho_0,\rho_1,\dots,\rho_k$ in this order. While $\mathcal A>0$, we fix the value $\rho_i$ to $\min\{{\mathcal U}_i,{\mathcal L}_i+\mathcal A\}$ and then decrease $\mathcal A$ by $\rho_i-{\mathcal L}_i$. Since $(k-1)\cdot 180\degree - \sum_{i=1}^k {\sigma}_i\leq {\mathcal U}=\sum_{i=0}^k {\mathcal U}_i$, either $\mathcal A= 0$ to start with, or there exists an index $j\in \{0,1,\dots,k\}$ such that, after fixing $\rho_j$ and decreasing $\mathcal A$ by $\rho_j-{\mathcal L}_j$, we have $\mathcal A=0$; then we fix the remaining values $\rho_{j+1},\rho_{j+2},\dots,\rho_k$, if any, to ${\mathcal L}_{j+1},{\mathcal L}_{j+2},\dots,{\mathcal L}_k$, respectively. Since ${\mathcal L}_i \leq \rho_i \leq {\mathcal U}_i$, we have that the sequence $\rho_0,\rho_1,\mu_1,\nu_1,\rho_2,\mu_2,\nu_2,\dots,\rho_k,\mu_k,\nu_k$ satisfies Properties~Pr1, Pr3, and Pr4. Property~Pr2 is satisfied independently of the choice of $\rho_0,\rho_1,\dots,\rho_k$, given that $\mu_1,\nu_1,\mu_2,\nu_2,\dots,\mu_{k},\nu_{k}$ is a feasible sequence. Finally, the sequence $\rho_0,\rho_1,\mu_1,\nu_1,\rho_2,\mu_2,\nu_2,\dots,\rho_k,\mu_k,\nu_k$ satisfies Property~Pr5 by the just described construction. 
	
	It remains to analyze the running time of the described algorithm. The first part of the algorithm either concludes that $G$ admits no $(\mathcal E,\ell,\mu,\nu)$-representation or fixes the values $\mu_2,\nu_2,\mu_3,\nu_3,\dots,\mu_{k-1},\nu_{k-1}$. This requires to check whether each $uv$-subgraph $G_i$ of $G$ is trivial or not, and to handle the $\Theta(1)$ pairs $(\mu_i,\nu_i)$ such that $G_i$ admits an $(\mathcal E_i,\ell_i,\mu_i,\nu_i)$-representation. By assumption, this information is known, hence this computation takes $O(1)$ time for $G_i$, and hence $O(k)$ time over all the $uv$-subgraphs of $G$. The second part of the algorithm considers the $O(1)$ tuples $(\mu_1,\nu_1,\mu_k,\nu_k)$ such that $\mu_1,\nu_1,\mu_k, \nu_k\in \{0\degree,90\degree,180\degree\}$; for each of them, we have that $O(1)$ conditions, each of which can be checked in $O(1)$ time, determine whether the tuple has to be discarded or not. Hence, this part of the algorithm has $O(1)$ running time. The third part of the algorithm is repeated $O(1)$ times, namely once for each feasible sequence $\mu_1,\nu_1,\mu_2,\nu_2,\dots,\mu_{k},\nu_{k}$. This part starts by determining lower bounds $\mathcal L_0,\mathcal L_1,\dots,\mathcal L_k$ and upper bounds $\mathcal U_0,\mathcal U_1,\dots,\mathcal U_k$ for the values $\rho_0,\rho_1,\dots,\rho_k$; this is done in $O(1)$ time per value $\rho_i$, hence in $O(k)$ time in total. After computing the sums $\mathcal L=\sum_{i=0}^k {\mathcal L}_i$, $\mathcal U=\sum_{i=0}^k {\mathcal U}_i$, and $\sum_{i=1}^k \sigma_i$ in $O(k)$ time (note that each sum has $O(k)$ terms and that it is known whether $G_i$ is internal to $\mathcal C^*_{uv}$ in $\cal E$, and hence whether $\sigma_i=0\degree$ or $\sigma_i=\mu_i+\nu_i$), the algorithm simply checks whether $\mathcal L + \sum_{i=1}^k {\sigma}_i\leq (k-1)\cdot 180\degree \leq \mathcal U + \sum_{i=1}^k {\sigma}_i$. Finally, the values $\rho_0,\rho_1,\dots,\rho_k$ are computed by executing $O(k)$ times the minimum between two values and a subtraction, which is done in $O(1)$ time. 
\end{proof}

We are now ready to prove the following.

\begin{theorem} \label{th:2-con-lowers}
	Let $G$ be an $n$-vertex $2$-connected outerplanar graph with a prescribed plane embedding $\mathcal E$. For each face $f$ of $\cal E$ and each vertex $w$ incident to $f$, let $\ell(w,f)$ be a value in $\{90\degree,180\degree,270\degree\}$. There is an $O(n)$-time algorithm that tests whether a function $\phi$ exists such that $(\cal E,\phi)$ is an $\ell$-constrained representation of $G$. Further, if $G$ admits such a representation, the algorithm constructs the function $\phi$ in $O(n)$ time.
\end{theorem}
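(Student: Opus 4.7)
The plan is to organize the computation as a post-order traversal of the extended dual tree $\mathcal{T}$ of the outerplane embedding $\mathcal{O}$ of $G$, invoking Lemma~\ref{le:find-values} a constant number of times at each internal face of $\mathcal{O}$. First I would compute $\mathcal{O}$ and $\mathcal{T}$ in $O(n)$ time and check the necessary condition from Lemma~\ref{le:no-outer-edge}: if no edge of $G$ is incident to both $f^*_\mathcal{E}$ and $f^*_\mathcal{O}$, report that no $\ell$-constrained representation exists. Otherwise, select such an edge $uv$, root $G$ at $uv$, and root $\mathcal{T}$ at the internal face of $\mathcal{O}$ incident to $uv$.

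The recursion is indexed by the internal faces of $\mathcal{O}$. Every internal face $f$ of $\mathcal{O}$ carries a root edge $e_f$ shared with its parent face in $\mathcal{T}$ (at the root, $e_f=uv$); this edge is the root of a $2$-connected outerplanar subgraph $G_f$ of $G$ whose $\mathcal{C}^*_{e_f}$ cycle equals the boundary of $f$, and whose $e_f$-subgraphs correspond exactly to the subtrees of $\mathcal{T}$ hanging off the children of $f$ (children that are degree-$1$ leaves of $\mathcal{T}$ correspond to trivial $e_f$-subgraphs, namely single edges of $\partial f$ lying on $f^*_\mathcal{O}$). Processing $\mathcal{T}$ bottom up, for each non-root face $f$ I invoke Lemma~\ref{le:find-values} four times to decide, for every $(\mu_f,\nu_f)\in\{90\degree,180\degree\}^2$, whether $G_f$ admits an $(\mathcal{E}_f,\ell_f,\mu_f,\nu_f)$-representation, where $\mathcal{E}_f$ and $\ell_f$ are the restrictions of $\mathcal{E}$ and $\ell$ to $G_f$. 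The data required by Lemma~\ref{le:find-values}---triviality of each $e_f$-subgraph, its location relative to $\mathcal{C}^*_{e_f}$ in $\mathcal{E}$, and the set of pairs $(\mu_i,\nu_i)$ it admits---has already been produced by the recursive calls at the children of $f$. At the root face, nine invocations cover all pairs $(\mu,\nu)\in\{90\degree,180\degree,270\degree\}^2$, and $G$ admits the desired $\ell$-constrained representation if and only if at least one of them succeeds.

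Since each invocation of Lemma~\ref{le:find-values} at face $f$ runs in time linear in the length of $\partial f$, and $\sum_f |\partial f|=O(n)$, the whole testing phase takes $O(n)$ time. If the test succeeds, the function $\phi$ is extracted by a second traversal that backtracks through the successful invocations: the $\rho$-values and child $(\mu_i,\nu_i)$-pairs recorded inside Lemma~\ref{le:find-values} pin down the angles at the vertices of $\mathcal{C}^*_{e_f}$ incident to $f$ and their contribution to the angles incident to $f^*_\mathcal{E}$, and recursing into the chosen sub-representations fills in $\phi$ on every remaining vertex--face pair in $O(n)$ further time. The main point to verify carefully is the alignment between the extended-dual-tree decomposition and the $uv$-subgraph decomposition of Lemma~\ref{le:structural-fixed-embedding}, so that the inductive hypothesis at the children of $f$ supplies exactly the data Lemma~\ref{le:find-values} needs; this follows from the fact that each internal chord of $\mathcal{O}$ is the unique edge separating the two blocks of $G_f$ minus $e_f$ that lie on the two sides of the chord in $\mathcal{T}$.
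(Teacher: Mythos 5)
Your proposal follows essentially the same route as the paper's proof: after the Lemma~\ref{le:no-outer-edge} check, root the extended dual tree at the leaf dual to an edge lying on both outer faces, propagate the admissible $(\mu_i,\nu_i)$-pairs bottom-up via a constant number of invocations of Lemma~\ref{le:find-values} per face, and backtrack through the stored $\rho$- and $(\mu_i,\nu_i)$-values to extract $\phi$. The one detail you leave implicit is that the restricted lower bounds $\ell_f(u_f,f^*_{\mathcal E_f})$ and $\ell_f(v_f,f^*_{\mathcal E_f})$ at the two root vertices of each subgraph (and likewise the inside/outside position of each child subgraph) are not supplied by the recursive calls at the children but must be read off from $\mathcal E$ in $O(1)$ time per value, by summing the $\ell$-values of the at most three faces of $\mathcal E$ that merge into the outer face of $\mathcal E_f$ at those vertices; the paper handles this explicitly, and it is routine.
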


\begin{proof}
	Assume that the input contains the following information:
	
	\begin{itemize}
		\item for each vertex $v$ of $G$, a circular list $L_{\mathcal E}(v)$, which represents the clockwise order of the edges incident to $v$ in $\mathcal E$;
		\item for each face $f$ of $\mathcal E$, the clockwise order of the edges along the boundary of $f$; and 
		\item for each face $f$ of $\cal E$ and each vertex $w$ incident to $f$, the value $\ell(w,f)\in \{90\degree,180\degree,270\degree\}$.
	\end{itemize}
	
	First, we compute in $O(n)$ time the outerplane embedding $\mathcal O$ of $G$~\cite{cnao-lta-85,d-iroga-07,ht-ept-74,m-laarogmog-79,w-rolt-87}. We now have, for each vertex $v$ of $G$, a circular list $L_{\mathcal O}(v)$, which represents the clockwise order of the edges incident to $v$ in $\mathcal O$. We also compute, for each face $f$ of $\mathcal O$, the clockwise order of the edges along the boundary of $f$; these orders can be recovered from the lists $L_{\mathcal O}(v)$ in $O(n)$ time. We next compute in $O(n)$ time the extended dual tree $\mathcal T$ of $\mathcal O$. Each internal node $s\in \mathcal T$ is associated with the cycle $\mathcal C_s$ delimiting the internal face of $\mathcal O$ dual to $s$.
	
	We now find an edge $uv$ incident to $f^*_{\mathcal O}$ and to $f^*_{\mathcal E}$; this can be done in $O(n)$ time by marking all the edges incident to $f^*_{\mathcal O}$ and to $f^*_{\mathcal E}$, and by then checking whether any edge is marked twice. If no such an edge exists, then by Lemma~\ref{le:no-outer-edge} we conclude that $G$ has no $\ell$-constrained representation $(\mathcal E,\phi)$. Otherwise, we root $G$ at $uv$ and we root $\mathcal T$ at the leaf $r^*$ such that the edge $rr^*$ of $T$ incident to $r^*$ is dual to $uv$. 
	
	Consider a non-leaf node $s$ of $\mathcal T$; let $p$ be the parent of $s$ and let $v^s_0,v^s_1,\dots,v^s_{k_s}$ be the clockwise order of the vertices along $\mathcal C_s$ in $\cal O$, where $v^s_0v^s_{k_s}$ is the edge dual to  $sp$. Further, we denote by $\mathcal T_s$ the subtree of $\mathcal T$ rooted at $s$, by $G_s$ the subgraph of $G$ defined as $\bigcup_{t\in \mathcal T_s}\mathcal C_t$, and by $\mathcal E_s$ the restriction of $\mathcal E$ to the vertices and edges of $G_s$. Note that $G_r=G$, where $r$ is the child of $r^*$ in $T$. For a leaf node $s\neq r^*$ of $\mathcal T$, with a slight overload of notation, we denote by $G_s$ the edge of $G$ that is dual to the edge of $T$ incident to $s$. For a node $s\notin \{r,r^*\}$ of $\mathcal T$, the graph $G_s$ is rooted at the edge $u_s v_s$ that is shared by $\mathcal C_s$ and $\mathcal C_p$; assume, w.l.o.g., that $u_s=v^s_0$ and $v_s=v^s_{k_s}$. Then, for a node $s$ of $\mathcal T$ with children $s_1, \dots, s_{k_s}$, the graphs $G_{s_1},\dots,G_{s_{k_s}}$ are the $u_sv_s$-subgraphs of $G_s$.
	
	We start by associating a boolean value $\omega_{ps}$ to each edge $ps$ of $T$, where $p$ is the parent of $s$ and both $p$ and $s$ are not leaves of $T$; the value $\omega_{ps}$ is {\sc true} if $\mathcal C_p$ and $\mathcal C_s$ lie one outside the other in $\mathcal E$, and it is {\sc false} if $\mathcal C_s$ lies inside $\mathcal C_p$ in $\mathcal E$; the planarity of $\mathcal E$ and the assumption that $uv$ is incident to $f^*_{\mathcal E}$ imply that $\mathcal C_p$ does not lie inside $\mathcal C_s$ in $\mathcal E$. The value $\omega_{ps}$ can be computed in $O(1)$ time by looking at the list $L_{\mathcal E}(u_s)$.

	We now perform a bottom-up visit of $\mathcal T$ which ends after processing $r$. There is nothing to be done on the leaves of $\mathcal T$. When processing an internal node $s$ of $\mathcal T$, we check whether $G_s$ admits an $(\mathcal E_s,\ell_s,\mu,\nu)$-representation, for each of the $9$ possible pairs $(\mu,\nu)$ with $\mu,\nu\in \{90\degree,180\degree,270\degree\}$. This can be done in $O(k_s)$ time by Lemma~\ref{le:find-values}; namely, for every $u_s v_s$-subgraph $G_{s_i}$ of $G_s$, the following information is known: 
	\begin{enumerate} [(i)]
		\item whether $G_{s_i}$ is trivial or not (indeed, $G_{s_i}$ is trivial if and only if $s_i$ is a leaf of $\mathcal T$);
		\item if $G_{s_i}$ is non-trivial, whether it lies inside or outside $G_s$ (this information is in the label $\omega_{ss_i}$); and
		\item if $G_{s_i}$ is non-trivial, whether it admits an $(\mathcal E_{s_i},\ell_{s_i},\mu_i,\nu_i)$-representation or not, for every pair $(\mu_i,\nu_i)$ with $\mu_i,\nu_i\in \{90\degree,180\degree\}$ (this information has been computed when processing $s_i$).
	\end{enumerate}
	Observe that the function $\ell_s$ does not need to be explicitly computed in order to apply Lemma~\ref{le:find-values}. Indeed, it suffices to recover the values $\ell(v^s_i,f^{u_sv_s}_{\mathcal E_s})$ and $\ell(v^s_i,f^*_{\mathcal E_s})$, for each $i=0,1,\dots,k_s$, where $f^{u_sv_s}_{\mathcal E_s}$ is the internal face of $\mathcal E_s$ incident to $u_sv_s$ and $f^*_{\mathcal E_s}$ is the outer face of $\mathcal E_s$. This can be done in $O(1)$ time per value, and hence in total $O(k_s)$ time, as follows; refer to Figure~\ref{fig:lower-values}. The value $\ell(v^s_i,f^{u_sv_s}_{\mathcal E_s})$ can be recovered in $O(1)$ time, as it is equal to $\ell(v^s_i,f^{u_sv_s}_{\mathcal E})$. The value $\ell(v^s_i,f^*_{\mathcal E_s})$ can also be recovered in $O(1)$ time. Namely, if $1\leq i\leq k_s-1$, then we have $\ell(v^s_i,f^*_{\mathcal E_s})=\ell(v^s_i,f^*_{\mathcal E})$. Further, if $i=0$ or if $i=k_s$ (that is, if $v^s_i=u_s$ or if $v^s_i=v_s$), then there might be up to three values which need to be added in order to obtain $\ell(v^s_i,f^*_{\mathcal E_s})$, namely $\ell(v^s_i,f^*_{\mathcal E})$ and the values $\ell(v^s_i,f)$ for the internal faces $f$ of ${\mathcal E}$ incident to $u_s$ and $v_s$ whose corresponding face in $\mathcal E_s$ is $f^*_{\mathcal E_s}$. These faces are delimited by edges of $G$ that are incident to $v^s_i$ and that are not in $G_s$. Each of these $O(1)$ values can be accessed in $O(1)$ time, hence $\ell(v^s_i,f^*_{\mathcal E_s})$ can be computed in $O(1)$ time also in this case.	
	\begin{figure}[htb]
		\centering
		\includegraphics[scale=0.7]{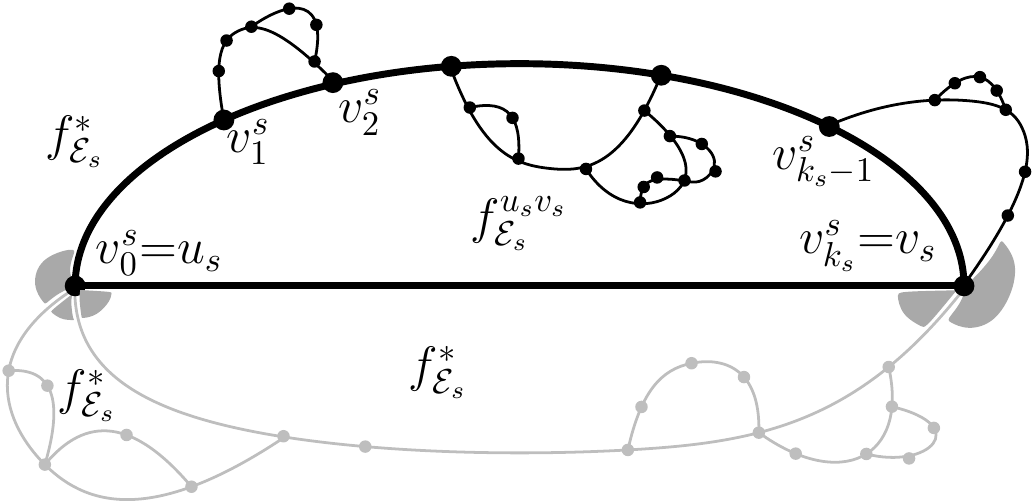}
		\caption{Illustration for the computation of the values $\ell(v^s_i,f^{u_sv_s}_{\mathcal E_s})$ and $\ell(v^s_i,f^*_{\mathcal E_s})$. The graph $G_s$ is black and the cycle $\mathcal C_s$ is thick. The gray vertices and edges belong to $G$ but not to $G_s$. The faces of $\mathcal E$ incident to $u_s$ and $v_s$ whose corresponding face in $\mathcal E_s$ is $f^*_{\mathcal E_s}$ are colored dark gray, close to their incidence with $u_s$ and $v_s$. }
		\label{fig:lower-values}
	\end{figure}
	
	If there is no pair $(\mu,\nu)$ with $\mu,\nu\in \{90\degree,180\degree,270\degree\}$ such that $G_s$ admits an $(\mathcal E_s,\ell_s,\mu,\nu)$-representation, then we conclude that $G$ has no $\ell$-constrained representation $(\cal E,\phi)$, \mbox{otherwise we continue the visit of $\mathcal T$.}
	
	After processing $r$, we have that $G$ has an $\ell$-constrained representation $(\cal E,\phi)$ if and only if there is a pair $(\mu,\nu)$ with $\mu,\nu\in \{90\degree,180\degree,270\degree\}$ such that $G_r=G$ admits an $(\mathcal E_r=\mathcal E,\ell,\mu,\nu)$-representation.  
	
	Concerning the running time, we have that the algorithm processes a node $s$ of $\mathcal T$ in $O(k_s)$ time, hence it takes $O(n)$ time over the entire tree $\mathcal T$. 
	
	Finally, we describe how to modify the algorithm so that it constructs in $O(n)$ time an $\ell$-constrained representation $(\cal E,\phi)$ of $G$. When we process an internal node $s$ of $\mathcal T$ during the bottom-up visit of $\mathcal T$, for each of the $9$ possible pairs $(\mu,\nu)$ with $\mu,\nu\in \{90\degree,180\degree,270\degree\}$ such that $G_s$ admits an $(\mathcal E_s,\ell_s,\mu,\nu)$-representation with plane embedding $\mathcal E_s$, we store the $O(k_s)$ values $\rho^s_0,\rho^s_1,\mu^s_1,\nu^s_1,\rho^s_2,\mu^s_2,\nu^s_2,\dots,\rho^s_{k_s},\mu^s_{k_s},\nu^s_{k_s}$ satisfying Properties~Pr1--Pr5 of Lemma~\ref{le:structural-fixed-embedding}; by Lemma~\ref{le:find-values}, these values can be found in $O(k_s)$ time. After processing $r$, we perform a top-down visit of $\mathcal T$. When we start processing a node $s$ of $\mathcal T$, we assume that a pair $(\mu,\nu)$ is already associated to $s$. Initially, for $r$, such a pair $(\mu,\nu)$ is the one for which we concluded that an $(\mathcal E,\ell,\mu,\nu)$-representation of $G$ exists. For each internal node $s$ of $\mathcal T$ and for each $i=0,1,\dots,k_s$, we set $\phi(v^s_i,f^{u_sv_s}_{\mathcal E_s})=\rho^s_i$ and we associate the pair $(\mu^s_i,\nu^s_i)$ to the child $s_i$ of $s$; clearly, this takes $O(k_s)$ time and hence $O(n)$ time for the entire tree $\mathcal T$. After the completion of the top-down visit of $\mathcal T$, for each vertex $w$, the value $\phi(w,f)$ has been determined for all the faces $f$ of $\cal E$ incident to $w$, except for one face $f_w$; this is the unique face of $\cal E$ incident to $w$ whose corresponding face in ${\mathcal E_s}$ is $f^*_{\mathcal E_s}$, where $s$ is the first node that is encountered in the top-down visit of $\mathcal T$ such that $w$ is a vertex of $\mathcal C_s$. We complete the rectilinear representation $(\mathcal E,\phi)$ of $G$ by setting, for each vertex $w$ of $G$, $\phi(w,f_w)=360\degree-\sum_f \phi(w,f)$, where the sum is over all the faces $f\neq f_w$ of $\mathcal E$ incident to $w$. By Tamassia's results~\cite{t-eggmnb-87},  a planar rectilinear drawing in the equivalence class $(\mathcal E,\phi)$ can be constructed in $O(n)$ time.
\end{proof}

%
%

\subsection{Simply-Connected Outerplanar Graphs} \label{se:fixed-cutvertices}

In this section we show how to test in linear time whether an $n$-vertex outerplanar graph $G$ admits a planar rectilinear drawing with a prescribed plane embedding $\mathcal E$. By Tamassia's results~\cite{t-eggmnb-87}, it suffices to test in $O(n)$ time whether $G$ admits a rectilinear representation $(\mathcal E,\phi)$. 

Consider the {\em block-cut-vertex tree} $T$ of $G$~\cite{h-gt-69,ht-aeagm-73}. We denote by $G_b$ the block corresponding to a B-node $b$, by $n_b$ the number of vertices of $G_b$, and by $\mathcal E_b$ and $\mathcal O_b$ the restrictions of $\cal E$ and $\cal O$ to $G_b$, respectively. The goal is now to reduce the problem of testing whether a function $\phi$ exists such that $G$ admits a rectilinear representation $(\mathcal E,\phi)$ to the problem of testing, for a suitable function $\ell_b$, whether a function $\phi_b$ exists for each individual non-trivial block $G_b$ of $G$ such that $(\mathcal E_b,\phi_b)$ is an $\ell_b$-constrained representation of $G_b$. 

Consider a cut-vertex $c$ of $G$. Since $G$ contains at most four edges incident to $c$ and since each non-trivial block incident to $c$ contains at least two edges incident to $c$, we have that $c$ is of one of the following types (refer to Fig.~\ref{fig:around-cutvertex}):
\begin{figure}[tb]\tabcolsep=4pt
	\centering
	\begin{tabular}{c c c c c}
		\includegraphics[scale=0.7]{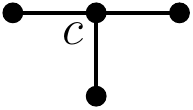} \hspace{1mm} &
		\includegraphics[scale=0.7]{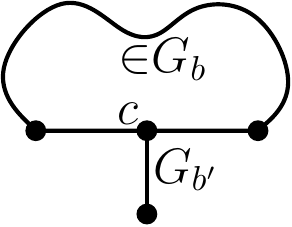} \hspace{1mm} &
		\includegraphics[scale=0.7]{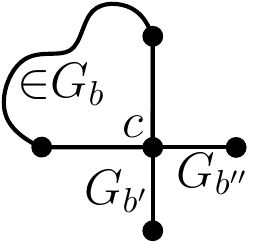} \hspace{1mm} &
		\includegraphics[scale=0.7]{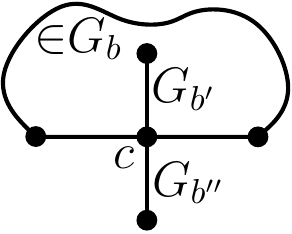} \hspace{1mm} &
		\includegraphics[scale=0.7]{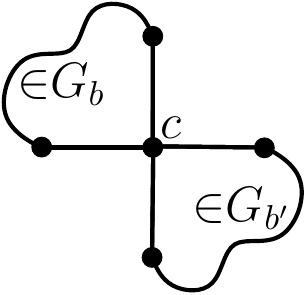} \\
		(1) \hspace{1mm} & (2) \hspace{1mm} & (3) \hspace{1mm} & (4) \hspace{1mm} & (5)\\
	\end{tabular}
	\caption{Blocks incident to a cut-vertex $c$ of Type~(1)--(5).}
	\label{fig:around-cutvertex}
\end{figure}
\begin{itemize}
	\item {\em Type (1)}: all the (at least two) blocks incident to $c$ are trivial;
	\item {\em Type (2)}: $c$ is incident to one non-trivial block $G_b$ and one trivial block $G_{b'}$;
	\item {\em Type (3)}:  $c$ is incident to one non-trivial block $G_b$ and two trivial blocks $G_{b'}$ and $G_{b''}$; further, the edges $G_{b'}$ and $G_{b''}$ are consecutive in the clockwise order of the edges incident to $c$ in $\mathcal E$;
	\item {\em Type (4)}:  $c$ is incident to one non-trivial block $G_{b}$ and two trivial blocks $G_{b'}$ and $G_{b''}$; further, the edges $G_{b'}$ and $G_{b''}$ are not consecutive in the clockwise order of the edges incident to $c$ in $\mathcal E$; and 
	\item {\em Type (5)}:  $c$ is incident to two non-trivial blocks $G_{b}$ and $G_{b'}$. 
\end{itemize}

Note that if $c$ is of Type (5), by the planarity of $\mathcal E$, the edges of $G_{b}$ incident to $c$ are consecutive in the clockwise order of the edges incident to $c$ in $\mathcal E$. 

Some lower bounds for the angles at the vertices of the non-trivial blocks of $G$ incident to $c$ can be established, according to the type of $c$.

\begin{itemize}
	\item If $c$ is of Type (2), let $f$ be the face of $\mathcal E_b$ that \emph{contains $G_{b'}$ in $\mathcal E$} (meaning that the cycle delimiting $f$ also delimits the face of $\mathcal E$ the edge $G_{b'}$ is incident to); then we let $\ell_b(c,f)=180\degree$.
	\item If $c$ is of Type (3), let $f$ be the face of $\mathcal E_b$ that contains $G_{b'}$ and $G_{b''}$ in $\mathcal E$; then we let $\ell_b(c,f)=270\degree$.	
	\item If $c$ is of Type (4), let $f'$ and $f''$ be the faces of $\mathcal E_b$ that contain $G_{b'}$ and $G_{b''}$ in $\mathcal E$, respectively; then we let $\ell_b(c,f')=180\degree$ and $\ell_b(c,f'')=180\degree$.
	\item Finally, if $c$ is of Type (5), let $f$ be the face of $\mathcal E_b$ that contains $G_{b'}$ in $\mathcal E$ and let $f'$ be the face of $\mathcal E_{b'}$ that contains $G_{b}$ in $\mathcal E$; then we let $\ell_b(c,f)=270\degree$ and $\ell_{b'}(c,f')=270\degree$.	
\end{itemize}

For every non-trivial block $G_b$ of $G$, for every face $f$ of $\mathcal E_b$, and for every vertex $w$ of $G_b$ incident to $f$ such that no value for $\ell_b(w,f)$ has been established yet, we let $\ell_b(w,f)=90\degree$. We have the following.

\begin{lemma} \label{le:from-graph-to-blocks}
	A function $\phi$ such that $G$ admits a rectilinear representation $(\mathcal E,\phi)$ exists if and only if, for every non-trivial block $G_b$ of $G$, a function $\phi_b$ exists such that $(\mathcal E_b,\phi_b)$ is an $\ell_b$-constrained representation of $G_b$. Moreover, the function $\phi$ can be computed from the functions $\phi_b$ in $O(n)$ time.
\end{lemma}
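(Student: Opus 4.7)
The plan is to prove the two directions of the equivalence in turn and then verify that the sufficiency direction is constructive in linear time. Necessity will follow from the definition of restriction combined with Tamassia's angle-sum condition at cut-vertices; sufficiency will follow from iterated application of Lemma~\ref{le:preliminaries-subgraphs-composition} (the join lemma) along the block-cut-vertex tree $T$ of $G$.

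For necessity, let $(\mathcal E_b,\phi_b)$ denote the restriction of $(\mathcal E,\phi)$ to each non-trivial block $G_b$; by definition of restriction, this is a rectilinear representation of $G_b$. The $\ell_b$-constraints then follow from the restriction formula applied at each cut-vertex $c$ of $G_b$. At a Type~(2) cut-vertex, the trivial edge $G_{b'}=cu$ lies inside the face $f$ of $\mathcal E_b$ and separates, at $c$, two faces of $\mathcal E$ that collapse into $f$ upon restriction; hence $\phi_b(c,f)$ equals the sum of two angles of $\phi$ at $c$, each at least $90\degree$, giving $\phi_b(c,f)\geq 180\degree$. Analogous counting produces $\phi_b(c,f)\geq 270\degree$ for Type~(3) (three summands), $\phi_b(c,f')\geq 180\degree$ and $\phi_b(c,f'')\geq 180\degree$ for Type~(4), and $\phi_b(c,f)\geq 270\degree$ together with $\phi_{b'}(c,f')\geq 270\degree$ for Type~(5), where for Type~(5) I use that each non-trivial block is $2$-connected (hence contributes at least two edges at $c$) and that $\deg c\leq 4$ (hence each contributes exactly two edges, producing three summands on each side).

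For sufficiency, assume an $\ell_b$-constrained representation $(\mathcal E_b,\phi_b)$ is given for every non-trivial block $G_b$. I will define a function $\phi$ by specifying $\phi(w^x,f)$ for every vertex-occurrence-face triple of $\mathcal E$. For vertex-face pairs whose vertex is not a cut-vertex, $\phi$ is inherited unambiguously from the unique non-trivial block containing that pair. At each cut-vertex $c$, the angles must be distributed among the faces of $\mathcal E$ at $c$ in a manner consistent with the given $\phi_b(c,f)$ values; the type of $c$ dictates the pattern. For Types~(2)--(4), the face $f$ of $\mathcal E_b$ whose $\ell_b$-bound was raised is subdivided at $c$ by one or two trivial edges into two or three sub-faces of $\mathcal E$, and the bound $\ell_b(c,f)$ is exactly the minimum amount of angle needed for each sub-face to receive at least $90\degree$; since $\phi_b(c,f)\geq\ell_b(c,f)$, a valid split into rectilinear angles exists in every case (e.g., $180\degree=90\degree+90\degree$, and $270\degree=180\degree+90\degree$ or $90\degree+90\degree+90\degree$). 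For Type~(5), the two lower bounds of $270\degree$ together with the sum-$360\degree$ constraint around $c$ force $\phi_b(c,f)=\phi_{b'}(c,f')=270\degree$, after which the two occurrences of $c$ on the common face of $\mathcal E$ each receive $90\degree$. For Type~(1) cut-vertices, $c$ has only trivial blocks and the angles at $c$ are assigned via the standard pattern $(180\degree,180\degree)$, $(180\degree,90\degree,90\degree)$, or $(90\degree,90\degree,90\degree,90\degree)$ for $\deg c\in\{2,3,4\}$. Having thus defined $\phi$, I verify that $(\mathcal E,\phi)$ satisfies, at every cut-vertex, properties~(a)--(d) of the join definition of Section~\ref{sse:join}; applying Lemma~\ref{le:preliminaries-subgraphs-composition} iteratively along $T$ then certifies $(\mathcal E,\phi)$ to be a rectilinear representation of $G$.

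The running time for computing $\phi$ from the $\phi_b$'s is clearly $O(n)$: the inherited part takes $O(n)$ time, and each cut-vertex is processed in $O(1)$ time. I expect the main obstacle to be Type~(5): because the single face $F$ of $\mathcal E$ between the two non-trivial blocks carries two occurrences of $c$ on its boundary, the angle $\phi(c,F)$ appears in the restriction formulas for both $\phi_b(c,f)$ and $\phi_{b'}(c,f')$, coupling the two $270\degree$ lower bounds. Verifying that these bounds are exactly tight (each non-trivial block contributing exactly two edges at $c$, forced by $\deg c\leq 4$) and that the corresponding rigid assignment of $90\degree$ angles around $c$ in $F$ is the only viable one is the technical crux of the argument.
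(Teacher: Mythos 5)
Your proposal is correct and follows essentially the same route as the paper: necessity via restriction and counting the occurrences of the cut-vertex on the relevant face (each contributing at least $90\degree$) according to its type, and sufficiency via a type-based assignment of angles at each cut-vertex followed by iterated application of Lemma~\ref{le:preliminaries-subgraphs-composition} along the block-cut-vertex tree, with each cut-vertex handled in $O(1)$ time. The only slip is terminological: a dangling trivial edge does not subdivide a face of $\mathcal E_b$ into several faces of $\mathcal E$, but rather creates several occurrences of $c$ on the boundary of the single corresponding face --- which is exactly the accounting your angle splits already perform.
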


\begin{proof}
	$(\Longrightarrow)$ Suppose that $G$ admits a rectilinear representation $(\mathcal E,\phi)$. For any non-trivial block $G_b$ of $G$, let $(\mathcal E_b,\phi_b)$ be the restriction of $(\mathcal E,\phi)$ to $G_b$. We need to prove that $(\mathcal E_b,\phi_b)$ is an $\ell_b$-constrained representation of $G_b$. That is, for each vertex $w$ of $G_b$ and each face $f$ of $\mathcal E_b$ incident to $w$, we need to prove that $\phi_b(w,f)\geq \ell_b(w,f)$. This is trivially true if $\ell_b(w,f)=90\degree$, hence assume that $\ell_b(w,f)\geq 180\degree$, which implies that $w$ is a cut-vertex of $G$. We distinguish three cases based on the type of $c$; refer to Figure~\ref{fig:Fixed-Simply-necessity}.
	\begin{figure}[htb]\tabcolsep=4pt
		\centering
		\begin{tabular}{c c c}
			\includegraphics[scale=0.7]{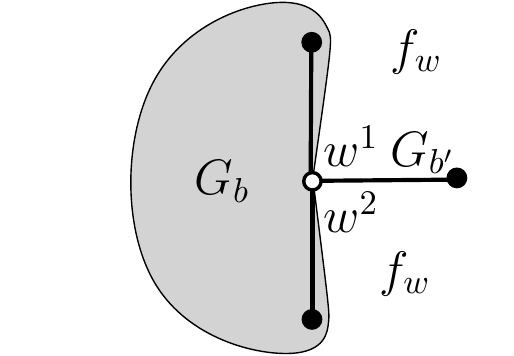} \hspace{1mm} &
			\includegraphics[scale=0.7]{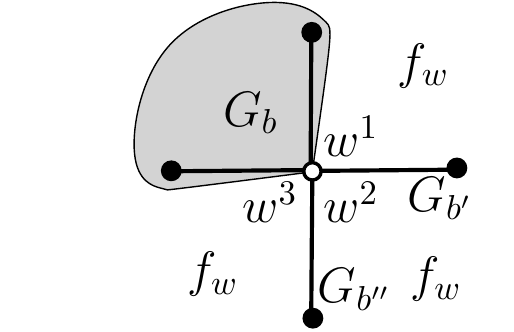} \hspace{1mm} &
			\includegraphics[scale=0.7]{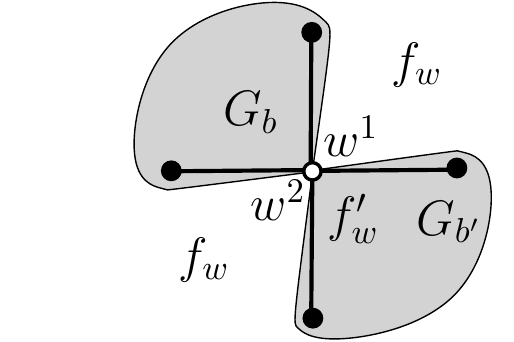}\\
			(2) and (4) \hspace{1mm} & (3) \hspace{1mm} & (5)
		\end{tabular}
		\caption{Illustration for the proof that $\phi_b(w,f)\geq \ell_b(w,f)$ if $c$ is of Type~(2) or~(4), if $c$ is of Type~(3), and if $c$ is of Type~(5). The empty disk represents $w$.}
		\label{fig:Fixed-Simply-necessity}
	\end{figure}
	\begin{itemize}
		\item If $c$ is of Type~(2) or~(4), we have $\ell_b(w,f)= 180\degree$ and there exists a trivial block $G_{b'}$ incident to $w$ such that $f$ contains $G_{b'}$. Let $f_w$ be the unique face of $\mathcal E$ whose corresponding face in $\mathcal E_b$ is $f$; further, let $w^1$ and $w^2$ be the occurrences of $w$ along the boundary of $f_w$. Since $(\mathcal E,\phi)$ is a rectilinear representation of $G$ and the degree of $w$ in $G$ is larger than $1$, we have $\phi(w^1,f_w)\geq 90\degree$ and $\phi(w^2,f_w)\geq 90\degree$. Since $\phi_b(w,f)=\phi(w^1,f_w)+\phi(w^2,f_w)$, it follows that $\phi_b(w,f)\geq \ell_b(w,f)$.
		\item If $c$ is of Type~(3), we have $\ell_b(w,f)= 270\degree$ and there exist two trivial blocks $G_{b'}$ and $G_{b''}$ incident to $w$ such that $f$ contains $G_{b'}$ and $G_{b''}$. Let $f_w$ be the unique face of $\mathcal E$ whose corresponding face in $\mathcal E_b$ is $f$ and let $w^1$, $w^2$, and $w^3$ be the occurrences of $w$ along the boundary of $f_w$. Since $(\mathcal E,\phi)$ is a rectilinear representation of $G$ and the degree of $w$ in $G$ is larger than $1$, we have $\phi(w^1,f_w)\geq 90\degree$, $\phi(w^2,f_w)\geq 90\degree$, and $\phi(w^3,f_w)\geq 90\degree$. Since $\phi_b(w,f)=\phi(w^1,f_w)+\phi(w^2,f_w)+\phi(w^3,f_w)$, it follows that $\phi_b(w,f)\geq \ell_b(w,f)$.
		\item If $c$ is of Type~(5), we have $\ell_b(w,f)= 270\degree$ and there exists a non-trivial block $G_{b'}$ different from $G_b$ and incident to $w$ such that $f$ contains $G_{b'}$. Let $f_w$ and $f'_w$ be the faces of $\mathcal E$ whose corresponding face in $\mathcal E_b$ is $f$, where $f_w$ is incident to edges of $G_b$, while $f'_w$ is not. Further, let $w^1$ and $w^2$ be the occurrences of $w$ along the boundary of $f_w$. Since $(\mathcal E,\phi)$ is a rectilinear representation of $G$ and the degree of $w$ in $G$ is larger than $1$, we have $\phi(w^1,f_w)\geq 90\degree$, $\phi(w^2,f_w)\geq 90\degree$, and $\phi(w,f'_w)\geq 90\degree$. Since $\phi_b(w,f)=\phi(w^1,f_w)+\phi(w^2,f_w)+\phi(w,f'_w)$, it follows that $\phi_b(w,f)\geq \ell_b(w,f)$.
	\end{itemize}

	$(\Longleftarrow)$ Suppose now that every non-trivial block $G_b$ of $G$ admits an $\ell_b$-constrained representation $(\mathcal E_b,\phi_b)$. We show that $G$ admits a rectilinear representation $(\mathcal E,\phi)$.
	
	Let $b^*$ be any B-node of $T$ whose corresponding block $G_{b^*}$ of $G$ contains an edge incident to $f^*_{\mathcal E}$, let $G':=G_{b^*}$, and let $\phi':=\phi_{b^*}$.
	
	Now assume that a subgraph $G'$ of $G$ has been defined satisfying the following properties: 
	\begin{enumerate}[(i)]
		\item $G'$ is connected, consists of the union of a set of blocks of $G$, and contains edges incident to $f^*_{\mathcal E}$; and
		\item for any cut-vertex $w$ of $G$, either all the blocks of $G$ containing $w$ belong to $G'$ or at most one block of $G$ containing $w$ belongs to $G'$.
	\end{enumerate}
	
	Denote by $\mathcal E'$ the restriction of $\mathcal E$ to $G'$. Assume that a rectilinear representation $(\mathcal E',\phi')$ has been constructed satisfying the following property:
	
	\begin{enumerate}
		\item[(iii)] for every non-trivial block $G_b$ of $G$ that belongs to $G'$, the restriction of $(\mathcal E',\phi')$ to $G_b$ is $(\mathcal E_b,\phi_b)$.
	\end{enumerate}
	
	These assumptions are initially met with $G'=G_{b^*}$ and $\phi':=\phi_{b^*}$. Namely, Properties~(i) and~(iii) are trivially satisfied. Further, Property~(ii) is satisfied since, for any cut-vertex $w$ of $G$, at most one block of $G$ containing $w$ belongs to $G'$. 
	
	Now consider a cut-vertex $w$ of $G$ that belongs to $G'$ and such that $G\setminus G'$ contains edges incident to $w$. By Property~(ii) there exists exactly one block $G_b$ of $G$  that belongs to $G'$ and that contains $w$. Let $G_{b_1},\dots,G_{b_k}$ be the blocks of $G$ containing $w$ and different from $G_b$. Denote by $G''$ the subgraph of $G$ composed of $G',G_{b_1},\dots,G_{b_k}$; further, denote by $\mathcal E''$ the restriction of $\mathcal E$ to $G''$. Then $G''$ clearly satisfies Property~(i). Further, $G''$ satisfies Property~(ii) since $G'$ does, since all the blocks of $G$ containing $w$ belong to $G''$, and since for every cut-vertex $z$ of $G$ in $G''\setminus G'$, exactly one block $G_{b_i}$ of $G$ containing $z$ belongs to $G''$.
	
	We define a function $\phi''$ such that $(\mathcal E'',\phi'')$ is a join of $(\mathcal E',\phi'),(\mathcal E_{b_1},\phi_{b_1}), \dots, (\mathcal E_{b_k},\phi_{b_k})$; by Lemma~\ref{le:preliminaries-subgraphs-composition}, this implies that $(\mathcal E'',\phi'')$ is a rectilinear representation of $G''$ satisfying Property~(iii). For ease of notation, let $G_{b_0}:=G'$, let $\mathcal E_{b_0}:=\mathcal E'$, and let $\phi_{b_0}:=\phi'$. 
	
	First, consider any vertex $v\neq w$ of $G''$; then $v$ belongs to a single graph $G_{b_j}$ among $G_{b_0},G_{b_1},\dots,G_{b_k}$. For any face $f$ of $\mathcal E''$ incident to $v$, let $f_j$ be the face of $\mathcal E_{b_j}$ corresponding to $f$. Then, for every occurrence $v^x$ of $v$ along the boundary of $f$, we define $\phi''(v^x,f)=\phi_{b_j}(v^x,f_j)$.

	
	Second, we define the function $\phi''$ for the occurrences of $w$ along the boundaries of faces of $\mathcal E''$. This is done differently according to the type of $w$.

	\begin{itemize}
		\item If $w$ is of Type~(1), let $w^0,w^1,\dots,w^k$ be the occurrences of $w$ along the boundary of the unique face $f$ of $\mathcal E''$ incident to $w$. Then we define $\phi''(w^x,f)=90\degree$, for $x=1,\dots,k$, and  $\phi''(w^0,f)=360\degree-\sum_{x=1}^{k}\phi''(w^x,f)$. 
		\item If $w$ is of Type~(2), then $k=1$. Note that $w$ is incident to a non-trivial block $G_b$ and to a trivial block $G_{b'}$ of $G$. Assume that $G_b$ belongs to $G_{b_0}$ and that $G_{b'}$ coincides with $G_{b_1}$, as the case in which $G_{b'}$ belongs to $G_{b_0}$ and $G_{b}$ coincides with $G_{b_1}$ is analogous. Let $f$ be the face of $\mathcal E''$ incident to $w$ and incident to $G_{b'}$, and let $f_0$ be the face of $\mathcal E_{b_0}$ corresponding to $f$. Further, let $w^0$ and $w^1$ be the occurrences of $w$ along the boundary of $f$. We let $\phi''(w^0,f)=90\degree$ and $\phi''(w^1,f)=\phi_{b_0}(w,f_0)-90\degree$. Finally, 
		for every face $g\neq f$ of $\mathcal E''$ incident to $w$, let $g_0$ be the corresponding face of $\mathcal E_{b_0}$. Then we let $\phi''(w,g)=\phi_{b_0}(w,g_0)$.
		\item Finally, if $w$ is of one of the Types~(3)--(5), let $\phi''(w^k,f)=90\degree$ for each face $f$ of $\mathcal E''$ incident to $w$ and for each occurrence $w^k$ of $w$ along the boundary of $f$. 
	\end{itemize} 
	This concludes the definition of $\phi''$. We have the following.
	
	\begin{claimx} \label{cl:function-phi}
		We have that $(\mathcal E'',\phi'')$ is a join of  $(\mathcal E',\phi'),(\mathcal E_{b_1},\phi_{b_1}), \dots, (\mathcal E_{b_k},\phi_{b_k})$.
	\end{claimx}
	
	\begin{proof}
		First, Property~(a) of a join is trivially satisfied by $(\mathcal E'',\phi'')$. Namely, $\mathcal E''$ is a plane embedding since, by definition, it is the restriction of the plane embedding $\mathcal E$ of $G$ to $G''$. 
		
		Second, we make some progress towards a proof that Property~(d) of a join is satisfied by $(\mathcal E'',\phi'')$. For $j=0,1,\dots,k$, by definition we have that $\mathcal E_{b_j}$ is the restriction of $\mathcal E''$ to $G_{b_j}$. Moreover, for every face $f$ of $\mathcal E''$ and for every occurrence $v^x$ of a vertex $v\neq w$ along the boundary of $f$, by construction we have $\phi''(v^x,f)=\phi_{b_j}(v^x,f_j)$, where $G_{b_j}$ is the graph among $G_{b_0},G_{b_1},\dots,G_{b_k}$ the vertex $v$ belongs to and $f_j$ is the face of $\mathcal E_{b_j}$ corresponding to $f$. It follows that, in order to prove that Property~(d) of a join is satisfied by $(\mathcal E'',\phi'')$, we only need to prove that, for $j=0,1,\dots,k$ and for each face $f_j$ of $\mathcal E_{b_j}$ incident to $w$, we have $\phi_{b_j}(w,f_j)=\sum_{f,x} \phi''(w^x,f)$, where the sum is over all the faces $f$ of $\mathcal E''$ incident to $w$ whose corresponding face in $\mathcal E_{b_j}$ is $f_{j}$ and over all the occurrences $w^x$ of $w$ along the boundary of~$f$. 
		
		We now distinguish five cases, according to the type of $w$.
		
		\begin{itemize}
			\item Suppose first that $w$ is of Type~(1). Recall that $w^0,w^1,\dots,w^k$ are the occurrences of $w$ along the boundary of the unique face $f$ of $\mathcal E''$ incident to $w$. By construction, we have $\phi''(w^x,f)=90\degree$, for $x=1,\dots,k$, and $\phi''(w^0,f)=360\degree-\sum_{x=1}^{k}\phi''(w^x,f)$. Then Property~(c) of a join follows immediately and Property~(b) of a join follows from the fact $1\leq k\leq 3$, given that the degree of $w$ in $G''$ is larger than or equal to $2$ and smaller than or equal to $4$. For $j=0,1,\dots,k$, there is a unique face $f_j$ of $\mathcal E_{b_j}$ incident to $w$; further, $\phi_{b_j}(w,f_j)=360\degree$, given that $w$ has degree $1$ in $G_{b_j}$. Finally, $f_{j}$ corresponds to $f$ and $\sum_{x=0}^k\phi''(w^x,f)=360\degree$, by construction. Property~(d) of a join follows.	
			\item Suppose next that $w$ is of Type~(2). Recall that $w$ is incident to a non-trivial block $G_b$ and to a trivial block $G_{b'}$ of $G$, that $f$ is the face of $\mathcal E''$ incident to $w$ and incident to $G_{b'}$, and that $w^0$ and $w^1$ are the occurrences of $w$ along the boundary of $f$.  Assume that $G_b$ belongs to $G_{b_0}$ and that $G_{b'}$ coincides with $G_{b_1}$, the other case is analogous. Let $f_0$ and $f_b$ be the faces of $\mathcal E_{b_0}$ and $\mathcal E_{b}$ corresponding to $f$, respectively.
			
			By construction, we have $\phi''(w^0,f)=90\degree$ and $\phi''(w^1,f)=\phi_{b_0}(w,f_0)-90\degree$. Further, for every face $g\neq f$ of $\mathcal E''$ incident to $w$, we have $\phi''(w,g)=\phi_{b_0}(w,g_0)$, where $g_0$ is the face of $\mathcal E_{b_0}$ corresponding to $g$. Hence, $\sum_{g,x}\phi''(w^x,g)=\sum_{g_0} \phi_{b_0}(w,g_0)$, where the first sum is over all the faces $g$ of $\mathcal E$ incident to $w$ and over all the occurrences $w^x$ of $w$ along the boundary of $g$, and the second sum is over all the faces $g_0$ of $\mathcal E_{b_0}$ incident to $w$.  Since $(\mathcal E_{b_0},\phi_{b_0})$ is a rectilinear representation of $G_{b_0}$, it follows that $\sum_{g,x}\phi''(w^x,g)=360\degree$, hence Property~(c) of a join is satisfied by $(\mathcal E'',\phi'')$.
			
			By construction, we have $\phi_{b_0}(w,f_0)=\phi''(w^0,f)+\phi''(w^1,f)$; note that $f$ is the unique face of $\mathcal E$ whose corresponding face in $\mathcal E_{b_0}$ is $f_0$. Further, for every face $g_0\neq f_0$ of $\mathcal E_{b_0}$ incident to $w$, we have $\phi_{b_0}(w,g_0)=\phi''(w,g)$, where $g$ is the unique face of $\mathcal E$ whose corresponding face in $\mathcal E_{b_0}$ is $g_0$. It follows that Property~(d) of a join is satisfied by $(\mathcal E'',\phi'')$. 
			
			Since $(\mathcal E_{b_0},\phi_{b_0})$ is a rectilinear representation of $G_{b_0}$, we have $\phi''(w,g)=\phi_{b_0}(w,g_0)\in \{90\degree,180\degree,270\degree\}$ for every face $g\neq f$ of $\mathcal E''$ incident to $w$. Further, $\phi''(w^0,f)=90\degree$, by construction. Finally, $\phi''(w^1,f)\in \{90\degree,180\degree\}$, given that $270\degree \geq \phi_{b_0}(w,f_0)=\phi_b(w,f_b)\geq \ell_b(w,f_b)=180\degree$, where $270\degree \geq \phi_{b_0}(w,f_0)$ follows from the fact that $w$ has degree greater than or equal to $2$ in $G_{b_0}$, $\phi_{b_0}(w,f_0)=\phi_b(w,f_b)$ follows from the fact that $(\mathcal E_{b_0},\phi_{b_0})$ satisfies Property~(iii), and $\phi_b(w,f_b)\geq \ell_b(w,f_b)$ follows from the fact that $(\mathcal E_{b},\phi_b)$ is an $\ell_b$-constrained representation of $G_b$. Hence, $(\mathcal E'',\phi'')$ satisfies Property~(b) of a join. 
			\item Suppose next that $w$ is of Type~(3). Then Properties~(b) and~(c) of a join are trivially satisfied by construction and since $w$ has degree $4$ in $G''$. We prove that $(\mathcal E,\phi)$ satisfies Property~(d) of a join. 	Recall that $w$ is incident to one non-trivial block $G_b$ and to two trivial blocks $G_{b'}$ and $G_{b''}$ of $G$. Assume that $G_b$ belongs to $G_{b_0}$ and that $G_{b'}$ and $G_{b''}$ coincide with $G_{b_1}$ and $G_{b_2}$, respectively; the other cases are analogous. Let $f$ be the face of $\mathcal E''$ that is incident to $w$ and that is incident to $G_{b_1}$ and $G_{b_2}$, and let $g$ be the other face of $\mathcal E''$ incident to $w$. Let $w^0$, $w^1$, and $w^2$ be the occurrences of $w$ along the boundary of $f$. By construction, $\phi''(w^0,f)=\phi''(w^1,f)=\phi''(w^2,f)=\phi''(w,g)=90\degree$.  Let $f_0$ and $g_0$ be the faces of $\mathcal E_{b_0}$ corresponding to $f$ and $g$, respectively; further, let $f_b$ and $g_b$ be the faces of $\mathcal E_{b}$ corresponding to $f$ and $g$, respectively. Then, as in the case in which $w$ is of Type~(2), we have $\phi_{b_0}(w,f_0)=\phi_{b}(w,f_b)\geq \ell_b(w,f_b)=270\degree$ and $\phi_{b_0}(w,g_0)=\phi_b(w,g_b)\geq \ell_b(w,g_b)=90\degree$. Hence, $\phi_{b_0}(w,f_0)=270\degree=\phi''(w^0,f)+\phi''(w^1,f)+\phi''(w^2,f)$ and $\phi_{b_0}(w,g_0)=90\degree=\phi''(w,g)$, as required. Further, for $j=1,2$, there is a unique face $f_j$ of $\mathcal E_{b_j}$ incident to $w$; then $\phi_{b_j}(w,f_j)=360\degree$, given that $w$ has degree $1$ in $G_{b_j}$. Finally, $f_{j}$ corresponds to $f$ and $\phi''(w^0,f)+\phi''(w^1,f)+\phi''(w^2,f)+\phi''(w,g)=360\degree$, by construction. Hence, $(\mathcal E'',\phi'')$ satisfies Property~(b) of a join. 
			\item Suppose next that $w$ is of Type~(4). Then Properties~(b) and~(c) of a join are trivially satisfied by construction and since $w$ has degree $4$ in $G''$. We prove that $(\mathcal E,\phi)$ satisfies Property~(d) of a join. Recall that $w$ is incident to one non-trivial block $G_b$ and to two trivial blocks $G_{b'}$ and $G_{b''}$ of $G$. Assume that $G_b$ belongs to $G_{b_0}$ and that $G_{b'}$ and $G_{b''}$ coincide with $G_{b_1}$ and $G_{b_2}$, respectively; the other cases are analogous. Let $f$ (let $g$) be the face of $\mathcal E''$ incident to $w$ and incident to $G_{b'}$ (resp.\ to $G_{b''}$), respectively. Let $w^0$ and $w^1$ (let $w^2$ and $w^3$) be the occurrences of $w$ along the boundary of $f$ (resp.\ of $g$). Let $f_0$ and $g_0$ be the faces of $\mathcal E_{b_0}$ corresponding to $f$ and $g$, respectively; further, let $f_b$ and $g_b$ be the faces of $\mathcal E_{b}$ corresponding to $f$ and $g$, respectively. Then, as in the case in which $w$ is of Type~(2), we have $\phi_{b_0}(w,f_0)=\phi_{b}(w,f_b)\geq \ell_b(w,f_b)=180\degree$ and $\phi_{b_0}(w,g_0)=\phi_b(w,g_b)\geq \ell_b(w,g_b)=180\degree$. Hence, $\phi_{b_0}(w,f_0)=180\degree=\phi''(w^0,f)+\phi''(w^1,f)$ and $\phi_{b_0}(w,g_0)=180\degree=\phi''(w^2,g)+\phi''(w^3,g)$, as required. Further, for $j=1,2$, there is a unique face $f_j$ of $\mathcal E_{b_j}$ incident to $w$; then $\phi_{b_j}(w,f_j)=360\degree$, given that $w$ has degree $1$ in $G_{b_j}$. Finally, $f_{j}$ corresponds to $f$ and $\phi''(w^0,f)+\phi''(w^1,f)+\phi''(w^2,g)+\phi''(w^3,g)=360\degree$, by construction. Property~(d) of a join follows.
			\item Suppose next that $w$ is of Type~(5). Then Properties~(b) and~(c) of a join are trivially satisfied by construction and since $w$ has degree $4$ in $G''$.  We prove that $(\mathcal E,\phi)$ satisfies Property~(d) of a join. Recall that $w$ is incident to two non-trivial blocks $G_b$ and $G_{b'}$ of $G$. Assume that $G_b$ belongs to $G_{b_0}$ and that $G_{b'}$ coincides with $G_{b_1}$; the other case is analogous. Let $f$, $g$, and $h$ be the faces of $\mathcal E''$ incident to $w$, where $f$ is incident to $G_{b_0}$ and $G_{b_1}$, $g$ is incident to $G_{b_0}$ and not to $G_{b_1}$, and $h$ is incident to $G_{b_1}$ and not to $G_{b_0}$. Let $w^0$ and $w^1$ be the occurrences of $w$ along the boundary of $f$. By construction, $\phi''(w^0,f)=\phi''(w^1,f)=\phi''(w,g)=\phi''(w,h)=90\degree$.  Let $f_0$ be the face of $\mathcal E_{b_0}$ corresponding to $f$ and $h$, and let $g_0$ be the face of $\mathcal E_{b_0}$ corresponding to $g$; further, let $f_b$ be the face of $\mathcal E_{b}$ corresponding to $f$ and $h$, and let $g_b$ be the face of $\mathcal E_{b}$ corresponding to $g$. Then, as in the case in which $w$ is of Type~(2), we have $\phi_{b_0}(w,f_0)=\phi_{b}(w,f_b)\geq \ell_b(w,f_b)=270\degree$ and $\phi_{b_0}(w,g_0)=\phi_b(w,g_b)\geq \ell_b(w,g_b)=90\degree$. Hence, $\phi_{b_0}(w,f_0)=270\degree=\phi''(w^0,f)+\phi''(w^1,f)+\phi''(w,h)$ and $\phi_{b_0}(w,g_0)=90\degree=\phi''(w,g)$, as required. An analogous proof can be exhibited for the faces of $\mathcal E_{b_1}$ and Property~(d) of a join follows. 
		\end{itemize} 
		
		This completes the proof of the claim.
	\end{proof}
	
	Claim~\ref{cl:function-phi} and Lemma~\ref{le:preliminaries-subgraphs-composition} imply that $(\mathcal E'',\phi'')$ is a rectilinear representation of $G''$ satisfying Property~(iii). The repetition of the described augmentation eventually leads to the construction of a rectilinear representation of $G$. 
	
	When constructing $(\mathcal E'',\phi'')$ from $(\mathcal E',\phi'),(\mathcal E_{b_1},\phi_{b_1}), \dots, (\mathcal E_{b_k},\phi_{b_k})$, only $O(1)$ values have to be defined, namely the angles incident to the cut-vertex $w$. The rules for the definition of such angles described before Claim~\ref{cl:function-phi}  can clearly be applied in $O(1)$ time. Hence, $\phi$ can be computed in total $O(n)$ time. 
\end{proof}

We thus have the following.

\begin{theorem} \label{th:fixed}
	Let $G$ be an $n$-vertex outerplanar graph with a prescribed plane embedding $\mathcal E$. There is an $O(n)$-time algorithm which tests whether $G$ admits a planar rectilinear drawing with plane embedding $\mathcal E$; in the positive case, the algorithm constructs such a drawing in $O(n)$ time.
\end{theorem}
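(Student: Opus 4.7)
The plan is to combine the two main tools developed in this section: Theorem~\ref{th:2-con-lowers}, which handles $2$-connected outerplanar graphs with lower-bound constraints on angles, and Lemma~\ref{le:from-graph-to-blocks}, which reduces the simply-connected case to a collection of independent $\ell_b$-constrained problems on the non-trivial blocks of $G$. Since Tamassia's algorithm~\cite{t-eggmnb-87} converts any rectilinear representation into a planar rectilinear drawing in linear time, it suffices to test for the existence of (and construct) a rectilinear representation $(\mathcal{E},\phi)$ of $G$ in $O(n)$ time.

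First I would compute the block-cut-vertex tree $T$ of $G$ in $O(n)$ time using standard algorithms~\cite{h-gt-69,ht-aeagm-73}, together with the restriction $\mathcal{E}_b$ of $\mathcal{E}$ to each block $G_b$. Next, I would traverse $T$ to classify each cut-vertex $c$ into one of Types~(1)--(5) according to the classification preceding Lemma~\ref{le:from-graph-to-blocks}; this takes $O(\deg(c))$ time per cut-vertex by inspecting the cyclic order of the edges incident to $c$ in $\mathcal{E}$, and hence $O(n)$ time in total. During the same traversal, for each non-trivial block $G_b$ I would build the function $\ell_b$ by initializing $\ell_b(w,f)=90\degree$ for all pairs $(w,f)$ and then, for each cut-vertex $c$ of Types~(2)--(5) contained in $G_b$, increasing the appropriate values of $\ell_b(c,f)$ to $180\degree$ or $270\degree$ as prescribed. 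Since each cut-vertex contributes $O(1)$ updates to $O(1)$ blocks, building all the $\ell_b$ functions takes $O(n)$ time.

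Then, for each non-trivial block $G_b$, I would invoke Theorem~\ref{th:2-con-lowers} to test in $O(n_b)$ time whether an $\ell_b$-constrained representation $(\mathcal{E}_b,\phi_b)$ exists and, if so, to construct one. Since the $n_b$'s sum to $O(n)$ over the non-trivial blocks of $G$, the total running time of this step is $O(n)$. By Lemma~\ref{le:from-graph-to-blocks}, $G$ admits a rectilinear representation $(\mathcal{E},\phi)$ if and only if each non-trivial block $G_b$ admits an $\ell_b$-constrained representation; if any block fails the test, we report that no planar rectilinear drawing exists with embedding $\mathcal{E}$. Otherwise, the same lemma guarantees that the collection $\{(\mathcal{E}_b,\phi_b)\}$ can be combined into a function $\phi$ such that $(\mathcal{E},\phi)$ is a rectilinear representation of $G$, and this combination is performed in $O(n)$ time. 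Finally, applying Tamassia's $O(n)$-time drawing algorithm~\cite{t-eggmnb-87} to $(\mathcal{E},\phi)$ produces the desired planar rectilinear drawing.

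The only step requiring care is the bookkeeping for $\ell_b$: each cut-vertex of Type~(5) is shared by two non-trivial blocks, so it contributes to both of them, and for Types~(3) and~(4) one must correctly identify which face of $\mathcal{E}_b$ contains the incident trivial blocks. I do not anticipate a real obstacle here, since all the structural work was already handled inside Lemma~\ref{le:from-graph-to-blocks}; the present theorem is essentially an orchestration of Theorem~\ref{th:2-con-lowers}, Lemma~\ref{le:from-graph-to-blocks}, and Tamassia's linear-time drawing procedure, with a straightforward linear-time implementation.
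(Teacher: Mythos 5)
Your proposal is correct and follows essentially the same route as the paper's own proof: compute the block-cut-vertex tree, classify cut-vertices into Types~(1)--(5) to set up the lower-bound functions $\ell_b$, test each non-trivial block via Theorem~\ref{th:2-con-lowers}, combine via Lemma~\ref{le:from-graph-to-blocks}, and finish with Tamassia's linear-time drawing algorithm. The bookkeeping concerns you flag at the end are handled exactly as you anticipate, in $O(1)$ time per cut-vertex.
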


\begin{proof}
	First, we compute in $O(n)$ time the block-cut-vertex tree $T$ of $G$~\cite{h-gt-69,ht-aeagm-73}, labeling each edge of $G$ with the block it belongs to. A visit of $\mathcal E$ then suffices to construct the plane embedding $\mathcal E_b$ of each block $G_b$ of $G$ in total $O(n)$ time. Further, we label each cut-vertex of $G$ with a label in $\{\textrm{(1)},\dots,\textrm{(5)}\}$, according to its type; this can be done in $O(1)$ time per cut-vertex by looking at the labels of the edges incident to it. Now, for every face $f$ of $\mathcal E_b$ and for every vertex $w$ incident to $f$, we set the value $\ell_b(w,f)$; this is done in $O(1)$ time per pair $(w,f)$, and hence in total $O(n)$ time, depending on the label in $\{\textrm{(1)},\dots,\textrm{(5)}\}$ attached to $w$, as described before Lemma~\ref{le:from-graph-to-blocks}. 
	
	By Tamassia's characterization~\cite{t-eggmnb-87}, we have that $G$ admits a planar rectilinear drawing with plane embedding $\mathcal E$ if and only if a function $\phi$ exists such that $(\mathcal E,\phi)$ is a rectilinear representation. By Lemma~\ref{le:from-graph-to-blocks}, such a function $\phi$ exists if and only if, for every non-trivial block $G_b$ of $G$, a function $\phi_b$ exists such that $(\mathcal E_b,\phi_b)$ is an $\ell_b$-constrained representation of $G_b$. By Theorem~\ref{th:2-con-lowers}, for every non-trivial block $G_b$ of $G$ with $n_b$ vertices, we can test in $O(n_b)$ time whether a function $\phi_b$ exists such that $(\mathcal E_b,\phi_b)$ is an $\ell_b$-constrained representation of $G_b$; moreover, in case such a function $\phi_b$ exists, it can be constructed in $O(n_b)$ time. Hence, such a computation takes $O(n)$ time over all the non-trivial blocks of $G$. If the test for the existence of a function $\phi_b$ such that $(\mathcal E_b,\phi_b)$ is an $\ell_b$-constrained representation of $G_b$ was successful for every non-trivial block $G_b$ of $G$, by Lemma~\ref{le:from-graph-to-blocks} we can construct in $O(n)$ time a function $\phi$ such that $(\mathcal E,\phi)$ is a rectilinear representation of $G$. Finally, a planar rectilinear drawing with plane embedding $\mathcal E$ can be constructed from $(\mathcal E,\phi)$ in $O(n)$ time~\cite{t-eggmnb-87}.
\end{proof}

\subsection{Outerplane Embedding}

Before moving to the variable embedding scenario, we show how Theorem~\ref{th:fixed} can be used in order to test in linear time whether an outerplanar graph $G$ admits an outerplanar rectilinear drawing. 

Note that a $2$-connected outerplanar graph has a unique outerplane embedding, up to a reflection; hence, if $G$ is $2$-connected, the problem of testing whether it admits an outerplanar rectilinear drawing is just a special case of the problem of testing whether it admits a planar rectilinear drawing with a prescribed plane embedding. If $G$ is not $2$-connected, we can also reduce the problem to the rectilinear-planarity testing with prescribed plane embedding, as a consequence of the following lemma.

\begin{lemma} \label{le:equivalent-embeddings}
	Let $\mathcal O_1$ and $\mathcal O_2$ be two outerplane embeddings of an outerplanar graph $G$. There exists a function $\phi_1$ such that $(\mathcal O_1,\phi_1)$ is a rectilinear representation of $G$ if and only if there exists a function $\phi_2$ such that $(\mathcal O_2,\phi_2)$ is a rectilinear representation of $G$.
\end{lemma}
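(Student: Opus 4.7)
By symmetry, it suffices to prove that if $(\mathcal O_1, \phi_1)$ is a rectilinear representation of $G$, then there exists a function $\phi_2$ such that $(\mathcal O_2, \phi_2)$ is a rectilinear representation of $G$. My plan is to define $\phi_2$ by directly transferring the angle values of $\phi_1$, exploiting the fact that a $2$-connected outerplanar graph has a unique outerplane embedding up to reflection. This fact implies that, for each non-trivial block $G_b$ of $G$, the restrictions $\mathcal O_{1,b}$ and $\mathcal O_{2,b}$ of $\mathcal O_1$ and $\mathcal O_2$ to $G_b$ either coincide or differ by a reflection; in either case they expose the same collection of cycles as internal faces, and they yield the same internal-angle sums at each vertex of $G_b$. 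Moreover, for every cut-vertex $c$ of $G$, both $\mathcal O_1$ and $\mathcal O_2$ produce the same number of occurrences of $c$ along the outer face, namely one per block of $G$ incident to $c$.

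I would then define $\phi_2$ as follows. For every internal face $f$ of $\mathcal O_2$, $f$ lies entirely within some non-trivial block $G_b$, and the corresponding face $f'$ of $\mathcal O_{1,b}$ has the same boundary cycle as $f$; set $\phi_2(w,f) := \phi_1(w,f')$ for every vertex $w$ incident to $f$. For every vertex $w$ of $G$, let $w^1,\dots,w^p$ and $\tilde w^1,\dots,\tilde w^p$ be the occurrences of $w$ along $f^*_{\mathcal O_1}$ and $f^*_{\mathcal O_2}$, respectively; by the observation above, these two lists have the same length $p$. Set $\phi_2(\tilde w^j, f^*_{\mathcal O_2}) := \phi_1(w^j, f^*_{\mathcal O_1})$ for $j=1,\dots,p$, i.e., transfer the multiset of outer-face angles at $w$ in any chosen order. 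This guarantees in particular that every value $\phi_2(\tilde w^j, f^*_{\mathcal O_2})$ lies in the admissible set $\{90\degree,180\degree,270\degree,360\degree\}$.

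To finish, verify the two conditions of Tamassia's characterization for $(\mathcal O_2, \phi_2)$. Condition~(1) at any vertex $w$ follows because the sum of internal-face angles at $w$ in $\mathcal O_2$ equals the sum in $\mathcal O_1$ (the internal faces come in boundary-cycle-preserving pairs across the two embeddings, and reflecting a block does not change per-pair angles), while the sum of outer-face angles at $w$ also agrees (same multiset); adding these gives $360\degree$, as for $\phi_1$. Condition~(2) at every internal face $f$ of $\mathcal O_2$ is inherited directly from the corresponding face $f'$ of $\mathcal O_1$ because the angles at incident vertices coincide. Condition~(2) at $f^*_{\mathcal O_2}$ follows because the sum $\sum_{w,k}(2-\phi_2(\tilde w^k, f^*_{\mathcal O_2})/90\degree)$ splits per vertex, and at each vertex both the multiset of angles and the number of occurrences on the outer face coincide with those in $(\mathcal O_1, \phi_1)$, so the total reproduces the value $-4$ attained by $\phi_1$. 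The main delicate point is Condition~(1) at a cut-vertex, which relies on the reflection invariance of the per-block internal-angle sums and on the equality, already discussed, between the number of occurrences of the cut-vertex along $f^*_{\mathcal O_1}$ and along $f^*_{\mathcal O_2}$; beyond this the argument is essentially bookkeeping.
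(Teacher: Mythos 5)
Your proposal is correct and follows essentially the same route as the paper's proof: transfer angles on internal faces via the uniqueness (up to reflection) of each block's outerplane embedding, and transfer the outer-face angles at each vertex using the key observation that a cut-vertex has the same number of occurrences on $f^*_{\mathcal O_1}$ and $f^*_{\mathcal O_2}$ (one per incident block), then verify Tamassia's two conditions. The only difference is that you make the reflection caveat explicit where the paper states the block restrictions simply coincide; this does not change the argument.
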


\begin{proof}
	Assume that there exists a function $\phi_1$ such that $(\mathcal O_1,\phi_1)$ is a rectilinear representation of $G$; we prove that there exists a function $\phi_2$ such that $(\mathcal O_2,\phi_2)$ is a rectilinear representation of $G$. The reverse direction is analogous. In order to prove that $(\mathcal O_2,\phi_2)$ is a rectilinear representation of $G$, we prove that it satisfies conditions (1) and (2) of Tamassia's characterization~\cite{t-eggmnb-87}.
	
	First, every internal face $f_2$ of $\mathcal O_2$ is delimited by a simple cycle $\mathcal C$, given that $\mathcal O_2$ is an outerplane embedding; let $G_b$ be the  block  of $G$ containing $\mathcal C$. Since $G_b$ has a unique outerplane embedding, the restrictions of $\mathcal O_1$ and $\mathcal O_2$ to $G_b$ coincide, hence $\mathcal O_1$ contains an internal face $f_1$ that is delimited by $\mathcal C$. Thus, for every vertex $w$ incident to $f_2$, we can set $\phi_2(w,f_2)=\phi_1(w,f_1)$. Then $\sum_w (2-\phi_2(w,f_2)/90\degree) = +4$, given that $\sum_w (2-\phi_1(w,f_1)/90\degree) = +4$, where the sums are over all the vertices $w$ incident to $f_2$ and $f_1$, respectively. 
	
	Second, every vertex $w$ that is not a cut-vertex of $G$ has a unique occurrence along the boundary of $f^*_{\mathcal O_1}$ and $f^*_{\mathcal O_2}$; then we let $\phi_2(w,f^*_{\mathcal O_2})=\phi_1(w,f^*_{\mathcal O_1})$. It follows that $\sum_f \phi_2(w,f)=360\degree$, given that $\sum_f \phi_1(w,f)=360\degree$, where the sums are over all the faces $f$ of $\mathcal O_2$ and $\mathcal O_1$, respectively, incident to $w$.
	
	Finally, let $w$ be a cut-vertex of $G$, let $w^1_1,w^2_1,\dots,w^h_1$ be the occurrences of $w$ along the boundary of $f^*_{\mathcal O_1}$, labeled in arbitrary order, and let $w^1_2,w^2_2,\dots,w^k_2$ be the occurrences of $w$ along the boundary of $f^*_{\mathcal O_2}$, also labeled in arbitrary order. The key observation here is that $h=k$; that is, the number of occurrences of $w$ along the boundary of $f^*_{\mathcal O_1}$ is the same as the number of occurrences of $w$ along the boundary of $f^*_{\mathcal O_2}$. Hence, we can set $\phi_2(w^i_2,f^*_{\mathcal O_2})=\phi_1(w^i_1,f^*_{\mathcal O_1})$, for $i=1,\dots,k$. Then $\sum_f \phi_2(w,f)=360\degree$, given that $\sum_f \phi_1(w,f)=360\degree$, where the sums are over all the faces $f$ of $\mathcal O_2$ and $\mathcal O_1$, respectively, incident to $w$. Further, $\sum_w (2-\phi_2(w,f^*_{\mathcal O_2})/90\degree) = -4$, given that $\sum_w (2-\phi_1(w,f^*_{\mathcal O_1})/90\degree) = -4$, where the sums are over all the vertices $w$ incident to $f^*_{\mathcal O_2}$ and $f^*_{\mathcal O_1}$, respectively. 
	
	This concludes the proof that $(\mathcal O_2,\phi_2)$ satisfies conditions (1) and (2) of Tamassia's characterization~\cite{t-eggmnb-87}, and hence concludes the proof of the lemma.
\end{proof}

We get the following.

\begin{theorem} \label{th:outerplanar}
	Let $G$ be an $n$-vertex outerplanar graph. There is an $O(n)$-time algorithm which tests whether $G$ admits an outerplanar rectilinear drawing; in the positive case, the algorithm constructs such a drawing in $O(n)$ time.
\end{theorem}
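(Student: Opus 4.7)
The proof is a short reduction to Theorem~\ref{th:fixed} via Lemma~\ref{le:equivalent-embeddings}. The plan is to fix an arbitrary outerplane embedding of $G$ and invoke the fixed-embedding algorithm on it; correctness in the negative case is then upgraded by the embedding-invariance lemma, and in the positive case the output drawing automatically has all vertices on the outer face.

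First, I would compute, in $O(n)$ time, an outerplane embedding $\mathcal O$ of $G$ using any of the standard linear-time algorithms cited in the paper (e.g.~\cite{cnao-lta-85,d-iroga-07,ht-ept-74,m-laarogmog-79,w-rolt-87}); if the graph is not outerplanar the procedure reports this and we return that no outerplanar rectilinear drawing exists. Next, I would run the algorithm of Theorem~\ref{th:fixed} on the pair $(G,\mathcal O)$. By Theorem~\ref{th:fixed}, this takes $O(n)$ time and decides whether $G$ admits a planar rectilinear drawing whose plane embedding is $\mathcal O$; moreover, in the positive case it produces such a drawing $\Gamma$ within the same time bound.

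In the positive case, since the plane embedding of $\Gamma$ is the outerplane embedding $\mathcal O$, every vertex of $G$ is incident to the outer face of $\Gamma$, so $\Gamma$ is by definition an outerplanar rectilinear drawing of $G$, and we return it. In the negative case, I claim that $G$ admits no outerplanar rectilinear drawing at all. Indeed, suppose for contradiction that $G$ has an outerplanar rectilinear drawing $\Gamma'$; let $\mathcal O'$ be its (outerplane) embedding and let $\phi'$ be the associated angle function, so that $(\mathcal O',\phi')$ is a rectilinear representation of $G$. By Lemma~\ref{le:equivalent-embeddings}, the existence of a rectilinear representation of $G$ with outerplane embedding $\mathcal O'$ implies the existence of one with outerplane embedding $\mathcal O$; but then Tamassia's construction~\cite{t-eggmnb-87} would yield a planar rectilinear drawing of $G$ with plane embedding $\mathcal O$, contradicting the negative verdict of Theorem~\ref{th:fixed}.

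The total running time is $O(n)$: outerplane-embedding computation, one call to Theorem~\ref{th:fixed}, and $O(n)$ for the final drawing construction. I do not expect any genuine obstacle here, since both main ingredients (Theorem~\ref{th:fixed} and Lemma~\ref{le:equivalent-embeddings}) are already in place; the only point worth being careful about is that Lemma~\ref{le:equivalent-embeddings} applies to \emph{outerplane} embeddings, which is exactly what we need, because a hypothetical outerplanar rectilinear drawing of $G$ necessarily has all vertices on its outer face and hence realizes an outerplane embedding of $G$.
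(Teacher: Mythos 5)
Your proposal is correct and follows essentially the same route as the paper's own proof: fix an arbitrary outerplane embedding $\mathcal O$, invoke Theorem~\ref{th:fixed} on $(G,\mathcal O)$, and use Lemma~\ref{le:equivalent-embeddings} to transfer a hypothetical solution with any other outerplane embedding to $\mathcal O$. The extra care you take in noting that an outerplanar rectilinear drawing necessarily realizes an outerplane embedding is exactly the observation that makes the lemma applicable.
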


\begin{proof}
	Let $\mathcal O$ be any outerplane embedding of $G$. By Lemma~\ref{le:equivalent-embeddings}, we have that $G$ admits an outerplanar rectilinear drawing if and only if it admits an outerplanar rectilinear drawing with plane embedding $\mathcal O$. By Theorem~\ref{th:fixed}, there is an $O(n)$-time algorithm which tests whether $G$ admits a planar rectilinear drawing with plane embedding $\mathcal O$; further, in the positive case, the algorithm constructs such a drawing in $O(n)$ time. The theorem follows.\end{proof}

\section{Variable Embedding} \label{se:variable}

In this section, we deal with an $n$-vertex outerplanar graph $G$ with a variable embedding. More in detail, in Section~\ref{sse:var-2con-edge}, we assume that $G$ is $2$-connected and we assume that an edge of $G$ is prescribed to be incident to the outer face of the planar rectilinear drawing we seek. In Section~\ref{sse:var-2con}, we get rid of the second assumption. In Section~\ref{sse:var}, we get rid of the first assumption.

\subsection{$2$-Connected Outerplanar Graphs with an Edge on the Outer Face} \label{sse:var-2con-edge}

When dealing with a $2$-connected outerplanar graph $G$ with a fixed plane embedding $\mathcal E$, for each face $f$ of $\cal E$ and each vertex $w$ incident to $f$, we required the angle $\phi(w,f)$ of the sought rectilinear representation $(\mathcal E,\phi)$ of $G$ to be larger than a certain lower bound $\ell(w,f)$; this was used in order to ensure that blocks incident to $w$ that are required by $\mathcal E$ to lie inside $f$ would find room to be inserted inside $f$. As the plane embedding is now arbitrary, we do not have anymore a requirement that each angle incident to a vertex should be larger than a certain lower bound, given that distinct blocks incident to a cut-vertex can be freely inserted inside those faces that are large enough to accommodate them. There is one exception, though: if two non-trivial blocks $G_1$ and $G_2$ share a cut-vertex $w$, then in any rectilinear representation of $G_1$, each angle incident to $w$ has to be \emph{either} $90\degree$ \emph{or} $270\degree$, in order to allow for a placement of $G_2$. This constraint leads to the definition of the problem that we show how to solve in this section.

Let $G$ be a $2$-connected outerplanar graph and let $\chi$ be a set of degree-$2$ vertices of $G$. 
A \emph{$\chi$-constrained} representation of $G$ is a rectilinear representation $(\mathcal E,\phi)$ of $G$ such that, for every vertex $x\in \chi$ and every face $f$ of $\mathcal E$ incident to $x$, we have either $\phi(x,f)=90\degree$ or $\phi(x,f)=270\degree$. Given an edge $uv$ incident to the outer face $f^*_{\mathcal O}$ of the outerplane embedding $\mathcal O$ of $G$ and given two values $\mu,\nu \in \{90\degree,180\degree,270\degree\}$, a $\chi$-constrained representation $(\mathcal E,\phi)$ in which $uv$ is incident to $f^*_{\mathcal E}$ and the angles $\phi^{\mathrm{int}}(u)$ and $\phi^{\mathrm{int}}(v)$ are equal to $\mu$ and $\nu$, respectively, is called a \emph{$(\chi,\mu,\nu)$-representation} of $G$. This is formalized in the following.

\begin{definition}
	A $(\chi,\mu,\nu)$-representation of $G$ is a $\chi$-constrained representation $(\mathcal E,\phi)$ of $G$ such that $uv$ is incident to $f^*_{\mathcal E}$, $\phi^{\mathrm{int}}(u)=\mu$, and $\phi^{\mathrm{int}}(v)=\nu$.
\end{definition}

We show how to test, for any $\mu,\nu\in\{90\degree,180\degree,270\degree\}$, whether $G$ admits a $(\chi,\mu,\nu)$-representation. Let ${\mathcal C}^*_{uv}$ be the cycle delimiting the internal face of $\mathcal O$ incident to $uv$. Assume,  w.l.o.g.\ up to a reflection of $\mathcal O$, that $u$ immediately follows $v$ in the clockwise order of the edges along ${\mathcal C}^*_{uv}$ in $\mathcal O$. Let $u=u_0,u_1,\dots,u_k=v$ be the clockwise order of the vertices of ${\mathcal C}^*_{uv}$ in $\mathcal O$. For $i=1,\dots,k$, let $G_i$ be the $uv$-subgraph of $G$ with root $u_{i-1}u_i$. The assumption that $uv$ is incident to the outer face of the desired plane embedding $\mathcal E$ ensures that ${\mathcal C}^*_{uv}$ lies in the outer face of each $uv$-subgraph $G_i$ of $G$ in $\mathcal E$ (and thus any two distinct $uv$-subgraphs $G_i$ and $G_j$ of $G$ lie in the outer face of each other in $\mathcal E$). Conversely, each $uv$-subgraph $G_i$ of $G$ might lie inside or outside ${\mathcal C}^*_{uv}$ in $\mathcal E$; in fact, as we shall see, determining which $uv$-subgraphs of $G$ lie inside ${\mathcal C}^*_{uv}$  and which $uv$-subgraphs of $G$ lie outside ${\mathcal C}^*_{uv}$ is the main challenge towards the construction of $\mathcal E$. Note that if $u_i$ belongs to $\chi$, for any $i=1,\dots, k-1$, then both $G_i$ and $G_{i+1}$ are trivial $uv$-subgraphs of $G$; analogously, if $u_0$ or $u_k$ belongs to $\chi$, then $G_1$ or $G_k$ is a trivial $uv$-subgraph of $G$, respectively.

Lemma~\ref{le:structural-fixed-embedding} finds an immediate extension to the current setting, as formalized below. For any $i=1,\dots,k$, let $\chi_i$ be the intersection between $\chi$ and the vertex set of $G_i$. An \emph{in-out assignment} is an assignment of each non-trivial $uv$-subgraph $G_i$ of $G$ either to the inside or to the outside of ${\mathcal C}^*_{uv}$.

\begin{lemma} \label{le:structural-variable-embedding}
	For any $\mu,\nu\in \{90\degree,180\degree,270\degree\}$, we have that $G$ admits a $(\chi,\mu,\nu)$-representation if and only if there exist an in-out assignment $\mathcal A$ and values $\rho_0, \rho_1, \mu_1, \nu_1, \rho_2, \mu_2, \nu_2, \dots, \rho_k, \mu_k, \nu_k$ in $\{0\degree,90\degree,180\degree,270\degree\}$ so that the following properties are satisfied:
	
	\begin{enumerate}
		\item[$\mathcal V1$:] for $i=0,\dots,k$, we have $\rho_i \geq 90\degree$; further, if $u_i\in \chi$, then either $\rho_i =90\degree$ or $\rho_i =270\degree$;
		\item[$\mathcal V2$:] for $i=1,\dots,k$, if $G_i$ is trivial then $\mu_i=\nu_i=0\degree$, otherwise $\mu_i,\nu_i \in \{90\degree,180\degree\}$ and $G_i$ admits a $(\chi_i,\mu_i,\nu_i)$-representation;
		\item[$\mathcal V3$:] for $i=1,\dots,k-1$, we have that $\nu_i+\rho_i+\mu_{i+1}\leq 270\degree$;
		\item[$\mathcal V4$:] $\rho_0+\mu_1=\mu$ and $\rho_k+\nu_k=\nu$; and
		\item[$\mathcal V5$:] for $i=1,\dots,k$, if $G_i$ is trivial or if it is assigned by $\mathcal A$ to the outside of ${\mathcal C}^*_{uv}$, then let $\sigma_i=0\degree$, otherwise let $\sigma_i=\mu_i+\nu_i$; then we have $\sum_{i=0}^k \rho_i + \sum_{i=1}^k \sigma_i = (k-1) \cdot 180\degree$.
	\end{enumerate}
\end{lemma}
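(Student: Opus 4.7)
The plan is to follow the pattern of Lemma~\ref{le:structural-fixed-embedding}, adapting the arguments to the setting in which the plane embedding is not prescribed but must instead be derived from the in-out assignment $\mathcal A$, and in which the $\chi$-constraint plays the role that the function $\ell$ played in the fixed-embedding case. The necessity direction is essentially a specialization of the one in Lemma~\ref{le:structural-fixed-embedding} where every lower bound $\ell(\cdot,\cdot)$ is $90\degree$, with a small additional check to verify Property~$\mathcal V1$ at the vertices of $\chi$.

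For the necessity direction, starting from a $(\chi,\mu,\nu)$-representation $(\mathcal E,\phi)$ of $G$, I would first define $\mathcal A$ by assigning each non-trivial $uv$-subgraph $G_i$ to the inside or the outside of ${\mathcal C}^*_{uv}$ depending on where it lies in $\mathcal E$ (this is well defined since $uv$ is incident to $f^*_{\mathcal E}$, so ${\mathcal C}^*_{uv}$ is a simple cycle separating the two sides of each $G_i$). Next I would set $\rho_i=\phi(u_i,f^{uv}_{\mathcal E})$ for $i=0,\dots,k$, and, for each non-trivial $G_i$, $\mu_i=\sum_f\phi(u_{i-1},f)$ and $\nu_i=\sum_f\phi(u_i,f)$ where the sums are over the internal faces of $\mathcal E_i$ incident to the two endpoints. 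Properties $\mathcal V2$, $\mathcal V3$, $\mathcal V4$, and $\mathcal V5$ would then follow verbatim from the corresponding arguments in Lemma~\ref{le:structural-fixed-embedding}, using that the restriction of $(\mathcal E,\phi)$ to each $G_i$ is itself a $\chi_i$-constrained representation and that the sum of the internal angles of ${\mathcal C}^*_{uv}$ in $(\mathcal E,\phi)$ is $(k-1)\cdot 180\degree$. Property~$\mathcal V1$ splits in two parts: $\rho_i\ge 90\degree$ is immediate since $\phi(u_i,f^{uv}_{\mathcal E})\ge 90\degree$; and, if $u_i\in\chi$, the defining property of a $\chi$-constrained representation forces $\phi(u_i,f^{uv}_{\mathcal E})\in\{90\degree,270\degree\}$.

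For the sufficiency direction, assume the existence of $\mathcal A$ and of values $\rho_0,\rho_1,\mu_1,\nu_1,\dots,\rho_k,\mu_k,\nu_k$ satisfying $\mathcal V1$--$\mathcal V5$, and, for each non-trivial $G_i$, let $(\mathcal E_i,\phi_i)$ be a $(\chi_i,\mu_i,\nu_i)$-representation as granted by $\mathcal V2$. First I would construct the plane embedding $\mathcal E$ of $G$ by attaching the $uv$-subgraphs to ${\mathcal C}^*_{uv}$ along their roots $u_{i-1}u_i$: each non-trivial $G_i$ is glued according to $\mathcal E_i$ (possibly after reflecting $\mathcal E_i$ so that $f^*_{\mathcal E_i}$ is the face playing the role of the outer face in the resulting $\mathcal E$), and is placed inside or outside ${\mathcal C}^*_{uv}$ as prescribed by $\mathcal A$; this yields a well-defined plane embedding in which $uv$ is incident to $f^*_{\mathcal E}$. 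Then I would define $\phi$ following the same four cases as in the sufficiency part of Lemma~\ref{le:structural-fixed-embedding}: for vertices not on ${\mathcal C}^*_{uv}$ or for faces internal to some $\mathcal E_i$, I would copy $\phi_i$; for $(u_i,f^{uv}_{\mathcal E})$ I would set $\rho_i$; and for $(u_i,f^*_{\mathcal E})$ I would set the complement to $360\degree$. Properties $\mathcal V1$ and $\mathcal V3$ guarantee that all of these angles lie in $\{90\degree,180\degree,270\degree\}$, and the verification that $(\mathcal E,\phi)$ satisfies conditions~(1) and~(2) of Tamassia's characterization~\cite{t-eggmnb-87} is identical to the computation carried out in Lemma~\ref{le:structural-fixed-embedding}, with the role of $\ell$ replaced everywhere by the uniform bound $90\degree$.

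The one genuinely new step is the verification of the $\chi$-constraint. For a vertex $w\in\chi$ not in $\{u_0,\dots,u_k\}$, $w$ lies in a unique $G_i$, and every angle $\phi(w,f)$ equals some $\phi_i(w,f_i)$, which is in $\{90\degree,270\degree\}$ since $(\mathcal E_i,\phi_i)$ is $\chi_i$-constrained. For $u_i\in\chi$ with $i\in\{1,\dots,k-1\}$, both $G_i$ and $G_{i+1}$ are trivial, so $u_i$ is only incident to $f^{uv}_{\mathcal E}$ and $f^*_{\mathcal E}$; then $\phi(u_i,f^{uv}_{\mathcal E})=\rho_i\in\{90\degree,270\degree\}$ by $\mathcal V1$, and $\phi(u_i,f^*_{\mathcal E})=360\degree-\rho_i\in\{90\degree,270\degree\}$ as well. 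The cases $u_0\in\chi$ and $u_k\in\chi$ are handled analogously using that $G_1$ or $G_k$ is trivial. I expect the main obstacle to be the careful bookkeeping in defining $\mathcal E$ from the individual $\mathcal E_i$'s and $\mathcal A$, and in checking that the resulting restriction of $(\mathcal E,\phi)$ to each $G_i$ coincides with $(\mathcal E_i,\phi_i)$; once this is in place, the remaining verifications transcribe those of Lemma~\ref{le:structural-fixed-embedding}.
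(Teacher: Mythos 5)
Your proposal is correct and follows essentially the same route as the paper: both directions are handled by specializing Lemma~\ref{le:structural-fixed-embedding} with the uniform lower bound $\ell\equiv 90\degree$, building the embedding $\mathcal E$ from the in-out assignment $\mathcal A$ (with reflections of the $\mathcal E_i$ as needed), and then checking the $\chi$-constraint separately in the same three cases (interior vertices of a non-trivial $G_i$, vertices $u_i$ with $1\le i\le k-1$, and $u_0,u_k$). The only cosmetic difference is that the paper's sufficiency direction invokes Lemma~\ref{le:structural-fixed-embedding} as a black box and therefore must also verify that each $(\chi_i,\mu_i,\nu_i)$-representation is $\ell_i$-constrained on its outer face (using $\mathcal V1$--$\mathcal V3$), whereas you redo the four-case construction inline, which sidesteps that check; both are sound.
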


\begin{proof}
	$(\Longrightarrow)$ We first prove the necessity. Suppose that $G$ admits a $(\chi,\mu,\nu)$-representation $(\mathcal E,\phi)$. For each face $f$ of $\mathcal E$ and each vertex $w$ incident to $f$, define $\ell(w,f)=90\degree$. Since $(\mathcal E,\phi)$ is a rectilinear representation, for each face $f$ of $\mathcal E$ and each vertex $w$ incident to $f$, we have $\phi(w,f)\geq 90\degree$. Hence $(\mathcal E,\phi)$ is an $(\mathcal E,\ell,\mu,\nu)$-representation. By Lemma~\ref{le:structural-fixed-embedding}, values $\rho_0, \rho_1, \mu_1, \nu_1, \rho_2, \mu_2, \nu_2, \dots, \rho_k, \mu_k, \nu_k$ in $\{0\degree,90\degree,180\degree,270\degree\}$ exist satisfying Properties~Pr1--Pr5. We show that these values, together with the in-out assignment $\mathcal A$ that assigns a non-trivial $uv$-subgraph $G_i$ of $G$ to the inside of ${\mathcal C}^*_{uv}$ if and only if $G_i$ lies inside ${\mathcal C}^*_{uv}$ in $\mathcal E$, satisfy properties~$\mathcal V1$--$\mathcal V5$. 
	
	\begin{itemize}
		\item Concerning Property~$\mathcal V1$, we have that Property~Pr1 of Lemma~\ref{le:structural-fixed-embedding} implies that $\rho_i\geq 90\degree$, for $i=0,\dots,k$. Further, consider any vertex $u_i \in \chi$. By definition of $(\chi,\mu,\nu)$-representation, we have that either $\phi(u_i,f^{uv}_{\mathcal E})=90\degree$ or $\phi(u_i,f^{uv}_{\mathcal E})=270\degree$, where $f^{uv}_{\mathcal E}$ denotes the internal face of $\mathcal E$ incident to the edge $uv$. By construction (see the proof of necessity in Lemma~\ref{le:structural-fixed-embedding}), we have $\rho_i=\phi(u_i,f^{uv}_{\mathcal E})$, hence we have that either $\rho_i=90\degree$ or $\rho_i=270\degree$, as required. 
		\item By Property~Pr2 of Lemma~\ref{le:structural-fixed-embedding}, we have that $\mu_i=\nu_i=0\degree$ if $G_i$ is trivial, while $\mu_i,\nu_i \in \{90\degree,180\degree\}$ if $G_i$ is non-trivial. Consider the restriction $(\mathcal E_i,\phi_i)$ of $(\mathcal E,\phi)$ to $G_i$. Since $(\mathcal E,\phi)$ is a $(\chi,\mu,\nu)$-representation, it follows that $\phi_i(w,f)\in \{90\degree,270\degree\}$, for every vertex $w\in \chi_i$ and for every face $f$ of $\mathcal E_i$ incident to $w$. By construction  (see the proof of necessity in Lemma~\ref{le:structural-fixed-embedding}), we have $\mu_i= \sum_f \phi(u_{i-1},f)$, where the sum is over all the internal faces $f$ of $\mathcal E_i$ incident to $u_{i-1}$, and $\nu_i= \sum_f \phi(u_{i},f)$, where the sum is over all the internal faces $f$ of $\mathcal E_i$ incident to $u_{i}$. It follows that $(\mathcal E_i,\phi_i)$ is a $(\chi_i,\mu_i,\nu_i)$-representation of $G_i$, as required.	 
		\item Property~$\mathcal V3$ directly follows from Property~Pr3 of Lemma~\ref{le:structural-fixed-embedding} and from the fact that $\ell(u_i,f^*_{\mathcal E})=90\degree$, for $i=0,\dots,k$.
		\item Property~$\mathcal V4$ directly follows from Property~Pr4 of Lemma~\ref{le:structural-fixed-embedding}.
		\item Property~$\mathcal V5$ follows from Property~Pr5 of Lemma~\ref{le:structural-fixed-embedding} and from the fact that, for $i=1,\dots,k$, the value $\sigma_i$ in Property~Pr5 of Lemma~\ref{le:structural-fixed-embedding} coincides with the value $\sigma_i$ in Property~$\mathcal V5$; this is immediate if $G_i$ is trivial, and it follows from the fact that $\mathcal A$ assigns $G_i$ to the inside of ${\mathcal C}^*_{uv}$ if and only if $G_i$ lies inside ${\mathcal C}^*_{uv}$ in $\mathcal E$ if $G_i$ is non-trivial.
	\end{itemize}
	
	$(\Longleftarrow)$ We next prove the sufficiency. Assume that an in-out assignment $\mathcal A$ and values $\rho_0,\rho_1,\mu_1,\nu_1,\dots,\rho_k,\mu_k,\nu_k$ exist satisfying Properties $\mathcal V1$--$\mathcal V5$. For $i=1,\dots,k$, if $G_i$ is not trivial, let $(\mathcal E_i,\phi_i)$ be a $(\chi_i,\mu_i,\nu_i)$-representation of $G_i$; this exists by Property~$\mathcal V2$. 
	
	We define a plane embedding $\mathcal E$ of $G$ as follows. First, the vertices $u_0,u_1,\dots,u_k$ appear in this clockwise order along $\mathcal C^*_{uv}$ in $\mathcal E$. Second, the embedding of each non-trivial $uv$-subgraph $G_i$ of $G$ in $\mathcal E$ is either $\mathcal E_i$ or the reflection of $\mathcal E_i$. Further, each non-trivial $uv$-subgraph $G_i$ of $G$ is embedded inside or outside $\mathcal C^*_{uv}$ according to the in-out assignment $\mathcal A$; this is what might cause the reflection of $\mathcal E_i$ (indeed, if $G_i$ is embedded inside or outside $\mathcal C^*_{uv}$, then $u_{i-1}$ needs to immediately precede or follow $u_i$ in the clockwise order of the vertices along the boundary of the outer face of $\mathcal E_i$, respectively). 
	
	For each face $f$ of $\mathcal E$ and each vertex $w$ of $G$ incident to $f$, let $\ell(w,f)=90\degree$ and let $\ell_i$ be the restriction of $\ell$ to the vertices and edges of $G_i$. Then the values $\rho_0,\rho_1,\mu_1,\nu_1,\dots,\rho_k,\mu_k,\nu_k$ satisfy Properties~Pr1--Pr5 of Lemma~\ref{le:structural-fixed-embedding}, namely:
	
	\begin{itemize}
		\item Property~Pr1 of Lemma~\ref{le:structural-fixed-embedding} directly follows from Property~$\mathcal V1$ and from the fact that $\ell(u_i,f^{uv}_{\mathcal E})=90\degree$, for $i=0,1,\dots,k$, where $f^{uv}_{\mathcal E}$ is the internal face of $\mathcal E$ incident to $uv$.
		\item By Property~$\mathcal V2$, we have that $\mu_i=\nu_i=0\degree$ if $G_i$ is trivial, while $\mu_i,\nu_i \in \{90\degree,180\degree\}$ if $G_i$ is non-trivial. Further, each non-trivial $uv$-subgraph $G_i$ of $G$ admits a $(\chi_i,\mu_i,\nu_i)$-representation $(\mathcal E_i,\phi_i)$. We prove that $(\mathcal E_i,\phi_i)$ is an $(\mathcal E_i,\ell_i, \mu_i,\nu_i)$-representation of $G_i$. Since $(\mathcal E_i,\phi_i)$ is a rectilinear representation, for each internal face $f$ of $\mathcal E_i$ and for each vertex $w$ incident to $f$, we have $\phi_i(w,f)\geq 90\degree=\ell_i(w,f)$. Further, we have $\phi_i(u_{i-1},f^*_{\mathcal E_i})=360\degree-\phi_i^{\mathrm{int}}(u_{i-1})=360\degree-\mu_i\geq 360\degree-270\degree+\rho_{i-1}+\nu_{i-1}=90\degree+\rho_{i-1}+\nu_{i-1}$, where we exploited Property~$\mathcal V3$. Since $\ell_i(u_{i-1},f^*_{\mathcal E_{i}})=180\degree$ if $G_{i-1}$ is trivial and $\ell_i(u_{i-1},f^*_{\mathcal E_{i}})=270\degree$ if $G_{i-1}$ is non-trivial, the inequality $\phi_i(u_{i-1},f^*_{\mathcal E_i})\geq \ell_i(u_{i-1},f^*_{\mathcal E_{i}})$ follows from the fact that $\rho_{i-1}\geq 90\degree$, by Property~$\mathcal V1$, and from the fact that $\nu_{i-1}\geq 90\degree$ if $G_{i-1}$ is non-trivial, by Property~$\mathcal V2$. An analogous proof shows that $\phi_i(u_{i},f^*_{\mathcal E_i})\geq \ell_i(u_{i},f^*_{\mathcal E_{i}})$. Hence, $(\mathcal E_i,\phi_i)$ is an $(\mathcal E_i,\ell_i, \mu_i,\nu_i)$-representation of $G_i$, and Property~Pr2 of Lemma~\ref{le:structural-fixed-embedding} follows.
		\item Property~Pr3 of Lemma~\ref{le:structural-fixed-embedding} directly follows from Property~$\mathcal V3$ and from the fact that $\ell(u_i,f^*_{\mathcal E})=90\degree$, for $i=0,1,\dots,k$.
		\item Property~Pr4 of Lemma~\ref{le:structural-fixed-embedding} directly follows from Property~$\mathcal V4$.
		\item Property~Pr5 of Lemma~\ref{le:structural-fixed-embedding} follows from Property~$\mathcal V5$ and from the fact that, for $i=1,\dots,k$, the value $\sigma_i$ in Property~Pr5 of Lemma~\ref{le:structural-fixed-embedding} coincides with the value $\sigma_i$ in Property~$\mathcal V5$; this is immediate if $G_i$ is trivial, and it follows from the fact that $G_i$ lies inside ${\mathcal C}^*_{uv}$ in $\mathcal E$ if and only if $\mathcal A$ assigns $G_i$ to the inside of ${\mathcal C}^*_{uv}$ if $G_i$ is non-trivial.
	\end{itemize}
	
	By Lemma~\ref{le:structural-fixed-embedding} we have that $G$ admits an $(\mathcal E,\ell,\mu,\nu)$-representation $(\mathcal E,\phi)$. In order to prove that $(\mathcal E,\phi)$ is indeed a $(\chi,\mu,\nu)$-representation of $G$, it suffices to prove that, for each vertex $w\in \chi$ and each face $f$ incident to $w$, either $\phi(w,f)=90\degree$ or $\phi(w,f)=270\degree$ holds true. 
	
	\begin{itemize}
		\item If $w=u_i$, for some $i\in \{0,1,\dots,k\}$, and $f=f^{uv}_{\mathcal E}$, then by construction (see the proof of sufficiency in Lemma~\ref{le:structural-fixed-embedding}), we have $\phi(w,f)=\rho_i$. By Property~$\mathcal V1$ we have that $\rho_i=90\degree$ or $\rho_i=270\degree$, hence either $\phi(w,f)=90\degree$ or $\phi(w,f)=270\degree$ holds true. 
		\item Analogously, if $w=u_i$, for some $i\in \{0,1,\dots,k\}$, and $f=f^*_{\mathcal E}$, then by construction (see the proof of sufficiency in Lemma~\ref{le:structural-fixed-embedding}) and since $w$ has degree $2$ in $G$, we have $\phi(w,f)=360\degree-\rho_i$. By Property~$\mathcal V1$ we have that either $\rho_i=90\degree$ or $\rho_i=270\degree$, hence $\phi(w,f)=270\degree$ or $\phi(w,f)=90\degree$ holds true. 
		\item If $w\in \chi_i$, for some non-trivial $uv$-subgraph $G_i$ of $G$, then by construction (see the proof of sufficiency in Lemma~\ref{le:structural-fixed-embedding}) we have that $\phi(w,f)=\phi_i(w,f')$, where $f'$ is the face of $\mathcal E_i$ corresponding to $f$. Since $(\mathcal E_i,\phi_i)$ is a $(\chi_i,\mu_i,\nu_i)$-representation of $G_i$, it follows that $\phi_i(w,f')=90\degree$ or $\phi_i(w,f')=270\degree$ holds true. Consequently, either $\phi(w,f)=270\degree$ or $\phi(w,f)=90\degree$ holds true, as well.
	\end{itemize}
	
	This concludes the proof of the lemma.
\end{proof}

Property $\mathcal V2$ of Lemma~\ref{le:structural-variable-embedding} implies that, for every trivial $uv$-subgraph $G_i$ of $G$, the values $\mu_i$ and $\nu_i$ can be set equal to $0\degree$ without loss of generality. The values $\mu_i$ and $\nu_i$ can also be chosen ``optimally'' for every non-trivial $uv$-subgraph $G_i$ of $G$, except for $G_1$ and $G_k$, due to the following two lemmas, which play the same role as Lemmata~\ref{le:smaller-is-better} and~\ref{le:same-size-is-the-same} do for the fixed embedding setting.

\begin{lemma}\label{le:smaller-is-better-variable}
	Suppose that there exist an in-out assignment $\mathcal A$ and a sequence $\mathcal S$ of values $\rho_0,\rho_1,\mu_1,\nu_1,\dots,\rho_k$, $\mu_k,\nu_k$ such that Properties~$\mathcal V1$--$\mathcal V5$ of Lemma~\ref{le:structural-variable-embedding} are satisfied. Further, suppose that $G_i$ is a non-trivial $uv$-subgraph of $G$, for some $i\in \{2,\dots,k-1\}$. 
	
	If $G_i$ admits a $(\chi_i,\mu^*_i,\nu^*_i)$-representation, where $\mu^*_i,\nu^*_i\in \{90\degree,180\degree\}$, $\mu^*_i\leq \mu_i$, and $\nu^*_i\leq \nu_i$, then there exist values $\rho^*_{i-1}$ and $\rho^*_i$ such that the in-out assignment $\mathcal A$ and the sequence $\mathcal S^*$ obtained from $\mathcal S$ by replacing the values $\rho_{i-1}$, $\rho_i$, $\mu_i$, and $\nu_i$ with $\rho^*_{i-1}$, $\rho^*_i$, $\mu^*_i$, and $\nu^*_i$, respectively, satisfy Properties~$\mathcal V1$--$\mathcal V5$ of Lemma~\ref{le:structural-variable-embedding}. 
\end{lemma}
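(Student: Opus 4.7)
The plan is to mimic the proof of Lemma~\ref{le:smaller-is-better} almost verbatim, splitting on whether the in-out assignment $\mathcal A$ places $G_i$ outside or inside $\mathcal C^*_{uv}$, and to treat the additional $\chi$-constraint of Property~$\mathcal V1$ separately. The crucial preliminary observation is that this $\chi$-constraint is vacuous at $u_{i-1}$ and $u_i$: if $u_{i-1}\in \chi$ then $u_{i-1}$ has degree $2$ in $G$, and since its two incident edges are split between $G_{i-1}$ and $G_i$, both of these $uv$-subgraphs must be trivial (any non-trivial block incident to $u_{i-1}$ would already contribute at least $2$ to its degree). The hypothesis that $G_i$ is non-trivial therefore forces $u_{i-1}\notin \chi$, and symmetrically $u_i\notin \chi$. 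Thus for Property~$\mathcal V1$ I only need to verify $\rho^*_{i-1},\rho^*_i\geq 90\degree$ and that they lie in $\{0\degree,90\degree,180\degree,270\degree\}$.

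If $\mathcal A$ assigns $G_i$ to the outside of $\mathcal C^*_{uv}$, I would set $\rho^*_{i-1}=\rho_{i-1}$ and $\rho^*_i=\rho_i$. Properties~$\mathcal V1$ and $\mathcal V4$ are inherited from $\mathcal S$ (the latter because $2\leq i\leq k-1$). Property~$\mathcal V2$ holds by the hypothesis that $G_i$ admits a $(\chi_i,\mu^*_i,\nu^*_i)$-representation. Property~$\mathcal V3$ follows from the original inequalities together with $\mu^*_i\leq \mu_i$ and $\nu^*_i\leq \nu_i$. Property~$\mathcal V5$ holds because $\sigma_i=0\degree$ both in $\mathcal S$ and in $\mathcal S^*$, so the total sum is unchanged.

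If $\mathcal A$ assigns $G_i$ to the inside of $\mathcal C^*_{uv}$, I would set $\rho^*_{i-1}=\rho_{i-1}+\mu_i-\mu^*_i$ and $\rho^*_i=\rho_i+\nu_i-\nu^*_i$. These are non-negative multiples of $90\degree$ at least as large as $\rho_{i-1}$ and $\rho_i$, hence at least $90\degree$, giving Property~$\mathcal V1$. To check that they lie in $\{0\degree,90\degree,180\degree,270\degree\}$, and simultaneously to verify Property~$\mathcal V3$ for $\mathcal S^*$, I would use the identity $\nu_{i-1}+\rho^*_{i-1}+\mu^*_i=\nu_{i-1}+\rho_{i-1}+\mu_i$, bounded by $270\degree$ via Property~$\mathcal V3$ for $\mathcal S$; symmetrically at index $i$. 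Property~$\mathcal V4$ is again untouched. Property~$\mathcal V5$ is preserved because the decrease $(\mu_i+\nu_i)-(\mu^*_i+\nu^*_i)$ in $\sum_j \sigma_j$ is exactly compensated by the increase $(\rho^*_{i-1}-\rho_{i-1})+(\rho^*_i-\rho_i)$ in $\sum_j \rho_j$.

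The only potential obstacle beyond the routine book-keeping is the $\chi$-constraint inside Property~$\mathcal V1$, which in the ``inside'' case could in principle be violated by the new values $\rho^*_{i-1},\rho^*_i\in\{90\degree,180\degree,270\degree\}$; the degree-two argument above removes this obstacle cleanly, and the rest of the verification is a direct transcription of the fixed-embedding proof with $\ell(\cdot,\cdot)\equiv 90\degree$.
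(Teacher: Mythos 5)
Your proposal is correct and matches the paper's proof essentially verbatim: the paper makes the same case split on whether $\mathcal A$ assigns $G_i$ inside or outside $\mathcal C^*_{uv}$, uses the same choices $\rho^*_{i-1}=\rho_{i-1}+\mu_i-\mu^*_i$ and $\rho^*_i=\rho_i+\nu_i-\nu^*_i$ (resp.\ $\rho^*_{i-1}=\rho_{i-1}$, $\rho^*_i=\rho_i$), and then transcribes the proof of Lemma~\ref{le:smaller-is-better} with $\ell(\cdot,\cdot)$ replaced by $90\degree$. Your explicit observation that the $\chi$-clause of Property~$\mathcal V1$ is vacuous at $u_{i-1}$ and $u_i$ because $G_i$ is non-trivial is a point the paper handles only implicitly (via the remark preceding Lemma~\ref{le:structural-variable-embedding} that $u_i\in\chi$ forces $G_i$ and $G_{i+1}$ to be trivial), so making it explicit is a small improvement in rigor rather than a different approach.
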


\begin{proof}
	We distinguish the case in which $\mathcal A$ assigns $G_i$ to the outside of $\mathcal C^*_{uv}$ from the one in which $\mathcal A$ assigns $G_i$ to the inside of $\mathcal C^*_{uv}$. In the former case, we let $\rho^*_{i-1}=\rho_{i-1}$ and $\rho^*_{i}=\rho_{i}$, while in the latter case, we let $\rho^*_{i-1}=\rho_{i-1}+\mu_{i}-\mu^*_{i}$ and $\rho^*_{i}=\rho_{i}+\nu_i-\nu^*_i$. The proof that the resulting sequence $\mathcal S^*$, together with the in-out assignment $\mathcal A$, satisfies Properties~$\mathcal V1$--$\mathcal V5$ of Lemma~\ref{le:structural-variable-embedding} follows almost verbatim the proof of Lemma~\ref{le:smaller-is-better}, with Properties Pr1--Pr5 replaced by Properties~$\mathcal V1$--$\mathcal V5$, respectively, and with the values $\ell(u_{i-1},f^{uv}_{\mathcal E})$ and $\ell(u_{i},f^{uv}_{\mathcal E})$ replaced by $90\degree$.
\end{proof}

\begin{lemma}\label{le:same-size-is-the-same-variable}
	Suppose that there exist an in-out assignment $\mathcal A$ and a sequence $\mathcal S$ of values $\rho_0,\rho_1,\mu_1,\nu_1,\dots,\rho_k$, $\mu_k,\nu_k$ such that Properties~$\mathcal V1$--$\mathcal V5$ of Lemma~\ref{le:structural-variable-embedding} are satisfied. Further, suppose that $G_i$ is a non-trivial $uv$-subgraph of $G$, for some $i\in \{2,\dots,k-1\}$, and that $G_{i-1}$ and $G_{i+1}$ are both trivial $uv$-subgraphs of $G$.
	
	If $G_i$ admits a $(\chi_i,\mu^*_i,\nu^*_i)$-representation, where $\mu^*_i,\nu^*_i\in \{90\degree,180\degree\}$ and $\mu^*_i+\nu^*_i\leq \mu_i+\nu_i$, then there exist values $\rho^*_{i-1}$ and $\rho^*_i$ such that the in-out assignment $\mathcal A$ and the sequence $\mathcal S^*$ obtained from $\mathcal S$ by replacing the values $\rho_{i-1}$, $\rho_i$, $\mu_i$, and $\nu_i$ with $\rho^*_{i-1}$, $\rho^*_i$, $\mu^*_i$, and $\nu^*_i$, respectively, satisfy Properties~$\mathcal V1$--$\mathcal V5$ of Lemma~\ref{le:structural-variable-embedding}. 
\end{lemma}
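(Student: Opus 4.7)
The plan is to follow the same two-case structure as in the proof of Lemma~\ref{le:same-size-is-the-same}: first reduce to the case where one of $\mu^*_i,\nu^*_i$ exceeds its counterpart and the other falls short, and then perform a direct swap of $\rho_{i-1}$ and $\rho_i$.

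The key preliminary observation I would invoke is that neither $u_{i-1}$ nor $u_i$ can belong to $\chi$. Indeed, as noted in the paragraph preceding Lemma~\ref{le:structural-variable-embedding}, if $u_j\in\chi$ for some $j\in\{1,\ldots,k-1\}$, then both $G_j$ and $G_{j+1}$ must be trivial; applied with $j=i-1$ or $j=i$ (both of which lie in $\{1,\ldots,k-1\}$ since $i\in\{2,\ldots,k-1\}$), this would force $G_i$ to be trivial, contradicting the hypothesis. Consequently, Property~$\mathcal V1$ imposes on $\rho^*_{i-1}$ and $\rho^*_i$ only the lower bound $\geq 90\degree$, without any parity requirement of belonging to $\{90\degree,270\degree\}$.

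I would then case-split. If $\mu^*_i\leq\mu_i$ and $\nu^*_i\leq\nu_i$, the conclusion follows at once from Lemma~\ref{le:smaller-is-better-variable}. Otherwise, since $\mu_i,\nu_i,\mu^*_i,\nu^*_i\in\{90\degree,180\degree\}$ and $\mu^*_i+\nu^*_i\leq\mu_i+\nu_i$, the only remaining possibility up to symmetry is $\mu_i=\nu^*_i=90\degree$ and $\nu_i=\mu^*_i=180\degree$; moreover $\nu_{i-1}=\mu_{i+1}=0\degree$ by Property~$\mathcal V2$ applied to $\mathcal S$, because $G_{i-1}$ and $G_{i+1}$ are trivial. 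I would then set $\rho^*_{i-1}:=\rho_i$ and $\rho^*_i:=\rho_{i-1}$.

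Verification of Properties~$\mathcal V1$--$\mathcal V5$ on $\mathcal S^*$ is then routine. Property~$\mathcal V1$ holds by the preliminary observation and because $\rho_{i-1},\rho_i\geq 90\degree$. Property~$\mathcal V2$ at index $i$ is the hypothesis on $G_i$; at the other indices it is inherited from $\mathcal S$. For $\mathcal V3$, the constraint at index $i-1$ of $\mathcal S^*$ reads $\nu_{i-1}+\rho^*_{i-1}+\mu^*_i=0\degree+\rho_i+180\degree$, which coincides with the constraint at index $i$ of $\mathcal S$, namely $180\degree+\rho_i+0\degree\leq 270\degree$; symmetrically, $\mathcal V3$ at index $i$ of $\mathcal S^*$ matches $\mathcal V3$ at index $i-1$ of $\mathcal S$. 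Property~$\mathcal V4$ is unaffected because $2\leq i\leq k-1$ leaves $\rho_0,\mu_1,\rho_k,\nu_k$ untouched. Finally, in $\mathcal V5$ the total $\sum_j \rho^*_j$ equals $\sum_j \rho_j$ because only two values were swapped, and $\sigma^*_i=\sigma_i$ regardless of how $\mathcal A$ places $G_i$, since $\mu^*_i+\nu^*_i=\mu_i+\nu_i=270\degree$. The substantive point — and the only place where the variable-embedding setting differs from the fixed-embedding one — is the preliminary observation that $u_{i-1},u_i\notin\chi$; without it the swap could violate the parity constraint in $\mathcal V1$.
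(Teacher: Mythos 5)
Your proposal is correct and follows essentially the same route as the paper: reduce to the case $\{\mu^*_i,\nu^*_i\}=\{\mu_i,\nu_i\}=\{90\degree,180\degree\}$ with the roles exchanged, then swap $\rho_{i-1}$ and $\rho_i$, exactly as the paper does by transporting the proof of Lemma~\ref{le:same-size-is-the-same} with the $\ell$-values set to $90\degree$. Your explicit observation that $u_{i-1},u_i\notin\chi$ (so the swap cannot violate the $\{90\degree,270\degree\}$ clause of Property~$\mathcal V1$) is a detail the paper leaves implicit, and it is correctly justified by the remark preceding Lemma~\ref{le:structural-variable-embedding}.
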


\begin{proof}
	If $\mu^*_i\leq \mu_i$ and $\nu^*_i\leq \nu_i$, then the statement follows by Lemma~\ref{le:smaller-is-better-variable}. We can hence assume that $\mu^*_i>\mu_i$ and $\nu^*_i<\nu_i$, or that $\mu^*_i<\mu_i$ and $\nu^*_i>\nu_i$. Assume the former, as the proof for the latter is symmetric. Recall that $\mu_i,\nu_i\in \{90\degree,180\degree\}$, since $\mathcal S$ satisfies Property~$\mathcal V2$, and that $\mu^*_i,\nu^*_i\in \{90\degree,180\degree\}$, by assumption. Hence, we have $\mu_i=90\degree$, $\nu_i=180\degree$, $\mu^*_i=180\degree$, and $\nu^*_i=90\degree$. Further, since $G_{i-1}$ and $G_{i+1}$ are both trivial, we have $\nu_{i-1}=\mu_{i+1}=0\degree$. 
	We let $\rho^*_{i-1}=\rho_{i}$ and $\rho^*_{i}=\rho_{i-1}$. The proof that the resulting sequence $\mathcal S^*$, together with the in-out assignment $\mathcal A$, satisfies Properties~$\mathcal V1$--$\mathcal V5$ of Lemma~\ref{le:structural-variable-embedding} follows almost verbatim the proof of Lemma~\ref{le:same-size-is-the-same}, with Properties Pr1--Pr5 replaced by Properties~$\mathcal V1$--$\mathcal V5$, respectively, and with the values $\ell(u_{i-1},f^{uv}_{\mathcal E})$, $\ell(u_{i},f^{uv}_{\mathcal E})$, $\ell(u_{i},f^*_{\mathcal E})$, and $\ell(u_{i-1},f^*_{\mathcal E})$ replaced by $90\degree$.
\end{proof}

We now provide an algorithm that establishes in $O(k)$ time whether a $(\chi,\mu,\nu)$-representation of $G$ exists, and in case it does, it determines an in-out assignment $\mathcal A$ and values $\rho_0, \rho_1, \mu_1, \nu_1, \rho_2, \mu_2, \nu_2, \dots, \rho_k, \mu_k, \nu_k$ satisfying Properties~$\mathcal V1$--$\mathcal V5$ of Lemma~\ref{le:structural-variable-embedding}. 

Similarly to the algorithm in the proof of Lemma~\ref{le:find-values}, we assume that, for every $uv$-subgraph $G_i$ of $G$, it is known whether $G_i$ is trivial or not and, in case $G_i$ is non-trivial, whether it admits a $(\chi_i,\mu_i,\nu_i)$-representation or not, for every pair $(\mu_i,\nu_i)$ with $\mu_i,\nu_i \in \{90\degree,180\degree\}$. Differently from the algorithm in the proof of Lemma~\ref{le:find-values}, no plane embedding for $G$ is fixed, hence it is not specified whether each non-trivial $uv$-subgraph of $G$ lies inside or outside $\mathcal C^*_{uv}$; indeed, determining an in-out assignment $\mathcal A$, i.e., an assignment of each $uv$-subgraph of $G$ either to the inside or to the outside of $\mathcal C^*_{uv}$, is one of the main challenges we face. 

For outerplanar graphs with a fixed plane embedding, our $O(k)$-time algorithm that decides whether an $(\mathcal E, \ell,\mu,\nu)$-representation exists, given the values $\mu_i, \nu_i \in \{90\degree,180\degree\}$ for which each non-trivial subgraph $G_i$ admits an $(\mathcal E_i, \ell_i,\mu_i,\nu_i)$-representation, was hidden in the proof of Lemma~\ref{le:find-values}. Here, for outerplanar graphs with variable embedding, we need to expose such an algorithm and argue that some parts of it actually run in $O(1)$ time. In fact, this will be needed in Section~\ref{sse:var-2con}, in order to remove the assumption that the plane embedding we seek has a prescribed edge incident to the outer face.

At a high-level, the algorithm starts by fixing the values $\mu_2,\nu_2,\mu_3,\nu_3,\dots,\mu_{k-1},\nu_{k-1}$ without loss of generality. Then it considers all the possible tuples $(\mu_1,\nu_1,\mu_k,\nu_k,\rho_0,\rho_k)$; after getting rid of some of them, for each of the remaining tuples it determines whether an in-out assignment $\mathcal A$ and values $\rho_1,\rho_2,\dots,\rho_{k-1}$ exist so that Properties~$\mathcal V1$--$\mathcal V5$ are satisfied. We start with the following.

\begin{lemma} \label{le:fix-values}
	Let $i\in \{2,\dots,k-1\}$ and suppose that the following information is known:
	\begin{itemize}
		\item whether each of $G_{i-1}$, $G_i$, and $G_{i+1}$ is trivial or not; and
		\item for $j=i-1,i,i+1$, in case $G_{j}$ is not trivial, whether it admits a $(\chi_{j},\mu_{j},\nu_{j})$-representation or not, for every pair $(\mu_{j},\nu_{j})$ with $\mu_{j},\nu_{j} \in \{90\degree,180\degree\}$.
	\end{itemize} 
	Then in $O(1)$ time we can either correctly conclude that $G$ has no $(\chi,\mu,\nu)$-representation, or we can find two values $\mu^*_i$ and $\nu^*_i$ so that the following is true. Suppose that an in-out assignment $\mathcal A$ and a sequence $\mathcal S$ of values $\rho_0,\rho_1,\mu_1,\nu_1,\dots,\rho_k,\mu_k,\nu_k$ exist such that Properties~$\mathcal V1$--$\mathcal V5$ of Lemma~\ref{le:structural-variable-embedding} are satisfied. Then there exist values $\rho^*_{i-1}$ and $\rho^*_i$ such that the in-out assignment $\mathcal A$ and the sequence $\mathcal S^*$ obtained from $\mathcal S$ by replacing the values $\rho_{i-1}$, $\rho_i$, $\mu_i$, and $\nu_i$ with $\rho^*_{i-1}$, $\rho^*_i$, $\mu^*_i$, and $\nu^*_i$, respectively, also satisfy Properties~$\mathcal V1$--$\mathcal V5$ of Lemma~\ref{le:structural-variable-embedding}. 
\end{lemma}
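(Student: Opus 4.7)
The plan is a case analysis on the triviality of $G_{i-1}$, $G_i$, and $G_{i+1}$, combined with the preservation lemmas already proved.

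First, I would dispense with the easy case: if $G_i$ is trivial, then by Property~$\mathcal V2$ we must have $\mu_i=\nu_i=0\degree$ in every valid sequence~$\mathcal S$, so I set $\mu^*_i=\nu^*_i=0\degree$, $\rho^*_{i-1}=\rho_{i-1}$, $\rho^*_i=\rho_i$, and the claim follows trivially. So assume $G_i$ is non-trivial; then Property~$\mathcal V2$ forces $\mu_i,\nu_i\in\{90\degree,180\degree\}$ and the existence of a $(\chi_i,\mu_i,\nu_i)$-representation of $G_i$.

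Next I would split into three subcases according to the neighbours. \textbf{Subcase A:} both $G_{i-1}$ and $G_{i+1}$ are trivial. Then I invoke Lemma~\ref{le:same-size-is-the-same-variable}: among the (at most four) pairs $(\mu^*_i,\nu^*_i)\in\{90\degree,180\degree\}^2$ for which $G_i$ admits a $(\chi_i,\mu^*_i,\nu^*_i)$-representation I pick one that minimises $\mu^*_i+\nu^*_i$. If none exists, then in particular the hypothetical $(\mu_i,\nu_i)$ of $\mathcal S$ would not exist, so I safely report that $G$ has no $(\chi,\mu,\nu)$-representation. \textbf{Subcase B:} $G_{i-1}$ is non-trivial (the case where $G_{i+1}$ is non-trivial is symmetric). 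By Property~$\mathcal V2$ applied to $G_{i-1}$ we have $\nu_{i-1}\ge 90\degree$, by Property~$\mathcal V1$ we have $\rho_{i-1}\ge 90\degree$, and by Property~$\mathcal V3$ we have $\nu_{i-1}+\rho_{i-1}+\mu_i\le 270\degree$; combined with $\mu_i\in\{90\degree,180\degree\}$, this forces $\mu_i=90\degree$ in \emph{every} valid sequence. I therefore set $\mu^*_i=90\degree$ and use Lemma~\ref{le:smaller-is-better-variable} to pick $\nu^*_i\in\{90\degree,180\degree\}$ as small as possible subject to $G_i$ admitting a $(\chi_i,90\degree,\nu^*_i)$-representation; if both neighbours are non-trivial, the symmetric argument also forces $\nu_i=90\degree$ and I check only the pair $(90\degree,90\degree)$. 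As before, if no admissible pair exists I can safely report infeasibility.

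In every branch the chosen pair $(\mu^*_i,\nu^*_i)$ satisfies $\mu^*_i\le\mu_i$ and $\nu^*_i\le\nu_i$, or $G_{i-1}$ and $G_{i+1}$ are both trivial and $\mu^*_i+\nu^*_i\le\mu_i+\nu_i$. The existence of $\rho^*_{i-1}$ and $\rho^*_i$ making $\mathcal S^*$ (with the same in-out assignment $\mathcal A$) satisfy Properties~$\mathcal V1$--$\mathcal V5$ is then exactly the conclusion of Lemma~\ref{le:smaller-is-better-variable} in the first case and of Lemma~\ref{le:same-size-is-the-same-variable} in the second. The running time is $O(1)$: we inspect the triviality bits of $G_{i-1}$, $G_i$, $G_{i+1}$ and, by the input assumption, look up in constant time which of the at most four candidate pairs $(\mu_i,\nu_i)\in\{90\degree,180\degree\}^2$ are realised by a $(\chi_i,\mu_i,\nu_i)$-representation of $G_i$.

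The only delicate point is to verify that the forcing argument in Subcase~B is tight, i.e.\ that when $G_{i-1}$ (resp.\ $G_{i+1}$) is non-trivial the value $\mu_i$ (resp.\ $\nu_i$) of \emph{any} valid $\mathcal S$ is already $90\degree$; this is what makes it legal to commit to $\mu^*_i=90\degree$ without consulting $G_1$ or $G_k$. Everything else is a straightforward bookkeeping combination of the two preservation lemmas.
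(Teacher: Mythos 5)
Your proposal is correct and follows essentially the same route as the paper: the same case split on the triviality of $G_{i-1}$, $G_i$, $G_{i+1}$, the same forcing of $\mu_i=90\degree$ (resp.\ $\nu_i=90\degree$) from Properties~$\mathcal V1$--$\mathcal V3$ when the corresponding neighbour is non-trivial, and the same appeals to Lemmata~\ref{le:smaller-is-better-variable} and~\ref{le:same-size-is-the-same-variable} to justify the greedy choice. The only cosmetic difference is that the paper additionally breaks ties by minimising $\mu^*_i$ when both neighbours are trivial, which is not needed for the correctness of this lemma.
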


\begin{proof}
	First, if $G_i$ is trivial, then, by Property~$\mathcal V2$, we have $\mu_i=\nu_i=0\degree$. Hence, it suffices to set $\mu^*_i=\nu^*_i=0\degree$, $\rho^*_{i-1}=\rho_{i-1}$, and $\rho^*_{i}=\rho_{i}$. 
	
	Second, suppose that $G_i$ is non-trivial and $G_{i-1}$ and $G_{i+1}$ are both trivial. If $G_i$ admits no $(\chi_i,\mu_i,\nu_i)$-representation with $\mu_i,\nu_i\in \{90\degree,180\degree\}$, then by Property~$\mathcal V2$ we can conclude that $G$ admits no $(\chi,\mu,\nu)$-representation. Otherwise, by Lemma~\ref{le:same-size-is-the-same-variable}, it suffices to set $\mu^*_i$ and $\nu^*_i$ so that $G_i$ admits a $(\chi_i,\mu^*_i,\nu^*_i)$-representation, so that $\mu^*_i,\nu^*_i\in \{90\degree,180\degree\}$, so that $\mu^*_i+\nu^*_i$ is minimum, and so that $\mu^*_i$ is minimum subject to the previous constraint. 
	
	Third, suppose that $G_{i-1}$, $G_{i}$, and $G_{i+1}$ are all non-trivial. By Properties~$\mathcal V1$ and~$\mathcal V3$, we have $\nu_{i-1}+\mu_{i}\leq 180\degree$ and $\nu_{i}+\mu_{i+1}\leq 180\degree$. Further, by Property~$\mathcal V2$, we have $\nu_{i-1}\geq 90\degree$, $\mu_i\geq 90\degree$, $\nu_i\geq 90\degree$, and $\mu_{i+1}\geq 90\degree$. Hence, if $G_i$ admits no $(\chi_i,90\degree,90\degree)$-representation, then by Property~$\mathcal V2$ we can conclude that $G$ admits no $(\chi,\mu,\nu)$-representation. Otherwise, it suffices to set $\mu^*_i=\nu^*_i=90\degree$.
	
	Finally,  suppose that $G_{i-1}$ and $G_{i}$ are non-trivial, while $G_{i+1}$ is trivial; the case in which $G_{i}$ and $G_{i+1}$ are non-trivial, while $G_{i-1}$ is trivial is symmetric. By Properties~$\mathcal V1$ and~$\mathcal V3$, we have $\nu_{i-1}+\mu_{i}\leq 180\degree$. Further, by Property~$\mathcal V2$, we have $\nu_{i-1}\geq 90\degree$ and $\mu_i\geq 90\degree$. Hence, if $G_i$ admits no $(\chi_i,90\degree,\nu_i)$-representation with $\nu_i\in \{90\degree,180\degree\}$, then by Property~$\mathcal V2$ we can conclude that $G$ admits no $(\chi,\mu,\nu)$-representation. Otherwise, by Lemma~\ref{le:smaller-is-better-variable}, it suffices to set $\mu^*_i=90\degree$ and $\nu^*_i$ so that $G_i$ admits a $(\chi_i,90\degree,\nu^*_i)$-representation, so that $\nu^*_i\in \{90\degree,180\degree\}$, and so that $\nu^*_i$ is minimum. 
\end{proof}

By Lemma~\ref{le:fix-values}, for each $i=2,\dots,k-1$, in $O(1)$ time we can either conclude that $G$ admits no $(\chi,\mu,\nu)$-representation or, without loss of generality, we can fix the values $\mu_i$ and $\nu_i$, in the sequence of values from Lemma~\ref{le:structural-variable-embedding} whose existence we are trying to establish, to $\mu^*_i$ and $\nu^*_i$, respectively. For $i=2,\dots,k-1$, we say that $(\mu^*_i,\nu^*_i)$ is the \emph{optimal pair} for $G_i$ and we say that $\mu^*_2,\nu^*_2,\dots,\mu^*_{k-1},\nu^*_{k-1}$ is the \emph{optimal sequence} for $G$. 

A fact that we are going to use later is that the values $\mu^*_i$ and $\nu^*_i$ of the optimal pair for $G_i$ do not depend on $\mu$ and $\nu$; that is, if we were trying to establish the existence of a $(\chi,\mu',\nu')$-representation of $G$, where $\mu',\nu'\in \{90\degree,180\degree,270\degree\}$, and, possibly, $\mu'\neq \mu$ or $\nu'\neq \nu$, then $(\mu^*_i,\nu^*_i)$ would still be the optimal pair for $G_i$. Further, if an edge $u_{x}u_{x+1}$ of $\mathcal C^*_{uv}$ incident to $f^*_{\mathcal O}$ was prescribed to be incident to the outer face of the sought rectilinear representation of $G$ in place of $uv$,  then $(\mu^*_i,\nu^*_i)$ would still be the optimal pair for $G_i$, as long as $\{x,x+1\}\cap\{i-1,i\}=\emptyset$. These two observations easily descend from the fact that $\mu^*_i$ and $\nu^*_i$ are fixed solely based on whether, for $j=i-1,i,i+1$, the graph $G_j$ is trivial or not and, in case it is non-trivial, based on the values $\mu_j,\nu_j\in \{90\degree,180\degree,270\degree\}$ for which it admits a $(\chi_j,\mu_j,\nu_j)$-representation.

%

If we did not conclude that $G$ admits no $(\chi,\mu,\nu)$-representation, then by means of Lemma~\ref{le:fix-values} we determined in $O(k)$ time the optimal sequence $\mu^*_2,\nu^*_2,\dots,\mu^*_{k-1},\nu^*_{k-1}$ for $G$. We now also determine possible values for $\mu_1$, $\nu_1$, $\mu_k$, $\nu_k$, $\rho_0$, and $\rho_k$. This is done in the following procedure. We consider each of the $3^6\in O(1)$ tuples $(\mu_1,\nu_1,\mu_k,\nu_k,\rho_0,\rho_k)$ such that $\mu_1,\nu_1,\mu_k, \nu_k\in \{0\degree,90\degree,180\degree\}$ and $\rho_0,\rho_k \in \{90\degree,180\degree,270\degree\}$. We discard a tuple $(\mu_1,\nu_1,\mu_k,\nu_k,\rho_0,\rho_k)$ if:

\begin{enumerate}[(C1)]
	\item $G_1$ is trivial and $\max\{\mu_1,\nu_1\} > 0\degree$, or $G_1$ is non-trivial and $\min\{\mu_1,\nu_1\} = 0\degree$; 
	\item $G_k$ is trivial and $\max\{\mu_k,\nu_k\} > 0\degree$, or $G_k$ is non-trivial and $\min\{\mu_k,\nu_k\} = 0\degree$; 
	\item $G_1$ is non-trivial and does not admit a $(\chi_1,\mu_1,\nu_1)$-representation, or $G_k$ is non-trivial and does not admit a $(\chi_k,\mu_k,\nu_k)$-representation; 
	\item $\mu_1 + \rho_0 \neq \mu$, or $\nu_k + \rho_k \neq \nu$; 
	\item $u_0\in \chi$ and $\rho_0 \notin \{90\degree,270\degree\}$, or $u_k\in \chi$ and $\rho_k \notin \{90\degree,270\degree\}$; or
	\item $\nu_1+\mu^*_2>180\degree$, or $\nu^*_{k-1}+\mu_k> 180\degree$. 
\end{enumerate}

We prove that this procedure does not erroneously get rid of any tuple which might appear in a solution.

\begin{lemma} \label{le:initial-final-values}
	Suppose that an in-out assignment $\mathcal A$ and a sequence of values $\rho_0,\rho_1,\mu_1,\nu_1,\rho_2,\mu^*_2,\nu^*_2,\dots,\rho_{k-1}$, $\mu^*_{k-1},\nu^*_{k-1},\rho_k,\mu_k,\nu_k$ exist such that Properties~$\mathcal V1$--$\mathcal V5$ of Lemma~\ref{le:structural-variable-embedding} are satisfied, where $\mu^*_2,\nu^*_2,\dots,\mu^*_{k-1},\nu^*_{k-1}$ is the optimal sequence for $G$. Then the tuple $(\mu_1,\nu_1,\mu_k,\nu_k,\rho_0,\rho_k)$ has not been discarded. 
\end{lemma}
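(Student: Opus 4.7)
My plan is to prove Lemma~\ref{le:initial-final-values} by a direct case analysis, showing that each of the six discard conditions (C1)--(C6) contradicts one of Properties~$\mathcal V1$--$\mathcal V5$ from Lemma~\ref{le:structural-variable-embedding}. So I would assume, for contradiction, that the tuple $(\mu_1,\nu_1,\mu_k,\nu_k,\rho_0,\rho_k)$ coming from the purported solution satisfies at least one of (C1)--(C6), and derive a contradiction.

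For (C1) and (C2), the argument is immediate from Property~$\mathcal V2$: it forces $\mu_1 = \nu_1 = 0\degree$ when $G_1$ is trivial and $\mu_1,\nu_1 \in \{90\degree,180\degree\}$ when $G_1$ is non-trivial, and symmetrically for $G_k$. For (C3), the second clause of Property~$\mathcal V2$ says that whenever $G_1$ (resp.\ $G_k$) is non-trivial, it must admit a $(\chi_1,\mu_1,\nu_1)$-representation (resp.\ $(\chi_k,\mu_k,\nu_k)$-representation). For (C4), Property~$\mathcal V4$ asserts $\rho_0+\mu_1=\mu$ and $\rho_k+\nu_k=\nu$. For (C5), Property~$\mathcal V1$ says that if $u_0\in\chi$ then $\rho_0\in\{90\degree,270\degree\}$, and symmetrically for $u_k$.

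The slightly less immediate case is (C6). Here I would use Property~$\mathcal V3$ applied at $i=1$: we have $\nu_1+\rho_1+\mu^*_2 \le 270\degree$, and combining this with Property~$\mathcal V1$ which gives $\rho_1\ge 90\degree$, we obtain $\nu_1+\mu^*_2\le 180\degree$, ruling out the first disjunct of (C6). The inequality $\nu^*_{k-1}+\mu_k\le 180\degree$ ruling out the second disjunct follows symmetrically from Property~$\mathcal V3$ applied at $i=k-1$ and Property~$\mathcal V1$ for $\rho_{k-1}$.

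I do not anticipate any real obstacle: the lemma is in effect a bookkeeping check that the six pruning conditions were chosen conservatively enough to preserve all solutions, and each one corresponds by design to exactly one property of Lemma~\ref{le:structural-variable-embedding}. The only subtlety worth being explicit about is that in case~(C6) one must invoke both $\mathcal V1$ (to bound $\rho_1$ and $\rho_{k-1}$ from below by $90\degree$) and $\mathcal V3$ (to bound the sum by $270\degree$), whereas all the other cases reduce to a single property.
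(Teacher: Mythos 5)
Your proposal is correct and follows essentially the same route as the paper's proof: a case analysis showing that each discard condition (C1)--(C6) would contradict one of Properties~$\mathcal V1$--$\mathcal V5$ (with (C1)--(C3) mapped to $\mathcal V2$, (C4) to $\mathcal V4$, (C5) to $\mathcal V1$, and (C6) to $\mathcal V1$ together with $\mathcal V3$). Your explicit derivation of $\nu_1+\mu^*_2\le 180\degree$ from $\rho_1\ge 90\degree$ and $\nu_1+\rho_1+\mu^*_2\le 270\degree$ in case (C6) is exactly the intended argument, spelled out slightly more fully than in the paper.
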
 	 

\begin{proof}
	Suppose, for a contradiction, that the tuple $(\mu_1,\nu_1,\mu_k,\nu_k,\rho_0,\rho_k)$ has been discarded. If this happened because of Condition~(C1), (C2), or (C3) we have that $\mathcal A$ and $\mathcal S$ do not satisfy Property~$\mathcal V2$, a contradiction. If the tuple has been discarded because of Condition~(C4), we have that $\mathcal A$ and $\mathcal S$ do not satisfy Property~$\mathcal V4$, a contradiction. If the tuple has been discarded because of Condition~(C5), we have that $\mathcal A$ and $\mathcal S$ do not satisfy Property~$\mathcal V1$, a contradiction. Finally, if the tuple has been discarded because of Condition~(C6), we have that $\mathcal A$ and $\mathcal S$ do not satisfy Property~$\mathcal V1$ or~$\mathcal V3$, a contradiction.  
\end{proof}

If we discarded all the tuples $(\mu_1,\nu_1,\mu_k,\nu_k,\rho_0,\rho_k)$, we conclude that $G$ admits no $(\chi,\mu,\nu)$-representation. Otherwise, there are a constant number of sequences $\mu_1,\nu_1,\mu^*_2,\nu^*_2,\dots,\mu^*_{k-1},\nu^*_{k-1},\mu_{k},\nu_{k},\rho_0,\rho_k$ such that $\mu^*_2,\nu^*_2,\dots,\mu^*_{k-1},\nu^*_{k-1}$ is the optimal sequence for $G$ and the tuple $(\mu_1,\nu_1,\mu_k,\nu_k,\rho_0,\rho_k)$ was not discarded. We say that each of these~sequences is {\em promising} for $(G,\mu,\nu)$. The notation includes $\mu$ and $\nu$ to highlight the fact that whether a sequence is promising depends from $\mu$ and $\nu$.

\begin{lemma} \label{le:promising-sequences}
	The promising sequences for $(G,\mu,\nu)$ can be constructed in $O(1)$ time from the optimal sequence for~$G$.
\end{lemma}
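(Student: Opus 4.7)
The plan is to show that the promising sequences are completely determined in $O(1)$ time by checking, for each of the $3^6 = O(1)$ possible tuples $(\mu_1,\nu_1,\mu_k,\nu_k,\rho_0,\rho_k)$ with $\mu_1,\nu_1,\mu_k,\nu_k \in \{0\degree,90\degree,180\degree\}$ and $\rho_0,\rho_k \in \{90\degree,180\degree,270\degree\}$, whether it satisfies all of Conditions (C1)--(C6). Since the definition of a promising sequence fixes the middle entries to the optimal sequence $\mu^*_2,\nu^*_2,\dots,\mu^*_{k-1},\nu^*_{k-1}$ (which is assumed to be available), the promising sequences are in one-to-one correspondence with the non-discarded tuples, so it suffices to argue that each tuple can be classified as discarded or non-discarded in constant time.

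First I would observe which pieces of information are needed to evaluate Conditions (C1)--(C6) on a given tuple: Conditions (C1) and (C2) require only whether $G_1$ and $G_k$ are trivial; Condition (C3) requires, if $G_1$ (resp.\ $G_k$) is non-trivial, the $O(1)$ bits that specify for which pairs $(\mu_1,\nu_1)$ (resp.\ $(\mu_k,\nu_k)$) with values in $\{90\degree,180\degree\}$ the graph admits a $(\chi_1,\mu_1,\nu_1)$-representation (resp.\ a $(\chi_k,\mu_k,\nu_k)$-representation); Condition (C4) requires only the prescribed values $\mu$ and $\nu$; Condition (C5) requires only whether $u_0\in\chi$ and whether $u_k\in\chi$; and Condition (C6) requires only the two values $\mu^*_2$ and $\nu^*_{k-1}$, which are the first and the last entries of the given optimal sequence and are therefore retrievable in $O(1)$ time. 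All of this data amounts to $O(1)$ bits that we can read in $O(1)$ time.

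Next I would enumerate the $O(1)$ tuples, evaluate the six conditions on each tuple in $O(1)$ time, and output the non-discarded ones; each such tuple, paired with the pointer to the optimal sequence, represents one promising sequence for $(G,\mu,\nu)$. By construction, this enumeration is exactly the set of promising sequences, and the total running time is $O(1)$.

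The main subtlety (rather than a genuine obstacle) is merely checking that no condition actually needs to inspect a non-constant portion of the optimal sequence: Condition (C6) is the only condition referring to the middle block, and it touches only its two extreme entries $\mu^*_2$ and $\nu^*_{k-1}$; hence constant-time access suffices and the claim follows.
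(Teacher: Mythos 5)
Your proposal is correct and follows essentially the same approach as the paper: enumerate the $3^6=O(1)$ tuples $(\mu_1,\nu_1,\mu_k,\nu_k,\rho_0,\rho_k)$, check Conditions (C1)--(C6) on each in $O(1)$ time, and keep the non-discarded ones. Your added observation that each condition only reads $O(1)$ data (in particular that (C6) touches only $\mu^*_2$ and $\nu^*_{k-1}$) is a useful elaboration of what the paper leaves implicit, but it is not a different argument.
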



\begin{proof}
	There are $O(1)$ tuples $(\mu_1,\nu_1,\mu_k,\nu_k,\rho_0,\rho_k)$ such that $\mu_1,\nu_1,\mu_k, \nu_k\in \{0\degree,90\degree,180\degree\}$ and $\rho_0,\rho_k \in \{90\degree,180\degree,270\degree\}$. For each of them, we can check in $O(1)$ time whether any of Conditions~(C1)--(C6) is satisfied and, in the positive case, discard the tuple. For each tuple that was not discarded, we obtain a promising sequence for $(G,\mu,\nu)$.
\end{proof}

We say that a promising sequence for $(G,\mu,\nu)$ is \emph{extensible} if there exist an in-out assignment $\mathcal A$ and values $\rho_1,\rho_2,\dots,\rho_{k-1}$ that, together with the promising sequence for $G$, satisfy Properties~$\mathcal V1$--$\mathcal V5$ of Lemma~\ref{le:structural-variable-embedding}. The discussion so far leads to the following.

\begin{lemma} \label{le:extensible-solution}
	There is a $(\chi,\mu,\nu)$-representation of $G$ if and only if there is a promising sequence for $(G,\mu,\nu)$ which is extensible. 
\end{lemma}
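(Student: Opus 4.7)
The plan is to prove both directions of the equivalence by leaning directly on the previously established structural and replacement lemmas, with the main content being a bookkeeping argument for the ``only if'' direction.

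For the sufficiency, suppose a promising sequence $\mu_1,\nu_1,\mu^*_2,\nu^*_2,\dots,\mu^*_{k-1},\nu^*_{k-1},\mu_{k},\nu_{k},\rho_0,\rho_k$ for $(G,\mu,\nu)$ is extensible. Then by definition there exist an in-out assignment $\mathcal A$ and values $\rho_1,\dots,\rho_{k-1}$ such that the full sequence $\rho_0,\rho_1,\mu_1,\nu_1,\rho_2,\mu^*_2,\nu^*_2,\dots,\rho_{k-1},\mu^*_{k-1},\nu^*_{k-1},\rho_k,\mu_k,\nu_k$ together with $\mathcal A$ satisfies Properties~$\mathcal V1$--$\mathcal V5$ of Lemma~\ref{le:structural-variable-embedding}. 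That lemma then yields a $(\chi,\mu,\nu)$-representation of $G$.

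For the necessity, suppose $G$ admits a $(\chi,\mu,\nu)$-representation. By Lemma~\ref{le:structural-variable-embedding}, there exist an in-out assignment $\mathcal A$ and a sequence $\mathcal S = \rho_0,\rho_1,\mu_1,\nu_1,\rho_2,\mu_2,\nu_2,\dots,\rho_k,\mu_k,\nu_k$ satisfying Properties~$\mathcal V1$--$\mathcal V5$. I would then apply Lemma~\ref{le:fix-values} iteratively for $i=2,3,\dots,k-1$. At step $i$, the current sequence satisfies $\mathcal V1$--$\mathcal V5$ (by the previous step or the base case), so Lemma~\ref{le:fix-values} either certifies that $G$ has no $(\chi,\mu,\nu)$-representation (contradicting our hypothesis) or provides values $\mu_i^*,\nu_i^*,\rho^*_{i-1},\rho^*_i$ such that replacing $\mu_i,\nu_i,\rho_{i-1},\rho_i$ with them in the current sequence preserves Properties~$\mathcal V1$--$\mathcal V5$ with the same $\mathcal A$. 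After all $k-2$ replacements, we obtain an in-out assignment $\mathcal A$ and a sequence $\mathcal S' = \rho'_0,\rho'_1,\mu_1',\nu_1',\rho'_2,\mu^*_2,\nu^*_2,\dots,\rho'_{k-1},\mu^*_{k-1},\nu^*_{k-1},\rho'_k,\mu_k',\nu_k'$ satisfying Properties~$\mathcal V1$--$\mathcal V5$, where $\mu^*_2,\nu^*_2,\dots,\mu^*_{k-1},\nu^*_{k-1}$ is the optimal sequence for $G$ and where the values $\mu_1',\nu_1',\mu_k',\nu_k',\rho'_0,\rho'_k$ coincide with the original $\mu_1,\nu_1,\mu_k,\nu_k,\rho_0,\rho_k$ (since those positions are never touched by the applications of Lemma~\ref{le:fix-values} for $i\in\{2,\dots,k-1\}$).

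It remains to verify that $\mu_1',\nu_1',\mu^*_2,\nu^*_2,\dots,\mu^*_{k-1},\nu^*_{k-1},\mu_{k}',\nu_{k}',\rho'_0,\rho'_k$ is a promising sequence for $(G,\mu,\nu)$; since the values $\rho'_1,\dots,\rho'_{k-1}$ and $\mathcal A$ already witness Properties~$\mathcal V1$--$\mathcal V5$, this sequence will then be extensible by definition. This verification is exactly the content of Lemma~\ref{le:initial-final-values}: the tuple $(\mu_1',\nu_1',\mu_k',\nu_k',\rho'_0,\rho'_k)$ cannot have been discarded, because if any of Conditions~(C1)--(C6) held, then Properties~$\mathcal V1$--$\mathcal V5$ would fail on $\mathcal S'$, a contradiction. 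The main conceptual point (rather than an obstacle) is simply observing that the iterative application of Lemma~\ref{le:fix-values} is legitimate because the hypotheses of that lemma are preserved at each step, and that the values at the boundary positions $i=1$ and $i=k$ are untouched throughout.
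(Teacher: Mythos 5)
Your proposal is correct and follows essentially the same route as the paper's proof: the sufficiency direction is immediate from Lemma~\ref{le:structural-variable-embedding}, and the necessity direction proceeds by repeated application of Lemma~\ref{le:fix-values} to replace the middle values with the optimal sequence, followed by Lemma~\ref{le:initial-final-values} to certify that the resulting boundary tuple was not discarded. Your added observations (that the boundary positions $\mu_1,\nu_1,\mu_k,\nu_k,\rho_0,\rho_k$ are untouched by the replacements, and that the hypotheses of Lemma~\ref{le:fix-values} are preserved at each step) are correct and merely make explicit what the paper leaves implicit.
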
 

\begin{proof}
	If a promising sequence for $(G,\mu,\nu)$ is extensible then, by Lemma~\ref{le:structural-variable-embedding}, we have that $G$ admits a $(\chi,\mu,\nu)$-representation.
	Conversely, if $G$ admits a $(\chi,\mu,\nu)$-representation, then there exist an in-out assignment $\mathcal A$ and a sequence $\mathcal S$ of values such that Properties~$\mathcal V1$--$\mathcal V5$ of Lemma~\ref{le:structural-variable-embedding} are satisfied. By repeated applications of Lemma~\ref{le:fix-values}, we have that there exists a sequence $\rho_0,\rho_1,\mu_1,\nu_1,\rho_2,\mu^*_2,\nu^*_2,\dots,\rho_{k-1},\mu^*_{k-1},\nu^*_{k-1},\rho_k,\mu_k,\nu_k$ of values that, together with $\mathcal A$, satisfies Properties~$\mathcal V1$--$\mathcal V5$ of Lemma~\ref{le:structural-variable-embedding}. By Lemma~\ref{le:initial-final-values}, we have that the tuple $(\mu_1,\nu_1,\mu_k,\nu_k,\rho_0,\rho_k)$ has not been discarded, hence $\mu_1,\nu_1,\mu^*_2,\nu^*_2,\dots,\mu^*_{k-1},\nu^*_{k-1},\mu_{k},\nu_{k},\rho_0,\rho_k$ is an extensible promising sequence for $(G,\mu,\nu)$.
\end{proof}

We now treat each promising sequence for $(G,\mu,\nu)$ independently and show an algorithm that determines whether a promising sequence for $(G,\mu,\nu)$ is extensible. Our algorithm will be accompanied by a characterization of the extensible promising sequences for $(G,\mu,\nu)$. This will be used in Section~\ref{sse:var-2con}, where we will remove the assumption that the edge $uv$ is incident to the outer face of the sought rectilinear representation.  

Consider a promising sequence for $(G,\mu,\nu)$. For ease of notation, we drop the star from the values of the optimal sequence for $G$ and denote the promising sequence for $(G,\mu,\nu)$ as $\mu_1,\nu_1,\mu_2,\nu_2,\dots,\mu_{k},\nu_{k},\rho_0,\rho_k$. Note that, for any in-out assignment $\mathcal A$ and any choice of the values $\rho_1,\rho_2,\dots,\rho_{k-1}$, Properties~$\mathcal V2$ and $\mathcal V4$ of Lemma~\ref{le:structural-variable-embedding} are satisfied, given that $\mu_1,\nu_1,\mu_2,\nu_2,\dots,\mu_{k},\nu_{k},\rho_0,\rho_k$ is a promising sequence for $(G,\mu,\nu)$.

Our goal is now to determine whether an in-out assignment $\mathcal A$ and a choice of the values $\rho_1,\rho_2,\dots,\rho_{k-1}$ exist so that Property~$\mathcal V5$ is satisfied, i.e., so that $\sum_{i=0}^{k} \rho_i + \sum_{i=1}^k \sigma_i = (k-1) \cdot 180\degree$; recall that, for $i=1,\dots,k$, the value $\sigma_i$ is equal to $0\degree$ if $G_i$ is trivial or if $\mathcal A$ assigns $G_i$ to the outside of ${\mathcal C}^*_{uv}$, while it is equal to $\mu_i+\nu_i$ if $\mathcal A$ assigns $G_i$ to the inside of ${\mathcal C}^*_{uv}$. The previous equality turns into $\sum_{i=1}^{k-1} (\frac{\rho_i}{90\degree}-1) + \sum_{i=1}^k \frac{\sigma_i}{90\degree} = (k-1) -\frac{\rho_0+\rho_k}{90\degree}$. Let the \emph{target value} be defined as $t:= (k-1) -\frac{\rho_0+\rho_k}{90\degree}$ and note that $t$ is fixed, given that $\rho_0$ and $\rho_k$ have already been set. Hence, our goal is to determine whether an in-out assignment $\mathcal A$ and a choice of the values $\rho_1,\rho_2,\dots,\rho_{k-1}$ exist so that $\sum_{i=1}^{k-1} (\frac{\rho_i}{90\degree}-1) + \sum_{i=1}^k \frac{\sigma_i}{90\degree}$ is equal to the target value. The choice of $\rho_1,\rho_2,\dots,\rho_{k-1}$ is however not arbitrary, as it needs to comply with Properties~$\mathcal V1$ and~$\mathcal V3$. In the following we make this precise and we introduce some notation.

\begin{itemize}
	\item Each non-trivial $uv$-subgraph $G_i$ of $G$ is of one of three types: We say that $G_i$ is a \emph{$2$-component}, a \emph{$3$-component}, or a \emph{$4$-component} if $\mu_i+\nu_i=180\degree$, if $\mu_i+\nu_i=270\degree$, or if $\mu_i+\nu_i=360\degree$. Note that if an $x$-component $G_i$ is assigned to the inside of ${\mathcal C}^*_{uv}$, then $\frac{\sigma_i}{90\degree}=x$. Hence, assigning an $x$-component $G_i$ to the inside of ${\mathcal C}^*_{uv}$ contributes $x$ units towards the target value. Let $a$, $b$, and $c_1$ be the number of $4$-components, $3$-components, and $2$-components of $G$, respectively. 
	\item Consider any vertex $u_i \in \chi$ with $i\in \{1,\dots,k-1\}$. In order to satisfy Property~$\mathcal V1$, we need to set either $\rho_i=90\degree$ or $\rho_i=270\degree$. Hence, $(\frac{\rho_i}{90\degree}-1)$ contributes either $0$ or  $2$ units towards the target value, respectively. Let $c_2$ be the number of vertices $u_i \in \chi$ with $i\in \{1,\dots,k-1\}$. Further, let $c=c_1+c_2$.
	\item Consider any vertex $u_i \notin \chi$ with $i\in \{1,\dots,k-1\}$. In order to satisfy Properties~$\mathcal V1$ and~$\mathcal V3$, we need to choose for $\rho_i$ a value which is at least $90\degree$ and at most $270\degree-\nu_i-\mu_{i+1}$. The algorithm described in the proof of Lemma~\ref{le:fix-values} fixes $\nu_i=90\degree$ and $\mu_{i+1}=90\degree$ if both $G_i$ and $G_{i+1}$ are non-trivial, fixes $\nu_i=0\degree$ and $\mu_{i+1}\in\{90\degree,180\degree\}$ if $G_i$ is trivial and $G_{i+1}$ is non-trivial, fixes $\nu_i\in\{90\degree,180\degree\}$ and $\mu_{i+1}=0\degree$ if $G_i$ is non-trivial and $G_{i+1}$ is trivial, and fixes $\nu_i=0\degree$ and $\mu_{i+1}=0\degree$ if both $G_i$ and $G_{i+1}$ are trivial. In any case, we have $270\degree-\nu_i-\mu_{i+1}\geq 90\degree$. Let $d_i:=2-\frac{\nu_i+\mu_{i+1}}{90\degree}$ and note that $0\leq d_i\leq 2$. Hence, we can choose $\rho_i$ so that $(\frac{\rho_i}{90\degree}-1)$ contributes any integer value between $0$ and $d_i$ units towards the target value. Let $d:=\sum d_i$, where the sum is over all the vertices $u_i\notin \chi$ with $i\in \{1,\dots,k-1\}$.
\end{itemize}  

The following lemma allows us to shift our attention from the extensibility of $\mu_1,\nu_1,\mu_2,\nu_2,\dots,\mu_{k},\nu_{k},\rho_{0},\rho_{k}$ to a numerical problem on the values $a$, $b$, $c$, $d$, and $t$.

\begin{lemma}\label{le:extensible-values}
	The promising sequence $\mu_1,\nu_1,\mu_2,\nu_2,\dots,\mu_{k},\nu_{k},\rho_{0},\rho_{k}$ for $(G,\mu,\nu)$ is extensible if and only if integer values $0\leq a'\leq a$, $0\leq b'\leq b$, $0\leq c'\leq c$, and $0\leq d'\leq d$ exist such that $4a'+3b'+2c'+d'=t$. Moreover, assume that such values $a'$, $b'$, $c'$, and $d'$ exist and are known; then an in-out assignment $\mathcal A$ and values $\rho_1,\rho_2,\dots,\rho_{k-1}$ that, together with the promising sequence $\mu_1,\nu_1,\mu_2,\nu_2,\dots,\mu_{k},\nu_{k},\rho_{0},\rho_{k}$ for $(G,\mu,\nu)$, satisfy Properties~$\mathcal V1$--$\mathcal V5$ of Lemma~\ref{le:structural-variable-embedding} can be determined in $O(k)$ time. 
\end{lemma}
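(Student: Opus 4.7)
The plan is to translate extension solutions to integer tuples $(a',b',c',d')$ and vice versa, then argue that the back-translation runs in $O(k)$ time. Recall from the discussion preceding the lemma that any promising sequence automatically fulfills Properties $\mathcal V2$ and $\mathcal V4$, so the task reduces to choosing $\mathcal A$ and the values $\rho_1,\ldots,\rho_{k-1}$ in such a way that $\mathcal V1$, $\mathcal V3$, and $\mathcal V5$ hold. Dividing $\mathcal V5$ by $90\degree$ rewrites it as $\sum_{i=1}^{k-1}(\rho_i/90\degree-1) + \sum_{i=1}^{k}\sigma_i/90\degree = t$.

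For the forward direction, assume the promising sequence is extensible. Define $a'$ (resp.\ $b'$) as the number of $4$-components (resp.\ $3$-components) that $\mathcal A$ assigns to the inside of $\mathcal C^*_{uv}$, define $c'_1$ as the number of $2$-components assigned inside and $c'_2$ as the number of indices $i\in\{1,\ldots,k-1\}$ with $u_i\in\chi$ and $\rho_i=270\degree$, set $c'=c'_1+c'_2$, and set $d'=\sum(\rho_i/90\degree - 1)$ where the sum ranges over indices $i\in\{1,\ldots,k-1\}$ with $u_i\notin\chi$. By $\mathcal V1$ and $\mathcal V3$ each summand of $d'$ lies in $\{0,\ldots,d_i\}$, so $0\le d'\le d$; clearly $0\le a'\le a$, $0\le b'\le b$, $0\le c'\le c$. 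Contributing $4$ per $4$-component inside, $3$ per $3$-component inside, $2$ per $2$-component inside or $\chi$-vertex at $270\degree$, and the integer $\rho_i/90\degree -1$ for the remaining interior vertices, the reformulated $\mathcal V5$ becomes $4a'+3b'+2c'+d'=t$.

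For the reverse direction, given such $(a',b',c',d')$, split $c'=c'_1+c'_2$ by taking $c'_1=\min(c',c_1)$, which gives $c'_1\le c_1$ and $c'_2=c'-c'_1\le c_2$ because $c'\le c_1+c_2$. Build $\mathcal A$ by assigning any $a'$ of the $4$-components, any $b'$ of the $3$-components, and any $c'_1$ of the $2$-components to the inside of $\mathcal C^*_{uv}$, with all remaining non-trivial $uv$-subgraphs going outside. Pick any $c'_2$ vertices $u_i\in\chi$ with $i\in\{1,\ldots,k-1\}$ and set $\rho_i=270\degree$ for them and $\rho_i=90\degree$ for the remaining $\chi$-vertices in this range. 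Traverse the vertices $u_i\notin\chi$ with $i\in\{1,\ldots,k-1\}$ in order, maintaining a running sum $\Sigma$ of units already distributed, and greedily set $\rho'_i:=\min(d_i,d'-\Sigma)\in\{0,\ldots,d_i\}$ and $\rho_i:=(1+\rho'_i)\cdot 90\degree$. Since $\sum d_i=d\ge d'$, exactly $d'$ units are placed.

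It remains to verify $\mathcal V1$, $\mathcal V3$, $\mathcal V5$. Property $\mathcal V1$ holds for $\rho_0$ and $\rho_k$ because the promising sequence passed Conditions (C4) and (C5), and the construction ensures $\rho_i\ge 90\degree$ for $i\in\{1,\ldots,k-1\}$ and $\rho_i\in\{90\degree,270\degree\}$ when $u_i\in\chi$. Property $\mathcal V3$ holds because for $u_i\notin\chi$ we have $\rho_i\le(1+d_i)\cdot 90\degree=270\degree-\nu_i-\mu_{i+1}$, while for $u_i\in\chi$ both $G_i$ and $G_{i+1}$ are trivial, so $\nu_i=\mu_{i+1}=0\degree$ and $\rho_i\le 270\degree$ suffices. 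Property $\mathcal V5$ holds by construction, since $4a'+3b'+2c'+d'=t$ is equivalent to $\sum_{i=0}^{k}\rho_i+\sum_{i=1}^{k}\sigma_i=(k-1)\cdot 180\degree$. The main subtlety, beyond routine bookkeeping, is splitting $c'$ across the two distinct sources of $2$-unit contributions and distributing $d'$ across the intervals $\{0,\ldots,d_i\}$ via a greedy sweep; each of these takes $O(k)$ total work, as does labelling the $uv$-subgraphs and setting the $\rho_i$, so the whole construction runs in $O(k)$ time.
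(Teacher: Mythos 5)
Your proposal is correct and follows essentially the same route as the paper's proof: the same translation between extensions and tuples $(a',b',c',d')$ in the forward direction (counting inside-assigned $x$-components, $\chi$-vertices at $270\degree$, and the slacks $\rho_i/90\degree-1$), and the same construction in the reverse direction (splitting $c'$ into $c'_1+c'_2$, assigning components, and greedily distributing $d'$ units over the capacities $d_i$). The only difference is that the paper spells out in more detail why the greedy sweep places exactly $d'$ units, which your remark that $\sum d_i=d\geq d'$ correctly captures.
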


\begin{proof}
	$(\Longrightarrow)$ Suppose that the promising sequence $\mu_1,\nu_1,\mu_2,\nu_2,\dots,\mu_{k},\nu_{k},\rho_{0},\rho_{k}$ for $(G,\mu,\nu)$ is extensible, i.e., there exist an in-out assignment $\mathcal A$ and values $\rho_1,\rho_2,\dots,\rho_{k-1}$ that, together with the promising sequence for $(G,\mu,\nu)$, satisfy Properties~$\mathcal V1$--$\mathcal V5$ of Lemma~\ref{le:structural-variable-embedding}. Let $a'$, $b'$, and $c'_1$ be the number of $4$-components, $3$-components, and $2$-components of $G$ that are assigned to the inside of ${\mathcal C}^*_{uv}$ by $\mathcal A$. Further, let $c'_2$ be the number of vertices $u_i \in \chi$ with $i\in \{1,\dots,k-1\}$ for which $\rho_i=270\degree$ and let $c'=c'_1+c'_2$. Finally, for each $i\in \{1,\dots,k-1\}$ with $u_i \notin \chi$, let $d'_i=\frac{\rho_i}{90\degree}-1$ and let $d':=\sum d'_i$, where the sum is over all the vertices $u_i\notin \chi$ with $i\in \{1,\dots,k-1\}$.
	
	Clearly, $a'$, $b'$, and $c'$ are integer and non-negative; by Property~$\mathcal V1$ of Lemma~\ref{le:structural-variable-embedding}, for each $i\in \{1,\dots,k-1\}$ with $u_i \notin \chi$, we have that $d'_i=\frac{\rho_i}{90\degree}-1$ is integer and non-negative, and hence so is $d'$. Further, we have $a'\leq a$, given that $a$ represents the number of $4$-components of $G$, while $a'$ represents the number of $4$-components of $G$ that are assigned to the inside of ${\mathcal C}^*_{uv}$ by $\mathcal A$. Similarly, $b'\leq b$ and $c'_1\leq c_1$. Moreover, $c'_2\leq c_2$ (and hence $c'\leq c$), given that $c_2$ represents the number of vertices $u_i\in \chi$ with $i\in \{1,\dots,k-1\}$, while $c'_2$ represents the number of vertices $u_i\in \chi$ with $i\in \{1,\dots,k-1\}$ for which $\rho_i=270\degree$. Finally, in order to prove that $d'\leq d$, it suffices to prove that $d'_i\leq d_i$, for each $i\in \{1,\dots,k-1\}$ with $u_i \notin \chi$. By construction, $d'_i= \frac{\rho_i}{90\degree}-1$. By Property~$\mathcal V3$ of Lemma~\ref{le:structural-variable-embedding}, we have $\rho_i\leq 270\degree-\nu_i-\mu_{i+1}$, hence $d'_i\leq \frac{270\degree-\nu_i-\mu_{i+1}}{90\degree}-1=2-\frac{\nu_i+\mu_{i+1}}{90\degree}=d_i$.
	
	It remains to prove that $4a'+3b'+2c'+d'=t$. By Property~$\mathcal V5$ of Lemma~\ref{le:structural-variable-embedding}, we have $\sum_{i=1}^{k-1} (\frac{\rho_i}{90\degree}-1) + \sum_{i=1}^k \frac{\sigma_i}{90\degree} = t$. We show that: (i) $\sum_{i=1}^{k-1} (\frac{\rho_i}{90\degree}-1)=2c'_2+d'$; and (ii) $\sum_{i=1}^k \frac{\sigma_i}{90\degree}=4a'+3b'+2c'_1$. The two equalities imply that $4a'+3b'+2c'+d'=t$.
	
	\begin{enumerate}[(i)]
		\item We partition the set $\{1,\dots,k-1\}$ into three subsets. The subset $\mathcal I_1$ consists of those values $i$ such that  $u_i \in \chi$ and $\rho_i=90\degree$. The subset $\mathcal I_2$ consists of those values $i$ such that  $u_i \in \chi$ and $\rho_i=270\degree$. The subset $\mathcal I_3$ consists of those values $i$ such that  $u_i \notin \chi$. Then we have $\sum_{i=1}^{k-1} (\frac{\rho_i}{90\degree}-1)=\sum_{i\in {\mathcal I_1}} (\frac{\rho_i}{90\degree}-1)+\sum_{i\in {\mathcal I_2}} (\frac{\rho_i}{90\degree}-1)+\sum_{i\in {\mathcal I_3}} (\frac{\rho_i}{90\degree}-1)$. First, we have $\sum_{i\in {\mathcal I_1}} (\frac{\rho_i}{90\degree}-1)=\sum_{i\in {\mathcal I_1}} 0 = 0$. Second, we have $\sum_{i\in {\mathcal I_2}} (\frac{\rho_i}{90\degree}-1)=\sum_{i\in {\mathcal I_2}} 2 = 2|\mathcal I_2|$; by definition, we have $|\mathcal I_2|=c'_2$, hence $\sum_{i\in {\mathcal I_2}} (\frac{\rho_i}{90\degree}-1)=2c'_2$. Third, we have $\sum_{i\in {\mathcal I_3}} (\frac{\rho_i}{90\degree}-1)=\sum_{i\in {\mathcal I_3}} d'_i=d'$. Hence, $\sum_{i=1}^{k-1} (\frac{\rho_i}{90\degree}-1)=2c'_2+d'$.
		\item We partition the set $\{1,\dots,k\}$ into four subsets. The subset $\mathcal J_1$ consists of those values $i$ such that $G_i$ is trivial or is assigned to the outside of ${\mathcal C}^*_{uv}$ by $\mathcal A$. The subset $\mathcal J_2$ (resp., $\mathcal J_3$, $\mathcal J_4$) consists of those values $i$ such that $G_i$ is a $2$-component (resp.\ $3$-component, $4$-component) assigned to the inside of ${\mathcal C}^*_{uv}$ by $\mathcal A$. Then we have $\sum_{i=1}^k \frac{\sigma_i}{90\degree}=\sum_{i\in \mathcal J_1} \frac{\sigma_i}{90\degree}+\sum_{i\in \mathcal J_2} \frac{\sigma_i}{90\degree}+\sum_{i\in \mathcal J_3} \frac{\sigma_i}{90\degree}+\sum_{i\in \mathcal J_4} \frac{\sigma_i}{90\degree}$. First, we have $\sum_{i\in \mathcal J_1} \frac{\sigma_i}{90\degree}=\sum_{i\in \mathcal J_1} 0 = 0$. Second, we have $\sum_{i\in \mathcal J_2} \frac{\sigma_i}{90\degree}=\sum_{i\in \mathcal J_2} \frac{\mu_i+\nu_i}{90\degree}=\sum_{i\in \mathcal J_2} 2$, where the last equality uses that $\mu_i+\nu_i=180\degree$ for a $2$-component; by definition, we have $|\mathcal J_2|=c'_1$, hence $\sum_{i\in \mathcal J_2} \frac{\sigma_i}{90\degree}=2c'_1$. Analogously, and since $\mu_i+\nu_i=270\degree$ and $\mu_i+\nu_i=360\degree$ for a $3$-component and a $4$-component, respectively, we have that $\sum_{i\in \mathcal J_3} \frac{\sigma_i}{90\degree}=3b'$ and $\sum_{i\in \mathcal J_4} \frac{\sigma_i}{90\degree}=4a'$. Hence, $\sum_{i=1}^k \frac{\sigma_i}{90\degree}=4a'+3b'+2c'_1$.
	\end{enumerate} 
	
	($\Longleftarrow$) Suppose that integer values $0\leq a'\leq a$, $0\leq b'\leq b$, $0\leq c'\leq c$, and $0\leq d'\leq d$ exist such that $4a'+3b'+2c'+d'=t$. Choose any non-negative integers $c'_1\leq c_1$ and $c'_2\leq c_2$ such that $c'_1+c'_2=c'$; these exist since $0\leq c'\leq c$ and $c_1+c_2=c$. 
	
	We determine an in-out assignment $\mathcal A$ as follows. We assign any $a'$ $4$-components of $G$ to the inside of ${\mathcal C}^*_{uv}$, and we assign the remaining $a-a'$ $4$-components of $G$ to the outside of ${\mathcal C}^*_{uv}$. Similarly, we assign any $b'$ $3$-components of $G$ and any $c'_1$ $2$-components of $G$ to the inside of ${\mathcal C}^*_{uv}$, and we assign the remaining $b-b'$ $3$-components and the remaining $c_1-c'_1$ $2$-components of $G$ to the outside of ${\mathcal C}^*_{uv}$. 
	
	We determine the values $\rho_1,\dots,\rho_{k-1}$ as follows. 
	
	Let $\chi'$ be any subset of $\chi-\{u_0,u_k\}$ such that $|\chi'|=c'_2$ (recall that $|\chi-\{u_0,u_k\}|=c_2\geq c'_2$). For any vertex $u_i$ in $\chi'$ we set $\rho_i=270\degree$; further, for any vertex $u_i$ in $\chi\setminus \chi'$ with $i\in \{1,\dots,k-1\}$, we set $\rho_i=90\degree$. 
	
	Finally, let $u_{\tau(1)},u_{\tau(2)},\dots,u_{\tau(h)}$ be any order of the vertices in $\{u_1,\dots,u_{k-1}\}\setminus \chi$. Choose any non-negative values $\rho_{\tau(1)},\dots,\rho_{\tau(h)}$ such that $\rho_{\tau(i)}\leq 90\degree \cdot (1+d_{\tau(i)})$, for $i=1,\dots,h$, and such that $\sum_{i=1}^h (\frac{\rho_{\tau(i)}}{90\degree}-1)=d'$. These values exist since $\sum_{i=1}^h d_{\tau(i)}=d$ and $d'\leq d$. Algorithmically, the values $\rho_{\tau(1)},\dots,\rho_{\tau(h)}$ can be determined as follows. Let $\delta_0=d'$. For $i=1,\dots,h$, we set $\rho_{\tau(i)}$ equal to $90\degree \cdot \min \{(1+\delta_{i-1}),(1+d_{\tau(i)})\}$ and we let $\delta_i=\delta_{i-1}-(\frac{\rho_{\tau(i)}}{90\degree}-1)$. Note that $\delta_i\geq 0$ for $i=0,\dots,h$. Namely, we have $\delta_0\geq 0$, given that $d'\geq 0$. Further, suppose that $\delta_{i-1}\geq 0$. We have $\rho_{\tau(i)}\leq 90\degree \cdot  (1+\delta_{i-1})$ and hence $\delta_i\geq \delta_{i-1}-(\frac{90\degree \cdot (1+\delta_{i-1})}{90\degree}-1)=0$. 
	
	Clearly, the described algorithm for the determination of the in-out assignment $\mathcal A$ and of the values $\rho_1,\dots,\rho_{k-1}$ runs in $O(k)$ time. We now prove that $\mathcal A$ and $\rho_1,\dots,\rho_{k-1}$, together with the promising sequence $\mu_1,\nu_1,\mu_2,\nu_2,\dots,\mu_{k},\nu_{k},\rho_{0},\rho_{k}$ for $(G,\mu,\nu)$, satisfy Properties~$\mathcal V1$--$\mathcal V5$ of Lemma~\ref{le:structural-variable-embedding}. 
	
	\begin{itemize}
		\item As remarked earlier, Properties~$\mathcal V2$ and~$\mathcal V4$ are satisfied independently of the choice of $\mathcal A$ and $\rho_1,\dots,\rho_{k-1}$.
		\item By construction, for every vertex $u_i\in \chi$ with $i\in \{1,\dots,k-1\}$ we have either $\rho_i=270\degree$ or $\rho_i=90\degree$ (depending on whether $u_i\in \chi'$ or not). Further, for $i=1,\dots,h$, we have $\rho_{\tau(i)}= 90\degree \cdot \min \{(1+\delta_{i-1}),(1+d_{\tau(i)})\geq 90\degree$. The last inequality follows from $\delta_{i-1}\geq 0$ and $d_{\tau(i)}\geq 0$. Hence, Property~$\mathcal V1$ is satisfied.
		\item For every vertex $u_i\in \chi$ with $i\in \{1,\dots,k-1\}$, we have $\nu_i=\mu_{i+1}=0\degree$ and we have either $\rho_i=90\degree$ or $\rho_i=270\degree$, hence $\rho_i \leq 270\degree -\nu_i- \mu_{i+1}$. Further, for $i=1,\dots,h$, we have $\rho_{\tau(i)}\leq 90\degree \cdot (1+d_{\tau(i)})= 90\degree \cdot (3-\frac{\nu_{\tau(i)}+\mu_{\tau(i)+1}}{90\degree})=270\degree -\nu_{\tau(i)}- \mu_{\tau(i)+1}$. Property~$\mathcal V3$ follows.  
		\item Finally, we deal with Property~$\mathcal V5$. By hypothesis, we have $4a'+3b'+2c'+d'=t$. We show that: (i) $\sum_{i=1}^{k-1} (\frac{\rho_i}{90\degree}-1)=2c'_2+d'$; and (ii) $\sum_{i=1}^k \frac{\sigma_i}{90\degree}=4a'+3b'+2c'_1$. The two equalities imply that $\sum_{i=1}^{k-1} (\frac{\rho_i}{90\degree}-1) + \sum_{i=1}^k \frac{\sigma_i}{90\degree} = t$, as requested. 
		
		\begin{enumerate}[(i)]
			\item As in the proof of necessity, we partition the set $\{1,\dots,k-1\}$ into three subsets $\mathcal I_1$, $\mathcal I_2$, and $\mathcal I_3$ respectively consisting of those values $i$ such that  $u_i \in \chi$ and $\rho_i=90\degree$, of those values $i$ such that  $u_i \in \chi$ and $\rho_i=270\degree$, and of those values $i$ such that  $u_i \notin \chi$; then $\sum_{i=1}^{k-1} (\frac{\rho_i}{90\degree}-1)=\sum_{i\in {\mathcal I_1}} (\frac{\rho_i}{90\degree}-1)+\sum_{i\in {\mathcal I_2}} (\frac{\rho_i}{90\degree}-1)+\sum_{i\in {\mathcal I_3}} (\frac{\rho_i}{90\degree}-1)$. As in the proof of necessity, we have $\sum_{i\in {\mathcal I_1}} (\frac{\rho_i}{90\degree}-1)=\sum_{i\in {\mathcal I_1}} 0 = 0$. Further, we have $\sum_{i\in {\mathcal I_2}} (\frac{\rho_i}{90\degree}-1)=\sum_{i\in {\mathcal I_2}} 2 = 2|\mathcal I_2|$; by construction, we have $|\mathcal I_2|=c'_2$, hence $\sum_{i\in {\mathcal I_2}} (\frac{\rho_i}{90\degree}-1)=2c'_2$. It remains to prove that $\sum_{i\in {\mathcal I_3}} (\frac{\rho_i}{90\degree}-1)=d'$, which implies that $\sum_{i=1}^{k-1} (\frac{\rho_i}{90\degree}-1)=2c'_2+d'$. The proof is as follows. 
			
			Firstly, we have that if, for some index $i\in \{1,\dots,h\}$, the minimum between $(1+\delta_{i-1})$ and $(1+d_{\tau(i)})$ is $(1+\delta_{i-1})$, then $\delta_i=\delta_{i+1}=\cdots=\delta_{h}=0$. Namely, if $\min \{(1+\delta_{i-1}),(1+d_{\tau(i)})\}=(1+\delta_{i-1})$, then $\delta_i=\delta_{i-1}-(\frac{90\degree \cdot (1+\delta_{i-1})}{90\degree}-1)=0$. Further, if $\delta_{j-1}=0$, for some $j\in \{1,\dots,h\}$, then $\min \{(1+\delta_{j-1}),(1+d_{\tau(j)})\}=(1+\delta_{j-1})$, given that $d_{\tau(j)}\geq 0$, hence $\delta_{j}=0$.
			
			Secondly, we have that there exists an index $i\in \{1,\dots,h\}$ such that the minimum between $(1+\delta_{i-1})$ and $(1+d_{\tau(i)})$ is $(1+\delta_{i-1})$. Namely, suppose that, for $i=1,\dots,h-1$, we have $(1+d_{\tau(i)})<(1+\delta_{i-1})$. Then we have $\delta_{h-1}=\delta_0-((\frac{\rho_{\tau(1)}}{90\degree}-1)+\cdots+(\frac{\rho_{\tau(h-1)}}{90\degree}-1))=d'-(d_{\tau(1)}+\cdots+d_{\tau(h-1)})=d'-(d-d_{\tau(h)})\geq d_{\tau(h)}$, where the last inequality is given by $d'\leq d$. Hence, the minimum between $(1+\delta_{h-1})$ and $(1+d_{\tau(h)})$ is $(1+\delta_{h-1})$.
			
			On one hand, the two facts above imply that $\delta_h=0$. On the other hand, we have $\delta_{h}=\delta_0-((\frac{\rho_{\tau(1)}}{90\degree}-1)+\cdots+(\frac{\rho_{\tau(h)}}{90\degree}-1))$, and hence $\sum_{i\in {\mathcal I_3}} (\frac{\rho_i}{90\degree}-1)=\sum_{i=1}^h (\frac{\rho_{\tau(i)}}{90\degree}-1) = \delta_0=d'$.
			\item The proof that $\sum_{i=1}^k \frac{\sigma_i}{90\degree}=4a'+3b'+2c'_1$ is the same as in the proof of necessity, with the only difference that the number of $4$-, $3$-, and $2$-components that are assigned to the inside of ${\mathcal C}^*_{uv}$ is $a'$, $b'$, and $c'_1$ by construction, and not by definition.
			
		\end{enumerate}
	\end{itemize} 
	
	This concludes the proof of the lemma.
\end{proof}

Next, we show that values $a'$, $b'$, $c'$, and $d'$ as in Lemma~\ref{le:extensible-values} exist if and only if the values $a$, $b$, $c$, $d$, and $t$ satisfy certain conditions. In order to keep such a characterization as simple as possible, we first get rid of the cases in which $t$ is ``very small'' or ``very large'' by means of exhaustive search. 

\begin{lemma} \label{le:small-large-t}
	Suppose that $t\leq \constant$ or that $4a+3b+2c+d\leq t+\constant$. Then it is possible to test in $O(1)$ time whether integer values $0\leq a'\leq a$, $0\leq b'\leq b$, $0\leq c'\leq c$, and $0\leq d'\leq d$ exist such that $4a'+3b'+2c'+d'=t$. Further, if such values exist, they can be determined in $O(1)$ time. 
\end{lemma}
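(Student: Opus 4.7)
\textbf{Proof plan for Lemma~\ref{le:small-large-t}.}

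The plan is to handle the two hypotheses separately, in each case reducing to a bounded enumeration. The key observation is that, in both regimes, a witness tuple (either a solution or its complement inside the allowed box) must have coordinates bounded by the constant $\constant+O(1)$, so brute force over $O(1)$ candidates suffices.

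First I will dispatch the case $t\le \constant$. Any solution $(a',b',c',d')$ to $4a'+3b'+2c'+d'=t$ with all entries non-negative must satisfy $a',b',c',d'\le t\le \constant$, so there are at most $(\constant+1)^4\in O(1)$ candidate tuples in total. I enumerate them all; for each candidate I test in $O(1)$ whether $a'\le a$, $b'\le b$, $c'\le c$, $d'\le d$ and $4a'+3b'+2c'+d'=t$. If some candidate passes, output it; otherwise report that no solution exists. The whole procedure runs in $O(1)$ time.

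Second I will dispatch the case $4a+3b+2c+d\le t+\constant$ by a complementation trick. Set $a'':=a-a'$, $b'':=b-b'$, $c'':=c-c'$, $d'':=d-d'$. Then $(a',b',c',d')$ satisfies $0\le a'\le a$, $0\le b'\le b$, $0\le c'\le c$, $0\le d'\le d$ and $4a'+3b'+2c'+d'=t$ if and only if $(a'',b'',c'',d'')$ satisfies $0\le a''\le a$, $0\le b''\le b$, $0\le c''\le c$, $0\le d''\le d$ and
\begin{equation*}
4a''+3b''+2c''+d''=(4a+3b+2c+d)-t\le \constant.
\end{equation*}
Hence the problem in this regime is equivalent to the problem in the first regime with target value $t^{*}:=(4a+3b+2c+d)-t\le \constant$ in place of $t$. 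Applying the enumeration of the previous paragraph (with $t^{*}$) yields a witness $(a'',b'',c'',d'')$ in $O(1)$ time whenever one exists; the desired witness is then recovered in $O(1)$ time as $(a-a'',b-b'',c-c'',d-d'')$.

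There is essentially no obstacle here: the lemma is a bounded search, and the only content is noticing that the ``large $t$'' regime is symmetric to the ``small $t$'' regime via complementation within the box $[0,a]\times[0,b]\times[0,c]\times[0,d]$. The usefulness of the lemma lies not in its proof but in the fact that it lets the subsequent algorithm assume $\constant<t<4a+3b+2c+d-\constant$, which is the range where structural arguments about $a,b,c,d,t$ become meaningful.
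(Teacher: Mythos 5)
Your proof is correct and follows essentially the same approach as the paper: both regimes are handled by brute-force enumeration over an $O(1)$-size candidate set. The only cosmetic difference is that the paper directly enumerates tuples near the top corner of the box (i.e., with $a'\geq a-2$, $b'\geq b-3$, $c'\geq c-5$, $d'\geq d-10$) in the regime $4a+3b+2c+d\leq t+\constant$, whereas you package the same observation as a complementation $(a'',b'',c'',d'')=(a-a',b-b',c-c',d-d')$ reducing to the small-target case.
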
	

\begin{proof}
	We distinguish three cases.	
	
	In Case~1, we have $t\leq \constant$ and $4a+3b+2c+d> t+\constant$. Then, for any non-negative integer values $a'$, $b'$, $c'$, and $d'$ such that $4a'+3b'+2c'+d'=t$, we have $a'\leq 2$, $b'\leq 3$, $c'\leq 5$, and $d'\leq 10$. We consider every tuple $(a',b',c',d')$ of integers such that $0\leq a'\leq \min\{2,a\}$, $0\leq b'\leq \min\{3,b\}$, $0\leq c'\leq \min\{5,c\}$, and $0\leq d'\leq \min\{10,d\}$. Note that there are $O(1)$ such tuples. For each tuple, we check in $O(1)$ time whether $4a'+3b'+2c'+d'=t$. If a tuple $(a',b',c',d')$ satisfies $4a'+3b'+2c'+d'=t$, then the values $a'$, $b'$, $c'$, and $d'$ have been determined in $O(1)$ time. If no  tuple $(a',b',c',d')$ satisfies $4a'+3b'+2c'+d'=t$, then we conclude that the required values $a'$, $b'$, $c'$, and $d'$ do not exist.
	
	In Case~2, we have $4a+3b+2c+d\leq t+\constant$ and $t>\constant$. Then, for any non-negative integer values $a'$, $b'$, $c'$, and $d'$ such that $4a'+3b'+2c'+d'=t$, we have $a'\geq a-2$, $b'\geq b-3$, $c'\geq c-5$, and $d'\geq d-10$. We consider every tuple $(a',b',c',d')$ of integers such that $\max\{0,a-2\}\leq a'\leq a$, $\max\{0,b-3\}\leq b'\leq b$,  $\max\{0,c-5\}\leq c'\leq c$, and $\max\{0,d-10\}\leq d'\leq d$. Note that there are $O(1)$ such tuples. As in Case~1, for each tuple $(a',b',c',d')$, we check in $O(1)$ time whether $4a'+3b'+2c'+d'=t$; this leads either to the determination of a tuple $(a',b',c',d')$ such that $4a'+3b'+2c'+d'=t$, or to the conclusion that the required values $a'$, $b'$, $c'$, and $d'$ do not exist. 
	
	In Case~3, we have $t\leq \constant$ and $4a+3b+2c+d\leq t+\constant$. Then the constraints of both Cases~1 and~2 apply. Hence, we need to consider every tuple $(a',b',c',d')$ of integers such that $\max\{0,a-2\}\leq a' \leq \min\{2,a\}$, $\max\{0,b-3\}\leq b'\leq \min\{3,b\}$, $\max\{0,c-5\}\leq c'\leq \min\{5,c\}$, and $\max\{0,d-10\}\leq d'\leq \min\{10,d\}$. Note that there are $O(1)$ such tuples. As in Cases~1 and~2, for each tuple $(a',b',c',d')$, we check in $O(1)$ time whether $4a'+3b'+2c'+d'=t$; this leads either to the determination of a tuple $(a',b',c',d')$ such that $4a'+3b'+2c'+d'=t$, or to the conclusion that the required values $a'$, $b'$, $c'$, and $d'$ do not exist. 
\end{proof}

We now deal with the case in which $t$ is neither too small nor too large.

\begin{lemma} \label{le:medium-t}
	Suppose that $t>\constant$ and that $4a+3b+2c+d>t+\constant$. Then integer values $0\leq a'\leq a$, $0\leq b'\leq b$, $0\leq c'\leq c$, and $0\leq d'\leq d$ exist such that $4a'+3b'+2c'+d'=t$ if and only if the formula $\mathcal F:= \mathcal F_3 \vee \mathcal F_2 \vee \mathcal F_1 \vee \mathcal F_0$ is satisfied, where:
	
	\begin{itemize}
		\item $\mathcal F_3 := (d\geq 3)$;
		\item $\mathcal F_2 := (d=2) \wedge (\mathcal F_{2,0} \vee \mathcal F_{2,1} \vee \mathcal F_{2,2})$, where 
		\begin{itemize}
			\item $\mathcal F_{2,0}:=(c>0)$,
			\item $\mathcal F_{2,1}:=(b>0)$, and
			\item $\mathcal F_{2,2}:=(t \not \equiv 3 \mod 4)$;
		\end{itemize}  
		\item $\mathcal F_1 := (d=1) \wedge (\mathcal F_{1,0} \vee \mathcal F_{1,1} \vee \mathcal F_{1,2} \vee \mathcal F_{1,3} \vee \mathcal F_{1,4} \vee \mathcal F_{1,5})$, where 
		\begin{itemize}
			\item $\mathcal F_{1,0}:=(c>0)$, 
			\item $\mathcal F_{1,1}:= (a=0) \wedge (t \not \equiv 2 \mod 3)$, 
			\item $\mathcal F_{1,2}:=(b=0) \wedge ((t \equiv 0 \mod 4)\vee(t \equiv 1 \mod 4))$, 
			\item $\mathcal F_{1,3}:=(b=1) \wedge ((t \equiv 0 \mod 4)\vee(t \equiv 1 \mod 4)\vee(t \equiv 3 \mod 4))$, 
			\item $\mathcal F_{1,4}:= (b=2)$, and 
			\item $\mathcal F_{1,5}:=(a\geq 1) \wedge (b\geq 3)$; 
		\end{itemize} 
		\item $\mathcal F_0 := (d=0) \wedge (\mathcal F_{0,0} \vee \mathcal F_{0,1} \vee \mathcal F_{0,2} \vee \mathcal F_{0,3} \vee \mathcal F_{0,4} \vee \mathcal F_{0,5} \vee \mathcal F_{0,6} \vee \mathcal F_{0,7} \vee \mathcal F_{0,7})$, where 
		\begin{itemize}
			\item $\mathcal F_{0,0}:=(c=0) \wedge (b=0) \wedge (t\equiv 0 \mod 4)$, 
			\item $\mathcal F_{0,1}:=(c=0) \wedge (b=1) \wedge ((t\equiv 0 \mod 4)\vee (t\equiv 3 \mod 4))$,  
			\item $\mathcal F_{0,2}:=(c=0) \wedge (b=2) \wedge ((t\equiv 0 \mod 4)\vee (t\equiv 2 \mod 4)\vee (t\equiv 3 \mod 4))$,
			\item $\mathcal F_{0,3}:=(c=0) \wedge (a=0) \wedge (t\equiv 0 \mod 3)$,
			\item $\mathcal F_{0,4}:=(c=0) \wedge (a=1) \wedge ((t\equiv 0 \mod 3)\vee(t\equiv 1 \mod 3))$,
			\item $\mathcal F_{0,5}:=(c=0) \wedge (b\geq 3) \wedge (a\geq 2)$,
			\item $\mathcal F_{0,6}:=(c\geq 1) \wedge (b=0) \wedge (t\equiv 0 \mod 2)$,
			\item $\mathcal F_{0,7}:=(c=1) \wedge (a=0) \wedge ((t\equiv 0 \mod 3)\vee (t\equiv 2 \mod 3))$,
			\item $\mathcal F_{0,8}:=(c\geq 1) \wedge (b\geq 1)  \wedge (a\geq 1)$, and
			\item $\mathcal F_{0,9}:=(c\geq 2) \wedge (b\geq 1)$.
		\end{itemize}
	\end{itemize}
	
	Consequently, it can be tested in $O(1)$ time whether integer values $a'$, $b'$, $c'$, and $d'$ as above exist. Further, if such values exist, they can be determined in $O(1)$ time. 
\end{lemma}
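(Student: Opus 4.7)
The plan is to prove the lemma by a case analysis on the value of $d$, mirroring the top-level disjunction $\mathcal F = \mathcal F_3 \vee \mathcal F_2 \vee \mathcal F_1 \vee \mathcal F_0$. The intuition is that since the coefficient of $d'$ in the target equation $4a'+3b'+2c'+d'=t$ is $1$, the variable $d'$ provides a continuous shift of up to $d$ units, while $2c'$, $3b'$, and $4a'$ contribute in arithmetic progressions of moduli $2$, $3$, and $4$. The assumptions $t>\constant$ and $4a+3b+2c+d>t+\constant$ provide enough slack on both ends of the interval $[0,4a+3b+2c+d]$ that feasibility is governed purely by modular considerations; the subformulae $\mathcal F_{i,j}$ are engineered to enumerate exactly the parameter configurations in which an appropriate residue class can be realized.

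For the easy case $d\ge 3$, I would show that feasible values always exist. Since $\{0,1,2,3\}\subseteq [0,d]$ is a complete residue system modulo $4$, one may choose $d'\in\{0,1,2,3\}$ with $t-d'\equiv 0\pmod 4$ and tentatively set $b'=c'=0$ and $a'=(t-d')/4$; the upper-bound assumption on $4a+3b+2c+d$ together with $t>\constant$ ensures that $a'\le a$ whenever $b$ and $c$ are small enough that most of the slack resides in $a$. If instead $a$ itself is small, the slack must be supplied by $b$ or $c$, and one diverts a constant amount of weight from $a'$ into $b'$ or $c'$, using that every integer $\ge 2$ is expressible as $3b'+2c'$; the condition $t>\constant$ guarantees that the residue adjustment stays within bounds.

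For the cases $d\in\{0,1,2\}$, the argument then splits further according to which of $a$, $b$, $c$ are nonzero. Each disjunct $\mathcal F_{d,j}$ would be shown sufficient by a short explicit construction: fix $d'$ within $[0,d]$, then use $c'$ (step $2$) to adjust the parity of $t-d'$, use $b'$ (step $3$) to adjust the result modulo $4$, and finally absorb the remainder into $a'$, relying on the slack assumption for the range constraints. For necessity, whenever every disjunct within the relevant $\mathcal F_d$ fails, one verifies that any candidate $(a',b',c',d')$ would force $t$ into an impossible residue class modulo the least common multiple of the relevant small coefficients, so no representation can exist. The main obstacle is the sheer bookkeeping: $\mathcal F_0$ has nine subcases and $\mathcal F_1$ has six, and each must be checked against every residue class; great care is needed for the corner subcases where $b$ or $c$ equals exactly $1$ or $2$, because then only a restricted range of residues modulo $12$ becomes reachable.

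For the algorithmic statement, once $\mathcal F$ has been proved equivalent to the feasibility of $t=4a'+3b'+2c'+d'$, one observes that the formula has constant size, that each atomic conjunct of $\mathcal F$ (membership tests, equalities, and residue checks) can be evaluated in $O(1)$ time on a RAM, and that each of the explicit sufficiency constructions outlined above produces a valid tuple $(a',b',c',d')$ using only $O(1)$ arithmetic operations. This yields the claimed $O(1)$ running times for the decision and the construction.
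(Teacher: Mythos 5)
Your overall strategy---a case analysis on $d$ mirroring $\mathcal F_3\vee\mathcal F_2\vee\mathcal F_1\vee\mathcal F_0$, explicit constructions for sufficiency driven by residues modulo $2$, $3$, and $4$, and residue-class impossibility arguments for necessity---is the same as the paper's. But what you have written is a plan rather than a proof: essentially the entire content of this lemma lives in the roughly twenty subcases of $\mathcal F_2$, $\mathcal F_1$, and $\mathcal F_0$, each of which needs a concrete tuple $(a',b',c',d')$ respecting the caps $a'\le a$, $b'\le b$, $c'\le c$, $d'\le d$ for sufficiency, and a concrete modular obstruction for necessity. You defer all of these with ``would be shown'' and ``one verifies'', and in particular you never confront the corner cases you yourself identify (e.g.\ $c=0$ with $b\in\{1,2\}$, where only certain residues of $t$ modulo $4$ are reachable, or $c=1$, $a=0$ with residues modulo $3$); these are exactly where the fine structure of $\mathcal F$ comes from, so nothing checkable has actually been established.

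Moreover, the one construction you do spell out, for $\mathcal F_3$ ($d\ge 3$), fails as stated. You fix $d'\in\{0,1,2,3\}$ so that $t-d'\equiv 0\pmod 4$, set $a'=(t-d')/4$, and fall back on ``diverting a constant amount of weight'' into $b'$ or $c'$ if $a$ is too small. This breaks when the available weight is concentrated in $d$: take $a=b=c=0$ and $d>t+10$. The hypotheses of the lemma are satisfied and the unique solution is $d'=t$, which your construction never considers because you cap $d'$ at $3$, and there is nothing to divert weight into. (Your auxiliary claim that every integer $\ge 2$ equals $3b'+2c'$ also ignores the caps $b'\le b$, $c'\le c$.) The paper avoids this by choosing the values greedily in the order $a'$, $b'$, $c'$, $d'$---each as large as its cap and the remaining budget allow---and using the slack hypothesis $4a+3b+2c+d>t+10$ to show that the residual $t-4a'-3b'-2c'$ does not exceed $d$; the residue bookkeeping is then done at the end, on $d'$, rather than at the beginning. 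You would need to reorder your construction in this way (or otherwise let $d'$ absorb a large remainder) to make even the easiest case go through, and then still carry out the full subcase analysis for $d\le 2$.
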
	

\begin{proof}
	($\Longrightarrow$) We first prove the necessity of the characterization. That is, we show that, if $\mathcal F$ is not satisfied, then integer values $0\leq a'\leq a$, $0\leq b'\leq b$, $0\leq c'\leq c$, and $0\leq d'\leq d$ such that $4a'+3b'+2c'+d'=t$ do not exist. Since $\mathcal F$ is not satisfied, neither of $\mathcal F_0$--$\mathcal F_3$ is satisfied. Since $\mathcal F_3$ is not satisfied, we have $d\leq 2$. We distinguish three cases according to whether $d=2$, $d=1$, and $d=0$.
	
	\begin{itemize}
		\item Suppose first that $d=2$. Since $\mathcal F_2$ is not satisfied, we have $b=0$, $c=0$, and $t\equiv 3 \mod 4$. Then $b'=c'=0$, while $d'\in \{0,1,2\}$. Hence, for any integer value of $a'$, we have that $4a'+3b'+2c'+d'=4a'+d'$ is equivalent to either $0$, $1$, or $2$ modulo $4$, while $t\equiv 3 \mod 4$, hence $4a'+3b'+2c'+d'\neq t$.
		\item Suppose next that $d=1$. Since $\mathcal F_1$ is not satisfied, neither of $\mathcal F_{1,0}$--$\mathcal F_{1,5}$ is satisfied. Since $\mathcal F_{1,0}$ is not satisfied, we have $c=0$; further, since $\mathcal F_{1,5}$ is not satisfied, we have $a=0$ or $b\leq 2$. Since $\mathcal F_{1,4}$ is not satisfied, we have $b\neq 2$. Hence, we can assume that $a=0$, or $b=0$, or $b=1$.
		\begin{itemize}
			\item Suppose that $a=0$. Since $\mathcal F_{1,1}$ is not satisfied, we have $t \equiv 2 \mod 3$. Since $a'=c'=0$, we have $4a'+3b'+2c'+d'=3b'+d'$; further, since $d'\in \{0,1\}$, we have that $3b'+d'$ is equivalent to either $0$ or $1$ modulo $3$, hence it can not be equal to $t$, which is equivalent to  $2$ modulo $3$.
			\item Suppose that $b=0$. Since $\mathcal F_{1,2}$ is not satisfied, we have $t \equiv 2 \mod 4$ or $t \equiv 3 \mod 4$. Since $b'=c'=0$, we have $4a'+3b'+2c'+d'=4a'+d'$; further, since $d'\in \{0,1\}$, we have that $4a'+d'$ is equivalent to $0$ or $1$ modulo $4$, hence it can not be equal to $t$, which is equivalent to $2$ or $3$ modulo $4$.
			\item Suppose that $b=1$. Since $\mathcal F_{1,3}$ is not satisfied, we have $t \equiv 2 \mod 4$. Recall that $c'=0$ and $d'\in \{0,1\}$, given that $c=0$ and $d=1$. If $b'=0$, then we have  $4a'+3b'+2c'+d'=4a'+d'$, which is equivalent to $0$ or $1$ modulo $4$, hence it can not be equal to $t$, which is equivalent to $2$ modulo $4$. Further, if $b'=1$, then we have  $4a'+3b'+2c'+d'=4a'+3+d'$, which is equivalent to $0$ or $3$ modulo $4$, hence it can not be equal to $t$, which is equivalent to $2$ modulo $4$.
		\end{itemize}
		
		\item Suppose finally that $d=0$. Since $\mathcal F_0$ is not satisfied, neither of $\mathcal F_{0,0}$--$\mathcal F_{0,9}$ is satisfied. 
		
		Suppose first that $c=0$. Since $\mathcal F_{0,5}$ is not satisfied, we have $b\leq 2$ or $a\leq 1$. We distinguish five cases.
		\begin{itemize}
			\item If $b=0$, then since $\mathcal F_{0,0}$ is not satisfied, we have $t \equiv 1 \mod 4$ or $t \equiv 2 \mod 4$ or $t \equiv 3 \mod 4$. Since $b'=c'=d'=0$, we have $4a'+3b'+2c'+d'=4a'$, which is equivalent to $0$ modulo $4$, hence it is not equal to $t$.			
			\item If $b=1$, then since $\mathcal F_{0,1}$ is not satisfied, we have $t \equiv 1 \mod 4$ or $t \equiv 2 \mod 4$. Since $c'=d'=0$, we have $4a'+3b'+2c'+d'=4a'+3b'$. Since $b'\in\{0,1\}$, it follows that $4a'+3b'$ is equivalent to $0$ or $3$ modulo $4$, hence it is not equal to $t$.			
			\item If $b=2$, then since $\mathcal F_{0,2}$ is not satisfied, we have $t \equiv 1 \mod 4$. Since $c'=d'=0$, we have $4a'+3b'+2c'+d'=4a'+3b'$. Since $b'\in\{0,1,2\}$, it follows that $4a'+3b'$ is equivalent to $0$, or $2$, or $3$ modulo $4$, hence it is not equal to $t$.
			\item If $a=0$, then since $\mathcal F_{0,3}$ is not satisfied, we have $t \equiv 1 \mod 3$ or $t \equiv 2 \mod 3$. Since $a'=c'=d'=0$, we have $4a'+3b'+2c'+d'=3b'$, which is equivalent to $0$ modulo $3$, hence it is not equal to $t$.
			\item If $a=1$, then since $\mathcal F_{0,4}$ is not satisfied, we have $t \equiv 2 \mod 3$. Since $c'=d'=0$, we have $4a'+3b'+2c'+d'=4a'+3b'$. Since $a'\in\{0,1\}$, it follows that $4a'+3b'$ is equivalent to $0$ or $1$ modulo $3$, hence it is not equal to $t$.			
		\end{itemize}
		
		Suppose next that $c\geq 1$. Since $\mathcal F_{0,8}$ is not satisfied, we have $a=0$ or $b=0$. Further, since $\mathcal F_{0,9}$ is not satisfied, we have $c\leq 1$ or $b=0$. We hence distinguish two cases, namely the one in which $b=0$ and the one in which $a=0$ and $c=1$.
		
		\begin{itemize}
			\item If $b=0$, then since $\mathcal F_{0,6}$ is not satisfied, we have $t\equiv 1 \mod 2$. Since $b'=d'=0$, we have $4a'+3b'+2c'+d'=4a'+2c'$, which is equivalent to $0$ modulo $2$, hence it is not equal to $t$.
			\item If $a=0$ and $c=1$, then since $\mathcal F_{0,7}$ is not satisfied, we have $t \equiv 1 \mod 3$. Since $a'=d'=0$, we have $4a'+3b'+2c'+d'=3b'+2c'$; further, since $c'\in \{0,1\}$, we have that $3b'+2c'$ is equivalent to $0$ or $2$ modulo $3$, hence it is not equal to $t$.
		\end{itemize}
	\end{itemize}
	

	($\Longleftarrow$)  We now prove the sufficiency of the characterization. That is, we show that, if $\mathcal F$ is satisfied, then there exist integer values $0\leq a'\leq a$, $0\leq b'\leq b$, $0\leq c'\leq c$, and $0\leq d'\leq d$ such that $4a'+3b'+2c'+d'=t$. The proof distinguishes four cases, according to which of $\mathcal F_3$, $\mathcal F_2$, $\mathcal F_1$, and $\mathcal F_0$ is satisfied.
	
	Before discussing these cases, we define a notion of \emph{greedy values} for $a'$, $b'$, $c'$, and $d'$. Roughly speaking, these are the values obtained by taking $a'$ as large as possible so that $4a'$ does not exceed $t$, then $b'$ as large as possible so that $3b'$ does not exceed $t-4a'$, then $c'$ as large as possible so that $2c'$ does not exceed $t-4a'-3b'$, and finally $d'$ as large as possible so that it does not exceed $t-4a'-3b'-2c'$. Formally, the greedy values for $a'$, $b'$, $c'$, and $d'$ are defined as $a'=\min\{a,\lfloor \frac{t}{4}\rfloor\}$, $b'=\min\{b,\lfloor \frac{t-4a'}{3}\rfloor\}$, $c'=\min\{c,\lfloor \frac{t-4a'-3b'}{2}\rfloor\}$, and $d'=\min\{d,t-4a'-3b'-2c'\}$. 
	
	Note that the greedy values for $a'$, $b'$, $c'$, and $d'$ satisfy $0\leq a'\leq a$, $0\leq b'\leq b$, $0\leq c'\leq c$, and $0\leq d'\leq d$. Namely, by construction, we have $a'\leq a$, $b'\leq b$, $c'\leq c$, and $d'\leq d$. Further, since $a\geq 0$, $b\geq 0$, $c\geq 0$, and $d\geq 0$, in order to prove that $a'\geq 0$, $b'\geq 0$, $c'\geq 0$, and $d'\geq 0$, it suffices to prove that $t\geq 0$, that $t-4a'\geq 0$, that $t-4a'-3b'\geq 0$, and that $t-4a'-3b'-2c'\geq 0$. 
	
	The first inequality, $t\geq 0$, is true by the assumption $t>10$. Further, $a'\leq \lfloor \frac{t}{4}\rfloor$, hence $t-4a'\geq t-4\lfloor \frac{t}{4}\rfloor \geq 0$. Analogously, $b' \leq \lfloor \frac{t-4a'}{3}\rfloor$, hence $t-4a'-3b'\geq t-4a'-3\lfloor \frac{t-4a'}{3}\rfloor \geq 0$. Finally, $c'\leq \lfloor \frac{t-4a'-3b'}{2}\rfloor$, hence $t-4a'-3b'-2c'\geq t-4a'-3b'-2\lfloor \frac{t-4a'-3b'}{2}\rfloor\geq 0$.

	
	{\bf Suppose first that $\bf \mathcal F_3$ is satisfied}, that is, $d\geq 3$. We use the greedy values for $a'$, $b'$, $c'$, and $d'$. We prove that $\min\{d,t-4a'-3b'-2c'\}=t-4a'-3b'-2c'$, which implies that $4a'+3b'+2c'+d'=t$. Suppose, for a contradiction, that $d<t-4a'-3b'-2c'$. Since $d\geq 3$, we have $4a'+3b'+2c'\leq t-4$. From this inequality, it follows that (A) $a'=\min\{a,\lfloor \frac{t}{4}\rfloor\}=a$, (B) $b'=\min\{b,\lfloor \frac{t-4a'}{3}\rfloor\}=b$, and (C) $c'=\min\{c,\lfloor \frac{t-4a'-3b'}{2}\rfloor\}=c$. Namely, $4a'+3b'+2c'\leq t-4$ implies $a' \leq \frac{t}{4} -1 <\lfloor \frac{t}{4}\rfloor$, which implies (A). Analogously, $4a'+3b'+2c'\leq t-4$ implies $b' \leq \frac{t-4a'}{3}-\frac{4}{3}<\lfloor \frac{t-4a'}{3}\rfloor$, which implies (B), and it implies $c' \leq \frac{t-4a'-3b'}{2}-2<\lfloor \frac{t-4a'-3b'}{2}\rfloor$, which implies (C). Finally, by the assumption $d<t-4a'-3b'-2c'$, (A), (B), and (C) imply that $4a+3b+2c+d<t$, which contradicts the assumption $4a+3b+2c+d>t+\constant$.
	
	
	{\bf Suppose next that $\bf \mathcal F_2$ is satisfied}, that is, $d=2$ and (at least) one of $\mathcal F_{2,0}$, $\mathcal F_{2,1}$, and $\mathcal F_{2,2}$ holds true. We use the greedy values for $a'$, $b'$, $c'$, and $d'$. We prove that $\min\{d,t-4a'-3b'-2c'\}=t-4a'-3b'-2c'$. This implies that $4a'+3b'+2c'+d'=t$. Suppose, for a contradiction, that $d<t-4a'-3b'-2c'$. Since $d=2$, we have $4a'+3b'+2c'\leq t-3$. From this inequality, it follows that (B) $b'=\min\{b,\lfloor \frac{t-4a'}{3}\rfloor\}=b$ and (C) $c'=\min\{c,\lfloor \frac{t-4a'-3b'}{2}\rfloor\}=c$. Namely, the inequality $4a'+3b'+2c'\leq t-3$ implies $b' \leq \frac{t-4a'}{3}-1<\lfloor \frac{t-4a'}{3}\rfloor$, which implies (B), and it implies $c' \leq \frac{t-4a'-3b'}{2}-\frac{3}{2}<\lfloor \frac{t-4a'-3b'}{2}\rfloor$, which implies (C). Now, if $a'=\min\{a,\lfloor \frac{t}{4}\rfloor\}=a$, then the assumption $d<t-4a'-3b'-2c'$, together with (B) and (C), implies that $4a+3b+2c+d<t$, which contradicts the assumption $4a+3b+2c+d>t+\constant$. We can hence assume that $a'=\min\{a,\lfloor \frac{t}{4}\rfloor\}=\lfloor \frac{t}{4}\rfloor$, which implies that $t-4a' \in\{0,1,2,3\}$. In order to reach a contradiction, we distinguish three cases.
	\begin{itemize}
		\item[($\mathcal F_{2,0}$):] Suppose first that $c>0$. Since $t-4a' \in\{0,1,2,3\}$ and $c>0$, it follows that $t-4a'-3b'-2c' \in \{0,1\}$ (as if $t-4a'-3b'\in \{2,3\}$, then $c'=1$). Since $d=2$, this contradicts $d<t-4a'-3b'-2c'$.
		\item[($\mathcal F_{2,1}$):] Suppose next that $b>0$. Since $t-4a' \in\{0,1,2,3\}$ and $b>0$, it follows that $t-4a'-3b'-2c' \in \{0,1,2\}$ (as if $t-4a'=3$, then $b'=1$). Since $d=2$, this contradicts $d<t-4a'-3b'-2c'$.
		\item[($\mathcal F_{2,2}$):] Suppose finally that $t \not \equiv 3 \mod 4$. By the previous two cases, we can also assume that $b=c=0$. Hence, $t-4a'-3b'-2c'=t-4a'$. The value $t-4a'$ can not be $3$, given that  $t \not \equiv 3 \mod 4$. Hence, $t-4a'\in \{0,1,2\}$. Since $d=2$, this contradicts $d<t-4a'-3b'-2c'$.
	\end{itemize}
	
	
	{\bf Suppose next that $\bf \mathcal F_1$ is satisfied}, that is, $(d=1)$ and (at least) one of $\mathcal F_{1,0}$--$\mathcal F_{1,5}$ holds true. We accordingly distinguish six cases. 
	
	In cases ($\mathcal F_{1,0}$)--($\mathcal F_{1,3}$), we use the greedy values for $a'$, $b'$, $c'$, and $d'$, and we prove that $\min\{d,t-4a'-3b'-2c'\}=t-4a'-3b'-2c'$, which implies that $4a'+3b'+2c'+d'=t$. Suppose, for a contradiction, that $d<t-4a'-3b'-2c'$. Since $d=1$, we have $4a'+3b'+2c'\leq t-2$. From this inequality, it follows that (C) $c'=\min\{c,\lfloor \frac{t-4a'-3b'}{2}\rfloor\}=c$. Namely, the inequality $4a'+3b'+2c'\leq t-2$ implies $c' \leq \frac{t-4a'-3b'}{2}-1<\lfloor \frac{t-4a'-3b'}{2}\rfloor$, which implies (C).
	
	\begin{itemize}
		\item[($\mathcal F_{1,0}$):] If $c>0$, then we further distinguish three cases.
		
		\begin{itemize}
			\item If $b'=\min\{b,\lfloor \frac{t-4a'}{3}\rfloor\}=\lfloor \frac{t-4a'}{3}\rfloor$, then $t-4a'-3b' \in \{0,1,2\}$, hence $t-4a'-3b'-2c' \in \{0,1\}$ (as if $t-4a'-3b'=2$, then $c'=1$). Since $d=1$, this contradicts $d<t-4a'-3b'-2c'$.
			\item If $a'=\min\{a,\lfloor \frac{t}{4}\rfloor\}=\lfloor \frac{t}{4}\rfloor$, then $t-4a' \in\{0,1,2,3\}$. It follows that $t-4a'-3b'-2c' \in \{0,1\}$ (as if $t-4a'-3b'\in \{2,3\}$, then $c'=1$). Since $d=1$, this contradicts $d<t-4a'-3b'-2c'$.
			\item Finally, if $a'=\min\{a,\lfloor \frac{t}{4}\rfloor\}=a$ and $b'=\min\{b,\lfloor \frac{t-4a'}{3}\rfloor\}=b$, then $d<t-4a'-3b'-2c'$, together with $a'=a$, $b'=b$, and $c'=c$, implies that $4a+3b+2c+d<t$, which contradicts the assumption $4a+3b+2c+d>t+\constant$.
		\end{itemize}
		In the upcoming cases ($\mathcal F_{1,1}$)--($\mathcal F_{1,5}$), we hence assume that $c=0$.

		\item[($\mathcal F_{1,1}$):] If $a=0$ and $t \not \equiv 2 \mod 3$, then we further distinguish two cases. 
		\begin{itemize}
			\item If $b'=\min\{b,\lfloor \frac{t-4a'}{3}\rfloor\}=b$, then $d<t-4a'-3b'-2c'$, together with $a'=a$, $b'=b$, and $c'=c$ (the first and third equalities come from $a=0$ and $c=0$, respectively), implies that $4a+3b+2c+d<t$, which contradicts the assumption $4a+3b+2c+d>t+\constant$. 
			\item Otherwise, we have $b'=\min\{b,\lfloor \frac{t-4a'}{3}\rfloor\}=\lfloor \frac{t-4a'}{3}\rfloor$, hence $t-4a'-3b' \in \{0,1,2\}$. Since $a=0$ and $c=0$, we have $t-4a'-3b'-2c'=t-3b'$. The value $t-3b'$ can not be $2$ given that $t \not \equiv 2 \mod 3$. Hence, $t-3b' \in \{0,1\}$. Since $d=1$, this contradicts $d<t-4a'-3b'-2c'$. 
		\end{itemize}
		
		\item[($\mathcal F_{1,2}$):] We have $b=0$, and $t \equiv 0 \mod 4$ or $t \equiv 1 \mod 4$. We first show that $\min\{a,\lfloor \frac{t}{4}\rfloor\}=\lfloor \frac{t}{4}\rfloor$, given that $b\leq 2$; hence this equality holds true also for the cases ($\mathcal F_{1,3}$) and ($\mathcal F_{1,4}$). Indeed, if $a<\lfloor \frac{t}{4}\rfloor$, then we have $4a+3b+2c+d<4\lfloor\frac{t}{4}\rfloor+6+0+1=4\lfloor\frac{t}{4}\rfloor+7$. However, this contradicts the assumption that $4a+3b+2c+d> t+\constant$. 
		
		Then we have $a'=\min\{a,\lfloor \frac{t}{4}\rfloor\}=\lfloor \frac{t}{4}\rfloor$, which implies that $t-4a' \in\{0,1,2,3\}$. Since $b=0$ and $c=0$, we have $t-4a'-3b'-2c'=t-4a'$. The value $t-4a'$ can not be $2$ or $3$ given that $t \not \equiv 2 \mod 4$ and $t \not \equiv 3 \mod 4$. Hence, $t-4a' \in \{0,1\}$. Since $d=1$, this contradicts $d<t-4a'-3b'-2c'$. 
		
		\item[($\mathcal F_{1,3}$):] If $b=1$ and $t \not \equiv 2 \mod 4$, then we have $a'=\min\{a,\lfloor \frac{t}{4}\rfloor\}=\lfloor \frac{t}{4}\rfloor$, which implies that $t-4a' \in\{0,1,2,3\}$. The value $t-4a'$ can not be $2$ given that $t \not \equiv 2 \mod 4$. If $t-4a' \in \{0,1\}$, then $b'=0$, hence $t-4a'-3b'-2c'=t-4a'\in \{0,1\}$. Further, if $t-4a'=3$, then $b'=1$, hence $t-4a'-3b'-2c'=t-4a'-3=0$. In both cases, since $d=1$, this contradicts $d<t-4a'-3b'-2c'$. 
		
		\item[($\mathcal F_{1,4}$):] If $b=2$, then we do not always use the greedy values for $a'$, $b'$, $c'$, and $d'$. Recall that $\min\{a,\lfloor \frac{t}{4}\rfloor\}=\lfloor \frac{t}{4}\rfloor$. We distinguish two cases, based on the value of $t-4\lfloor\frac{t}{4}\rfloor$.
		\begin{itemize}
			\item If $t-4\lfloor\frac{t}{4}\rfloor=2$, then we set $a' = \lfloor\frac{t}{4}\rfloor -1$, so that $t-4a'=6$; further, we set $b'=2$, and $c'=d'=0$. Thus, $4a'+3b'+2c'+d'=t$.
			\item If $t-4\lfloor\frac{t}{4}\rfloor\in \{0,1,3\}$, then we use the greedy values for $a'$, $b'$, $c'$, and $d'$. Thus, we have $a' = \lfloor\frac{t}{4}\rfloor$, and hence $t-4a'\in \{0,1,3\}$; further $b'=\min\{b,\lfloor\frac{t-4a'}{3}\rfloor\}$, which implies that the value of $t-4a'-3b'$ is either $0$ or $1$. In the former case we set $c'=d'=0$, while in the latter case we set $c'=0$ and $d'=1$. In both cases we get $4a'+3b'+2c'+d'=t$.
		\end{itemize}
		
		\item[($\mathcal F_{1,5}$):] If $a\geq 1$ and $b\geq 3$, then we distinguish three cases.
		\begin{itemize}
			\item If $t\equiv 0 \mod 3$, then we set $a'$ to the largest multiple of $3$ such that $4a'$ does not exceed $t$; that is, $a'=\max \{a'' : a''\leq \min \{a,\lfloor \frac{t}{4} \rfloor\} \wedge a''\equiv 0 \mod 3\}$; then $t-4a' \equiv 0 \mod 3$. By setting $b'=\frac{t-4a'}{3}$ and $c'=d'=0$, we get that $4a'+3b'+2c'+d'=t$. 
			
			It remains to prove that $b'\leq b$. Suppose, for a contradiction, that $b'>b$. If $\min \{a,\lfloor \frac{t}{4} \rfloor\}=a$, then we have $a'\geq a-2$. Further, we have $d'=d-1$. Hence $t=4a'+3b'+2c'+d'>4(a-2)+3b+2c+(d-1)$, that is, $4a+3b+2c+d<t+9$, which contradicts $4a+3b+2c+d>t+\constant$. Otherwise, we have $\min \{a,\lfloor \frac{t}{4} \rfloor\}=\lfloor \frac{t}{4} \rfloor$. Since $t\equiv 0 \mod 3$, we have either $t=12t'$, or $t=12t'+3$, or $t=12t'+6$, or $t=12t'+9$, for some non-negative integer $t'$. Hence, in all four cases we have $a'=3t'$, and thus $t-4a'$ is either $0$, or $3$, or $6$, or $9$. Since $b'=\frac{t-4a'}{3}$, we have $b'\leq 3\leq b$, a contradiction. 
			
			\item If $t\equiv 1 \mod 3$, then we let $a'$ be the largest integer equivalent to $1$ modulo $3$ and such that $4a'$ does not exceed $t$; that is, $a'=\max \{a'' : a''\leq \min \{a,\lfloor \frac{t}{4} \rfloor\} \wedge a''\equiv 1 \mod 3\}$. Note that $a'\geq 1$; indeed, $a\geq 1$ by assumption, and $\lfloor \frac{t}{4} \rfloor\geq 1$ given that $t>\constant$. Further, since $a'=3a^*+1$, for some integer $a^*$, we have $4a'=12a^*+4\equiv 1 \mod 3$. Hence, $t-4a' \equiv 0 \mod 3$. By setting $b'=\frac{t-4a'}{3}$ and $c'=d'=0$, we get that $4a'+3b'+2c'+d'=t$.
			
			It remains to prove that $b'\leq b$. Suppose, for a contradiction, that $b'>b$. If $\min \{a,\lfloor \frac{t}{4} \rfloor\}=a$, a contradiction is reached as in the case in which $t\equiv 0 \mod 3$. Otherwise, we have $\min \{a,\lfloor \frac{t}{4} \rfloor\}=\lfloor \frac{t}{4} \rfloor$. Since $t\equiv 1 \mod 3$ and $t>\constant$, we have either $t=12t'+4$, or $t=12t'+7$, or $t=12t'+10$, or $t=12t'+13$, for some non-negative integer $t'$. Hence, in all four cases we have $a'=3t'+1$, and thus $t-4a'$ is either $0$, or $3$, or $6$, or $9$. Since $b'=\frac{t-4a'}{3}$, we have $b'\leq 3\leq b$, a contradiction. 
			
			\item If $t\equiv 2 \mod 3$, then we let $a'$ be the largest integer equivalent to $1$ modulo $3$ and such that $4a'$ does not exceed $t$; that is, $a'=\max \{a'' : a''\leq \min \{a,\lfloor \frac{t}{4} \rfloor\} \wedge a''\equiv 1 \mod 3\}$. As in the case in which $t\equiv 1 \mod 3$, we have $a'\geq 1$ and $4a'\equiv 1 \mod 3$. Hence, $t-4a' \equiv 1 \mod 3$. By setting $b'=\frac{t-4a'-1}{3}$, $c'=0$, and $d'=1$, we get that $4a'+3b'+2c'+d'=t$.
			
			It remains to prove that $b'\leq b$. Suppose, for a contradiction, that $b'>b$. If $\min \{a,\lfloor \frac{t}{4} \rfloor\}=a$, then we have $a'\geq a-2$. Further, we have $d'=d$. Hence $t=4a'+3b'+2c'+d'>4(a-2)+3b+2c+d$, that is, $4a+3b+2c+d<t+8$, which contradicts $4a+3b+2c+d>t+\constant$. Otherwise, we have $\min \{a,\lfloor \frac{t}{4} \rfloor\}=\lfloor \frac{t}{4} \rfloor$. Since $t\equiv 2 \mod 3$ and $t>\constant$, we have either $t=12t'+5$, or $t=12t'+8$, or $t=12t'+11$, or $t=12t'+14$, for some non-negative integer $t'$. Hence, in all four cases we have $a'=3t'+1$, and thus $t-4a'-1$ is either $0$, or $3$, or $6$, or $9$. Since $b'=\frac{t-4a'-1}{3}$, we have $b'\leq 3\leq b$, a contradiction. 
		\end{itemize} 
	\end{itemize}
	
	
	{\bf Suppose finally that $\bf \mathcal F_0$ is satisfied}, that is, $(d=0)$ and (at least) one of $\mathcal F_{0,0}$--$\mathcal F_{0,9}$ holds true. We accordingly distinguish ten cases. 
	
	\begin{itemize}
		\item[($\mathcal F_{0,0}$):] If $c=0$, $b=0$, and $t \equiv 0 \mod 4$, then we use the greedy values for $a'$, $b'$, $c'$, and $d'$. Since $t \equiv 0 \mod 4$, we have $\lfloor \frac{t}{4}\rfloor =\frac{t}{4}$; further, $\frac{t}{4}<a$, as otherwise $4a+3b+2c+d=4a\leq t$, while $4a+3b+2c+d>t+\constant$. Hence, we have $a'=\frac{t}{4}$, $b'=c'=d'=0$, and $4a'+3b'+2c'+d'= t$.
		\item[($\mathcal F_{0,1}$):] Suppose that $c=0$, $b=1$, and $t \equiv 0 \mod 4$ or $t \equiv 3 \mod 4$. Then $\min\{a,\lfloor \frac{t}{4}\rfloor\}=\lfloor \frac{t}{4}\rfloor$, as if $a<\lfloor \frac{t}{4}\rfloor\leq \frac{t}{4}$, then $4a+3b+2c+d=4a+3\leq t+3$, while $4a+3b+2c+d>t+\constant$. We use the greedy values for $a'$, $b'$, $c'$, and $d'$. 
		If $t \equiv 0 \mod 4$, then $t-4\lfloor \frac{t}{4}\rfloor=t-4\frac{t}{4}=0$, hence the greedy values $a'=\frac{t}{4}$, $b'=0$, $c'=0$, and $d'=0$ satisfy $4a'+3b'+2c'+d'= t$. If $t \equiv 3 \mod 4$, then $t-4\lfloor \frac{t}{4}\rfloor=3$, hence the greedy values $a'=\lfloor \frac{t}{4}\rfloor$, $b'=1$, $c'=0$, and $d'=0$ satisfy $4a'+3b'+2c'+d'= t$.
		\item[($\mathcal F_{0,2}$):] Suppose that $c=0$, $b=2$, and $t \equiv 0 \mod 4$, or $t \equiv 2 \mod 4$, or $t \equiv 3 \mod 4$. Then $\min\{a,\lfloor \frac{t}{4}\rfloor\}=\lfloor \frac{t}{4}\rfloor$, as if $a<\lfloor \frac{t}{4}\rfloor\leq \frac{t}{4}$, then $4a+3b+2c+d=4a+6\leq t+6$, while $4a+3b+2c+d>t+\constant$. If $t \equiv 0 \mod 4$ or $t \equiv 3 \mod 4$, then we use the greedy values for $a'$, $b'$, $c'$, and $d'$; as in the case ($\mathcal F_{0,1}$), such values satisfy $4a'+3b'+2c'+d'= t$. If $t \equiv 2 \mod 4$, then $t-4\lfloor \frac{t}{4}\rfloor=2$ and thus $t-4(\lfloor \frac{t}{4}\rfloor-1)=6$, hence the values $a'=\lfloor \frac{t}{4}\rfloor-1$, $b'=2$, $c'=0$, and $d'=0$ satisfy $4a'+3b'+2c'+d'= t$; note that $a'=\lfloor \frac{t}{4}\rfloor-1>0$, given that $t>\constant$.
		\item[($\mathcal F_{0,3}$):] If $c=0$, $a=0$, and $t \equiv 0 \mod 3$, then we use the greedy values for $a'$, $b'$, $c'$, and $d'$. Since $t \equiv 0 \mod 3$, we have $\lfloor \frac{t}{3}\rfloor =\frac{t}{3}$; further, $\frac{t}{3}<b$, as otherwise $4a+3b+2c+d=3b\leq t$, while $4a+3b+2c+d>t+\constant$. Hence, we have $b'=\frac{t}{3}$, $a'=c'=d'=0$, and $4a'+3b'+2c'+d'= t$.
		\item[($\mathcal F_{0,4}$):] Suppose that $c=0$, $a=1$, and $t \equiv 0 \mod 3$ or $t \equiv 1 \mod 3$. If $t \equiv 0 \mod 3$, then we set $a'=c'=d'=0$, and $b'=\frac{t}{3}$, hence $4a'+3b'+2c'+d'= t$. Note that $b'\leq b$, as if $b'>b$, then  $4a+3b+2c+d<4+3b'=t+4$, while $4a+3b+2c+d>t+\constant$. If $t \equiv 1 \mod 3$, then we use the greedy values for $a'$, $b'$, $c'$, and $d'$. Since $t>\constant$, we have $a'=\min\{a,\lfloor \frac{t}{4}\rfloor\}=1$. Further, $\min \{b,\lfloor \frac{t-4a'}{3}\rfloor\}=\lfloor \frac{t-4a'}{3}\rfloor$, as if $b<\lfloor \frac{t-4a'}{3}\rfloor$, then $4a+3b+2c+d<4+(t-4a')=t$, while $4a+3b+2c+d>t+\constant$. Since $t \equiv 1 \mod 3$, we have $t-4 \equiv 0 \mod 3$. Hence, $b'=\lfloor \frac{t-4a'}{3}\rfloor=\frac{t-4a'}{3}$, $c'=0$, $d'=0$, and $4a'+3b'+2c'+d'= t$.
		\item[($\mathcal F_{0,5}$):] Suppose that $c=0$, $b\geq 3$, and $a\geq 2$. We distinguish three cases, according to whether $t \equiv 0 \mod 3$, $t \equiv 1 \mod 3$, and $t \equiv 2 \mod 3$, similarly to case~($\mathcal F_{1,5}$).
		
		\begin{itemize}
			\item If $t\equiv 0 \mod 3$, then we set $a'=\max \{a'' : a''\leq \min \{a,\lfloor \frac{t}{4} \rfloor\} \wedge a''\equiv 0 \mod 3\}$; then $t-4a' \equiv 0 \mod 3$. By setting $b'=\frac{t-4a'}{3}$ and $c'=d'=0$, we get that $4a'+3b'+2c'+d'=t$. 
			
			It remains to prove that $b'\leq b$. Suppose, for a contradiction, that $b'>b$. If $\min \{a,\lfloor \frac{t}{4} \rfloor\}=a$, then we have $a'\geq a-2$. Further, we have $c'=c=0$ and $d'=d=0$. Hence $t=4a'+3b'+2c'+d'>4(a-2)+3b+2c+d$, that is, $4a+3b+2c+d<t+8$, which contradicts $4a+3b+2c+d>t+\constant$. Otherwise, we have $\min \{a,\lfloor \frac{t}{4} \rfloor\}=\lfloor \frac{t}{4} \rfloor$. Since $t\equiv 0 \mod 3$, we have either $t=12t'$, or $t=12t'+3$, or $t=12t'+6$, or $t=12t'+9$, for some non-negative integer $t'$. Hence, in all four cases we have $a'=3t'$, and thus $t-4a'$ is either $0$, or $3$, or $6$, or $9$. Since $b'=\frac{t-4a'}{3}$, we have $b'\leq 3\leq b$, a contradiction. 
			
			\item If $t\equiv 1 \mod 3$, then we set $a'=\max \{a'' : a''\leq \min \{a,\lfloor \frac{t}{4} \rfloor\} \wedge a''\equiv 1 \mod 3\}$. Note that $a'\geq 1$; indeed, $a\geq 1$ by assumption, and $\lfloor \frac{t}{4} \rfloor\geq 1$ given that $t>\constant$. Further, since $a'=3a^*+1$, for some integer $a^*$, we have $4a'=12a^*+4\equiv 1 \mod 3$. Hence, $t-4a' \equiv 0 \mod 3$. By setting $b'=\frac{t-4a'}{3}$ and $c'=d'=0$, we get that $4a'+3b'+2c'+d'=t$.
			
			It remains to prove that $b'\leq b$. Suppose, for a contradiction, that $b'>b$. If $\min \{a,\lfloor \frac{t}{4} \rfloor\}=a$, a contradiction is reached as in the case in which $t\equiv 0 \mod 3$. Otherwise, we have $\min \{a,\lfloor \frac{t}{4} \rfloor\}=\lfloor \frac{t}{4} \rfloor$. Since $t\equiv 1 \mod 3$ and $t>\constant$, we have either $t=12t'+4$, or $t=12t'+7$, or $t=12t'+10$, or $t=12t'+13$, for some non-negative integer $t'$. Hence, in all four cases we have $a'=3t'+1$, and thus $t-4a'$ is either $0$, or $3$, or $6$, or $9$. Since $b'=\frac{t-4a'}{3}$, we have $b'\leq 3\leq b$, a contradiction. 
			
			\item If $t\equiv 2 \mod 3$, then we set $a'=\max \{a'' : a''\leq \min \{a,\lfloor \frac{t}{4} \rfloor\} \wedge a''\equiv 2 \mod 3\}$. Note that $a'\geq 2$; indeed, $a\geq 2$ by assumption, and $\lfloor \frac{t}{4} \rfloor\geq 2$ given that $t>\constant$. Further, since $a'=3a^*+2$, for some integer $a^*$, we have $4a'=12a^*+8\equiv 2 \mod 3$. Hence, $t-4a' \equiv 0 \mod 3$. By setting $b'=\frac{t-4a'}{3}$ and $c'=d'=0$, we get that $4a'+3b'+2c'+d'=t$.
			
			It remains to prove that $b'\leq b$. Suppose, for a contradiction, that $b'>b$. If $\min \{a,\lfloor \frac{t}{4} \rfloor\}=a$, a contradiction is reached as in the case in which $t\equiv 0 \mod 3$. Otherwise, we have $\min \{a,\lfloor \frac{t}{4} \rfloor\}=\lfloor \frac{t}{4} \rfloor$. Since $t\equiv 2 \mod 3$ and $t>\constant$, we have either $t=12t'+8$, or $t=12t'+11$, or $t=12t'+14$, or $t=12t'+17$, for some non-negative integer $t'$. Hence, in all four cases we have $a'=3t'+2$, and thus $t-4a'$ is either $0$, or $3$, or $6$, or $9$. Since $b'=\frac{t-4a'}{3}$, we have $b'\leq 3\leq b$, a contradiction. 
		\end{itemize} 
		\item[($\mathcal F_{0,6}$):] If $b=0$, $c\geq 1$, and $t \equiv 0 \mod 2$, then we use the greedy values for $a'$, $b'$, $c'$, and $d'$. We prove that $c'=\min\{c,\lfloor \frac{t-4a'-3b'}{2} \rfloor\}=\lfloor \frac{t-4a'-3b'}{2}\rfloor=\frac{t-4a'}{2}$, where we exploited $b'=0$ and $t \equiv 0 \mod 2$. Note that $c'=\frac{t-4a'}{2}$ implies that $4a'+3b'+2c'+d'=t$. Suppose, for a contradiction, that $c<\frac{t-4a'}{2}$.
		
		If $a'=\min \{a,\lfloor \frac{t}{4} \rfloor\}=a$, then we get $4a+3b+2c+d< 4a' + 2 \frac{t-4a'}{2} = t$, which contradicts $4a+3b+2c+d> t+\constant$. 	If $a'=\min \{a,\lfloor \frac{t}{4} \rfloor\}=\lfloor \frac{t}{4} \rfloor$, then we get $t-4a'\in\{0,2\}$, where we exploited $t \equiv 0 \mod 2$, hence $c<\frac{t-4a'}{2}\leq 1$, a contradiction to $c\geq 1$.
		\item[($\mathcal F_{0,7}$):] If $c=1$, $a=0$, and $t \equiv 0 \mod 3$ or $t \equiv 2 \mod 3$, then we use the greedy values for $a'$, $b'$, $c'$, and $d'$. Since $a=0$, we have $a'=0$. Further, $b'=\min\{b,\lfloor \frac{t}{3}\rfloor\}=\lfloor \frac{t}{3}\rfloor$, as if $b<\lfloor \frac{t}{3}\rfloor$, then $4a+3b+2c+d<3\lfloor \frac{t}{3}\rfloor+2\leq t+2$, while $4a+3b+2c+d>t+\constant$. If $t \equiv 0 \mod 3$, then $t-3b'=t-3\lfloor \frac{t}{3}\rfloor=0$, hence $c'=d'=0$ and $4a'+3b'+2c'+d'= t$. If $t \equiv 2 \mod 3$, then  $t-3b'=t-3\lfloor \frac{t}{3}\rfloor=2$. Hence, $c'=1$, $d'=0$, and $4a'+3b'+2c'+d'= t$. 
		
		\item[($\mathcal F_{0,8}$):] Suppose that $a\geq 1$, that $b\geq 1$, and that $c\geq 1$. We first choose $b'$ so that $t-3b'$ is even. Namely, if $t\equiv 0 \mod 2$, then we set $b'=\max \{b'' : b''\leq \min \{b,\lfloor \frac{t}{3} \rfloor\} \wedge b''\equiv 0 \mod 2\}$, while if $t\equiv 1 \mod 2$, then we set $b'=\max \{b'' : b''\leq \min \{b,\lfloor \frac{t}{3} \rfloor\} \wedge b''\equiv 1 \mod 2\}$. In both cases we have $b'\geq 0$, given that $b\geq 1$ and $\lfloor \frac{t}{3} \rfloor\geq 1$ (since $t>\constant$). Further, $t-3b' \equiv 0 \mod 2$.	
		
		If $\min \{b,\lfloor \frac{t}{3} \rfloor\}=\lfloor \frac{t}{3} \rfloor$, then $t-3b'\in \{0,2,4\}$. Namely, if $t\equiv 0 \mod 2$, then we have $t=6t'$, or $t=6t'+2$, or $t=6t'+4$, for some non-negative integer $t'$; in all three cases, we have $b'=2t'$ and thus $t-3b'$ is either $0$, or $2$, or $4$. Analogously, if $t\equiv 1 \mod 2$, then we have $t=6t'+3$, or $t=6t'+5$, or $t=6t'+7$, for some non-negative integer $t'$; in all three cases, we have $b'=2t'+1$ and thus $t-3b'$ is either $0$, or $2$, or $4$. Now, if $t-3b'$ is $0$, or $2$, or $4$, it suffices to choose $a'=c'=d'=0$, or $a'=d'=0$ and $c'=1$, or $c'=d'=0$ and $a'=1$, respectively, in order to get $4a'+3b'+2c'+d'= t$. Note that $a'\leq 1\leq a$ and $c'\leq 1\leq c$ in all three cases.
		
		Assume next that $b<\lfloor \frac{t}{3} \rfloor$, which implies that $b\leq b'+1$. We now set $a'=\min\{a,\lfloor\frac{t-3b'}{4}\rfloor\}$. If $a'=\min\{a,\lfloor\frac{t-3b'}{4}\rfloor\}=\lfloor\frac{t-3b'}{4}\rfloor$, then $t-3b'-4a' \in \{0,2\}$, given that $t-3b' \equiv 0 \mod 2$. Hence, if $t-3b'-4a'=0$, we set $c'=d'=0$ and get that $4a'+3b'+2c'+d'= t$, while if  $t-3b'-4a'=2$, we set $c'=1$, $d'=0$, and get that $4a'+3b'+2c'+d'= t$. In both cases, we have $c'\leq 1\leq c$.  
		
		Assume next that $a<\lfloor\frac{t-3b'}{4}\rfloor$, which implies that $a=a'$. We set $c'=\frac{t-3b'-4a'}{2}$, which implies that $4a'+3b'+2c'+d'= t$. It remains to prove that $c'\leq c$. Suppose, for a contradiction, that $c<\frac{t-3b'-4a'}{2}$. Thus, we get $4a+3b+2c+d<4a'+3(b'+1)+2(\frac{t-3b'-4a'}{2})=t+3$, which contradicts  $4a+3b+2c+d>t+\constant$.
		
		\item[($\mathcal F_{0,9}$):] Suppose that $c\geq 2$ and $b\geq 1$. We can also assume $a=0$, as otherwise Case~($\mathcal F_{0,8}$) applies. We first choose $b'$ as in Case~($\mathcal F_{0,8}$), ensuring that $t-3b' \equiv 0 \mod 2$. 
		
		If $\min \{b,\lfloor \frac{t}{3} \rfloor\}=\lfloor \frac{t}{3} \rfloor$, then $t-3b'\in \{0,2,4\}$; this can be proved as in Case~($\mathcal F_{0,8}$). Now, if $t-3b'$ is $0$, or $2$, or $4$, it suffices to choose $c'=0$, $c'=1$, or $c'=2$, respectively, together with $a'=d'=0$, in order to get $4a'+3b'+2c'+d'= t$. Note that $c'\leq 2\leq c$.
		
		If $b<\lfloor \frac{t}{3} \rfloor$, we can set $a'=d'=0$ and $c'=\frac{t-3b'}{2}$, which implies that $4a'+3b'+2c'+d'= t$. The proof that $c'\leq c$ is the same as in Case~($\mathcal F_{0,8}$). 
	\end{itemize}
	
	This concludes the proof of the sufficiency. Since $\mathcal F$ has $O(1)$ size, it can be tested in $O(1)$ time whether given values $a$, $b$, $c$, $d$, and $t$ satisfy it. Further, the above proof of sufficiency is constructive and allows one to find values $a'$, $b'$, $c'$, and $d'$ satisfying the requirements of the lemma in $O(1)$ time. 
\end{proof}

Lemmata~\ref{le:extensible-values}--\ref{le:medium-t} imply the following.  

\begin{lemma} \label{le:find-values-variable}
	Let $\mu_1,\nu_1,\dots,\mu_k,\nu_k,\rho_0,\rho_k$ be a promising sequence for $(G,\mu,\nu)$. Further, let $a$ and $b$ denote the number of $4$- and $3$-components of $G$, respectively, let $c$ denote the number of $2$-components of $G$ plus the number of vertices $u_i \in \chi$ with $i\in \{1,\dots,k-1\}$, let $d=\sum_{i=1}^{k-1} (2-\frac{\nu_i+\mu_{i+1}}{90\degree})$, and let $t=(k-1)-\frac{\rho_0+\rho_k}{90\degree}$. Assume that the values $\mu_1,\nu_1,\dots,\mu_k,\nu_k,\rho_0,\rho_k,a,b,c,d,t$ are known.
	
	Then it is possible to determine in $O(1)$ time whether the promising sequence for $(G,\mu,\nu)$ is extensible. In the positive case, it is possible to determine in $O(k)$ time an in-out assignment $\mathcal A$ and values $\rho_1,\dots,\rho_{k-1}$ that, together with the promising sequence $\mu_1,\nu_1,\dots,\mu_k,\nu_k,\rho_0,\rho_k$, satisfy Properties~$\mathcal V_1$--$\mathcal V_5$ of Lemma~\ref{le:structural-variable-embedding}. 
\end{lemma}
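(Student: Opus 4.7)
The plan is to combine the three preceding results—Lemma~\ref{le:extensible-values}, Lemma~\ref{le:small-large-t}, and Lemma~\ref{le:medium-t}—into a single two-stage procedure: a decision stage that runs in $O(1)$ time and, in the positive case, a construction stage that runs in $O(k)$ time.

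For the decision stage, I would first invoke Lemma~\ref{le:extensible-values}, which reduces the question of extensibility to a purely numerical one: the promising sequence $\mu_1,\nu_1,\dots,\mu_k,\nu_k,\rho_0,\rho_k$ is extensible if and only if there exist integers $0\leq a'\leq a$, $0\leq b'\leq b$, $0\leq c'\leq c$, and $0\leq d'\leq d$ satisfying $4a'+3b'+2c'+d'=t$. To answer this in constant time, I would then branch on the magnitude of $t$ relative to $a,b,c,d$: if $t\leq \constant$ or $4a+3b+2c+d\leq t+\constant$, then Lemma~\ref{le:small-large-t} applies and provides both an $O(1)$ test and, if the answer is affirmative, explicit values $a',b',c',d'$ in $O(1)$ time; otherwise we are in the regime $t>\constant$ and $4a+3b+2c+d>t+\constant$, so Lemma~\ref{le:medium-t} applies and gives an $O(1)$ evaluation of the characterizing formula $\mathcal F$ together with an $O(1)$-time construction of $a',b',c',d'$ when $\mathcal F$ is satisfied. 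Since the hypotheses of Lemmata~\ref{le:small-large-t} and~\ref{le:medium-t} are complementary and jointly exhaustive, exactly one branch applies, and the total decision cost is $O(1)$.

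For the construction stage, once a quadruple $(a',b',c',d')$ satisfying $4a'+3b'+2c'+d'=t$ has been produced, I would feed it into the constructive part of Lemma~\ref{le:extensible-values}, which explicitly builds an in-out assignment $\mathcal A$ and values $\rho_1,\dots,\rho_{k-1}$ satisfying Properties $\mathcal V_1$--$\mathcal V_5$. Inspecting that construction, the work consists of: distributing $a'$, $b'$, $c'_1$ among the non-trivial $uv$-subgraphs (placing them inside or outside $\mathcal C^*_{uv}$), selecting $c'_2$ vertices of $\chi$ to receive $\rho_i=270\degree$, and filling in the remaining $\rho_i$'s through a greedy linear sweep over the indices $i\in\{1,\dots,k-1\}$ with $u_i\notin\chi$. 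Each step touches each index $O(1)$ times, yielding the claimed $O(k)$ bound.

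The only point that requires care is verifying that the $O(1)$ decision bound is genuine: nothing in the statement allows us to re-read $\mu_1,\nu_1,\dots,\mu_k,\nu_k$, which are $\Theta(k)$ many. This is why the lemma explicitly assumes the aggregate quantities $a,b,c,d,t$ are already known—these are the only aggregates referenced in Lemmata~\ref{le:small-large-t} and~\ref{le:medium-t}, so the $O(1)$ decision never needs to scan the individual pair values. The $\Theta(k)$-sized data is only consulted during the $O(k)$ construction stage, where the linear pass through the indices is unavoidable. With this observation in place, the combination of the three lemmata yields exactly the bounds stated.
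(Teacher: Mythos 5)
Your proposal is correct and follows essentially the same route as the paper's own proof: reduce extensibility to the numerical problem via Lemma~\ref{le:extensible-values}, dispatch the $O(1)$ decision and the $O(1)$ computation of $(a',b',c',d')$ to Lemmata~\ref{le:small-large-t} and~\ref{le:medium-t} according to whether $t$ falls in the small/large or intermediate regime, and then invoke the constructive part of Lemma~\ref{le:extensible-values} for the $O(k)$ construction. Your closing observation that the $O(1)$ decision relies only on the precomputed aggregates $a,b,c,d,t$ is a useful clarification that the paper leaves implicit.
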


\begin{proof}
	By Lemma~\ref{le:extensible-values}, the promising sequence for $(G,\mu,\nu)$ is extensible if and only if there exist integer values $0\leq a'\leq a$, $0\leq b'\leq b$, $0\leq c'\leq c$, and $0\leq d'\leq d$ such that $4a'+3b'+2c'+d'=t$. By Lemmata~\ref{le:small-large-t} and~\ref{le:medium-t}, it is possible to test in $O(1)$ time whether such values $a'$, $b'$, $c'$, and $d'$ exist. Again by Lemmata~\ref{le:small-large-t} and~\ref{le:medium-t}, if such values exist, then they can be determined in $O(1)$ time; then, by Lemma~\ref{le:extensible-values}, it is possible to determine in $O(k)$ time an in-out assignment $\mathcal A$ and values $\rho_1,\dots,\rho_{k-1}$ that, together with the promising sequence $\mu_1,\nu_1,\dots,\mu_k,\nu_k,\rho_0,\rho_k$ for $(G,\mu,\nu)$, satisfy Properties~$\mathcal V_1$--$\mathcal V_5$ of Lemma~\ref{le:structural-variable-embedding}. 
\end{proof}

We now have the following.

\begin{theorem} \label{th:2-con-variable-edge}
	Let $G$ be an $n$-vertex $2$-connected outerplanar graph, let $uv$ be an edge incident to the outer face $f^*_{\mathcal O}$ of the outerplane embedding of $G$, and let $\chi$ be a subset of the degree-$2$ vertices of $G$. There is an $O(n)$-time algorithm which tests, for any values $\mu,\nu \in \{90\degree,180\degree,270\degree\}$, whether $G$ admits a $(\chi,\mu,\nu)$-representation; further, in the positive case, the algorithm constructs such a representation in $O(n)$ time.
\end{theorem}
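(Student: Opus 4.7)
The plan is to mirror the bottom-up structure used in the proof of Theorem~\ref{th:2-con-lowers}: compute the outerplane embedding $\mathcal O$ of $G$ and its extended dual tree $\mathcal T$, root $\mathcal T$ at the unique leaf whose incident edge is dual to $uv$, and process $\mathcal T$ bottom-up. For each internal node $s$ of $\mathcal T$ (corresponding to a simple cycle $\mathcal C_s$ of $\mathcal O$ with $k_s+1$ vertices), I record, for each of the nine pairs $(\mu,\nu)\in\{90\degree,180\degree,270\degree\}^2$, a single bit indicating whether the outerplanar subgraph $G_s$ rooted at $s$ (together with the restriction $\chi_s$ of $\chi$ to its vertex set) admits a $(\chi_s,\mu,\nu)$-representation. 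The information stored at the children of $s$ is exactly what Lemma~\ref{le:fix-values} needs to operate; leaves carry the trivial $uv$-subgraph labels.

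When processing $s$, I first walk along $\mathcal C_s$ and apply Lemma~\ref{le:fix-values} at each interior index $i\in\{2,\dots,k_s-1\}$, either detecting that $G_s$ admits no $(\chi_s,\mu,\nu)$-representation for any $(\mu,\nu)$ or fixing the optimal pair $(\mu_i^*,\nu_i^*)$; this costs $O(k_s)$ time. On the same walk I accumulate, once, the ``interior'' contributions of indices $2,\dots,k_s-1$ to the integers $a,b,c,d$ of Lemma~\ref{le:find-values-variable} (counts of $4$-, $3$-, $2$-components among the $u_sv_s$-subgraphs $G_{s_i}$, count of vertices $u_i\in\chi$, and the sum $\sum_{u_i\notin\chi}(2-(\nu_i^*+\mu_{i+1}^*)/90\degree)$). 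Then, for each of the nine target pairs $(\mu,\nu)$, Lemma~\ref{le:promising-sequences} produces in $O(1)$ time the constant-size family of promising sequences for $(G_s,\mu,\nu)$. Each promising sequence fixes the four boundary quantities $\mu_1,\nu_1,\mu_{k_s},\nu_{k_s}$ and the two angles $\rho_0,\rho_{k_s}$; their additive contributions to $a,b,c,d$ and the value $t=(k_s-1)-(\rho_0+\rho_{k_s})/90\degree$ are computed in $O(1)$ time from the precomputed interior totals. Lemma~\ref{le:find-values-variable} then decides extensibility in $O(1)$ time, and by Lemma~\ref{le:extensible-solution} the logical OR over promising sequences gives the bit for $(\mu,\nu)$. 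The work at $s$ is $O(k_s)$ and, since each edge of $G$ is incident to at most one internal cycle of $\mathcal O$, $\sum_s k_s=O(n)$, so the bottom-up pass runs in $O(n)$ total time.

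After the root $r$ is processed, the answer for the input pair $(\mu,\nu)$ is read from its stored bit. To build the representation in the positive case, I perform a top-down traversal carrying, into each visited node $s$, the pair $(\mu_s,\nu_s)$ currently demanded of $G_s$. At $s$ I recompute (in $O(k_s)$ time) the optimal interior pairs, re-enumerate the promising sequences for $(G_s,\mu_s,\nu_s)$, and pick one that Lemma~\ref{le:find-values-variable} certifies as extensible; the same lemma also hands me, in $O(k_s)$ time, an in-out assignment $\mathcal A_s$ for the non-trivial $u_sv_s$-subgraphs and values $\rho_1,\dots,\rho_{k_s-1}$ completing the sequence. Lemma~\ref{le:structural-variable-embedding} then prescribes the plane embedding of $\mathcal C_s$ inside $\mathcal E$ and the angles at $u_0,\dots,u_{k_s}$ in the two faces of $\mathcal E$ flanking $\mathcal C_s$, and each non-trivial child $s_i$ is recursively asked for a $(\chi_{s_i},\mu_i,\nu_i)$-representation (trivial children need no recursion). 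The collected embedding and angle data form a $(\chi,\mu,\nu)$-representation of $G$, from which an actual planar rectilinear drawing is produced in $O(n)$ time by Tamassia's algorithm~\cite{t-eggmnb-87}.

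The principal obstacle is the amortization: naively, enumerating nine pairs $(\mu,\nu)$ and several promising sequences per pair at every node of $\mathcal T$ would multiply the per-node work by a constant and, more dangerously, tempt one to recompute $a,b,c,d$ per promising sequence at cost $O(k_s)$, blowing the budget. This is defused by separating the ``interior'' bookkeeping, done once per node in $O(k_s)$ time using the optimal pairs of Lemma~\ref{le:fix-values}, from the constant-time ``boundary'' corrections that each promising sequence induces, so that after the initial walk every remaining test and, crucially, the reconstruction of $\mathcal A_s$ and $\rho_1,\dots,\rho_{k_s-1}$ via Lemma~\ref{le:find-values-variable} respects the $O(k_s)$ bound.
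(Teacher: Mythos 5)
Your proposal is correct and follows essentially the same route as the paper's proof: a bottom-up visit of the extended dual tree rooted at the leaf dual to $uv$, applying Lemma~\ref{le:fix-values} to fix optimal interior pairs, Lemmata~\ref{le:promising-sequences} and~\ref{le:find-values-variable} to test the $O(1)$ promising sequences per target pair, and a top-down pass to realize the embedding and angles. The only cosmetic differences are that the paper stores the sequences and in-out assignments during the bottom-up visit rather than recomputing them top-down, and computes $a,b,c,d,t$ afresh in $O(k_s)$ time per promising sequence rather than via your interior-totals-plus-boundary-corrections bookkeeping (which the paper only deploys later, in Section~\ref{sse:var-2con}, where it is actually needed); both choices keep the total cost at $O(n)$.
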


\begin{proof}
	First, we mark the vertices in $\chi$, so that given a vertex of $G$, we can check in $O(1)$ time whether it is in $\chi$ or not. Next, we compute the outerplane embedding $\mathcal O$ of $G$ in $O(n)$ time~\cite{cnao-lta-85,d-iroga-07,ht-ept-74,m-laarogmog-79,w-rolt-87}. We now have, for each vertex $v$ of $G$, a circular list $L_{\mathcal O}(v)$, which represents the clockwise order of the edges incident to $v$ in $\mathcal O$. We also compute, for each face $f$ of $\mathcal O$, the clockwise order of the edges along the boundary of $f$; these sets can be recovered from the lists $L_{\mathcal O}(v)$ in $O(n)$ time. We next compute in $O(n)$ time the extended dual tree $\mathcal T$ of~$\mathcal O$. Each internal node $s\in \mathcal T$ is associated with the cycle $\mathcal C_s$ delimiting the internal face of $\mathcal O$ corresponding to $s$. We root $G$ at $uv$ and we root $\mathcal T$ at the leaf $r^*$ such that the edge of $\mathcal T$ incident to $r^*$ is dual to $uv$. Let $rr^*$ be such an edge and note that $r$ is the only child of $r^*$ in $\mathcal T$.
	
	For any non-leaf node $s$ of $\mathcal T$, let $v^s_0,v^s_1,\dots,v^s_{k_s}$ be the clockwise order of the vertices along $\mathcal C_s$ in $\cal O$; further, we denote by $\mathcal T_s$ the subtree of $\mathcal T$ rooted at $s$ and by $G_s$ the subgraph of $G$ defined as $\bigcup_{t\in \mathcal T_s}\mathcal C_t$. Note that $G_{r}=G$. For a leaf node $s\neq r^*$ of $\mathcal T$, with a slight overload of notation we denote by $G_s$ the edge of $G$ corresponding to $s$. Consider any internal node $s$ of $\mathcal T$. Then the graph $G_s$ is rooted at the edge $u_s v_s$ that is shared by $\mathcal C_s$ and $\mathcal C_p$, where $p$ is the parent of $s$ in $\mathcal T$; then we assume, w.l.o.g., that $u_s=v^s_0$ and $v_s=v^s_{k_s}$. Note that, for any node $s\neq r^*$ of $\mathcal T$ and in any rectilinear representation $(\mathcal E,\phi)$ of $G$ in which the edge $uv$ is incident to the outer face of $\mathcal E$, the edge $u_s v_s$ is incident to the outer face of the restriction of $(\mathcal E,\phi)$ to $G_s$. Hence, if $s_1, \dots, s_{k_s}$ denote the children of $s$, then the graphs $G_{s_1},\dots,G_{s_{k_s}}$ are the $u_sv_s$-subgraphs of $G_s$.
	
	We now perform a bottom-up visit of $\mathcal T$ which ends after visiting the child $r$ of the root $r^*$. For each leaf $s\neq r^*$ of $\mathcal T$ with parent $p$, we associate a set $\mathcal N_{s\rightarrow p}=\{(0\degree,0\degree)\}$ to the edge $sp$. When processing an internal node $s$ of $\mathcal T$ with parent $p$, we compute a set $\mathcal N_{s\rightarrow p}$ which contains all the pairs $(\mu_s,\nu_s)$ with $\mu_s,\nu_s\in \{90\degree,180\degree,270\degree\}$ such that $G_s$ admits a $(\chi_s,\mu_s,\nu_s)$-representation. Note that $\mathcal N_{s\rightarrow p}$ contains at most $9$ such pairs. This can be done in $O(k_s)$ time as follows.
	
	We independently consider each of the $9$ pairs $(\mu_s,\nu_s)$ with $\mu_s,\nu_s\in \{90\degree,180\degree,270\degree\}$. Since we already visited the children of $s$ in $\mathcal T$, for every $u_s v_s$-subgraph $G_{s_i}$ of $G_s$, we have already computed the set $\mathcal N_{s_i\rightarrow s}$; in particular, $G_{s_i}$ is trivial if and only if $\mathcal N_{s_i\rightarrow s}=\{(0\degree,0\degree)\}$. By repeated applications of Lemma~\ref{le:fix-values}, we can construct the optimal sequence for $G_s$ in total $O(k_s)$ time. Then, by Lemma~\ref{le:promising-sequences}, we can construct the promising sequences for $(G_s,\mu_s,\nu_s)$ in total $O(k_s)$ time (if there is no promising sequence for $(G_s,\mu_s,\nu_s)$, by Lemma~\ref{le:extensible-solution} we can conclude that $G_s$ admits no $(\chi_s,\mu_s,\nu_s)$-representation). 
	
	We independently consider each promising sequence $\mu^s_1,\nu^s_1,\dots,\mu^s_{k_s},\nu^s_{k_s},\rho^s_0,\rho^s_{k_s}$ for $(G_s,\mu_s,\nu_s)$; recall that there are $O(1)$ such sequences. We compute in $O(k_s)$ time the values  $a_s$, $b_s$, $c_s$, $d_s$, and $t_s$, where $a_s$ and $b_s$ denote the number of $4$- and $3$-components of $G_s$, respectively, $c_s$ denotes the number of $2$-components of $G_s$ plus the number of vertices $v^s_i \in \chi$ with $i\in \{1,\dots,k_s-1\}$, $d_s=\sum_{i=1}^{k_s-1} (2-\frac{\nu^s_i+\mu^s_{i+1}}{90\degree})$, and $t_s=(k_s-1)-\frac{\rho^s_0+\rho^s_k}{90\degree}$. Note that the number of vertices $v^s_i \in \chi$ with $i\in \{1,\dots,k_s-1\}$ can be computed in $O(k_s)$ time as each of the vertices $v^s_i$ is marked if it belongs to $\chi$. By Lemma~\ref{le:find-values-variable}, it is possible to determine in $O(1)$ time whether the promising sequence $\mu^s_1,\nu^s_1,\dots,\mu^s_{k_s},\nu^s_{k_s},\rho^s_0,\rho^s_{k_s}$ for $(G_s,\mu_s,\nu_s)$ is extensible, i.e., whether there exist an in-out assignment $\mathcal A_s$ and values $\rho^s_1,\rho^s_2,\dots,\rho^s_{k_s-1}$ that, together with the promising sequence $\mu^s_1,\nu^s_1,\dots,\mu^s_{k_s},\nu^s_{k_s},\rho^s_0,\rho^s_{k_s}$, satisfy Properties~$\mathcal V1$--$\mathcal V5$ of Lemma~\ref{le:structural-variable-embedding}. By Lemma~\ref{le:extensible-solution}, we can conclude that $G_s$ admits a  $(\chi_s,\mu_s,\nu_s)$-representation if and only if at least one of the promising sequences for $(G_s,\mu_s,\nu_s)$ is extensible; in the positive case we add the pair $(\mu_s,\nu_s)$ to $\mathcal N_{s\rightarrow p}$. Observe that the set $\chi_s$ does not need to be explicitly computed when processing $s$. 
	
	If there is no pair $(\mu_s,\nu_s)$ with $\mu_s,\nu_s\in \{90\degree,180\degree,270\degree\}$ such that $G_s$ admits a $(\chi_s,\mu_s,\nu_s)$-representation, that is, if $\mathcal N_{s\rightarrow p}=\emptyset$, then we conclude that $G$ has no $(\chi,\mu,\nu)$-representation, otherwise we continue the visit of $\mathcal T$. After processing the child $r$ of the root $r^*$ of $\mathcal T$, we either conclude that $G$ has a $(\chi,\mu,\nu)$-representation or not, depending on whether $(\mu,\nu)\in \mathcal N_{r\rightarrow r^*}$ or $(\mu,\nu)\notin \mathcal N_{r\rightarrow r^*}$, respectively.  
	
	The algorithm processes a node $s$ of $\mathcal T$ in $O(k_s)$ time. Hence, the algorithm takes $O(n)$ time over the entire tree $\mathcal T$. 
	
	Finally, we describe how to modify the described algorithm so that it constructs in $O(n)$ time a $(\chi,\mu,\nu)$-representation $(\cal E,\phi)$ of $G$. When we process an internal node $s$ of $\mathcal T$ in the bottom-up visit of $\mathcal T$, for each of the at most $9$ pairs $(\mu_s,\nu_s)$ with $\mu_s,\nu_s\in \{90\degree,180\degree,270\degree\}$ such that $G_s$ admits a $(\chi_s,\mu_s,\nu_s)$-representation, we store the $O(k_s)$ values $\rho^s_0,\rho^s_1,\mu^s_1,\nu^s_1,\rho^s_2,\mu^s_2,\nu^s_2,\dots,\rho^s_{k_s},\mu^s_{k_s},\nu^s_{k_s}$ and the in-out assignment $\mathcal A_s$ satisfying Properties~$\mathcal V1$--$\mathcal V5$ of Lemma~\ref{le:structural-variable-embedding}; by Lemmata~\ref{le:fix-values},~\ref{le:promising-sequences}, and~\ref{le:find-values-variable}, these can be found in $O(k_s)$ time. The in-out assignment $\mathcal A_s$ can be represented by storing, for each child $s_i$ of $s$ in $\mathcal T$ such that $G_{s_i}$ is a non-trivial subgraph of $G_s$, a boolean value $\omega_{s_i\rightarrow s}$ which is {\sc true} if $G_{s_i}$ is assigned to the outside of $\mathcal C_s$, and {\sc false}~otherwise.
	The bottom-up visit of $\mathcal T$ ends once it reaches $r$. Then we perform two top-down visits of $\mathcal T$, each starting at the child $r$ of $r^*$. 
	
	The first top-down visit results in the construction of a plane embedding $\mathcal E$ of $G$ such that $G$ admits a $(\chi,\mu,\nu)$-representation $(\mathcal E,\phi)$, for some function $\phi$ to be determined in the second top-down visit. When we start processing a node $s$ of $\mathcal T$ in the first top-down visit, we assume that (see Figure~\ref{fig:variable-augment}(a)):
	
	\begin{enumerate} 
		\item a pair $(\mu_s,\nu_s)$ has been associated to $s$ such that $G_s$ admits a $(\chi_s,\mu_s,\nu_s)$-representation; this representation will coincide with the restriction of $(\mathcal E,\phi)$ to $G_s$;
		\item the cycle $\mathcal C_s$ has already been embedded in the plane, i.e., it has already been established whether $u_s$ immediately precedes or follows $v_s$ in clockwise direction along $\mathcal C_s$; and
		\item for $i=0,\dots,k_s$, a, possibly partial, circular list $L_{\mathcal E}(v^s_i)$ of the edges incident to $v^s_i$ has already been fixed; this contains the two edges of $\mathcal C_s$ incident to $v^s_i$ and the edges of $G$ that are incident to $v^s_i$ and that are not in $G_s$, if any. The complete circular list $L_{\mathcal E}(v^s_i)$, representing the clockwise order of the edges incident to $v^s_i$ in $\mathcal E$, will be an extension of this partial one.
	\end{enumerate} 
	
	\begin{figure}[htb]\tabcolsep=4pt
		\centering
		\begin{tabular}{c c}
			\includegraphics[scale=0.7]{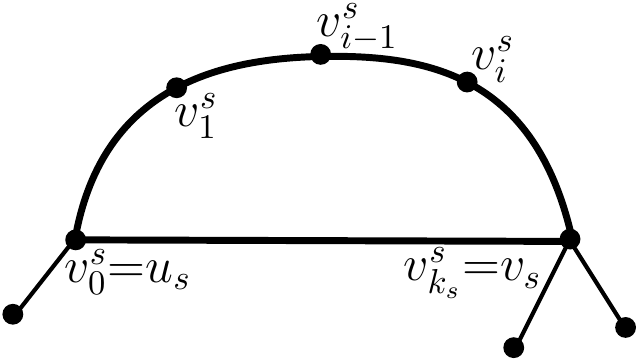} \hspace{3mm} &
			\includegraphics[scale=0.7]{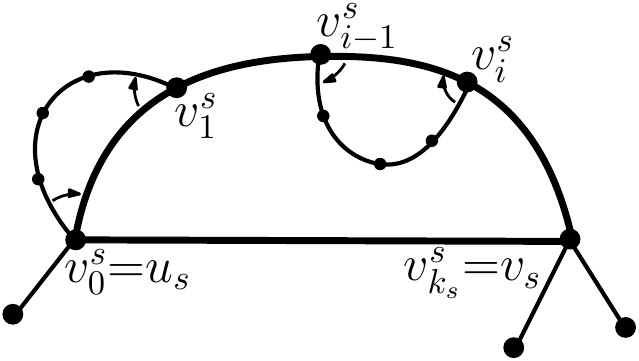} \\
			(a) \hspace{3mm} & (b)
		\end{tabular}
		\caption{(a) Before processing a node $s$ of $\mathcal T$, the cycle $\mathcal C_s$ (drawn with thick lines) is already embedded. The (partial) clockwise order of the edges incident to each vertex of $\mathcal C_s$ has been fixed. (b) Embedding the cycles $\mathcal C_{s_i}$ inside and outside $\mathcal C_s$, according to the in-out assignment $\mathcal A_s$.}
		\label{fig:variable-augment}
	\end{figure}

	Before starting the top-down visit, we perform the following initialization. We associate $(\mu_r=\mu,\nu_r=\nu)$ to $r$; this is the pair that is given in input to the overall algorithm. Further, $u$ is arbitrarily chosen to immediately precede or follow $v$ in clockwise direction along $\mathcal C_{r}$ (the two choices lead to two plane embeddings which are one the reflection of the other). Finally, for $i=0,\dots,k_r$, the list $L_{\mathcal E}(v^r_i)$ is initialized with the two edges of $\mathcal C_r$ incident to $v^r_i$; since $L_{\mathcal E}(v^r_i)$ is a circular list, the order of such edges in $L_{\mathcal E}(v^r_i)$ is unique.
	
	When the top-down visit considers an internal node $s$ of $\mathcal T$, we do the following (see Figure~\ref{fig:variable-augment}(b)). Assume that $u_s=v^s_0$ immediately follows $v_s=v^s_{k_s}$ in clockwise direction along $\mathcal C_s$, the other case is symmetric. 
	
	\begin{enumerate}
		\item For each $i=1,\dots,k_s$ such that $G_{s_i}$ is non-trivial, we associate to $s_i$ the pair $(\mu^s_i,\nu^s_i)$, where the values $\mu^s_i$ and $\nu^s_i$ are the ones in the sequence $\rho^s_0,\rho^s_1,\mu^s_1,\nu^s_1,\rho^s_2,\mu^s_2,\nu^s_2,\dots,\rho^s_{k_s},\mu^s_{k_s},\nu^s_{k_s}$ that has been stored for the $(\mu_s,\nu_s)$-representation of $G_s$; note that $(\mu^s_i,\nu^s_i)\in \mathcal N_{s_i\rightarrow s}$. 
		\item For each $i=1,\dots,k_s$ such that $G_{s_i}$ is non-trivial, let $\omega_{s_i\rightarrow s}$ be the value that represents whether $G_{s_i}$ is assigned to the outside or to the inside of $\mathcal C_s$ by the in-out assignment $\mathcal A_s$ that is stored for the $(\mu_s,\nu_s)$-representation of $G_s$. If $\omega_{s_i\rightarrow s}=$ {\sc true}, then we establish that $v^s_{i-1}=u_{s_i}$ immediately follows $v^s_i=v_{s_i}$ in clockwise direction along $\mathcal C_{s_i}$, otherwise we establish that $v^s_{i-1}=u_{s_i}$ immediately precedes $v^s_i=v_{s_i}$ in clockwise direction along $\mathcal C_{s_i}$.
		\item For each $i=1,\dots,k_s$ such that $G_{s_i}$ is non-trivial, let $\omega_{s_i\rightarrow s}$ be as in the previous item. If $\omega_{s_i\rightarrow s}=$ {\sc true}, we insert the edge of $\mathcal C_{s_i}$ incident to $v^s_{i-1}$ and different from $v^s_{i-1}v^s_{i}$ into $L_{\mathcal E}(v^s_{i-1})$, so that it immediately precedes the edge $v^s_{i-1}v^s_{i}$, and we insert the edge of $\mathcal C_{s_i}$ incident to $v^s_i$ and different from $v^s_{i-1}v^s_{i}$ into $L_{\mathcal E}(v^s_i)$, so that it immediately follows the edge $v^s_{i-1}v^s_{i}$. If $\omega_{s_i\rightarrow s}=$ {\sc false}, we insert the edge of $\mathcal C_{s_i}$ incident to $v^s_{i-1}$ and different from $v^s_{i-1}v^s_{i}$ into $L_{\mathcal E}(v^s_{i-1})$, so that it immediately follows the edge $v^s_{i-1}v^s_{i}$, and we insert the edge of $\mathcal C_{s_i}$ incident to $v^s_i$ and different from $v^s_{i-1}v^s_{i}$ into $L_{\mathcal E}(v^s_i)$, so that it immediately precedes the edge $v^s_{i-1}v^s_{i}$.
	\end{enumerate}
	
	Clearly, the above operations can be performed in $O(k_s)$ time, hence the entire top-down visit takes $O(n)$ time. The completion of the first top-down visit results in the definition of a (complete) circular list $L_{\mathcal E}(w)$ for each vertex $w$ of $G$, and hence in the definition of a plane embedding $\mathcal E$ of $G$.
	
	The second top-down visit defines a function $\phi$ such that $(\mathcal E,\phi)$ is a $(\chi,\mu,\nu)$-representation of $G$. This is done as in Theorem~\ref{th:fixed}. Let $f_s$ be the face of $\mathcal E$ incident to $u_sv_s$ and lying in the interior of the cycle $\mathcal C_s$. When we visit a node $s$ of $\mathcal T$, for $i=0,\dots,k_s$, we set $\phi(v^s_i,f_s)=\rho^s_i$, where $\rho^s_i$ is the value in the sequence $\rho^s_0,\rho^s_1,\mu^s_1,\nu^s_1,\rho^s_2,\mu^s_2,\nu^s_2,\dots,\rho^s_{k_s},\mu^s_{k_s},\nu^s_{k_s}$ that has been stored for the $(\mu_s,\nu_s)$-representation of $G_s$. After the completion of the second top-down visit of $\mathcal T$, for each vertex $w$, the value $\phi(w,f)$ has been determined for all the faces $f$ of $\cal E$ incident to $w$, except for one face $f_w$; this is the unique face of $\cal E$ incident to $w$ whose corresponding face in ${\mathcal E_s}$ is $f^*_{\mathcal E_s}$, where $s$ is the first node that is encountered in the top-down visit of $\mathcal T$ such that $w$ is a vertex of $\mathcal C_s$. We complete the rectilinear representation of $G$ by setting, for each vertex $w$ of $G$, $\phi(w,f_w)=360\degree-\sum_f \phi(w,f)$, where the sum is over all the faces $f\neq f_w$ of $\mathcal E$ incident to $w$. 
\end{proof}

By independently considering all the pairs $(\mu,\nu)$ with $\mu,\nu\in \{90\degree,180\degree,270\degree\}$, Theorem~\ref{th:2-con-variable-edge} also allows us to test whether a $\chi$-constrained representation of $G$ exists such that $uv$ is incident to the outer face. 

\subsection{$2$-Connected Outerplanar Graphs} \label{sse:var-2con}

We now get rid of the assumption that there is a prescribed edge $uv$ incident to the outer face of the rectilinear representation we seek, while maintaining the assumption that the input $n$-vertex outerplanar graph $G$ is $2$-connected. We are again required to look for $\chi$-constrained representations; that is, given a set $\chi$ of degree-$2$ vertices of $G$, we want to test whether there exists a rectilinear representation $(\mathcal E,\phi)$ of $G$ such that, for any $w\in \chi$ and for any face $f$ incident to $w$, we have either $\phi(w,f)=90\degree$ or $\phi(w,f)=270\degree$. 

Our $O(n)$-time algorithm to solve this problem will actually perform a more general task. Namely, our algorithm will label every vertex $u$ of $G$ whose degree is not larger than $3$ with a set $\gamma(u)$ which contains all the values $\mu\in \{90\degree,180\degree,270\degree\}$ such that $G$ admits a $\chi$-constrained representation $(\mathcal E,\phi)$ in which $u$ is incident to the outer face of $\mathcal E$ and the sum of the internal angles at $u$ is equal to $\mu$, i.e., $\phi^{\mathrm{int}}(u)=\mu$. 

While the condition on the degree of the labeled vertices will naturally find an explanation in Section~\ref{sse:var}, where we will deal with not necessarily $2$-connected outerplanar graphs, we here show that it does not introduce any loss of generality, even if we just look at $2$-connected outerplanar graphs.

\begin{lemma} \label{le:low-degree-external}
	Let $(\mathcal E,\phi)$ be a rectilinear representation of a $2$-connected outerplanar graph. Then there is a vertex $u$ such that: (i) the degree of $u$ is $2$, and (ii) $u$ is incident to the outer face $f^*_{\mathcal E}$ of $\mathcal E$.
\end{lemma}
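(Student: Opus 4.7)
The plan is to derive the conclusion by a double-counting argument on the outer face, combining Tamassia's two angle identities with the fact that $G$ is $2$-connected.

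First, I will observe that, since $G$ is $2$-connected and has at least three vertices, the outer face $f^*_{\mathcal E}$ is delimited by a simple cycle $\mathcal C^*$ along which every vertex appears exactly once; let $k$ denote its length. Applying Tamassia's condition~(2) (stated early in Section~\ref{se:preliminaries}) to the outer face gives
\[
\sum_{v\in \mathcal C^*}\bigl(2-\phi(v,f^*_{\mathcal E})/90\degree\bigr)=-4,
\]
which I will rearrange to
\[
\sum_{v\in \mathcal C^*}\phi(v,f^*_{\mathcal E})=(2k+4)\cdot 90\degree.
\]

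Second, I will bound each outer-face angle from above in terms of the vertex's degree. For $v\in \mathcal C^*$ with $d_v:=\deg_G(v)$, the $2$-connectivity of $G$ ensures that $v$ is not a cut-vertex and is therefore incident to exactly $d_v$ distinct faces of $\mathcal E$. Tamassia's condition~(1) says that the angles at $v$ sum to $360\degree$; since every such angle is at least $90\degree$, I obtain
\[
\phi(v,f^*_{\mathcal E})\le 360\degree - 90\degree\cdot(d_v-1)=(5-d_v)\cdot 90\degree.
\]
In particular, $\phi(v,f^*_{\mathcal E})\le 180\degree$ whenever $d_v\ge 3$.

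Finally, I will close the argument by contradiction. Suppose every vertex of $\mathcal C^*$ has degree at least $3$. Summing the per-vertex bound over all $k$ vertices yields $\sum_{v\in \mathcal C^*}\phi(v,f^*_{\mathcal E})\le 2k\cdot 90\degree$, which contradicts the identity $(2k+4)\cdot 90\degree$ from the first step. Hence some $v\in \mathcal C^*$ satisfies $d_v\le 2$; combined with the lower bound $d_v\ge 2$ coming from $2$-connectivity, this $v$ has degree exactly $2$ and is incident to $f^*_{\mathcal E}$. There is no real obstacle here; the only subtlety worth flagging is the use of $2$-connectivity to guarantee both that the boundary walk of $f^*_{\mathcal E}$ is a simple cycle (so the counts $k$ coincide on both sides of Tamassia's identity) and that the $d_v$ faces at $v$ are distinct (so the sum-to-$360\degree$ identity applies without multiplicities).
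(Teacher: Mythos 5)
Your proof is correct and is essentially the same angle-counting argument as the paper's: you phrase it via Tamassia's outer-face identity $\sum_v \phi(v,f^*_{\mathcal E})=(2k+4)\cdot 90\degree$, while the paper equivalently notes that if every outer vertex had degree at least $3$ then the interior angles of the polygon representing the outer cycle would sum to at least $k\cdot 180\degree$, exceeding the required $(k-2)\cdot 180\degree$. The two formulations are duals of one another (interior sum $=360\degree\cdot k$ minus outer sum), so no substantive difference.
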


\begin{proof}
	Let $\mathcal C=(u_1,\dots,u_k)$ be the cycle delimiting $f^*_{\mathcal E}$. Let $P$ be the polygon representing $\mathcal C$ in any planar rectilinear drawing corresponding to $(\mathcal E,\phi)$. Suppose, for a contradiction, that the degree of $u_i$ is larger than or equal to $3$, for $i=1,\dots,k$. Then $\phi^{\mathrm{int}}(u_i)\geq 180\degree$, for $i=1,\dots,k$; hence, the sum of the angles in the interior of $P$ is larger than or equal to $180\degree \cdot k$, which is larger than $180\degree \cdot (k-2)$, a contradiction.
\end{proof}

We first show that the problem of computing the labels $\gamma(u)$ for each vertex $u$ of $G$ whose degree is not larger than $3$ can be reduced to the problem of computing labels for the edges of $G$. More precisely, for every edge $uv$ of $G$ incident to $f^*_{\mathcal O}$, we will compute a set $\mathcal M_{uv}$ of pairs: A pair $(\mu,\nu)$ with $\mu,\nu\in\{90\degree,180\degree,270\degree\}$ belongs to $\mathcal M_{uv}$ if and only if there is a $(\chi,\mu,\nu)$-representation of $G$, i.e., a $\chi$-constrained representation $(\mathcal E,\phi)$ such that $uv$ is incident to $f^*_{\mathcal E}$, $\phi^{\mathrm{int}}(u)=\mu$, and $\phi^{\mathrm{int}}(v)=\nu$. We show that the vertex labels can be computed efficiently from the edge labels.

\begin{lemma}\label{le:edge-vertex-labels}
	Suppose that, for every edge $uv$ of $G$ incident to $f^*_{\mathcal O}$, the set $\mathcal M_{uv}$ is known. Then it is possible to compute in $O(1)$ time the set $\gamma(u)$ for each vertex $u$ of $G$ whose degree is not larger than $3$. 
\end{lemma}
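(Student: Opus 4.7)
The plan is to reduce the computation of $\gamma(u)$ to $O(1)$ lookups in the precomputed sets $\mathcal M_{uv}$, by showing that one of the outer edges of $\mathcal O$ incident to $u$ must also be incident to the outer face of any rectilinear representation that witnesses a given value $\mu\in \gamma(u)$.

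First, since $G$ is $2$-connected and outerplanar, the cycle $\mathcal C^*_{\mathcal O}$ bounding $f^*_{\mathcal O}$ is Hamiltonian, hence every vertex $u$ lies on $\mathcal C^*_{\mathcal O}$ and is incident to exactly two edges of $G$ that belong to $f^*_{\mathcal O}$; call them $e_1=uv_1$ and $e_2=uv_2$. If the degree of $u$ is $3$, then $u$ has one additional edge $e_3=uw$ which is a chord of $\mathcal C^*_{\mathcal O}$; if the degree of $u$ is $2$, then $u$ has no other edges. Thus at most one edge incident to $u$ is not incident to $f^*_{\mathcal O}$.

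The key structural observation I would prove is the following: if $(\mathcal E,\phi)$ is any $\chi$-constrained representation of $G$ such that $u$ is incident to $f^*_{\mathcal E}$, then at least one of $e_1,e_2$ is also incident to $f^*_{\mathcal E}$. This follows because the corner at $u$ lying in $f^*_{\mathcal E}$ is delimited by two consecutive edges in the rotation of $u$ in $\mathcal E$; since at most one of the (at most three) edges incident to $u$ fails to be in $\{e_1,e_2\}$, any two consecutive edges in this rotation must include at least one of $e_1$ and $e_2$, and both delimiting edges of the outer corner lie on $f^*_{\mathcal E}$.

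Given the observation, for each $\mu\in\{90\degree,180\degree,270\degree\}$ I have $\mu\in\gamma(u)$ if and only if there exist $i\in\{1,2\}$ and $\nu\in\{90\degree,180\degree,270\degree\}$ such that the pair of the set $\mathcal M_{e_i}$ whose $u$-component equals $\mu$ has $v_i$-component equal to $\nu$, i.e., such that the corresponding $(\chi,\mu,\nu)$-representation of $G$ with $e_i$ on the outer face exists. For each candidate $\mu$ there are at most $2\cdot 3=6$ such pairs to inspect, each of which is stored in $\mathcal M_{e_1}$ or $\mathcal M_{e_2}$ and can be tested in $O(1)$ time; hence computing $\gamma(u)$ takes $O(1)$ time. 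The only non-routine step is the structural claim above; all the rest is bookkeeping.
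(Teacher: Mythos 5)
Your proposal is correct and follows essentially the same route as the paper: both rest on the pigeonhole observation that a vertex of degree at most $3$ has two incident edges on $f^*_{\mathcal E}$ and two on $f^*_{\mathcal O}$, so some edge is incident to both, which reduces the computation of $\gamma(u)$ to $O(1)$ lookups in the sets $\mathcal M_{uv}$ for the outer edges at $u$. Your phrasing via the outer corner of $u$ and the Hamiltonian boundary cycle of $\mathcal O$ is just a more explicit rendering of the same counting argument.
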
	 

\begin{proof}
	Consider any vertex $u$ of $G$ whose degree is not larger than $3$. We show how to construct the set $\gamma(u)$. We initialize $\gamma(u)=\emptyset$. Then we consider every edge $uv$ incident to $u$ and to $f^*_{\mathcal O}$; for every pair $(\mu,\nu)$ in $\mathcal M_{uv}$, we update $\gamma(u)$ to $\gamma(u) \cup \{\mu\}$. Since $u$ has $O(1)$ incident edges and the set $\mathcal M_{uv}$ contains $O(1)$ pairs, this computation takes $O(1)$ time.  
	
	We now show that, for each vertex $u$ of $G$ whose degree is not larger than $3$, the set $\gamma(u)$ contains a value $\mu$ if and only if $G$ admits a $\chi$-constrained representation $(\mathcal E,\phi)$ in which $u$ is incident to $f^*_{\mathcal E}$ and $\phi^{\mathrm{int}}(u)=\mu$. 
	
	In order to prove the necessity, suppose that $\gamma(u)$ contains a value $\mu$. By construction, there is an edge $uv$ of $G$ incident to $u$ and to $f^*_{\mathcal O}$ and there is a pair $(\mu,\nu)$ such that $(\mu,\nu)\in \mathcal M_{uv}$. Hence, by definition of the set $\mathcal M_{uv}$, there is a $(\chi,\mu,\nu)$-representation of $G$; this is a $\chi$-constrained representation $(\mathcal E,\phi)$ in which $u$ is incident to $f^*_{\mathcal E}$ and $\phi^{\mathrm{int}}(u)=\mu$. 
	
	In order to prove the sufficiency, consider any $\chi$-constrained representation $(\mathcal E,\phi)$ of $G$ in which $u$ is incident to $f^*_{\mathcal E}$ and $\phi^{\mathrm{int}}(u)=\mu$. Note that two edges incident to $u$ are also incident to $f^*_{\mathcal E}$. Further, two edges incident to $u$ are also incident to $f^*_{\mathcal O}$. Since $u$ has degree smaller than or equal to $3$, it follows that there is an edge incident $uv$ that is incident both to $f^*_{\mathcal E}$ and to $f^*_{\mathcal O}$. Hence, by definition of the set $\mathcal M_{uv}$, there is a pair $(\mu,\nu)$ in $\mathcal M_{uv}$, where $\nu=\phi^{\mathrm{int}}(v)$, and thus, by construction, $\mu \in \gamma(u)$. 
\end{proof}


Let $u_1v_1,\dots,u_nv_n$ be the edges of $G$ incident to $f^*_{\mathcal O}$, in any order (each vertex of $G$ has two different labels in the previous sequence of edges). The sets $\mathcal M_{u_hv_h}$ will be computed by means of $n$ postorder traversals of the extended dual tree $\mathcal T$ of $\mathcal O$; during the $h$-th traversal, $\mathcal T$ is rooted at the leaf $r^*_h$ such that the edge $r^*_hr_h$ incident to $r^*_h$ is dual to $u_hv_h$. We introduce some definitions needed to describe these traversals.

\begin{figure}[htb]\tabcolsep=4pt
	\centering
	\begin{tabular}{c c c}
		\includegraphics[scale=0.7]{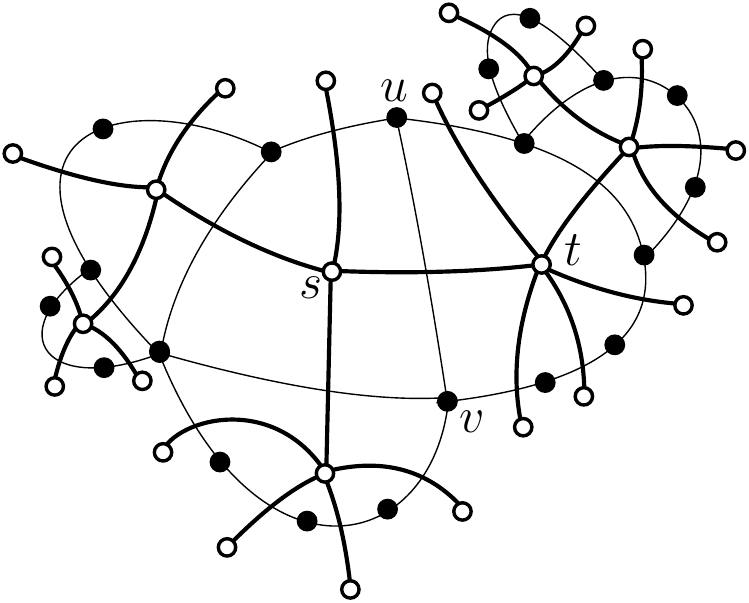} \hspace{3mm} &
		\includegraphics[scale=0.7]{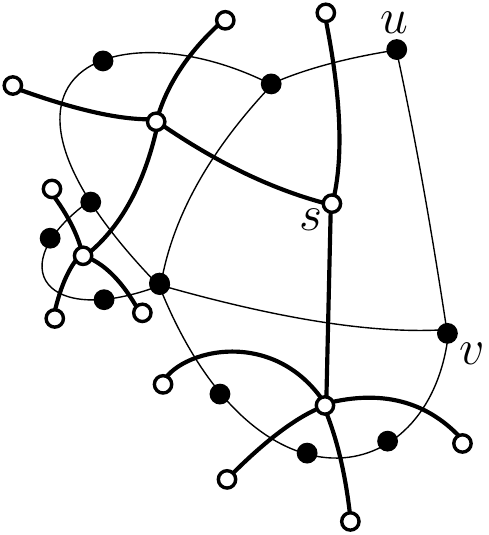} \hspace{3mm} &
		\includegraphics[scale=0.7]{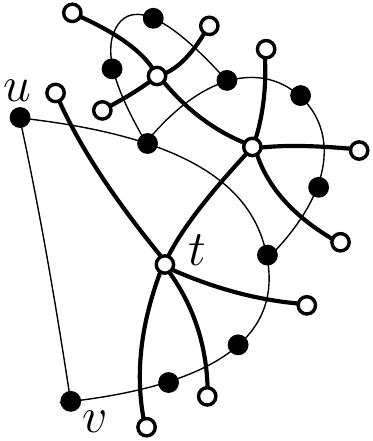} \\
		(a) \hspace{3mm} & (b) \hspace{3mm} & (c)
	\end{tabular}
	\caption{(a) The graph $G$ (represented with thin lines and black disks) and the extended dual tree $\mathcal T$ (represented with thick lines and white disks) of the outerplane embedding $\mathcal O$ of $G$. (b) The graph $G_{s\rightarrow t}$ and the tree $\mathcal T_{s\rightarrow t}$. (c) The graph $G_{t\rightarrow s}$ and the tree $\mathcal T_{t\rightarrow s}$.}
	\label{fig:outerplanar-split}
\end{figure}

Refer to Figure~\ref{fig:outerplanar-split}. Recall that each internal node $s$ of $\mathcal T$ is dual to a face of $\mathcal O$ delimited by a cycle $\mathcal C_s$. Consider any edge $st$ of $\mathcal T$. The removal of the edge $st$ splits $\mathcal T$ into two trees. Let $\mathcal T_{s\rightarrow t}$ be the one containing $s$ and let $\mathcal T_{t\rightarrow s}$ be the one containing $t$. If $s$ (resp.\ $t$) is an internal node of $\mathcal T$, then we let $G_{s\rightarrow t}$ (resp.\ $G_{t\rightarrow s}$) be the subgraph of $G$ defined as $\bigcup_{x\in \mathcal T_{s\rightarrow t}} {\mathcal C}_x$ (resp.\ $\bigcup_{x\in \mathcal T_{t-s}} {\mathcal C}_x$). If $s$ (resp.\ $t$) is a leaf of $\mathcal T$, then we let $G_{s\rightarrow t}$ (resp.\ $G_{t\rightarrow s}$) be the edge of $G$ dual to $st$. Regardless of whether $s$ and $t$ are internal nodes or not, observe that $G_{s\rightarrow t}$ and $G_{t\rightarrow s}$ share the edge $uv$ of $G$ dual to $st$, and no other edge; both $G_{s\rightarrow t}$ and $G_{t\rightarrow s}$ are then rooted at $uv$. Let $\chi_{s\rightarrow t}$ and $\chi_{t\rightarrow s}$ be the restrictions of $\chi$ to the vertices of $G_{s\rightarrow t}$ and $G_{t\rightarrow s}$, respectively.

\newcommand{\start}[1]{{\sc start}$(#1)$}
\newcommand{\terminate}[1]{{\sc end}$(#1)$}
\newcommand{\nullo}{{\sc null}}
\newcommand{\processed}[1]{{\sc processed}$(#1)$}

The traversals of $\mathcal T$ are going to  equip each edge $st$ of $\mathcal T$ with two sets $\mathcal N_{s\rightarrow t}$ and $\mathcal N_{t\rightarrow s}$. These sets are defined as follows. Let $uv$ be the edge of $G$ dual to $st$. 

\begin{itemize}
	\item If $s$ is an internal node of $\mathcal T$, then $\mathcal N_{s\rightarrow t}$ contains all the pairs $(\mu,\nu)$ with $\mu,\nu\in\{90\degree,180\degree,270\degree\}$ such that $G_{s\rightarrow t}$ admits a $(\chi_{s\rightarrow t},\mu,\nu)$-representation. Similarly, if $t$ is an internal node of $\mathcal T$, then the set $\mathcal N_{t\rightarrow s}$ stores all the pairs $(\mu,\nu)$ with $\mu,\nu\in\{90\degree,180\degree,270\degree\}$ such that $G_{t\rightarrow s}$ admits a $(\chi_{t\rightarrow s},\mu,\nu)$-representation. 
	\item If $s$ is a leaf, $uv$ is incident to $f^*_{\mathcal O}$; then $\mathcal N_{s\rightarrow t}=\{(0\degree,0\degree)\}$ and $\mathcal N_{t\rightarrow s}=\mathcal M_{uv}$, given that $G_{t\rightarrow s}=G$.  Similarly, if $t$ is a leaf, $uv$ is incident to $f^*_{\mathcal O}$; then $\mathcal N_{t\rightarrow s}=\{0\degree,0\degree\}$ and $\mathcal N_{s\rightarrow t}=\mathcal M_{uv}$. 
\end{itemize}

If $uv$ is an edge of $G$ incident to $f^*_{\mathcal O}$, the set $\mathcal M_{uv}$ can be trivially computed in $O(1)$ time from the set $\mathcal N_{s\rightarrow t}$, where $st$ is the edge of $\mathcal T$ dual to $uv$ and $s$ is a leaf of $\mathcal T$. Hence, all we have to do is show how to compute the sets $\mathcal N_{s\rightarrow t}$ efficiently.

When processing a node $s$ with parent $t$ during one of the traversals, we compute the label $\mathcal N_{s\rightarrow t}$.  The computation of $\mathcal N_{s\rightarrow t}$ exploits the values of the already computed labels $\mathcal N_{s_1\rightarrow s},\dots,\mathcal N_{s_k\rightarrow s}$, where $s_1,\dots,s_k$ are the neighbors of $s$ in $\mathcal T$ different from $t$. This is the problem we solved in Section~\ref{sse:var-2con-edge}! Namely, we want to compute the pairs $(\mu,\nu)$ with $\mu,\nu\in\{90\degree,180\degree,270\degree\}$ such that $G_{s\rightarrow t}$ admits a $(\chi_{s\rightarrow t},\mu,\nu)$-representation (these define $\mathcal N_{s\rightarrow t}$), starting from the pairs $(\mu_i,\nu_i)$ with $\mu_i,\nu_i\in\{0\degree,90\degree,180\degree\}$ such that $G_{s_i\rightarrow s}$ admits a $(\chi_{s_i\rightarrow s},\mu_i,\nu_i)$-representation (these define $\mathcal N_{s_1\rightarrow s},\dots,\mathcal N_{s_k\rightarrow s}$). When $\mathcal N_{s_i\rightarrow s}=\emptyset$, in particular, we also have $\mathcal N_{s\rightarrow t}=\emptyset$. That is, the non-existence of a $(\chi_{s_i\rightarrow s},\mu_i,\nu_i)$-representation of $G_{s_i}$ propagates towards the root of $\mathcal T$ in the current traversal; see Lemma~\ref{le:child-to-anchestor} below.

Clearly, we cannot afford to perform each traversal independently of the other ones, as this would result in a quadratic running time. Then, as in~\cite{dlop-ood-20}, we want to re-use the already computed labels $\mathcal N_{s\rightarrow t}$; this implies that a postorder traversal is not invoked on a tree $\mathcal T_{s\rightarrow t}$ if the label $\mathcal N_{s\rightarrow t}$ has been computed by a previous traversal. As a result, during the traversals of $\mathcal T$, each edge  is traversed at most once in each direction and each node with degree $k$ is processed $O(k)$ times. Differently from~\cite{dlop-ood-20}, we need to handle the possibility that, when a node $s$ of $\mathcal T$ is visited in a traversal after the first one, we might not have the sets  $\mathcal N_{s_i\rightarrow s}$ ready, even for most children of $s$. This is a consequence of the propagation of the empty sets $\mathcal N_{s\rightarrow t}$ described above. Indeed, we cannot even afford to look at all the children $s_i$ of $s$ and see which sets $\mathcal N_{s_i\rightarrow s}$ have already been computed and which have not; if the degree of $s$ is $k$, this would take $\Omega(k)$ time whenever we visit $s$ (potentially $k$ times), which would again result in a \mbox{quadratic running time.}

We cope with this problem by using, for each node of $\mathcal T$, some auxiliary labels that are dynamically computed during the traversals. For example, a label $\eta(s)$ points to a neighbor $s_i$ of $s$ for which $\mathcal N_{s_i\rightarrow s}=\emptyset$, two labels \start{s}~and \terminate{s}~delimit the interval of neighbors of $s$ for which an optimal pair has already been computed, and a label $a(s)$ stores the number of computed optimal pairs $(\mu_i,\nu_i)$ such that $\mu_i+\nu_i=360\degree$. The labels allow us to quickly determine which sets $\mathcal N_{s_i\rightarrow s}$ have already been computed and which have not, and to invoke a traversal recursively on the subtrees $\mathcal T_{s_i\rightarrow s}$ for which the sets $\mathcal N_{s_i\rightarrow s}$ have not been computed yet. Some labels (for example $a(s)$) store aggregate information on the values of the optimal pairs for the graphs $G_{s_i\rightarrow s}$. Thus, when the sets $\mathcal N_{s_i\rightarrow s}$ have been computed for all the children $s_i$ of $s$, and we are hence in a position to apply the algorithm described in Section~\ref{sse:var-2con-edge}, we do not have to spend $O(k)$ time to compute the values $a$, $b$, $c$, $d$, and $t$, but we can extract them from the labels associated to $s$ in $O(1)$ time, and then decide in $O(1)$ time whether a solution to the equation $4a'+3b'+2c'+d'=t$ subject to $0\leq a'\leq a$, $0\leq b'\leq b$, $0\leq c'\leq c$, and $0\leq d'\leq d$ exists; this ultimately determines whether a pair $(\mu,\nu)$ belongs to $\mathcal N_{s\rightarrow t}$. More precisely, we compute the following labels.

Let $s$  be any internal node of $\mathcal T$ and let $u_0,u_1,\dots,u_k$ be the clockwise order of the vertices along $\mathcal C_s$ in $\mathcal O$. Further, let $s_0,s_1,\dots,s_{k}$ be the neighbors of $s$, where $G_{s_i\rightarrow s}$ is rooted at $u_{i-1}u_i$, for $i=1,\dots,k$, and $G_{s_0\rightarrow s}$ is rooted at $u_ku_0$. 


\begin{enumerate}[(1)]
	\item The traversals compute a label $\eta(s)$ pointing to a neighbor $s_i$ of $s$ for which $\mathcal N_{s_i\rightarrow s}=\emptyset$. More precisely, if for every neighbor $s_i$ of $s$ we have $\mathcal N_{s_i\rightarrow s}\neq \emptyset$, then $\eta(s)=${\sc null}; otherwise, $\eta(s)=s_i$, where $\mathcal N_{s_i\rightarrow s}=\emptyset$.  
	\item For $i=0,\dots,k$, if $\mathcal N_{s_i\rightarrow s}\neq \emptyset$, the traversals compute two values $\mu^*_i$ and $\nu^*_i$ such that $(\mu^*_i,\nu^*_i)\in \mathcal N_{s_i\rightarrow s}$; if $G_{s_i\rightarrow s}$ is non-trivial, then $(\mu^*_i,\nu^*_i)$ is the optimal pair for $G_{s_i\rightarrow s}$, as defined after Lemma~\ref{le:fix-values}, otherwise $\mu^*_i=\nu^*_i=0\degree$.
	\item The traversals keep track of two values \start{s} and \terminate{s}; these delimit the interval $\mathcal I(s)$ of the indices of the neighbors of $s$ for which the optimal pair has already been established. More precisely, if \start{s}=\terminate{s}=\nullo, then no optimal pair has been established for any graph $G_{s_i\rightarrow s}$. Otherwise, let \start{s}=$i$ and \terminate{s}=$j$; then the optimal pairs have been established for the graphs $G_{s_i\rightarrow s}, G_{s_{i+1}\rightarrow s}, \dots, G_{s_j\rightarrow s}$. The sequence $s_0,s_1,\dots,s_{k}$ has to be intended here as a circular sequence, hence it is possible that $j<i$ (in such a case, the interval $\mathcal I(s)$ is equal to $[i,\dots,k,0,1,\dots,j]$). Observe that, if $\eta(s)=s_i$, then $i \notin \mathcal I(s)$.
	\item The traversals keep track of four values $a(s)$, $b(s)$, $c(s)$, and $d(s)$, where
	\begin{itemize}
		\item $a(s)$ is the number of $4$-components $G_{s_i\rightarrow s}$ of $G_s$ where $i\in \mathcal I(s)$, that is, $a(s)$ is the number of neighbors $s_i$ of $s$ such that $i\in  \mathcal I(s)$ and such that $\mu^*_i+\nu^*_i=360\degree$; 
		\item $b(s)$ is the number of $3$-components $G_{s_i\rightarrow s}$ of $G_s$ where $i\in \mathcal I(s)$; 
		\item $c(s)=c_1(s)+c_2(s)$, where $c_1(s)$ is the number of $2$-components $G_{s_i\rightarrow s}$ of $G_s$ where $i\in \mathcal I(s)$, while $c_2(s)$ is the number of vertices $u_i\in \chi$;
		\item $d(s)=\sum (2-\frac{\nu^*_j+\mu^*_{j+1}}{90\degree})$, where the sum ranges over all the indices $j\in  \mathcal I(s)\setminus \{\textrm{{\sc end}}(s)\}$.
	\end{itemize}  
\end{enumerate}

Before proceeding with the description of the algorithm, we introduce the following useful lemmata. 

\begin{lemma} \label{le:child-to-anchestor}
	Suppose that there exists a node $s$ in $\mathcal T$ that has a neighbor $s_i$ for which $\mathcal N_{s_i\rightarrow s}=\emptyset$. Then, for every neighbor $p\neq s_i$ of $s$, we have $\mathcal N_{s\rightarrow p}=\emptyset$.
\end{lemma}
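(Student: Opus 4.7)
The plan is to argue by contradiction: assuming that some neighbor $p\neq s_i$ of $s$ has $\mathcal N_{s\rightarrow p}\neq\emptyset$, I will exhibit a pair in $\mathcal N_{s_i\rightarrow s}$, contradicting the hypothesis.

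First, I will observe that, because $p\neq s_i$, removing the edge $sp$ from $\mathcal T$ leaves the entire subtree $\mathcal T_{s_i\rightarrow s}$ inside $\mathcal T_{s\rightarrow p}$; hence $G_{s_i\rightarrow s}$ is a subgraph of $G_{s\rightarrow p}$ and $\chi_{s_i\rightarrow s}\subseteq\chi_{s\rightarrow p}$. By the assumption $\mathcal N_{s\rightarrow p}\neq\emptyset$, there exist values $\mu,\nu$ such that $G_{s\rightarrow p}$ admits a $(\chi_{s\rightarrow p},\mu,\nu)$-representation $(\mathcal E,\phi)$. I will then consider the restriction $(\mathcal E',\phi')$ of $(\mathcal E,\phi)$ to $G_{s_i\rightarrow s}$, which is a rectilinear representation of $G_{s_i\rightarrow s}$ by definition of restriction.

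Next, I will check that $(\mathcal E',\phi')$ satisfies the two extra conditions required to be a $(\chi_{s_i\rightarrow s},\mu',\nu')$-representation for suitable $\mu',\nu'$. Let $u_sv_s$ denote the edge of $G$ dual to the tree edge $ss_i$ (this is the root of $G_{s_i\rightarrow s}$, and it coincides with $G_{s_i\rightarrow s}$ itself when $s_i$ is a leaf). Condition~(a): $u_sv_s$ must be incident to $f^*_{\mathcal E'}$. This holds because, in $\mathcal E$, the cycle $\mathcal C_s$ together with the edge $u_sv_s$ separates the interior of $G_{s_i\rightarrow s}$ from the rest of $G_{s\rightarrow p}$, so the face of $\mathcal E$ incident to $u_sv_s$ on the $\mathcal C_s$-side gets merged, upon deletion of the vertices and edges of $G_{s\rightarrow p}\setminus G_{s_i\rightarrow s}$, with all other faces external to $G_{s_i\rightarrow s}$, forming $f^*_{\mathcal E'}$. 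Condition~(b): for every $w\in\chi_{s_i\rightarrow s}$ and every face $f'$ of $\mathcal E'$ incident to $w$, I need $\phi'(w,f')\in\{90\degree,270\degree\}$. Since every vertex of $\chi$ has degree $2$ in $G$, the vertex $w$ has degree $2$ in both $G_{s_i\rightarrow s}$ and $G_{s\rightarrow p}$; therefore, by the definition of restriction, no faces get merged at $w$, and $\phi'(w,f')=\phi(w,f)$ for the unique face $f$ of $\mathcal E$ corresponding to $f'$. Since $(\mathcal E,\phi)$ is $\chi_{s\rightarrow p}$-constrained and $w\in\chi_{s\rightarrow p}$, we conclude $\phi'(w,f')\in\{90\degree,270\degree\}$.

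Finally, setting $\mu':=\phi'^{\mathrm{int}}(u_s)$ and $\nu':=\phi'^{\mathrm{int}}(v_s)$, the representation $(\mathcal E',\phi')$ is a $(\chi_{s_i\rightarrow s},\mu',\nu')$-representation of $G_{s_i\rightarrow s}$, and so $(\mu',\nu')\in\mathcal N_{s_i\rightarrow s}$, contradicting $\mathcal N_{s_i\rightarrow s}=\emptyset$. The main subtlety I anticipate is the rigorous verification of Condition~(a): carefully tracing, through the definition of restriction of plane embeddings, which face of $\mathcal E$ becomes the outer face $f^*_{\mathcal E'}$ after removing the vertices and edges of $G_{s\rightarrow p}$ not in $G_{s_i\rightarrow s}$.
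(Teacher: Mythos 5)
Your proof is correct and is in substance the same as the paper's: the restriction argument you carry out by hand (restricting a $(\chi_{s\rightarrow p},\mu,\nu)$-representation of $G_{s\rightarrow p}$ to the $uv$-subgraph $G_{s_i\rightarrow s}$ and checking the root-on-outer-face and $\chi$-angle conditions) is exactly the content of the necessity direction of Property~$\mathcal V2$ of Lemma~\ref{le:structural-variable-embedding}, which the paper simply cites after first observing that $s_i$ must be an internal node of $\mathcal T$ (a leaf would have $\mathcal N_{s_i\rightarrow s}=\{(0\degree,0\degree)\}\neq\emptyset$), so that $G_{s_i\rightarrow s}$ is a non-trivial $uv$-subgraph of $G_{s\rightarrow p}$. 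There is no gap; you could only shorten the argument by invoking that lemma directly instead of re-deriving its necessity step.
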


\begin{proof}
	First, $s_i$ is an internal node of $\mathcal T$, as if it were a leaf, we would have $\mathcal N_{s_i\rightarrow s}=\{(0\degree,0\degree)\}$, while $\mathcal N_{s_i\rightarrow s}=\emptyset$. Let $uv$ be the edge of $G$ dual to $sp$. Since $s_i$ is an internal node of $\mathcal T$, it follows that $G_{s_i\rightarrow s}$ is a non-trivial $uv$-subgraph of $G_{s\rightarrow p}$. By Property~$\mathcal V_2$ of Lemma~\ref{le:structural-variable-embedding}, we have that $G_{s\rightarrow p}$ admits a $(\chi_{s\rightarrow p},\mu,\nu)$-representation, for any $\mu,\nu\in \{90\degree,180\degree,270\degree\}$, only if $G_{s_i\rightarrow s}$ admits a $(\chi_{s_i\rightarrow s},\mu_i,\nu_i)$-representation, for some $\mu_i,\nu_i\in \{90\degree,180\degree\}$. However, since $\mathcal N_{s_i\rightarrow s}=\emptyset$, we have that $G_{s_i\rightarrow s}$ admits no $(\chi_{s_i\rightarrow s},\mu_i,\nu_i)$-representation, for any $\mu_i,\nu_i\in \{90\degree,180\degree\}$, and hence $G_{s\rightarrow p}$ admits no $(\chi_{s\rightarrow p},\mu,\nu)$-representation, for any $\mu,\nu\in \{90\degree,180\degree,270\degree\}$; thus, $\mathcal N_{s\rightarrow p}=\emptyset$.
\end{proof}

\begin{lemma} \label{le:two-no-children}
	Suppose that there exist two distinct neighbors $s_i$ and $s_j$ of a node $s$ in $\mathcal T$ such that $\mathcal N_{s_i\rightarrow s}=\emptyset$ and $\mathcal N_{s_j\rightarrow s}=\emptyset$. Then, for $h=1,\dots,n$, we have $\mathcal M_{u_hv_h}=\emptyset$. 
\end{lemma}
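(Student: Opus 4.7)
The plan is to propagate the emptiness established at $s$ outwards to every leaf of $\mathcal T$ by repeated application of Lemma~\ref{le:child-to-anchestor}. The reason we need \emph{two} distinct neighbors of $s$ with empty $\mathcal N$-labels is that, when we want to push the emptiness from $s$ towards a specific neighbor $p$, Lemma~\ref{le:child-to-anchestor} demands a \emph{witness} neighbor which is different from $p$; a single neighbor could coincide with $p$ and block the argument, but given two distinct candidates $s_i,s_j$, at least one of them is always available.

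Concretely, fix any $h\in\{1,\dots,n\}$ and let $r^*_h$ be the leaf of $\mathcal T$ dual to $u_hv_h$ and $r_h$ its (unique) neighbor, so that $\mathcal M_{u_hv_h}=\mathcal N_{r_h\rightarrow r^*_h}$ by definition. Consider the unique path $P=(s=x_0,x_1,\dots,x_\ell=r^*_h)$ in $\mathcal T$ from $s$ to $r^*_h$; note that $\ell\ge 1$, since $s$ has at least two neighbors (hence is not a leaf) and $r^*_h$ is a leaf. I will prove by induction on $m$ that $\mathcal N_{x_m\rightarrow x_{m+1}}=\emptyset$ for every $0\le m\le \ell-1$. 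For $m=0$: since $s_i\neq s_j$, at least one of them, call it $s_*$, is different from $x_1$; applying Lemma~\ref{le:child-to-anchestor} to $s$ with $s_*$ as the neighbor having an empty label and $p=x_1$ yields $\mathcal N_{s\rightarrow x_1}=\emptyset$. For the inductive step, assume $\mathcal N_{x_m\rightarrow x_{m+1}}=\emptyset$; since $P$ is a simple path we have $x_{m+2}\neq x_m$, so applying Lemma~\ref{le:child-to-anchestor} to $x_{m+1}$ with $x_m$ as the witness and $p=x_{m+2}$ gives $\mathcal N_{x_{m+1}\rightarrow x_{m+2}}=\emptyset$. Setting $m=\ell-1$ yields $\mathcal N_{r_h\rightarrow r^*_h}=\emptyset$, hence $\mathcal M_{u_hv_h}=\emptyset$, and since $h$ was arbitrary the statement follows.

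There is no real obstacle here; the only delicate points are minor boundary checks. In the base case one must verify that at least one of $s_i,s_j$ is distinct from $x_1$, which is immediate from $s_i\neq s_j$. In the inductive step one needs $x_{m+2}\neq x_m$, which holds because $P$ is a simple path. Finally, one must observe that the degenerate possibility $s=r^*_h$ cannot occur (since $s$ has two distinct neighbors but $r^*_h$ is a leaf), so the path $P$ is well-defined and nonempty. Everything else is a mechanical application of Lemma~\ref{le:child-to-anchestor}.
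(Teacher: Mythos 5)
Your proof is correct and follows essentially the same route as the paper's: walk along the unique path from $s$ to the leaf $r^*_h$, use the fact that $s_i\neq s_j$ guarantees a witness distinct from the first node of the path, and then propagate emptiness by repeated application of Lemma~\ref{le:child-to-anchestor}, concluding via $\mathcal M_{u_hv_h}=\mathcal N_{r_h\rightarrow r^*_h}$. The only difference is that you spell out the induction and the boundary checks that the paper leaves implicit.
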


\begin{proof}
	Consider any index $h\in \{1,\dots,n\}$. Let $r^*_hr_h$ be the edge of $\mathcal T$ dual to $u_hv_h$, where $r^*_h$ is a leaf of $\mathcal T$ (recall that $u_hv_h$ is incident to $f^*_{\mathcal O}$). Let $(s=t_1,t_2,\dots,t_{x-1}=r_h,t_x=r^*_h)$ be the path in $\mathcal T$ between $s$ and $r^*_h$. Since $s_i$ and $s_j$ are distinct, one of them is not $t_2$. By repeated applications of Lemma~\ref{le:child-to-anchestor}, we get that $\mathcal N_{t_1\rightarrow t_2}=\emptyset$, $\mathcal N_{t_2\rightarrow t_3}=\emptyset$, $\dots$, $\mathcal N_{t_{x-1}\rightarrow t_x}=\emptyset$. Hence, $\mathcal N_{r_h\rightarrow r^*_h}=\mathcal M_{u_hv_h}=\emptyset$.
\end{proof}

We are now ready to describe the postorder traversals of $\mathcal T$. We will root $G$ first at $u_1v_1$, then at $u_2v_2$, then at $u_3v_3$, and so on. When $G$ is rooted at $u_hv_h$, we also root $\mathcal T$ at the leaf $r^*_h$ such that the edge $r_h r^*_h$ of $\mathcal T$ incident to $r^*_h$ is dual to $u_hv_h$. In order to stress the rooting, we denote by $\mathcal T_h$ the tree $\mathcal T$ when rooted at $r^*_h$. Also, we say that a node $s$ is a \emph{$\mathcal T_h$-child} of a node $t$ if $s$ is a child of $t$ in $\mathcal T_h$; then $t$ is the \emph{$\mathcal T_h$-parent} of $s$. For $h=1,\dots,n$, the postorder traversal of $\mathcal T_h$ ends after visiting the child $r_h$ of the root $r^*_h$. 

Before the postorder traversal of $\mathcal T_1$ starts, we perform the following initialization. For every internal node $s$ of $\mathcal T$, we initialize the labels $\eta(s)$, \start{s}, and \terminate{s} to \nullo, and the values $a(s)$, $b(s)$, and $d(s)$ to $0$. Further, we initialize $c(s)$ to the number $c_2(s)$ of vertices in $\mathcal C(s)$ that belong to $\chi$. Finally, for each leaf $s$ of $\mathcal T$ with a unique neighbor $t$, we set $\mathcal N_{s\rightarrow t}=\{(0\degree,0\degree)\}$, and for each internal node $s$ of $T$ and each neighbor $t$ of $s$, we set $\mathcal N_{s\rightarrow t}=$\nullo.   

We now explain how the postorder traversal of $\mathcal T_h$ works, for each $h=1,\dots,n$. 

When the postorder traversal is invoked on a subtree $\mathcal T_{s\rightarrow p}$ of $\mathcal T_h$ (at first, this is the entire tree $\mathcal T_h$), it enters a $\mathcal T_h$-child $s_i$ of $s$ only if $\mathcal N_{s_i\rightarrow s}=$\nullo. Hence, if the set $\mathcal N_{s_i\rightarrow s}$ has been computed during the traversal of $\mathcal T_j$ with $j<h$, the subtree $\mathcal T_{s_i\rightarrow s}$ is not visited during the traversal of $\mathcal T_h$. This also implies that the postorder traversal is never invoked on a subtree $\mathcal T_{s_i\rightarrow s}$ such that $s_i$ is a leaf, given that $\mathcal N_{s_i\rightarrow s}=\{(0\degree,0\degree)\}$, as established in the initialization. 



Now consider an internal node $s$ of $\mathcal T$; let $u_0,u_1,\dots,u_k$ be the clockwise order of the vertices along $\mathcal C_s$. Further, let $s_0=p,s_1,\dots,s_{k}$ be the neighbors of $s$, where $s_1,\dots,s_k$ are the $\mathcal T_h$-children of $s$ and $p$ is the $\mathcal T_h$-parent of $s$; for $i=1,\dots,k$, let $u_{i-1}u_i$ be the root of $G_{s_i\rightarrow s}$ and let $u_0u_k$ be the edge of $G$ dual to $sp$. When the postorder traversal of $\mathcal T_h$ visits $s$, after all the subtrees $\mathcal T_{s_1\rightarrow s},\dots,\mathcal T_{s_k\rightarrow s}$ have been processed, it computes the set $\mathcal N_{s\rightarrow p}$ and, possibly, computes the values of some other labels. Note that, since $\mathcal T_{s_1\rightarrow s},\dots,\mathcal T_{s_k\rightarrow s}$ have been processed, the sets $\mathcal N_{s_1\rightarrow s},\dots,N_{s_k\rightarrow s}$ have been computed already. The traversal performs the following actions.

Suppose first that $\eta(s)=s_i$, for some $i\in \{1,\dots,k\}$, hence $\mathcal N_{s_i\rightarrow s}=\emptyset$. By Lemma~\ref{le:child-to-anchestor}, we can set $\mathcal N_{s\rightarrow p}=\emptyset$. If $\eta(p)=${\sc null}, we set $\eta(p)=s$; otherwise, that is, if $\eta(p)=t$ for some neighbor $t\neq s$ of $p$, by Lemma~\ref{le:two-no-children} we can terminate all the traversals reporting in output the labeling $\mathcal M_{u_1v_1}=\mathcal M_{u_2v_2}=\dots=\mathcal M_{u_nv_n}=\emptyset$.


Suppose next that $\eta(s)=p$ or that $\eta(s)=${\sc null}; hence, for every $\mathcal T_h$-child $s_i$ of $s$, we have $\mathcal N_{s_i\rightarrow s}\neq \emptyset$. We now determine the set $\mathcal N_{s\rightarrow p}$. In order to do so, we proceed as follows.

\begin{itemize}
	\item We first determine the optimal pair $(\mu^*_i,\nu^*_i)$ for every graph $G_{s_i\rightarrow s}$ with $i=2,\dots,k-1$, or conclude that $\mathcal N_{s\rightarrow p}=\emptyset$. The optimal pair $(\mu^*_i,\nu^*_i)$ for a graph $G_{s_i\rightarrow s}$ might have already been determined, if $s$ has been processed in a previous traversal of $\mathcal T$; if that is the case, then $i\in \mathcal I(s)$. We process the $\mathcal T_h$-children of $s$ one by one, and possibly update \start{s}, \terminate{s}, $a(s)$, $b(s)$, $c(s)$, and $d(s)$ whenever the optimal pair for a $\mathcal T_h$-child of $s$ is computed; the order in which the $\mathcal T_h$-children of $s$ are processed guarantees that $\mathcal I(s)$ is an interval of the circular sequence $[0,1,\dots,k]$, after every processed $\mathcal T_h$-child of $s$. 
	
	We first describe how to process a $\mathcal T_h$-child $s_i$ of $s$ (except for the possible update to \start{s} and/or to \terminate{s}) and we later discuss the order in which the  $\mathcal T_h$-children of $s$ are processed (while doing so, we also show how to update \start{s} and/or \terminate{s} after the computation of the optimal pair for a graph $G_{s_i\rightarrow s}$).
	
	When we process a $\mathcal T_h$-child $s_i$ of $s$, we do the following. For $j=i-1,i,i+1$, we known whether $G_{s_{j}\rightarrow s}$ is trivial or not (indeed, $G_{s_{j}\rightarrow s}$ is trivial if and only if $s_j$ is a leaf of $\mathcal T$). Further, if $G_{s_{j}\rightarrow s}$ is not trivial, we know whether it admits a $(\chi_{s_{j}\rightarrow s},\mu_{j},\nu_{j})$-representation or not, for every pair $(\mu_{j},\nu_{j})$ with $\mu_{j},\nu_{j} \in \{90\degree,180\degree\}$; indeed, this information is in the set $\mathcal N_{s_j\rightarrow s}$. Hence, by Lemma~\ref{le:fix-values}, we either correctly conclude that $G_{s\rightarrow p}$ has no $(\chi_{s\rightarrow p},\mu,\nu)$-representation, for any values $\mu,\nu \in \{90\degree,180\degree,270\degree\}$, or we find the optimal pair $(\mu^*_i,\nu^*_i)$ for $G_{s_{i}\rightarrow s}$. 
	\begin{itemize}
		\item If we concluded that $G_{s\rightarrow p}$ has no $(\chi_{s\rightarrow p},\mu,\nu)$-representation, for any values $\mu,\nu \in \{90\degree,180\degree,270\degree\}$, then we set $\mathcal N_{s\rightarrow p}=\emptyset$. If $\eta(p)=${\sc null}, then we set $\eta(p)=s$ and stop processing $s$, otherwise, that is, if $\eta(p)=t$ for some neighbor $t\neq s$ of $p$, by Lemma~\ref{le:two-no-children} we can terminate all the traversals reporting in output the labeling $\mathcal M_{u_1v_1}=\mathcal M_{u_2v_2}=\dots=\mathcal M_{u_nv_n}=\emptyset$. 
		\item If we found the optimal pair $(\mu^*_i,\nu^*_i)$ for $G_{s_{i}\rightarrow s}$, then we update $a(s)$, $b(s)$, $c(s)$, or $d(s)$. Namely, if $\mu^*_i+\nu^*_i=360\degree$, or $\mu^*_i+\nu^*_i=270\degree$, or $\mu^*_i+\nu^*_i=180\degree$, then we increase $a(s)$, or $b(s)$, or $c(s)$ by $1$, respectively. If $i-1 \in \mathcal I(s)$, then we add $2-\frac{\nu^*_{i-1}+\mu^*_i}{90\degree}$ to $d(s)$, and if $i+1 \in \mathcal I(s)$, then we add $2-\frac{\nu^*_{i}+\mu^*_{i+1}}{90\degree}$ to $d(s)$. 
	\end{itemize}
	
	We now discuss the order in which the $\mathcal T_h$-children of $s$ are processed. In order to determine the next $\mathcal T_h$-child of $s$ to process, we repeatedly apply the following rules.
	
	\begin{itemize}
		\item If \start{s}$=$\terminate{s}$=$\nullo, then we process $s_2$, and update \start{s}$=$\terminate{s}$=s_2$ if we found the optimal pair $(\mu^*_2,\nu^*_2)$ for $G_{s_2\rightarrow s}$, or stop processing $s$ if we concluded that $\mathcal N_{s\rightarrow p}=\emptyset$.
		\item Suppose next that \start{s}$\neq$\nullo~and \terminate{s}$\neq$\nullo. Let $i=$\start{s} and $j=$\terminate{s}. 
		\begin{itemize}
			\item If $3\leq i\leq k$ and $j \neq i-1$, then we process $s_{i-1}$, and we either update \start{s}$=i-1$ if we found the optimal pair $(\mu^*_{i-1},\nu^*_{i-1})$ for $G_{s_{i-1}\rightarrow s}$, or stop processing $s$ if we concluded that $\mathcal N_{s\rightarrow p}=\emptyset$.
			\item Otherwise, if $1\leq j\leq k-2$ and $i\neq j+1$, then we process $s_{j+1}$, and we either update \terminate{s}$=j+1$ if we found the optimal pair $(\mu^*_{j+1},\nu^*_{j+1})$ for $G_{s_{j+1}\rightarrow s}$, or stop processing $s$ if we concluded that $\mathcal N_{s\rightarrow p}=\emptyset$.
		\end{itemize}			
	\end{itemize}
	The described process either computes the optimal sequence $\mu^*_2,\nu^*_2,\dots,\mu^*_{k-1},\nu^*_{k-1}$ for $G_{s\rightarrow p}$, where $(\mu^*_i,\nu^*_i)$ is the optimal pair for $G_{s_i\rightarrow s}$ for $i=2,\dots,k-1$, or concludes that $\mathcal N_{s\rightarrow p}=\emptyset$ (possibly after computing the optimal pair $(\mu^*_i,\nu^*_i)$ for some graph $G_{s_i\rightarrow s}$). In the former case, we have $\{2,\dots,k-1\}\subseteq\mathcal I(s)$; the indices $0$, $1$, and $k$ might belong to $\mathcal I(s)$ or not. In the latter case, the processing of $s$ is finished.
	
	\item If we computed the optimal sequence $\mu^*_2,\nu^*_2,\dots,\mu^*_{k-1},\nu^*_{k-1}$ for $G_{s\rightarrow p}$, then, for any $\mu,\nu \in \{90\degree,180\degree,270\degree\}$, we determine whether $G_{s\rightarrow p}$ admits a $(\chi_{s\rightarrow p},\mu,\nu)$-representation (and hence whether $(\mu,\nu)\in \mathcal N_{s\rightarrow p}$ or not). By Lemmata~\ref{le:structural-variable-embedding} and~\ref{le:fix-values}, this is equivalent to determining whether there exist an in-out assignment $\mathcal A$ (for the $u_0u_k$-subgraphs of $G_{s\rightarrow p}$) and values $\rho_0,\rho_1,\dots,\rho_k,\mu_1,\nu_1,\mu_k,\nu_k$ such that Properties~$\mathcal V1$--$\mathcal V5$ of Lemma~\ref{le:structural-variable-embedding} are satisfied.
	
	By Lemma~\ref{le:promising-sequences}, we can construct the promising sequences for $(G_{s\rightarrow p},\mu,\nu)$ in $O(1)$ time; recall that each promising sequence is obtained by adding six values $\mu_1,\nu_1,\mu_{k},\nu_{k},\rho_0,\rho_k$ to the optimal sequence $\mu^*_2,\nu^*_2,\dots,\mu^*_{k-1},\nu^*_{k-1}$ for $G_{s\rightarrow p}$. For each promising sequence $\mu_1,\nu_1,\mu^*_2,\nu^*_2,\dots,\mu^*_{k-1},\nu^*_{k-1},\mu_{k},\nu_{k},\rho_0,\rho_k$ for $(G_{s\rightarrow p},\mu,\nu)$, we compute the values $a$, $b$, $c$, $d$, and $t$, where $a$ and $b$ denote the number of $4$- and $3$-components of $G_{s\rightarrow p}$, respectively, $c$ denotes the number of $2$-components of $G_{s\rightarrow p}$ plus the number of vertices $u_i \in \chi$ with $i\in \{1,\dots,k-1\}$, $d$ denotes the sum $(2-\frac{\nu_1+\mu^*_2}{90\degree})+\sum_{i=2}^{k-2} (2-\frac{\nu^*_i+\mu^*_{i+1}}{90\degree})+(2-\frac{\nu^*_{k-1}+\mu_k}{90\degree})$, and $t=(k-1)-\frac{\rho_0+\rho_k}{90\degree}$. The values $a$, $b$, $c$, $d$, and $t$ can be obtained by means of $O(1)$ sums, and hence in $O(1)$ time, as follows. 
	\begin{itemize}
		\item First, $a$ is equal to $a(s)$ minus the number of indices $j\in \{k,0,1\} \cap \mathcal I(s)$ such that $\mu^*_j+\nu^*_j=360\degree$ plus the number of indices $j\in \{k,1\}$ such that $\mu_j+\nu_j=360\degree$. 
		
		Indeed, the value $a(s)$ counts all the optimal pairs $(\mu^*_j,\nu^*_j)$ such that $j\in \mathcal I(s)$ and $\mu^*_j+\nu^*_j=360\degree$. However, the optimal pairs $(\mu^*_k,\nu^*_k)$, $(\mu^*_0,\nu^*_0)$, and $(\mu^*_1,\nu^*_1)$ do not contribute to the promising sequence $\mu_1,\nu_1,\mu^*_2,\nu^*_2,\dots,\mu^*_{k-1},\nu^*_{k-1},\mu_{k},\nu_{k},\rho_0,\rho_k$ for $(G_{s\rightarrow p},\mu,\nu)$; indeed, the promising sequence contains the pairs $(\mu_1,\nu_1)$ and $(\mu_k,\nu_k)$ in place of $(\mu^*_1,\nu^*_1)$ and $(\mu^*_k,\nu^*_k)$, respectively, and the promising sequence does not contain any value for $G_{s_0\rightarrow s}$.
		\item Analogously, $b$ is equal to $b(s)$ minus the number of indices $j\in \{k,0,1\}\cap \mathcal I(s)$ such that $\mu^*_j+\nu^*_j=270\degree$ plus the number of indices $j\in \{k,1\}$ such that $\mu_j+\nu_j=270\degree$.
		\item Further, $c$ is equal to $c(s)$ minus the number of indices $j\in \{k,0,1\}\cap \mathcal I(s)$ such that $\mu^*_j+\nu^*_j=180\degree$, plus the number of indices $j\in \{k,1\}$ such that $\mu_j+\nu_j=180\degree$, and minus the number of indices  $j\in \{k,0\} \cap \chi$. The last term accounts for the fact that $u_0$ and $u_k$ do not contribute to $c$, even if they belong to $\chi$.
		\item The value $d$ is equal to $d(s)$ plus $(2-\frac{\nu_1+\mu^*_2}{90\degree})$, plus $(2-\frac{\nu^*_{k-1}+\mu_k}{90\degree})$, minus $(2-\frac{\nu^*_1+\mu^*_2}{90\degree})$ if $\{1,2\}\subseteq \mathcal I(s)$, minus $(2-\frac{\nu^*_{k-1}+\mu^*_k}{90\degree})$ if $\{k-1,k\}\subseteq \mathcal I(s)$, minus $(2-\frac{\nu^*_{k}+\mu^*_0}{90\degree})$ if $\{k,0\}\subseteq \mathcal I(s)$, and minus $(2-\frac{\nu^*_{0}+\mu^*_1}{90\degree})$ if $\{0,1\}\subseteq \mathcal I(s)$.
		\item Finally, the value $t=(k-1)-\frac{\rho_0+\rho_k}{90\degree}$ can be trivially computed in $O(1)$ time. 
	\end{itemize}  
	By Lemma~\ref{le:find-values-variable}, it is possible to determine in $O(1)$ time whether the promising sequence $\mu_1,\nu_1,\mu^*_2,\nu^*_2,\dots$, $\mu^*_{k-1},\nu^*_{k-1},\mu_{k},\nu_{k},\rho_0,\rho_k$ for $(G_{s\rightarrow p},\mu,\nu)$ is extensible, i.e., whether there exist an in-out assignment $\mathcal A$ (for the $u_0u_k$-subgraphs of $G_{s\rightarrow p}$) and values $\rho_1,\rho_2,\dots,\rho_{k-1}$ that, together with the promising sequence, satisfy Properties~$\mathcal V1$--$\mathcal V5$ of Lemma~\ref{le:structural-variable-embedding}. Then we add the pair $(\mu,\nu)$ to $\mathcal N_{s\rightarrow p}$ if and only if one of the promising sequences for $(G_{s\rightarrow p},\mu,\nu)$ is extensible.		
\end{itemize} 

We are now ready to state the following.	

\begin{theorem} \label{th:2-con-variable}
	Let $G$ be an $n$-vertex $2$-connected outerplanar graph and $\chi$ be a subset of the degree-$2$ vertices of $G$. There is an $O(n)$-time algorithm that decides whether $G$ admits a $\chi$-constrained representation or not. 
	
	If $G$ admits a $\chi$-constrained representation, the algorithm labels every vertex $v$ of $G$ whose degree is not larger than $3$ with a set $\gamma(v)$; this set contains all the values $\mu\in \{90\degree,180\degree,270\degree\}$ such that $G$ admits a $\chi$-constrained representation $(\mathcal E,\phi)$ in which $v$ is incident to the outer face of $\mathcal E$ and the sum of the internal angles at $v$ is equal to $\mu$. The algorithm computes the sets $\gamma(v)$ in total $O(n)$ time. 
	
	Finally, if $\mu\in \gamma(v)$, for some vertex $v$ of $G$ whose degree is not larger than $3$ and some value $\mu\in \{90\degree,180\degree,270\degree\}$, then the algorithm can construct in $O(n)$ time a $\chi$-constrained representation $(\mathcal E,\phi)$ in which $v$ is incident to the outer face of $\mathcal E$ and the sum of the internal angles at $v$ is equal to $\mu$. 
\end{theorem}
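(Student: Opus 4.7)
The plan is to establish three things in turn: that the $n$ postorder traversals of $\mathcal T$ correctly compute the set $\mathcal N_{s\rightarrow t}$ for every directed edge of $\mathcal T$; that their total cost is $O(n)$; and that the labels $\gamma(v)$, together with a concrete $\chi$-constrained representation, can then be produced by combining Lemma~\ref{le:edge-vertex-labels} with Theorem~\ref{th:2-con-variable-edge}.

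For correctness, I would proceed by induction on the order in which the sets $\mathcal N_{s\rightarrow t}$ are finalized. When the traversal of $\mathcal T_h$ visits an internal node $s$ with $\mathcal T_h$-parent $p$, all $\mathcal N_{s_i\rightarrow s}$ for the $\mathcal T_h$-children $s_i$ of $s$ are known. If $\eta(s)$ points to such a child with empty label, Lemma~\ref{le:child-to-anchestor} justifies setting $\mathcal N_{s\rightarrow p}=\emptyset$; if two distinct children are ever found with empty label, Lemma~\ref{le:two-no-children} justifies aborting with all $\mathcal M_{u_h v_h}=\emptyset$. Otherwise, the computation of $\mathcal N_{s\rightarrow p}$ is a direct re-enactment of the procedure from Theorem~\ref{th:2-con-variable-edge}: the algorithm extracts an optimal pair for each non-parent neighbor via Lemma~\ref{le:fix-values}, enumerates the $O(1)$ promising sequences for $(G_{s\rightarrow p},\mu,\nu)$ via Lemma~\ref{le:promising-sequences}, and tests extensibility via Lemma~\ref{le:find-values-variable}, so local correctness follows from those lemmata. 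In parallel, I would verify by induction on the traversal order that the persistent labels $\eta(s)$, \start{s}, \terminate{s}, $a(s)$, $b(s)$, $c(s)$, $d(s)$ maintain their intended invariants: $\mathcal I(s)$ equals the set of neighbors whose optimal pair has been stored, and $a(s), b(s), c(s), d(s)$ correctly aggregate the counts of $x$-components and the differences $2-(\nu^*_j+\mu^*_{j+1})/90\degree$ over that interval. These invariants are preserved because each newly processed child extends $\mathcal I(s)$ by a single index adjacent to the current interval, and the aggregates are corrected by $O(1)$ additive updates.

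The main obstacle is the amortized linear bound on running time. The key observation is that each directed edge of $\mathcal T$ triggers the computation of its set at most once across all $n$ traversals: the traversal of $\mathcal T_h$ descends into a subtree $\mathcal T_{s_i\rightarrow s}$ only when $\mathcal N_{s_i\rightarrow s}$ equals the sentinel \nullo, and this value is overwritten the first time the subtree is entered. Since $\mathcal T$ has $O(n)$ edges there are $O(n)$ such computations. Each computation spends $O(1)$ work per newly processed neighbor (applying Lemma~\ref{le:fix-values} and updating the aggregates), and then $O(1)$ additional work to recover the values $(a,b,c,d,t)$ for the sub-instance $G_{s\rightarrow p}$ from the stored labels via $O(1)$ additive corrections (accounting for the indices in $\{k,0,1\}\cap \mathcal I(s)$ and for the promising values $\mu_1,\nu_1,\mu_k,\nu_k$), to enumerate the promising sequences via Lemma~\ref{le:promising-sequences}, and to test each via Lemma~\ref{le:find-values-variable}. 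Aggregated across all traversals, the neighbor processings contribute $O(\sum_{s}\deg_{\mathcal T}(s))=O(n)$ time, the directed-edge visits contribute $O(n)$ time, and the one-time initializations together with the computation of $\mathcal O$ and $\mathcal T$ take $O(n)$ time, yielding the desired bound.

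Once all edge labels are available, Lemma~\ref{le:edge-vertex-labels} yields $\gamma(v)$ for every vertex $v$ of degree at most $3$ in total $O(n)$ time. By Lemma~\ref{le:low-degree-external}, every $\chi$-constrained representation of $G$ has a degree-$2$ vertex incident to its outer face, so $G$ admits such a representation if and only if $\gamma(v)\neq\emptyset$ for some degree-$2$ vertex $v$, which is decidable in $O(n)$ time. Finally, if $\mu\in\gamma(v)$ for some vertex $v$ of degree at most $3$, then by the construction of $\gamma(v)$ in Lemma~\ref{le:edge-vertex-labels} there exist an edge $uv$ incident to $f^*_{\mathcal O}$ and a value $\nu$ with $(\mu,\nu)\in \mathcal M_{uv}$; invoking the constructive part of Theorem~\ref{th:2-con-variable-edge} on this instance then produces a $(\chi,\mu,\nu)$-representation of $G$ in $O(n)$ time, completing the proof.
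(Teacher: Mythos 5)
Your proposal is correct and follows essentially the same route as the paper's own proof: the same $n$ re-rooted postorder traversals with memoized sets $\mathcal N_{s\rightarrow t}$, the same use of Lemmata~\ref{le:child-to-anchestor}, \ref{le:two-no-children}, \ref{le:fix-values}, \ref{le:promising-sequences}, and~\ref{le:find-values-variable} for local correctness, the same auxiliary labels $\eta$, \start{s}, \terminate{s}, $a(s),b(s),c(s),d(s)$ with interval/aggregate invariants for the amortized $O(n)$ bound, and the same final assembly via Lemmata~\ref{le:low-degree-external} and~\ref{le:edge-vertex-labels} and Theorem~\ref{th:2-con-variable-edge}. No substantive differences or gaps to report.
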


\begin{proof}
	We show how to compute the sets $\gamma(v)$ in total $O(n)$ time. Then, by Lemma~\ref{le:low-degree-external}, in order to test whether $G$ admits a $\chi$-constrained representation, it suffices to check whether there exists a vertex $v$ whose degree is not larger than $3$	and such that $\gamma(v)\neq \emptyset$. Clearly, this can be done in $O(n)$ time.
	
	Let $\mathcal O$ be the outerplane embedding of $G$. By Lemma~\ref{le:edge-vertex-labels}, the sets $\gamma(u)$ can be computed in total $O(n)$ time from the sets $\mathcal M_{uv}$ for the edges $uv$ of $G$ incident to $f^*_{\mathcal O}$. Further, if an edge $u_hv_h$ of $G$ incident to $f^*_{\mathcal O}$ is dual to an edge $r^*_hr_h$ of the extended dual tree $\mathcal T$, where $r^*_h$ is a leaf of $\mathcal T$, then by definition $\mathcal M_{u_hv_h}=\mathcal N_{r_h\rightarrow r^*_h}$. Hence, in order to compute in total $O(n)$ time the sets $\gamma(v)$, it suffices to show how to compute in total $O(n)$ time the sets $\mathcal N_{t\rightarrow s}$ and $\mathcal N_{s\rightarrow t}$ for all the edges $st$ of $\mathcal T$. 
	
	We now prove that the algorithm described before the theorem correctly compute such sets (unless a node $s$ is found with two neighbors $s_i$ and $s_j$ such that $\mathcal N_{s_i\rightarrow s}$ and $\mathcal N_{s_j\rightarrow s}$) and does so in total linear time. 
	
	The correctness is proved inductively. Indeed, for each leaf $s$ of $\mathcal T$ with unique neighbor $t$, the set $\mathcal N_{s\rightarrow t}$ is correctly defined as $\{0\degree,0\degree\}$ during the initialization that precedes the traversals. Now consider any internal node $s$ with parent $p$ and with children $s_1,\dots,s_k$. The set $\mathcal N_{s\rightarrow p}$ is constructed only after the sets $\mathcal N_{s_i\rightarrow s}$ have been constructed, for $i=1,\dots,k$. Inductively assume that, for $i=1,\dots,k$, the set $\mathcal N_{s_i\rightarrow s}$ correctly contains all the pairs $(\mu_i,\nu_i)$ with $\mu_i,\nu_i\in \{90\degree,180\degree,270\degree\}$ such that $G_{s_i\rightarrow s}$ admits a $(\chi_{s_i\rightarrow s},\mu_i,\nu_i)$-representation. We prove that $\mathcal N_{s\rightarrow p}$ contains a pair $(\mu,\nu)$ with $\mu,\nu\in \{90\degree,180\degree,270\degree\}$ if and only if $G_{s\rightarrow p}$ admits a $(\chi_{s\rightarrow p},\mu,\nu)$-representation.
	
	Suppose first that $(\mu,\nu)\notin \mathcal N_{s\rightarrow p}$. There are three possible reasons for that. 
	
	\begin{enumerate}
		\item If $\mathcal N_{s_i\rightarrow s}=\emptyset$, for some child $s_i$ of $s$, then the algorithm sets $\mathcal N_{s\rightarrow p}=\emptyset$, which implies that $(\mu,\nu)\notin \mathcal N_{s\rightarrow p}$. Then Lemma~\ref{le:child-to-anchestor} ensures that $G_{s\rightarrow p}$ admits no $(\chi_{s\rightarrow p},\mu,\nu)$-representation. 
		\item When determining the optimal pair $(\mu^*_i,\nu^*_i)$ for $G_{s_i\rightarrow s}$, for some $i\in \{2,\dots,k-1\}$, the algorithm might set $\mathcal N_{s\rightarrow p}=\emptyset$, which implies that $(\mu,\nu)\notin \mathcal N_{s\rightarrow p}$. Then Lemma~\ref{le:fix-values} ensures that $G_{s\rightarrow p}$ admits no $(\chi_{s\rightarrow p},\mu,\nu)$-representation.
		\item After successfully computing the optimal sequence $\mu^*_2,\nu^*_2$, $\dots$, $\mu^*_{k-1},\nu^*_{k-1}$ for $G_{s\rightarrow p}$, the algorithm might not insert $(\mu,\nu)$ into $\mathcal N_{s\rightarrow p}$ because it fails to determine an in-out assignment $\mathcal A$ and a sequence of values $\rho_0,\rho_1,\dots,\rho_k,\mu_1,\nu_1,\mu_k,\nu_k$ that, together with the optimal sequence for $G_{s\rightarrow p}$, satisfy Properties~$\mathcal V_1$--$\mathcal V_5$ of Lemma~\ref{le:structural-variable-embedding} for $(G,\mu,\nu)$. Then Lemmata~\ref{le:structural-variable-embedding},~\ref{le:fix-values},~\ref{le:extensible-solution}, and~\ref{le:find-values-variable} ensure that $G_{s\rightarrow p}$ admits no $(\chi_{s\rightarrow p},\mu,\nu)$-representation.   
	\end{enumerate}     
	
	Suppose next that $(\mu,\nu)\in \mathcal N_{s\rightarrow p}$. The algorithm inserts $(\mu,\nu)$ into $\mathcal N_{s\rightarrow p}$ because there exist an in-out assignment $\mathcal A$ and a sequence of values $\rho_0,\rho_1,\dots,\rho_k,\mu_1,\nu_1,\mu_k,\nu_k$ that, together with the optimal sequence $\mu^*_2,\nu^*_2$, $\dots$, $\mu^*_{k-1},\nu^*_{k-1}$ for $G_{s\rightarrow p}$, satisfy Properties~$\mathcal V_1$--$\mathcal V_5$ of Lemma~\ref{le:structural-variable-embedding} for $(G,\mu,\nu)$. Then such a lemma ensures that $G_{s\rightarrow p}$ admits a $(\chi_{s\rightarrow p},\mu,\nu)$-representation.   
	
	There is one situation in which the algorithm actually does not construct all the sets $\mathcal N_{s\rightarrow t}$. This happens if two neighbors $s_i$ and $s_j$ of the same node $s$ are found such that $\mathcal N_{s_i\rightarrow s}=\emptyset$ and $\mathcal N_{s_j\rightarrow s}=\emptyset$. Then, by Lemma~\ref{le:two-no-children}, the algorithm correctly and directly concludes that $\mathcal M_{u_hv_h}=\emptyset$, for every edge $u_hv_h$ of $G$ incident to $f^*_{\mathcal O}$, and hence that $\gamma(v)=\emptyset$, for every vertex $v$ of $G$ whose degree is not larger than $3$.
	
	We now show that the algorithm runs in $O(n)$ time. 
	
	
	\begin{itemize}
		\item First, the initialization takes $O(n)$ time and is performed once. Namely, the initialization of the labels $\eta(s)$, \start{s}, and \terminate{s} to \nullo, and of the values $a(s)$, $b(s)$, and $d(s)$ to $0$ is performed in $O(1)$ time, for every internal node $s$ of $\mathcal T$, and hence in total $O(n)$ time. After an $O(n)$-time preprocessing that marks every vertex of $G$ that belongs to $\chi$, the initialization of $c(s)$ to the number $c_2(s)$ of vertices in $\mathcal C(s)$ that belongs to $\chi$ can be performed in $O(k)$ time, for every internal node $s$ of $\mathcal T$ such that $\mathcal C(s)$ has $k+1$ vertices, and hence in total $O(n)$ time. Finally, the initialization of the sets $\mathcal N_{s\rightarrow t}$ can be trivially performed in $O(1)$ time per edge of $\mathcal T$, and hence in total $O(n)$ time.
		\item If the algorithm finds two neighbors $s_i$ and $s_j$ of the same node $s$ such that $\mathcal N_{s_i\rightarrow s}=\emptyset$ and $\mathcal N_{s_j\rightarrow s}=\emptyset$, then in total $O(n)$ time sets $\mathcal M_{u_hv_h}=\emptyset$, for every edge $u_hv_h$ of $G$ incident to $f^*_{\mathcal O}$. Note that this operation is performed at most once by the algorithm.
		\item Now observe that, for each internal node $s$ of $\mathcal T$ and for each neighbor $p$ of $s$, there is (at most) one rooting of $\mathcal T$ for which a postorder traversal is invoked on $\mathcal T_{s\rightarrow p}$. Indeed, once a postorder traversal is invoked on $\mathcal T_{s\rightarrow p}$, the set $\mathcal N_{s\rightarrow p}$ is determined; if some subsequent traversal of $\mathcal T$ processes $p$ with $s$ as a child, then no postorder traversal is invoked on $\mathcal T_{s\rightarrow p}$. This implies that, if $\mathcal C_s$ has $k+1$ neighbors, then $s$ is processed $O(k)$ times. 
		
		\item We now show that, for each internal node $s$ of $\mathcal T$ with $k+1$ neighbors $s_0,\dots,s_k$ in $\mathcal T$, the computation of the optimal pairs for $\mathcal G_{s_0\rightarrow s},\dots,\mathcal G_{s_k\rightarrow s}$ takes $O(k)$ time. 
		
		First, for each neighbor $s_i$ of $s$, we have that Lemma~\ref{le:fix-values} is invoked at most once trying to determine the optimal pair for $\mathcal G_{s_i\rightarrow s}$.
		
		Second, the application of Lemma~\ref{le:fix-values} takes $O(1)$ time in order to either find the optimal pair for $\mathcal G_{s_i\rightarrow s}$ or stop the processing of $s$ in the current traversal of $\mathcal T$. Hence, the ``successful'' (at determining optimal pairs) invocations of Lemma~\ref{le:fix-values} take $O(k)$ time in total, as $s$ has $k+1$ neighbors, and the unsuccessful ones take $O(k)$ time in total, as $s$ is processed $O(k)$ times. 
		
		Third, after each invocation of Lemma~\ref{le:fix-values}, some updates are possibly performed on some labels. Namely, suppose that Lemma~\ref{le:fix-values} is invoked on a graph $\mathcal G_{s_i\rightarrow s}$, where $s_i$ is a child of $s$ in the current traversal of $\mathcal T$; let $p$ be the parent of $s$ in such a traversal. 
		
		\begin{itemize}
			\item If Lemma~\ref{le:fix-values} determines that $G_{s\rightarrow p}$ has no $(\chi_{s\rightarrow p},\mu,\nu)$-representation, for any values $\mu,\nu\in\{90\degree,180\degree\}$, then the algorithm sets $\mathcal N_{s\rightarrow p}=\emptyset$ and $\eta(p)=s_i$. Hence, each unsuccessful invocation of Lemma~\ref{le:fix-values} leads to $O(1)$ time to update labels. Since $s$ is processed $O(k)$ times, this amounts $O(k)$ time in total. 
			\item  If Lemma~\ref{le:fix-values} determines an optimal pair, then the algorithm either increases the value of one of $a(s)$, $b(s)$, and $c(s)$ by $1$, or adds either $0$, or $1$, or $2$ to $d(s)$. Finally, the algorithm either sets \start{s}$=$\terminate{s}$=2$, or it increases the value of  \terminate{s}~by $1$, or it decreases the value of \start{s}~by $1$. Clearly, each of these updates can be performed in $O(1)$ time. Hence, the successful invocations of Lemma~\ref{le:fix-values} lead to spend $O(k)$ time to update labels, as $s$ has $k+1$ neighbors.
		\end{itemize} 
		
		Finally, a key component in the analysis of the running time is that the ``next'' graph $\mathcal G_{s_i\rightarrow s}$ on which Lemma~\ref{le:fix-values} is invoked can be determined in $O(1)$ time, due to the values $\eta(s)$, \start{s}, and \terminate{s}. Indeed, when processing $s$ during the traversal of $\mathcal T_h$, the algorithm first checks the value of $\eta(s)$ and, possibly, concludes the processing of $s$ in the traversal of $\mathcal T_h$ without ever invoking Lemma~\ref{le:fix-values}. Otherwise, the algorithm invokes Lemma~\ref{le:fix-values} on either $\mathcal G_{s_2\rightarrow s}$, or $\mathcal G_{s_{\textrm{\sc start} -1}\rightarrow s}$, or $\mathcal G_{s_{\textrm{\sc end} -1}\rightarrow s}$, depending on the values of \start{s} and \terminate{s}. 
		
		\item As noted during the algorithm's description, once the optimal sequence for a graph $\mathcal G_{s\rightarrow p}$ has been computed, the set $\mathcal N_{s\rightarrow p}$ can be determined in $O(1)$ time due to Lemmata~\ref{le:structural-variable-embedding},~\ref{le:fix-values},~\ref{le:promising-sequences}, and~\ref{le:find-values-variable}.  
	\end{itemize}
	
	This concludes the proof that the running time of the algorithm is $O(n)$.
	
	Finally, we show that, given a vertex $u$ of $G$ whose degree is not larger than $3$ and given a value $\mu\in \{90\degree,180\degree,270\degree\}$ such that $\mu\in \gamma(u)$, we can construct in $O(n)$ time a $\chi$-constrained representation $(\mathcal E,\phi)$ in which $u$ is incident to the outer face of $\mathcal E$ and the sum of the internal angles at $u$ is equal to $\mu$. By Lemma~\ref{le:edge-vertex-labels}, there exists a set $\mathcal M_{uv}$, where $uv$ is an edge incident to $u$ and to $f^*_{\mathcal O}$, such that $(\mu,\nu) \in \mathcal M_{uv}$, for some value $\nu\in \{90\degree,180\degree,270\degree\}$. Clearly, the set $\mathcal M_{uv}$ and the pair $(\mu,\nu)$ can be found in $O(1)$ time, given that $u$ has at most three incident edges and each set $\mathcal M_{uw}$ has constant size. We can now apply Theorem~\ref{th:2-con-variable-edge} in order to construct a $(\chi,\mu,\nu)$-representation of $G$ in $O(n)$ time; this is the desired representation $(\mathcal E,\phi)$. 
\end{proof}

\subsection{General Outerplanar Graphs} \label{sse:var}

In this section we remove the assumption that the input graph is $2$-connected and show how to test whether an outerplanar graph admits a planar rectilinear drawing in linear time. Recall that we can assume that the input outerplanar graph is connected, as if it is not, then it admits a planar rectilinear drawing if and only if every of its connected components does.   

Let $G$ be an $n$-vertex connected outerplanar graph and let $\mathcal O$ be its outerplane embedding. Consider the block-cut-vertex tree $T$ of $G$~\cite{h-gt-69,ht-aeagm-73}. We denote by $G_b$ the block corresponding to a B-node $b$, by $n_b$ the number of nodes of $G_b$, and by $\mathcal O_b$ the restriction of $\cal O$ to the vertices and edges of $G_b$. 

We now define a set $\chi_b$ for every non-trivial block $G_b$ of $G$. We initialize each set $\chi_b$ to an empty set. Then, for every cut-vertex $c$ that is shared by two non-trivial blocks $G_{b_1}$ and $G_{b_2}$ of $G$, we add $c$ to both $\chi_{b_1}$ and $\chi_{b_2}$. This concludes the construction of the sets $\chi_b$. We have the following.

\begin{lemma} \label{le:chi-necessary}
	Consider any rectilinear representation $(\mathcal E,\phi)$ of $G$ and let $G_b$ be any non-trivial block of $G$. The restriction of $(\mathcal E,\phi)$ to $G_b$ is a $\chi_b$-constrained representation of $G_b$. 
\end{lemma}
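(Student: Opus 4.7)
\medskip
\noindent\emph{Proof plan.} The plan is to establish two things: (i) the restriction $(\mathcal E_b,\phi_b)$ of $(\mathcal E,\phi)$ to $G_b$ is a rectilinear representation of $G_b$, and (ii) for every $x\in \chi_b$ and every face $f$ of $\mathcal E_b$ incident to $x$, we have $\phi_b(x,f)\in\{90\degree,270\degree\}$. Item (i) is immediate from the definition of restriction: starting from any planar rectilinear drawing in the equivalence class $(\mathcal E,\phi)$, deleting the vertices and edges not in $G_b$ yields a planar rectilinear drawing of $G_b$, whose equivalence class is $(\mathcal E_b,\phi_b)$. Since $G_b$ is $2$-connected, $(\mathcal E_b,\phi_b)$ is indeed a rectilinear representation.

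For item (ii), I would first observe that every $x\in\chi_b$ has degree exactly $2$ in $G_b$. By construction of $\chi_b$, such an $x$ is a cut-vertex of $G$ shared by $G_b$ and a second non-trivial block $G_{b'}$. Since $G_b$ and $G_{b'}$ are both $2$-connected (hence contain at least two edges at $x$) and the maximum degree of $G$ is $4$, vertex $x$ has exactly two edges in $G_b$ and exactly two edges in $G_{b'}$, with no other incident edges and no third block containing $x$.

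The main step is the angle analysis at $x$. By the planarity of $\mathcal E$ and the block structure, the four edges incident to $x$ in $G$ appear in the clockwise order $e_1^b,e_2^b,e_1^{b'},e_2^{b'}$ around $x$, where $e_1^b,e_2^b$ belong to $G_b$ and $e_1^{b'},e_2^{b'}$ belong to $G_{b'}$. These edges delimit four consecutive angular regions $\alpha_1,\alpha_2,\alpha_3,\alpha_4$ at $x$, corresponding to the four occurrences of $x$ along the boundaries of faces of $\mathcal E$, where $\alpha_1$ is the angle bounded by the two edges of $G_b$ on the side not containing any edge of $G_{b'}$. Since $(\mathcal E,\phi)$ is a rectilinear representation, each $\alpha_i\in\{90\degree,180\degree,270\degree\}$ and $\alpha_1+\alpha_2+\alpha_3+\alpha_4=360\degree$. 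The key observation is that $\alpha_1\neq 180\degree$: otherwise $\alpha_2+\alpha_3+\alpha_4=180\degree$, contradicting $\alpha_i\geq 90\degree$ for $i=2,3,4$. Hence $\alpha_1\in\{90\degree,270\degree\}$.

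Finally, I would translate this back to the restriction. Under the removal of the edges of $G_{b'}$ at $x$, the three consecutive angles $\alpha_2,\alpha_3,\alpha_4$ merge into a single angle at $x$ in one face $f$ of $\mathcal E_b$, while $\alpha_1$ remains the angle at $x$ in the other face $f'$ of $\mathcal E_b$. By the definition of restriction, $\phi_b(x,f')=\alpha_1$ and $\phi_b(x,f)=\alpha_2+\alpha_3+\alpha_4=360\degree-\alpha_1$; both values lie in $\{90\degree,270\degree\}$. This is exactly the condition defining a $\chi_b$-constrained representation, completing the proof. The only delicate point is the bookkeeping between face occurrences of $x$ in $\mathcal E$ and the corresponding faces in $\mathcal E_b$, which is handled cleanly by the definition of restriction given in Section~\ref{se:preliminaries}.
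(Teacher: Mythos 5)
Your proposal is correct and follows essentially the same route as the paper's proof: both reduce the claim to the cut-vertex $x$ shared by two non-trivial blocks, note that $x$ has exactly two edges in each block and that, by planarity, each block's edges are consecutive in the rotation at $x$, and then conclude from the four $\geq 90\degree$ angles summing to $360\degree$ that the two face angles of the restriction at $x$ lie in $\{90\degree,270\degree\}$. Your version merely spells out the angle arithmetic (and the easy fact that the restriction is a rectilinear representation) that the paper leaves implicit.
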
 	 

\begin{proof}
	Let $c$ be a cut-vertex of $G$ that is shared by two non-trivial blocks $G_{b_1}$ and $G_{b_2}$ of $G$; then $c$ has degree $2$ in each of $G_{b_1}$ and $G_{b_2}$. Further, let $(\mathcal E_1,\phi_1)$ and $(\mathcal E_2,\phi_2)$ be the restrictions of $(\mathcal E,\phi)$ to $G_{b_1}$ and $G_{b_2}$, respectively. Finally, let $f_{1,a}$ and $f_{1,b}$ be the faces of $\mathcal E_{b_1}$ incident to $c$ and let $f_{2,a}$ and $f_{2,b}$ be the faces of $\mathcal E_{b_2}$ incident to $c$. Then we only need to prove that $\phi_1(c,f_{1,a}),\phi_1(c,f_{1,b}),\phi_2(c,f_{2,a}),\phi_2(c,f_{2,b})\in \{90\degree,270\degree\}$. However, this follows from the fact that, by the planarity of $\mathcal E$, the edges of $G_{b_1}$ incident to $c$ are consecutive in the clockwise order of the edges incident to $c$ in $\mathcal E$, and so are the edges of $G_{b_2}$ incident to $c$.
\end{proof}


Refer to Figure~\ref{fig:bc-tree}. Consider any edge $bc$ of $T$, where $b$ is a B-node and $c$ is a C-node. The removal of $bc$ splits $T$ into two trees. Let $T_{b\rightarrow c}$ be the one containing $b$. Let $G_{b\rightarrow c}$ be the subgraph of $G$ composed of the blocks corresponding to B-nodes in $T_{b\rightarrow c}$. Let $\chi_{b\rightarrow c}$ be the restriction of $\chi$ to the vertices of $G_{b\rightarrow c}$.

\begin{figure}[htb]\tabcolsep=4pt
	\centering
	\begin{tabular}{c c}
		\includegraphics[scale=0.6]{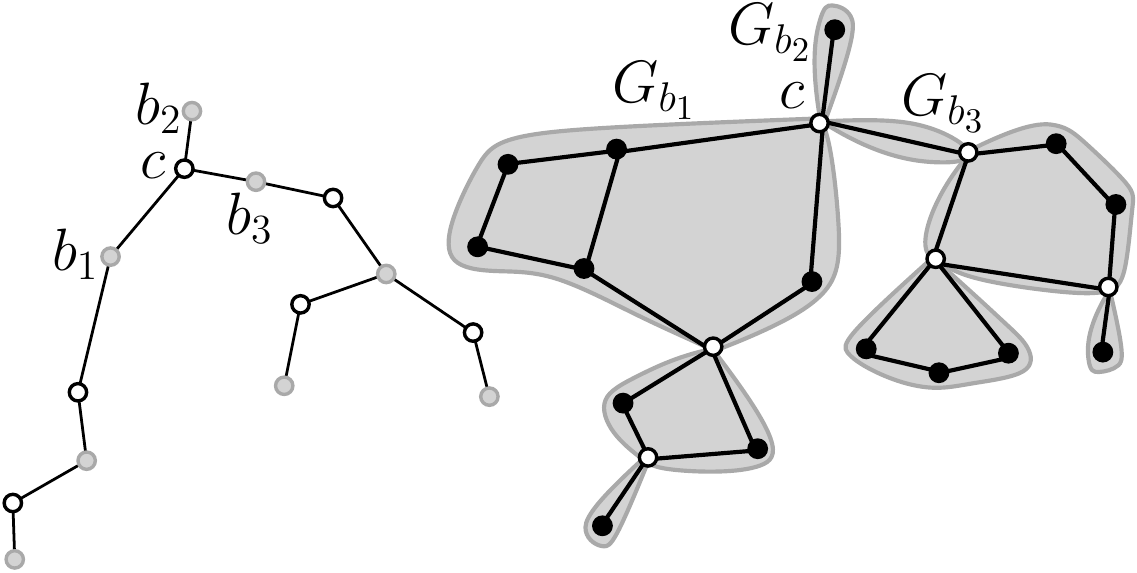} \hspace{3mm} &
		\includegraphics[scale=0.6]{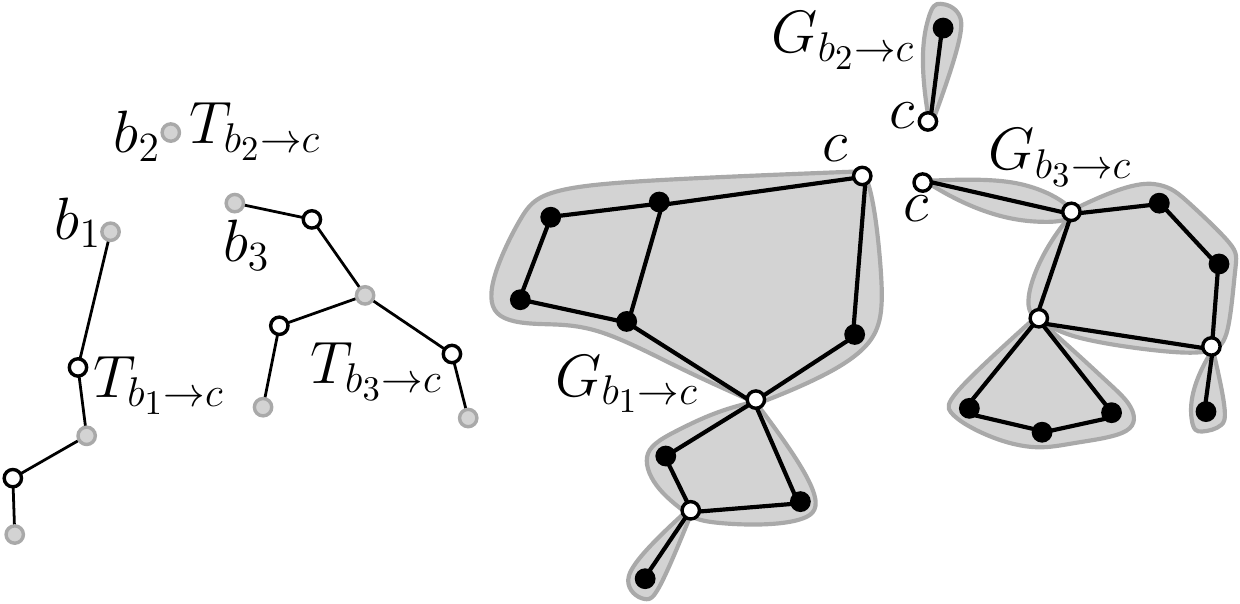} \\
		(a) \hspace{3mm} & (b)\\
	\end{tabular}
	\caption{(a) A graph $G$ and its block-cut-vertex tree $T$. The cut-vertices of $G$ are represented by empty disks; the blocks of $G$ are surrounded by gray regions. (b) The graphs $G_{b_1\rightarrow c}$, $G_{b_2\rightarrow c}$, and $G_{b_3\rightarrow c}$ and the trees $T_{b_1\rightarrow c}$, $T_{b_2\rightarrow c}$, and $T_{b_3\rightarrow c}$, where $c$ is a cut-vertex of $G$ and $b_1$, $b_2$, and $b_3$ are its adjacent nodes in $T$.}
	\label{fig:bc-tree}
\end{figure}

We are going to present an algorithm that computes, for every edge $bc$ of $T$ where $b$ is a B-node and $c$ is a C-node, a set $\mathcal N_{b\rightarrow c}$ that contains all the values $\mu\in \{0\degree,90\degree,180\degree,270\degree\}$ such that $G_{b\rightarrow c}$ admits a rectilinear representation $(\mathcal E_{b\rightarrow c}, \phi_{b\rightarrow c})$ in which $c$ is incident to the outer face of $\mathcal E_{b\rightarrow c}$ and the sum of the internal angles at $c$ is equal to $\mu$, i.e., $\phi^{\mathrm{int}}_{b\rightarrow c}(c)=\mu$.

The following definition lies at the core of our algorithm.

\begin{definition}
	Let $b$ be a B-node of $T$, let $c_i$ be a C-node adjacent to $b$, and let $b_{i,1},\dots,b_{i,m(i)}$ be the B-nodes adjacent to $c_i$ and different from $b$. 
	
	We say that $c_i$ is a \emph{friendly neighbor} of $b$ if, for every $j=1,\dots,m(i)$, we have that $\mathcal N_{b_{i,j}\rightarrow c_i}\cap \{0\degree,90\degree,180\degree\}\neq \emptyset$ and $G_b$ is trivial, or we have that $\mathcal N_{b_{i,j}\rightarrow c_i}\cap \{0\degree,90\degree\}\neq \emptyset$ and $G_b$ is non-trivial.
	
	We say that $c_i$ is an \emph{unfriendly neighbor} of $b$ if, for some $j\in\{1,\dots,m(i)\}$, we have that $\mathcal N_{b_{i,j}\rightarrow c_i}\cap \{0\degree,90\degree,180\degree\}=\emptyset$ and $G_b$ is trivial, or that $\mathcal N_{b_{i,j}\rightarrow c_i}\cap \{0\degree,90\degree\}=\emptyset$ and $G_b$ is non-trivial. 
\end{definition}


Determining the set $\mathcal N_{b\rightarrow c}$ for every edge $bc$ of $T$ is sufficient for determining whether $G$ admits a rectilinear representation, due to the following main structural lemma.

\begin{lemma} \label{le:variable-simply-central-block}
	We have that $G$ admits a rectilinear representation if and only if there exists a B-node $b^*$ in $T$ such~that:
	\begin{itemize}
		\item if $G_{b^*}$ is non-trivial, then it admits a $\chi_{b^*}$-constrained representation; and
		\item every neighbor of $b^*$ in $T$ is friendly.
	\end{itemize}
	Further, assume that we are given a $\chi_{b^*}$-constrained representation of $G_{b^*}$ and, for each $i=1,\dots,h$ and $j=1,\dots,m(i)$, a rectilinear representation $(\mathcal E_{i,j},\phi_{i,j})$ of $G_{b_{i,j}\rightarrow c_i}$ such that $c_i$ is incident to $f^*_{\mathcal E_{i,j}}$ and such that $\phi^{\mathrm{int}}_{i,j}(c_i)\in \{0\degree,90\degree,180\degree\}$ if $G_{b^*}$ is trivial, while $\phi^{\mathrm{int}}_{i,j}(c_i)\in \{0\degree,90\degree\}$ if $G_{b^*}$ is non-trivial. Then it is possible to construct a rectilinear representation of $G$ in $O(n)$ time.
\end{lemma}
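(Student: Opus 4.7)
The plan is to prove the two directions of the equivalence separately, and then read off the constructive $O(n)$-time claim from the sufficiency direction. The main tool throughout is the join operation of Lemma~\ref{le:preliminaries-subgraphs-composition}: its inverse, the restriction, drives the necessity direction, and the join itself builds the representation in the sufficiency direction. A careful angle arithmetic at each cut-vertex $c_i$ is the glue that makes the two directions match.

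For necessity, I would let $(\mathcal E,\phi)$ be a rectilinear representation of $G$ and take $b^*$ to be any B-node of $T$. When $G_{b^*}$ is non-trivial, Lemma~\ref{le:chi-necessary} gives that the restriction of $(\mathcal E,\phi)$ to $G_{b^*}$ is already a $\chi_{b^*}$-constrained representation. To establish friendliness of a C-node $c_i$ adjacent to $b^*$, I would analyze the angles around $c_i$ in $\mathcal E$: by planarity, the edges at $c_i$ split into $1+m(i)$ consecutive cyclic groups, one per block incident to $c_i$, separated by $1+m(i)$ ``gap'' angles each $\geq 90\degree$. Let $\phi^{\mathrm{int}}_{i,j}(c_i)$ denote the internal angle sum at $c_i$ in the restriction of $(\mathcal E,\phi)$ to $G_{b_{i,j}\rightarrow c_i}$ (this belongs to $\mathcal N_{b_{i,j}\rightarrow c_i}$, since the restriction is a rectilinear representation of $G_{b_{i,j}\rightarrow c_i}$ with $c_i$ on the outer face), and similarly $\phi^{\mathrm{int}}_{b^*}(c_i)$. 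Since all angles around $c_i$ sum to $360\degree$,
\[
\phi^{\mathrm{int}}_{b^*}(c_i)+\sum_{j=1}^{m(i)}\phi^{\mathrm{int}}_{i,j}(c_i)\;\leq\;(3-m(i))\cdot 90\degree.
\]
When $G_{b^*}$ is trivial we have $\phi^{\mathrm{int}}_{b^*}(c_i)=0\degree$, so $\phi^{\mathrm{int}}_{i,j}(c_i)\leq 180\degree$ and hence $\phi^{\mathrm{int}}_{i,j}(c_i)\in\{0\degree,90\degree,180\degree\}$. When $G_{b^*}$ is non-trivial, the maximum-degree-$4$ constraint implies that $c_i$ has degree exactly $2$ in $G_{b^*}$ whenever some $G_{b_{i,j}}$ is non-trivial, giving $\phi^{\mathrm{int}}_{b^*}(c_i)\geq 90\degree$ and therefore $\phi^{\mathrm{int}}_{i,j}(c_i)\leq 90\degree$, i.e.\ $\phi^{\mathrm{int}}_{i,j}(c_i)\in\{0\degree,90\degree\}$; if all $G_{b_{i,j}}$ are trivial instead, then each $\phi^{\mathrm{int}}_{i,j}(c_i)=0\degree$ trivially. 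In every case the friendliness condition at $c_i$ holds.

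For sufficiency and the constructive part, starting from the given $(\mathcal E_{b^*},\phi_{b^*})$ (or the unique rectilinear representation of a single edge when $G_{b^*}$ is trivial), I would process the C-nodes $c_i$ adjacent to $b^*$ one at a time. At each $c_i$ I would invoke Lemma~\ref{le:preliminaries-subgraphs-composition} to join the current representation with the given $(\mathcal E_{i,1},\phi_{i,1}),\dots,(\mathcal E_{i,m(i)},\phi_{i,m(i)})$, which only requires defining $\phi$ on the $m(i)+1$ gap angles between consecutive blocks around $c_i$ (the within-block angles are already fixed by the given pieces). The gap angles must be multiples of $90\degree$, each $\geq 90\degree$, and must sum to $360\degree-\phi^{\mathrm{int}}_{b^*}(c_i)-\sum_j\phi^{\mathrm{int}}_{i,j}(c_i)$; given such a choice, Properties~(a)--(d) of a join hold by construction and Lemma~\ref{le:preliminaries-subgraphs-composition} yields a rectilinear representation of the partial graph. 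Iterating over all $c_i$ produces a rectilinear representation of $G$ in total $O(n)$ time, since each $c_i$ is processed in $O(1)$ time beyond the work already spent on the pieces. The main obstacle is verifying the feasibility of the gap distribution: the hypothesis bounds each $\phi^{\mathrm{int}}_{i,j}(c_i)$ individually, yet one must show that the sum leaves at least $(m(i)+1)\cdot 90\degree$ of budget for the gaps. The key ingredient is the degree bound $\deg_G(c_i)\leq 4$, which combined with the per-subgraph lower bound $\phi^{\mathrm{int}}_{i,j}(c_i)\geq (k_{i,j}-1)\cdot 90\degree$ (where $k_{i,j}$ is the degree of $c_i$ in $G_{b_{i,j}\rightarrow c_i}$) reduces the check to a short case analysis on $m(i)\in\{1,2,3\}$ that rules out precisely the configurations for which the gap budget would be insufficient.
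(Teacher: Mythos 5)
Your necessity direction has a genuine gap: you take $b^*$ to be \emph{any} B-node, but the statement only asserts the existence of a suitable $b^*$, and the right choice matters. The paper picks $b^*$ so that $G_{b^*}$ contains an edge incident to $f^*_{\mathcal E}$ of the given representation; this is what guarantees that $c_i$ is incident to the outer face of each restriction $(\mathcal E_{i,j},\phi_{i,j})$ and that the rest of $G$ lies in that outer face. Without this choice, both facts can fail (e.g.\ if $G_{b^*}$ is a pendant edge drawn inside a cycle of $G_{b_{i,j}\rightarrow c_i}$, then $c_i$ is not on the outer face of the restriction at all), so the restriction does not certify membership in $\mathcal N_{b_{i,j}\rightarrow c_i}$ and your identity ``$\phi^{\mathrm{int}}_{i,j}(c_i)=$ within-group angle sum at $c_i$'' breaks. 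The same conflation undermines your inequality $\phi^{\mathrm{int}}_{b^*}(c_i)+\sum_j\phi^{\mathrm{int}}_{i,j}(c_i)\leq(3-m(i))\cdot 90\degree$: the left-hand side equals the total of the within-group angles only when every piece sees all the others through its own outer face.

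The same issue resurfaces in your sufficiency argument. When $G_{b^*}$ and $G_{b_{i,1}}$ are both non-trivial, the given $\chi_{b^*}$-constrained representation may have $\phi^{\mathrm{int}}_{b^*}(c_i)=270\degree$ (the $\chi$-constraint only forces $\{90\degree,270\degree\}$); with $\phi^{\mathrm{int}}_{i,1}(c_i)=90\degree$ your gap budget $360\degree-\phi^{\mathrm{int}}_{b^*}(c_i)-\sum_j\phi^{\mathrm{int}}_{i,j}(c_i)$ is $0\degree$, yet two gaps of at least $90\degree$ are needed. This configuration is not ``ruled out''; it is feasible, but only by inserting $G_{b_{i,1}\rightarrow c_i}$ into the \emph{internal} face of the current representation at $c_i$, which your picture of distributing gap angles in the region outside all pieces does not capture. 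The paper's proof handles exactly this via its case analysis (Cases~(1)--(4)) on the arrangement of trivial and non-trivial blocks around $c_i$, together with the invariant, maintained across all joins, that every edge of $G_{b^*}$ on $f^*_{\mathcal E_{b^*}}$ remains on the outer face of each partial representation $\mathcal E^+_i$; you need both of these (or an equivalent) to make Lemma~\ref{le:preliminaries-subgraphs-composition} applicable and to keep later cut-vertices of $G_{b^*}$ accessible.
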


\begin{proof}
	$(\Longrightarrow)$ We first prove the necessity of the characterization. Consider any rectilinear representation $(\mathcal E,\phi)$ of $G$. Let $e^*$ be any edge incident to $f^*_{\mathcal E}$ and let $b^*$ be the B-node of $T$ such that $G_{b^*}$ contains $e^*$. Let $c_1,\dots,c_h$ be the C-nodes adjacent to $b^*$ in $T$ and, for each $i=1,\dots,h$, let $b_{i,1},\dots,b_{i,m(i)}$ be the B-nodes different from $b^*$ and adjacent to $c_i$ in $T$. We distinguish two cases.
	
	\begin{itemize}
		\item Suppose first that $G_{b^*}$ is trivial; hence, $G_{b^*}=e^*$. Consider any $i\in \{1,\dots,h\}$ and any $j\in \{1,\dots,m(i)\}$. The restriction of $(\mathcal E,\phi)$ to $G_{b_{i,j}\rightarrow c_i}$ is a rectilinear representation $(\mathcal E_{i,j},\phi_{i,j})$ of $G_{b_{i,j}\rightarrow c_i}$. Since $e^*$ is incident to $f^*_{\mathcal E}$ and $c_i$ is one of the end-vertices of $e^*$ (given that $G_{b^*}=e^*$), it follows that $c_i$ is incident to $f^*_{\mathcal E_{i,j}}$. Since $e^*$ lies in $f^*_{\mathcal E_{i,j}}$, we have that $\phi_{i,j}(c_i,f^*_{\mathcal E_{i,j}}) \geq 180\degree$, and hence $\phi_{i,j}^{\mathrm{int}}(c_i)\in \{0\degree,90\degree,180\degree\}$. This implies that $\mathcal N_{b_{i,j}\rightarrow c_i} \cap \{0\degree,90\degree,180\degree\}\neq \emptyset$ and hence that every neighbor of $b^*$ in $T$ is friendly.
		\item Suppose next that $G_{b^*}$ is non-trivial. By Lemma~\ref{le:chi-necessary}, the restriction of $(\mathcal E,\phi)$ to $G_{b^*}$ is a $\chi_{b^*}$-constrained representation. Consider any $i\in \{1,\dots,h\}$ and any $j\in \{1,\dots,m(i)\}$. The restriction of $(\mathcal E,\phi)$ to $G_{b_{i,j}\rightarrow c_i}$ is a rectilinear representation $(\mathcal E_{i,j},\phi_{i,j})$ of $G_{b_{i,j}\rightarrow c_i}$. Since $e^*$ is incident to $f^*_{\mathcal E}$, the planarity of $\mathcal E$ implies that $c_i$ is incident to $f^*_{\mathcal E_{i,j}}$; indeed, if $c_i$ were not incident to $f^*_{\mathcal E_{i,j}}$, then any path in $G_{b^*}$ from $c_i$ to an end-vertex of $e^*$ would cross in $\mathcal E$ the walk delimiting $f^*_{\mathcal E_{i,j}}$. Since all the (at least two) edges of $G_{b^*}$ incident $c_i$ lie in $f^*_{\mathcal E_{i,j}}$, we have that $\phi_{i,j}(c_i,f^*_{\mathcal E_{i,j}}) \geq 270\degree$, and hence $\phi_{i,j}^{\mathrm{int}}(c_i)\in \{0\degree,90\degree\}$. This implies that $\mathcal N_{b_{i,j}\rightarrow c_i} \cap \{0\degree,90\degree\}\neq \emptyset$ and hence that every neighbor of $b^*$ in $T$ is friendly.
	\end{itemize}
	
	$(\Longleftarrow)$ We now prove the sufficiency of the characterization and show that it gives rise to an $O(n)$-time testing and embedding algorithm. For $i=1,\dots,h$ and $j=1,\dots,m(i)$, let $(\mathcal E_{i,j},\phi_{i,j})$ be a rectilinear representation of $G_{b_{i,j}\rightarrow c_i}$ such that $c_i$ is incident to $f^*_{\mathcal E_{i,j}}$ and such that $\phi^{\mathrm{int}}_{i,j}(c_i)\in \{0\degree,90\degree,180\degree\}$ if $G_{b^*}$ is trivial, while $\phi^{\mathrm{int}}_{i,j}(c_i)\in \{0\degree,90\degree\}$ if $G_{b^*}$ is non-trivial; this exists since every neighbor of $b^*$ is friendly. Further, if $G_{b^*}$ is non-trivial, let $(\mathcal E_{b^*},\phi_{b^*})$ be a $\chi_{b^*}$-constrained representation of $G_{b^*}$.
	
	We start by defining a sequence of graphs $G^+_0,G^+_1,\dots,G^+_h$. First, we have $G^+_0:=G_{b^*}$. Further, for each $i=1,\dots,h$, we have $G^+_i:=G^+_{i-1} \cup \left(\bigcup_{j=1}^{m(i)} G_{b_{i,j}\rightarrow c_i}\right)$. That is, $G^+_i$ consists of $G_{b^*}$ and of all the graphs $G_{b_{x,j}\rightarrow c_x}$ with $1\leq x \leq i$ and $1\leq j\leq m(x)$. Note that the graphs $G^+_{i-1},G_{b_{i,1}\rightarrow c_i},\dots,G_{b_{i,m(i)}\rightarrow c_i}$ share $c_i$ and no other vertex or edge, and that $c_i$ is not a cut-vertex in each of $G^+_{i-1},G_{b_{i,1}\rightarrow c_i},\dots,G_{b_{i,m(i)}\rightarrow c_i}$. Observe that $G^+_h=G$.
	
	In order to define a rectilinear representation $(\mathcal E,\phi)$ of $G$, we actually define a rectilinear representation $(\mathcal E^+_i,\phi^+_i)$ of each graph $G^+_{i}$, where $(\mathcal E^+_i,\phi^+_i)$ is a join of $(\mathcal E^+_{i-1},\phi^+_{i-1})$ and of $(\mathcal E_{i,1},\phi_{i,1}),\dots,(\mathcal E_{i,m(i)},\phi_{i,m(i)})$. For $j=0,\dots,m(i)$, our construction  guarantees that every edge of $G_{b^*}$ incident to $f^*_{\mathcal E_{b^*}}$ is also incident to $f^*_{\mathcal E^+_i}$; that is, $\mathcal E_{b^*}$ lies ``outside'' each of $\mathcal E_{i,1},\dots,\mathcal E_{i,h}$ in $\mathcal E^+_i$. 
	
	
	First, we define $\mathcal E^+_0=\mathcal E_{b^*}$. Further, if $G_{b^*}$ is non-trivial, we set $\phi^+_0=\phi_{b^*}$; that is, $\phi^+_0(w,f)=\phi_{b^*}(w,f)$, for each vertex $w$ of $G_{b^*}$ incident to a face $f$ of $\mathcal E^+_0=\mathcal E_{b^*}$. If $G_{b^*}$ is trivial, then we initialize $\phi^+_0(w_1,f^*_{\mathcal E^+_0})=360\degree$ and $\phi^+_0(w_2,f^*_{\mathcal E^+_0})=360\degree$, where $w_1$ and $w_2$ are the end-vertices of the edge $G_{b^*}$. 
	
	We now describe how to construct $(\mathcal E^+_i,\phi^+_i)$, for each $i=1,\dots,h$, as a join of the rectilinear representations $(\mathcal E^+_{i-1},\phi^+_{i-1}),(\mathcal E_{i,1},\phi_{i,1}),\dots,(\mathcal E_{i,m(i)},\phi_{i,m(i)})$. By Lemma~\ref{le:preliminaries-subgraphs-composition}, this ensures that $(\mathcal E^+_i,\phi^+_i)$ is a rectilinear representation of $G^+_i$. For ease of notation, let $G_{b_{i,0}\rightarrow c_i}:=G^+_{i-1}$, $\mathcal E_{i,0}:=\mathcal E^+_{i-1}$, and $\phi_{i,0}:=\phi^+_{i-1}$. For a face $f$ of $\mathcal E^+_i$, denote by $f_{i,j}$ the corresponding face of $\mathcal E_{i,j}$, for each $j=0,1,\dots,m(i)$. 
	
	In order to satisfy Property~(d) of a join, we insist that the restriction of $(\mathcal E^+_i,\phi^+_i)$ to  $G_{b_{i,j}\rightarrow c_i}$ coincides with $(\mathcal E_{i,j},\phi_{i,j})$, for each $j=0,\dots,m(i)$. This constraint is satisfied easily as far as each vertex $w\neq c_i$ of $G^+_i$ is concerned; indeed, $w$ belongs to a single graph $G_{b_{i,j}\rightarrow c_i}$, hence we simply preserve in $(\mathcal E^+_i,\phi^+_i)$ its clockwise order of the incident edges and its incident angles. More precisely, for every vertex $w\neq c_i$ that belongs to $G_{b_{i,j}\rightarrow c_i}$, we set the clockwise order of the edges incident to $w$ in $\mathcal E^+_i$ to be the same as in $\mathcal E_{i,j}$ and, for every face $f$ of $\mathcal E^+_i$ incident to $w$ and every occurrence $w^x$ of $w$ along the boundary of $f$, we set $\phi^+_i(w^x,f)=\phi_{i,j}(w^x,f_{i,j})$. It follows that, in order to  fully specify $(\mathcal E^+_i,\phi^+_i)$, we only need to:
	
	\begin{itemize} 
		\item fix the clockwise order of the edges incident to $c_i$ and the walk delimiting the outer face of $\mathcal E^+_i$, so that Property~(a) of a join is satisfied, i.e., $\mathcal E^+_i$ is a plane embedding; and 
		\item fix an angle $\phi^+_i(c_i^x,f)$, for each face $f$ of $\mathcal E^+_i$ incident to $c_i$ and each occurrence $c_i^x$ of $c_i$ along the boundary of $f$, so that: 
		\begin{itemize}
			\item Property~(b) of a join is satisfied, i.e., $\phi^+_i(c_i^x,f)\in \{90\degree,180\degree,270\degree\}$,
			\item Property~(c) of a join is satisfied, i.e., $\sum_{f,x} \phi^+_i(c_i^x,f)=360\degree$, where the sum is over all the faces $f$ of $\mathcal E^+_i$ incident to $c_i$ and over all the occurrences $c_i^x$ of $c_i$ along the boundary of $f$, and
			\item Property~(d) of a join is satisfied, i.e., for $j=0,\dots,m(i)$ and for each face $f_{i,j}$ of $\mathcal E_{i,j}$ incident to $c_i$, we have $\phi_{i,j}(c_i,f_{i,j})=\sum_{f,x} \phi^+_i(c_i^x,f)$, where the sum is over all the faces $f$ of $\mathcal E^+_i$ incident to $c_i$ whose corresponding face in $\mathcal E_{i,j}$ is $f_{i,j}$ and over all the occurrences $c_i^x$ of $c_i$ along the boundary of~$f$. 
		\end{itemize} 
	\end{itemize} 
	
	We distinguish four cases according to the arrangement of blocks around $c_i$. 
	
	\begin{enumerate}[(1)]
		\item Suppose first that all the blocks of $G^+_i$ containing $c_i$ (including $G_{b^*}$) are non-trivial; refer to Figure~\ref{fig:join-final-2}. Then we have $m(i)\geq 1$, since $c_i$ is a cut-vertex of $G$, and $m(i)\leq 1$, since $c_i$ has maximum degree $4$ in $G$; it follows that $m(i)=1$. For $j=0,1$, let $e_{j,a}$ and $e_{j,b}$ be the edges of $G_{b_{i,j}\rightarrow c_i}$ incident to $c_i$. 
		
		\begin{figure}[tb]\tabcolsep=4pt
			\centering
			\begin{tabular}{c c c}
				\includegraphics[scale=0.7]{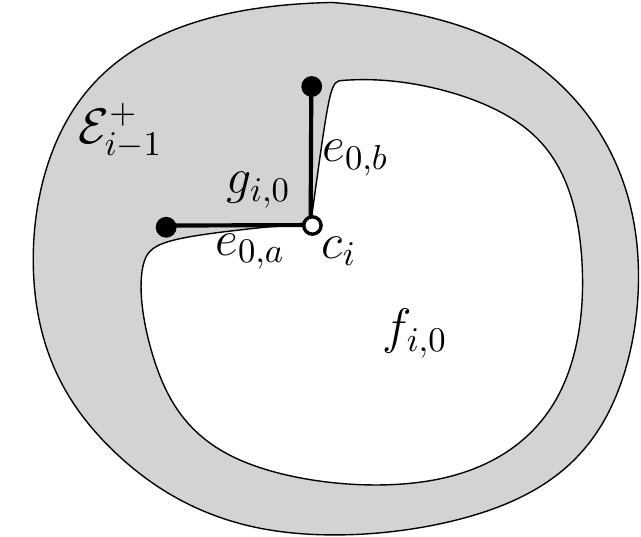} \hspace{3mm} &
				\includegraphics[scale=0.7]{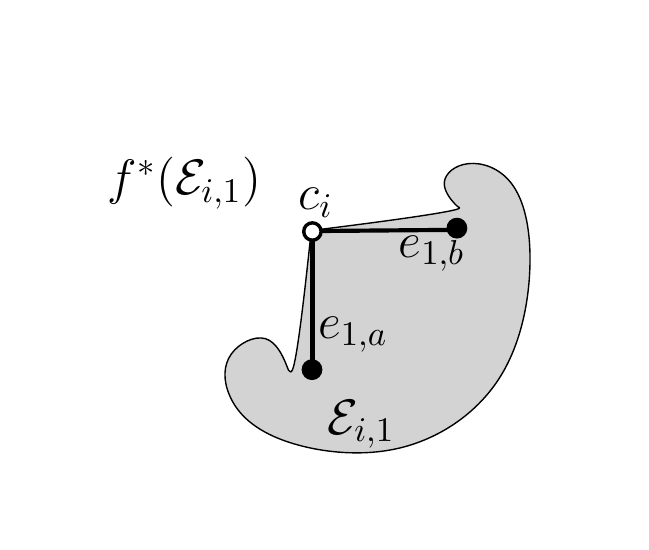} \hspace{3mm} &
				\includegraphics[scale=0.7]{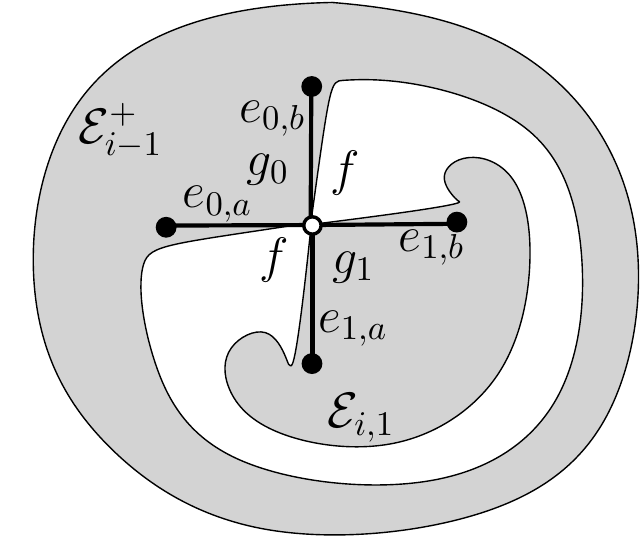} \\
				(a) \hspace{3mm} & (b) \hspace{3mm} & (c)\\
			\end{tabular}
			\caption{A join of the rectilinear representations $(\mathcal E^+_{i-1},\phi^+_{i-1})$ and $(\mathcal E_{i,1},\phi_{i,1})$ in Case~(1). Only the vertices adjacent to $c_i$ and the edges incident to $c_i$ are shown; the rest of $\mathcal E^+_{i-1}$ and $\mathcal E_{i,1}$ is hidden inside gray regions. The vertex $c_i$ is an empty disk. (a) $(\mathcal E^+_{i-1},\phi^+_{i-1})$; in this illustration $f_{i,0}$ is an internal face of $\mathcal E_{i,0}$. (b) $(\mathcal E_{i,1},\phi_{i,1})$. (c) The join $(\mathcal E^+_i,\phi^+_i)$ of $(\mathcal E^+_{i-1},\phi^+_{i-1})$ and $(\mathcal E_{i,1},\phi_{i,1})$.}
			\label{fig:join-final-2}
		\end{figure}
		
		Recall that $c_i$ is incident to $f^*_{\mathcal E_{i,1}}$ and that $\phi^{\mathrm{int}}_{i,1}(c_i)=90\degree$; assume, w.l.o.g.\ up to a relabeling, that $e_{1,a}$ immediately precedes $e_{1,b}$ in the clockwise order of the edges along the boundary of $f^*_{\mathcal E_{i,1}}$. 
		
		On the other hand, $c_i$ might be incident to $f^*_{\mathcal E_{i,0}}$ or not; regardless, there is a face $f_{i,0}$ of $\mathcal E_{i,0}$ incident to $c_i$ such that $\phi_{i,0}(c_i,f_{i,0})=270\degree$. This comes from the following three observations. First, $c_i\in \chi$, since $c_i$ is incident to two non-trivial blocks of $G$; hence, if $f_{b^*}$ and $g_{b^*}$ denote the faces of $\mathcal E_{b^*}$ incident to $c_i$, we have $\phi_{b^*}(c_i,f_{b^*})=90\degree$ and $\phi_{b^*}(c_i,g_{b^*})=270\degree$, or vice versa, given that $(\mathcal E_{b^*}, \phi_{b^*})$ is a $\chi_{b^*}$-constrained representation of $G_{b^*}$. Second, the restriction of $(\mathcal E_{i,0},\phi_{i,0})$ to $G_{b^*}$ coincides with $(\mathcal E_{b^*}, \phi_{b^*})$; this comes from repeated applications of Property~(d) of a join, applied to the sequence of rectilinear representations $(\mathcal E_{b^*}, \phi_{b^*})=(\mathcal E^+_0,\phi^+_0),(\mathcal E^+_1,\phi^+_1),\dots,(\mathcal E^+_{i-1},\phi^+_{i-1})=(\mathcal E_{i,0},\phi_{i,0})$. Third, 
		both $G_{b^*}$ and $G_{b_{i,0}\rightarrow c_i}$ have two edges incident to $c_i$, namely $e_{0,a}$ and $e_{0,b}$, hence there is a unique face of $\mathcal E_{i,0}$ whose corresponding face in $\mathcal E_{b^*}$ is $f_{b^*}$ and there is a unique face of $\mathcal E_{i,0}$ whose corresponding face in $\mathcal E_{b^*}$ is $g_{b^*}$. 
		
		Assume, w.l.o.g.\ up to a relabeling, that traversing $e_{0,a}$ towards $c_i$ and then $e_{0,b}$ away from $c_i$ leaves the face $f_{i,0}$ of $\mathcal E_{i,0}$ such that $\phi_{i,0}(c_i,f_{i,0})=270\degree$ to the right. 
		
		We fix the clockwise order of the edges incident to $c_i$ in $\mathcal E^+_i$ to be $e_{0,a},e_{0,b},e_{1,b},e_{1,a},e_{0,a}$ (thus $G_{b_{i,1}\rightarrow c_i}$ lies ``inside'' $f_{i,0}$). If $f_{i,0}$ is an internal face of $\mathcal E_{i,0}$, then the walk delimiting $f^*_{\mathcal E^+_i}$ coincides with the walk delimiting $f^*_{\mathcal E_{i,0}}$, otherwise it is composed of the walks delimiting $f^*_{\mathcal E_{i,0}}$ and $f^*_{\mathcal E_{i,1}}$, merged so that the edge $e_{0,b}$ enters $c_i$ right before the edge $e_{1,b}$ leaves $c_i$, and the edge $e_{1,a}$ enters $c_i$ right before the edge $e_{0,a}$ leaves $c_i$. This completes the definition of $\mathcal E^+_i$, which is a plane embedding, given that $\mathcal E_{i,0}$ and $\mathcal E_{i,1}$ are plane embeddings and given that $c_i$ is incident to $f^*_{\mathcal E_{i,1}}$.
		
		Let $f$ be the face of $\mathcal E^+_i$ that is incident to the edges $e_{0,a}$, $e_{0,b}$, $e_{1,b}$, and $e_{1,a}$, let $g_0$ be the face of $\mathcal E^+_i$ that is incident to $e_{0,a}$ and $e_{0,b}$ and not to $e_{1,a}$ and $e_{1,b}$, and finally let $g_1$ be the face of $\mathcal E^+_i$ that is incident to $e_{1,a}$ and $e_{1,b}$ and not to $e_{0,a}$ and $e_{0,b}$. Let $c^0_i$, $c^1_i$, $c^2_i$, and $c^3_i$ be the occurrences of $c_i$ along the boundaries of faces of $\mathcal E^+_i$ incident to $c_i$, where $c^0_i$ and $c^1_i$ are incident to $f$, $c^2_i$ is incident to $g_0$, and $c^3_i$ is incident to $g_1$. Then we set $\phi^+_i(c^0_i,f)=\phi^+_i(c^1_i,f)=\phi^+_i(c^2_i,g_0)=\phi^+_i(c^3_i,g_1)=90\degree$. This choice directly ensures that Properties~(b) and~(c) of a join are satisfied. Concerning Property~(d), recall that $\phi_{i,0}(c_i,f_{i,0})=270\degree$, hence $\phi_{i,0}(c_i,g_{i,0})=90\degree$, where $g_{i,0}$ is the face of $\mathcal E_{i,0}$ incident to $c_i$ and different from $f_{i,0}$. Since $g_0$ is the only face of $\mathcal E^+_i$ whose corresponding face of $\mathcal E_{i,0}$ is $g_{i,0}$, while $f_{i,0}$ is the corresponding face of $f$ and $g_1$, we have $\phi_{i,0}(c_i,g_{i,0})=\phi^+_i(c^2_i,g_0)=90\degree$ and $\phi_{i,0}(c_i,f_{i,0})=\phi^+_i(c^0_i,f)+\phi^+_i(c^1_i,f)+\phi^+_i(c^3_i,g_1)=270\degree$. An analogous proof can be stated for the faces of $\mathcal E_{i,1}$, given that $\phi_{i,1}(c_i,f^*_{\mathcal E_{i,1}})=270\degree$, as a consequence of $\phi^{\mathrm{int}}_{i,1}(c_i)=90\degree$. Property~(d) of a join follows.
		
		\item Suppose next that all the blocks of $G^+_i$ containing $c_i$ (including $G_{b^*}$) are trivial; see Figure~\ref{fig:join-final}(a). For $j=0,\dots,m(i)$, let $e_j$ be the edge of $G_{b_{i,j}\rightarrow c_i}$ incident to $c_i$. We fix the clockwise order of the edges incident to $c_i$ in $\mathcal E^+_i$ to be $e_0,e_1,\dots,e_{m(i)},e_{m(i)+1}:=e_0$. The walk delimiting the outer face of $\mathcal E^+_i$ is composed of the walks delimiting the outer faces of $\mathcal E_{i,0},\dots,\mathcal E_{i,m(i)}$, merged so that the edge $e_j$ enters $c_i$ right before the edge $e_{j+1}$ leaves $c_i$, for $j=0,\dots,m(i)$ where the indices are modulo $m(i)+1$. This completes the definition of $\mathcal E^+_i$, which is a plane embedding, given that each of $\mathcal E_{i,0},\dots,\mathcal E_{i,m(i)}$ is a plane embedding and given that $c_i$ is incident to the outer face of each of $\mathcal E_{i,1},\dots,\mathcal E_{i,m(i)}$, by assumption.
		
		\begin{figure}[tb]\tabcolsep=4pt
			\centering
			\begin{tabular}{c c c}
				\includegraphics[scale=0.7]{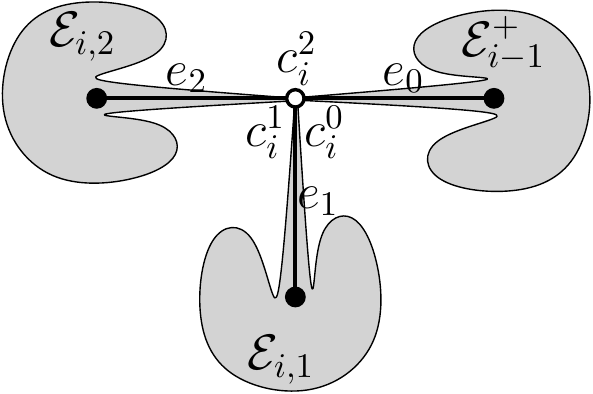} \hspace{3mm} &
				\includegraphics[scale=0.7]{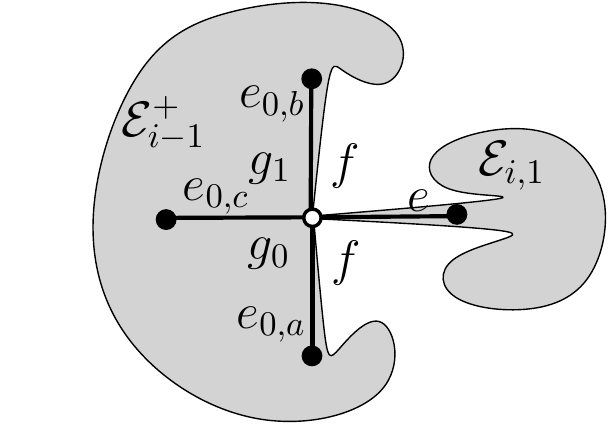} \hspace{3mm} &
				\includegraphics[scale=0.7]{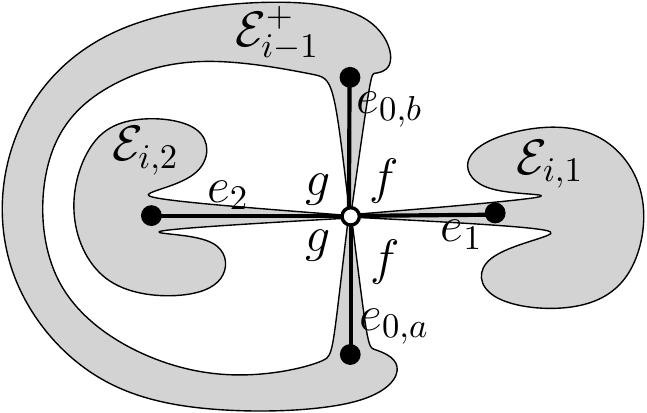} \\
				(a) \hspace{3mm} & (b) \hspace{3mm} & (c)\\
			\end{tabular}
			\caption{The join of the rectilinear representations $(\mathcal E^+_{i-1},\phi^+_{i-1})$ and $(\mathcal E_{i,1},\phi_{i,1}),\dots,(\mathcal E_{i,m(i)},\phi_{i,m(i)})$. Only the vertices adjacent to $c_i$ and the edges incident to $c_i$ are shown; the rest of $\mathcal E^+_{i-1},\mathcal E_{i,1},\dots,\mathcal E_{i,m(i)}$ is hidden inside gray regions. The vertex $c_i$ is an empty disk. Figures (a), (b), and (c) illustrate Cases~(2),~(3.1), and~(3.3.2), respectively.}
			\label{fig:join-final}
		\end{figure}
		
		Note that $c_i$ is incident to a single face $f$ of $\mathcal E^+_i$, namely to its outer face $f:=f^*_{\mathcal E^+_i}$; this follows from the fact that every block of $G^+_i$ incident to $c_i$ is trivial. For $x=0,\dots,m(i)$, let $c_i^x$ be the occurrence of $c_i$ along the boundary of $f$ incident to the edges $e_x$ and $e_{x+1}$. Then we set $\phi^+_i(c^x_i,f)=90\degree$, for $x=0,\dots,m(i)-1$, and $\phi^+_i(c^{m(i)}_i,f)=360\degree-\sum_{x=0}^{m(i)-1}\phi^+_i(c^x_i,f)$. 
		
		By construction, $\phi^+_i(c^x_i,f)=90\degree$, for $x=0,\dots,m(i)-1$. Further, we have $m(i)\geq 1$, since $c_i$ is a cut-vertex of $G$, and $m(i)\leq 3$, since $c_i$ has maximum degree $4$ in $G$. Hence, $\phi^+_i(c_i^{m(i)},f)\in \{90\degree,180\degree,270\degree\}$ and Property~(b) of a join is satisfied. Properties~(c) and~(d) of a join are trivially satisfied by the chosen values for the angles incident to $c_i$. In particular, for $j=0,\dots,m(i)$, there is a unique face $f_{i,j}$ of $\mathcal E_{i,j}$ incident to $c_i$ and $\phi_{i,j}(c_i,f_{i,j})=360\degree$, given that $c_i$ has degree $1$ in $G_{b_{i,j}\rightarrow c_i}$; further, $f_{i,j}$ corresponds to $f$ and $\sum_{x}\phi^+_i(c^x_i,f)=360\degree$, by construction.
		
		\item Suppose next that $G_{b^*}$ is non-trivial, while every other block of $G^+_i$ containing $c_i$ is trivial. Then we have $m(i)\geq 1$, since $c_i$ is a cut-vertex of $G$, and $m(i)\leq 2$, since $c_i$ has maximum degree $4$ in $G$. Note that $G_{b_{i,0}\rightarrow c_i}$ contains at least $2$ and at most $4-m(i)$ edges incident to $c_i$.
		
		\begin{enumerate}[({3.}1)]
			\item If $m(i)=1$ and $G_{b_{i,0}\rightarrow c_i}$ contains $3$ edges incident to $c_i$, as in Figure~\ref{fig:join-final}(b), then there is a face $f_{i,0}$ of $\mathcal E_{i,0}$ such that $\phi_{i,0}(c_i,f_{i,0})= 180\degree$; this trivially follows from the fact that $G_{b_{i,0}\rightarrow c_i}$ contains $3$ edges incident to $c_i$. Let $e_{0,a}$, $e_{0,b}$, $e_{0,c}$, $e_{0,a}$ be the clockwise order of the edges incident to $c_i$ in $\mathcal E_{i,0}$. Let $e_{0,a}$ and $e_{0,b}$ be the edges incident to $f_{i,0}$; assume, w.l.o.g.\ up to a relabeling, that traversing $e_{0,a}$ towards $c_i$ and then $e_{0,b}$ away from $c_i$ leaves $f_{i,0}$ to the right. Finally, let $e$ be the edge of $G_{b_{i,1}\rightarrow c_i}$ incident to $c_i$.
			
			We fix the clockwise order of the edges incident to $c_i$ in $\mathcal E^+_i$ to be $e_{0,a},e_{0,c},e_{0,b},e,e_{0,a}$ (thus $G_{b_{i,1}\rightarrow c_i}$ lies ``inside'' $f_{i,0}$). If $f_{i,0}$ is an internal face of $\mathcal E_{i,0}$, then the walk delimiting $f^*_{\mathcal E^+_i}$ coincides with the walk delimiting $f^*_{\mathcal E_{i,0}}$, otherwise it is composed of the walks delimiting $f^*_{\mathcal E_{i,0}}$ and $f^*_{\mathcal E_{i,1}}$, merged so that the edge $e_{0,b}$ enters $c_i$ right before the edge $e$ leaves $c_i$, and the edge $e$ enters $c_i$ right before the edge $e_{0,a}$ leaves $c_i$. This completes the definition of $\mathcal E^+_i$, which is a plane embedding, given that $\mathcal E_{i,0}$ and $\mathcal E_{i,1}$ are plane embeddings and given that $c_i$ is incident to the outer face of $\mathcal E_{i,1}$, by assumption.
			
			Let $f$ be the face of $\mathcal E^+_i$ that is incident to $e$, let $g_0$ be the face of $\mathcal E^+_i$ that is incident to $e_{0,a}$ and $e_{0,c}$, and let $g_1$ be the face of $\mathcal E^+_i$ that is incident to $e_{0,b}$ and $e_{0,c}$. Let $c^0_i$, $c^1_i$, $c^2_i$, and $c^3_i$ be the occurrences of $c_i$ along the boundaries of faces of $\mathcal E^+_i$ incident to $c_i$, where $c^0_i$ and $c^1_i$ are incident to $f$, $c^2_i$ is incident to $g_0$, and $c^3_i$ is incident to $g_1$. Then we set $\phi^+_i(c^0_i,f)=\phi^+_i(c^1_i,f)=\phi^+_i(c^2_i,g_0)=\phi^+_i(c^3_i,g_1)=90\degree$. This choice directly ensures that Properties~(b) and~(c) of a join are satisfied. Concerning Property~(d), recall that $\phi_{i,0}(c_i,f_{i,0})=180\degree$, hence $\phi_{i,0}(c_i,g_{i,0})=90\degree$ and $\phi_{i,0}(c_i,g_{i,1})=90\degree$, where $g_{i,0}$ and $g_{i,1}$ are the faces of $\mathcal E_{i,0}$ corresponding to $g_0$ and $g_1$, respectively. Then we have $\phi_{i,0}(c_i,f_{i,0})=\phi^+_i(c^0_i,f)+\phi^+_i(c^1_i,f)=180\degree$, $\phi_{i,0}(c_i,g_{i,0})=\phi^+_i(c^2_i,g_0)=90\degree$, and $\phi_{i,0}(c_i,g_{i,1})=\phi^+_i(c^3_i,g_1)=90\degree$. Further, $\mathcal E_{i,1}$ has a unique face incident to $c_i$, namely $f^*_{{\mathcal E_{i,1}}}$, which corresponds to $f$, $g_0$, and $g_1$. Thus, we have $\phi_{i,1}(c_i,f^*_{{\mathcal E_{i,1}}})=\phi^+_i(c^0_i,f)+\phi^+_i(c^1_i,f)+\phi^+_i(c^2_i,g_0)+\phi^+_i(c^3_i,g_1)=360\degree$. Property~(d) of a join follows.
			
			\item If $m(i)=1$ and $G_{b_{i,0}\rightarrow c_i}$ contains $2$ edges incident to $c_i$, then we can act similarly to Case (3.1), with some differences. First, $\mathcal E_{i,0}$ contains a face $f_{i,0}$ such that $\phi_{i,0}(c_i,f_{i,0})= 180\degree$ or $\phi_{i,0}(c_i,f_{i,0})= 270\degree$, given that $G_{b_{i,0}\rightarrow c_i}$ contains $2$ edges incident to $c_i$. The edges $e_{0,a}$, $e_{0,b}$, and $e$ are defined as in Case (3.1), where $f_{i,0}$ is to the right when traversing $e_{0,a}$ towards $c_i$ and then $e_{0,b}$ away from $c_i$. Let $g_{i,0}$ be the face of $\mathcal E_{i,0}$ incident to $c_i$ and different from $f_{i,0}$. 
			
			We fix the clockwise order of the edges incident to $c_i$ in $\mathcal E^+_i$ to be $e_{0,a},e_{0,b},e,e_{0,a}$ (thus $G_{b_{i,1}\rightarrow c_i}$ lies ``inside'' $f_{i,0}$) and we define the walk delimiting $f^*_{\mathcal E^+_i}$ as in Case (3.1), thus obtaining a plane embedding $\mathcal E^+_i$.
			
			The face $f$ of $\mathcal E^+_i$ is defined as in Case (3.1), however $\mathcal E^+_i$ now only contains one more face incident to $c_i$, which we denote by $g$; such a face is incident to $e_{0,a}$ and $e_{0,b}$, however not to $e$. Let $c^0_i$, $c^1_i$, and $c^2_i$ be the occurrences of $c_i$ along the boundaries of faces of $\mathcal E^+_i$ incident to $c_i$, where $c^0_i$ and $c^1_i$ are incident to $f$ and $c^2_i$ is incident to $g$. Then we set $\phi^+_i(c^0_i,f)=90\degree$, $\phi^+_i(c^1_i,f)=\phi_{i,0}(c_i,f_{i,0})-90\degree$, and $\phi^+_i(c^2_i,g)=\phi_{i,0}(c_i,g_{i,0})$. Since $\phi_{i,0}(c_i,f_{i,0})\in \{180\degree,270\degree\}$, we have $\phi^+_i(c^1_i,f)\in \{90\degree,180\degree\}$. Since $(\mathcal E_{i,0},\phi_{i,0})$ is a rectilinear representation, we have $\phi^+_i(c^2_i,g)=\phi_{i,0}(c_i,g_{i,0})=360\degree-\phi_{i,0}(c_i,f_{i,0})\in \{90\degree,180\degree\}$; Property~(b) of a join follows. Again since $(\mathcal E_{i,0},\phi_{i,0})$ is a rectilinear representation, we have $\phi^+_i(c^0_i,f)+\phi^+_i(c^1_i,f)+\phi^+_i(c^2_i,g)=\phi_{i,0}(c_i,f_{i,0})+\phi_{i,0}(c_i,g_{i,0})=360\degree$ and Property~(c) of a join follows. By construction and since $f_{i,0}$ and $g_{i,0}$ correspond to $f$ and $g$, respectively,  we have $\phi_{i,0}(c_i,f_{i,0})=\phi^+_i(c^0_i,f)+\phi^+_i(c^1_i,f)$ and $\phi_{i,0}(c_i,g_{i,0})=\phi^+_i(c^2_i,g)$. Further, $\mathcal E_{i,1}$ has a unique face incident to $c_i$, namely $f^*_{{\mathcal E_{i,1}}}$, which corresponds to $f$ and $g$. Thus, we have $\phi_{i,1}(c_i,f^*_{{\mathcal E_{i,1}}})=\phi^+_i(c^0_i,f)+\phi^+_i(c^1_i,f)+\phi^+_i(c^2_i,g)=360\degree$. Property~(d) of a join follows.
			
			\item If $m(i)=2$, then $G_{b_{i,0}\rightarrow c_i}$ contains $2$ edges incident to $c_i$, as the degree of $c_i$ is at most $4$. Let $f_{i,0}$ and $g_{i,0}$ be the faces of $\mathcal E_{i,0}$ incident to $c_i$. Assume, w.l.o.g.\ up to a relabeling, that $\phi_{i,0}(c_i,f_{i,0})\geq \phi_{i,0}(c_i,g_{i,0})$. Since $(\mathcal E_{i,0},\phi_{i,0})$ is a rectilinear representation, we have either $\phi_{i,0}(c_i,f_{i,0})=270\degree$ and $\phi_{i,0}(c_i,g_{i,0})= 90\degree$, which we call Case (3.3.1), or $\phi_{i,0}(c_i,f_{i,0})=\phi_{i,0}(c_i,g_{i,0})= 180\degree$, which we call Case (3.3.2); the latter case is illustrated in Figure~\ref{fig:join-final}(c). Let $e_{0,a}$ and $e_{0,b}$ be the edges of $G_{b_{i,0}\rightarrow c_0}$ incident to $c_i$; assume, w.l.o.g.\ up to a relabeling, that traversing $e_{0,a}$ towards $c_i$ and then $e_{0,b}$ away from $c_i$ leaves $f_{i,0}$ to the right. Finally, let $e_1$ and $e_2$ be the edges of $G_{b_{i,1}\rightarrow c_i}$ and $G_{b_{i,2}\rightarrow c_i}$ incident to $c_i$, respectively.
			
			\begin{enumerate}[({3.3.}1)]
				\item We fix the clockwise order of the edges incident to $c_i$ in $\mathcal E^+_i$ to be $e_{0,a},e_{0,b},e_1,e_2,e_{0,a}$ (thus $G_{b_{i,1}\rightarrow c_i}$ and $G_{b_{i,2}\rightarrow c_i}$ both lie ``inside'' $f_{i,0}$). If $f_{i,0}$ is an internal face of $\mathcal E_{i,0}$, then the walk delimiting $f^*_{\mathcal E^+_i}$ coincides with the walk delimiting $f^*_{\mathcal E_{i,0}}$, otherwise it is composed of the walks delimiting $f^*_{\mathcal E_{i,0}}$, $f^*_{\mathcal E_{i,1}}$, and $f^*_{\mathcal E_{i,2}}$ merged so that the edge $e_{0,b}$ enters $c_i$ right before the edge $e_1$ leaves $c_i$, the edge $e_1$ enters $c_i$ right before the edge $e_2$ leaves $c_i$, and the edge $e_2$ enters $c_i$ right before the edge $e_{0,a}$ leaves $c_i$. This completes the definition of $\mathcal E^+_i$, which is a plane embedding, given that $\mathcal E_{i,0}$, $\mathcal E_{i,1}$, and $\mathcal E_{i,2}$ are plane embeddings and given that $c_i$ is incident to the outer faces of $\mathcal E_{i,1}$ and $\mathcal E_{i,2}$, by assumption.
				
				Let $f$ be the face of $\mathcal E^+_i$ that is incident to the edges $e_{0,a}$, $e_{0,b}$, $e_1$, and $e_2$, and let $g$ be the face of $\mathcal E^+_i$ that is incident to $e_{0,a}$ and $e_{0,b}$ and not to $e_1$ and $e_2$. Let $c^0_i$, $c^1_i$, $c^2_i$, and $c^3_i$ be the occurrences of $c_i$ along the boundaries of faces of $\mathcal E^+_i$ incident to $c_i$, where $c^0_i$, $c^1_i$, and $c^1_2$ are incident to $f$, and $c^3_i$ is incident to $g$. Then we set $\phi^+_i(c^0_i,f)=\phi^+_i(c^1_i,f)=\phi^+_i(c^2_i,f)=\phi^+_i(c^3_i,g)=90\degree$. This choice directly ensures that Properties~(b) and~(c) of a join are satisfied. Concerning Property~(d), recall that $\phi_{i,0}(c_i,f_{i,0})=270\degree$ and $\phi_{i,0}(c_i,g_{i,0})=90\degree$. Since $f_{i,0}$ is the face of $\mathcal E_{i,0}$ corresponding to $f$ and $g_{i,0}$ is the face of $\mathcal E_{i,0}$ corresponding to $g$, we have $\phi_{i,0}(c_i,g_{i,0})=\phi^+_i(c^3_i,g)=90\degree$ and $\phi_{i,0}(c_i,f_{i,0})=\phi^+_i(c^0_i,f)+\phi^+_i(c^1_i,f)+\phi^+_i(c^2_i,f)=270\degree$. Further, $\mathcal E_{i,1}$ has a unique face incident to $c_i$, namely $f^*_{{\mathcal E_{i,1}}}$, which corresponds to $f$ and $g$. Thus, we have $\phi_{i,1}(c_i,f^*_{{\mathcal E_{i,1}}})=\phi^+_i(c^0_i,f)+\phi^+_i(c^1_i,f)+\phi^+_i(c^2_i,f)+\phi^+_i(c^3_i,g)=360\degree$. The argument for $\mathcal E_{i,2}$ is analogous and Property~(d) of a join follows.
				
				\item We fix the clockwise order of the edges incident to $c_i$ in $\mathcal E^+_i$ to be $e_{0,a},e_2,e_{0,b},e_1,e_{0,a}$ (thus $G_{b_{i,1}\rightarrow c_i}$ and $G_{b_{i,2}\rightarrow c_i}$ lie ``inside'' $f_{i,0}$ and $g_{i,0}$, respectively). If $f_{i,0}$ and $g_{i,0}$ are internal faces of $\mathcal E_{i,0}$, then the walk delimiting $f^*_{\mathcal E^+_i}$ coincides with the walk delimiting $f^*_{\mathcal E_{i,0}}$. If $f_{i,0}$ is the outer face of $\mathcal E_{i,0}$, then the walk delimiting $f^*_{\mathcal E^+_i}$ is composed of the walks delimiting $f^*_{\mathcal E_{i,0}}$ and $f^*_{\mathcal E_{i,1}}$, merged so that the edge $e_{0,b}$ enters $c_i$ right before the edge $e_1$ leaves $c_i$, and the edge $e_1$ enters $c_i$ right before the edge $e_{0,a}$ leaves $c_i$. Analogously, if $g_{i,0}$ is the outer face of $\mathcal E_{i,0}$, then the walk delimiting $f^*_{\mathcal E^+_i}$ is composed of the walks delimiting $f^*_{\mathcal E_{i,0}}$ and $f^*_{\mathcal E_{i,2}}$, merged so that the edge $e_{0,a}$ enters $c_i$ right before the edge $e_2$ leaves $c_i$, and the edge $e_2$ enters $c_i$ right before the edge $e_{0,b}$ leaves $c_i$.
				
				Let $f$ be the face of $\mathcal E^+_i$ that is incident to $e_{0,a}$, $e_{0,b}$, and $e_1$, let $g$ be the face of $\mathcal E^+_i$ that is incident to $e_{0,a}$, $e_{0,b}$, and $e_2$. Let $c^0_i$, $c^1_i$, $c^2_i$, and $c^3_i$ be the occurrences of $c_i$ along the boundaries of faces of $\mathcal E^+_i$ incident to $c_i$, where $c^0_i$ and $c^1_i$ are incident to $f$, and $c^2_i$ and $c^3_i$ are incident to $g$. Then we set $\phi^+_i(c^0_i,f)=\phi^+_i(c^1_i,f)=\phi^+_i(c^2_i,g)=\phi^+_i(c^3_i,g)=90\degree$. This choice directly ensures that Properties~(b) and~(c) of a join are satisfied. Concerning Property~(d), recall that $\phi_{i,0}(c_i,f_{i,0})=180\degree$ and $\phi_{i,0}(c_i,g_{i,0})=180\degree$. Since $f_{i,0}$ is the face of $\mathcal E_{i,0}$ corresponding to $f$ and $g_{i,0}$ is the face of $\mathcal E_{i,0}$ corresponding to $g$, we have $\phi_{i,0}(c_i,f_{i,0})=\phi^+_i(c^0_i,f)+\phi^+_i(c^1_i,f)=180\degree$ and $\phi_{i,0}(c_i,g_{i,0})=\phi^+_i(c^2_i,g)+\phi^+_i(c^3_i,g)=180\degree$. Further, $\mathcal E_{i,1}$ has a unique face incident to $c_i$, namely $f^*_{{\mathcal E_{i,1}}}$, which corresponds to $f$ and $g$. Thus, we have $\phi_{i,1}(c_i,f^*_{{\mathcal E_{i,1}}})=\phi^+_i(c^0_i,f)+\phi^+_i(c^1_i,f)+\phi^+_i(c^2_i,g)+\phi^+_i(c^3_i,g)=360\degree$. The argument for $\mathcal E_{i,2}$ is analogous and Property~(d) of a join follows.
			\end{enumerate}
		\end{enumerate}

		\item The case in which $G_{b^*}$ is trivial and a different block of $G^+_i$ containing $c_i$ is non-trivial can be discussed very similarly to Case~(3). Namely, assume, w.l.o.g.\ up to a relabeling, that a non-trivial block of $G_{b_{i,1}\rightarrow c_i}$ contains $c_i$. Then it suffices to let $G_{b_{i,1}\rightarrow c_i}$ play the role of $G_{b_{i,0}\rightarrow c_i}$ and vice versa, and the algorithm described in Case~(3) can be followed almost verbatim, with one little difference. In order to maintain the property that every edge of $G_{b^*}$ that is incident to $f^*_{\mathcal E_{b^*}}$ is also incident to $f^*_{\mathcal E^+_{i,j}}$, some extra care is needed when choosing the clockwise order of the edges incident to $c_i$ in $\mathcal E^+_i$. Recall that $G_{b^*}$ is a block of $G^+_{i-1}$. By assumption, $c_i$ is incident to $f^*_{\mathcal E_{i,1}}$. In the equivalent of Cases~(3.2) and (3.3.2), there might be two different faces of $\mathcal E_{i,1}$ that might accommodate $G^+_{i-1}$. Since $\phi^{\mathrm{int}}_{i,1}(c_i)\in \{90\degree,180\degree\}$, we have $\phi_{i,1}(c_i,f^*_{\mathcal E_{i,1}})\in \{180\degree,270\degree\}$, hence one of these two faces is always $f^*_{\mathcal E_{i,1}}$. Thus, we can choose the clockwise order of the edges incident to $c_i$ in $\mathcal E^+_i$ so that $G^+_{i-1}$ ends up inside $f^*_{\mathcal E_{i,1}}$. 
	\end{enumerate}
	
	The rectilinear representation $(\mathcal E,\phi):=(\mathcal E^+_h,\phi^+_h)$ is the desired rectilinear representation of $G$. This completes the proof of sufficiency.
	
	We now discuss the running time of the described algorithm to construct $(\mathcal E,\phi)$. For $i=1,\dots,h$, the construction of $(\mathcal E^+_i,\phi^+_i)$ as a join of $(\mathcal E^+_{i-1},\phi^+_{i-1}),(\mathcal E_{i,1},\phi_{i,1}),\dots,(\mathcal E_{i,m(i)},\phi_{i,m(i)})$ can be performed in $O(m(i))\subseteq O(1)$ time, as it only requires to establish the clockwise order of the $O(m(i))$ edges incident to $c_i$, to determine the $O(m(i))$ angles incident to $c_i$, and to join $O(m(i))$ walks to form the walk delimiting the outer face of $\mathcal E^+_i$. Hence, the construction of $(\mathcal E,\phi)=(\mathcal E^+_h,\phi^+_h)$ takes $O(h)\subseteq O(n)$ time.
\end{proof}


The first step of our algorithm for computing the sets $\mathcal N_{b\rightarrow c}$ labels some vertices of $G$. Namely, for each non-trivial block $G_b$ of $G$, we label each vertex $v$ of $G_b$ whose degree in $G_b$ is smaller than or equal to $3$ with a set $\gamma_b(v)$ which contains all the values $\mu\in \{90\degree,180\degree,270\degree\}$ such that $G_b$ admits a $\chi_b$-constrained representation $(\mathcal E_b,\phi_b)$ in which $v$ is incident to $f^*_{\mathcal E_b}$ and $\phi^{\mathrm{int}}_{b}(v)=\mu$. By Theorem~\ref{th:2-con-variable}, this can be done in $O(n_b)$ time for each non-trivial block $G_b$ of $G$ with $n_b$ vertices, and hence in $O(n)$ time for all the non-trivial blocks of $G$. Further, for each trivial block $G_b$ of $G$, we label each end-vertex $c$ of $G_b$ with a set $\gamma_b(c)=\{0\degree\}$. 

For each leaf $b$ of $T$, the set $\mathcal N_{b\rightarrow c}$ can be constructed directly from the set $\gamma_b(c)$, as described later. If $b$ is an internal node of $T$, in order to compute $\mathcal N_{b\rightarrow c}$, our algorithm exploits the following tool.

\begin{lemma} \label{le:variable-simply-propagation}
	Let $b$ be an internal B-node of $T$ and let $c$ be a C-node of $T$ adjacent to $b$. Further, let $c_1,\dots,c_h$ be the C-nodes adjacent to $b$ and different from $c$, where $h\geq 1$; for $i=1,\dots,h$, let $b_{i,1},\dots,b_{i,m(i)}$ be the B-nodes adjacent to $c_i$ and different from $b$. Finally, let $\mu\in \{0\degree,90\degree,180\degree,270\degree\}$. 
	
	We have that $\mu\in \mathcal N_{b\rightarrow c}$ if and only if $\mu \in \gamma_b(c)$ and $c_i$ is a friendly neighbor of $b$, for every $i=1,\dots,h$.
	
	Further, assume that we are given:
	\begin{itemize}
		\item if $G_b$ is trivial, for each $i=1,\dots,h$ and $j=1,\dots,m(i)$, a rectilinear representation $(\mathcal E_{i,j},\phi_{i,j})$ of $G_{b_{i,j}\rightarrow c_i}$ such that $c_i$ is incident to $f^*_{\mathcal E_{i,j}}$ and such that $\phi^{\mathrm{int}}_{i,j}(c_i)\in \{0\degree,90\degree,180\degree\}$; and
		\item if $G_b$ is non-trivial, a $\chi_{b}$-constrained representation $(\mathcal E_b,\phi_b)$ of $G_{b}$ such that $c$ is incident to $f^*_{\mathcal E_b}$ and such that $\phi^{\mathrm{int}}_b(c)=\mu$ and, for each $i=1,\dots,h$ and $j=1,\dots,m(i)$, a rectilinear representation $(\mathcal E_{i,j},\phi_{i,j})$ of $G_{b_{i,j}\rightarrow c_i}$ such that $c_i$ is incident to $f^*_{\mathcal E_{i,j}}$ and such that $\phi^{\mathrm{int}}_{i,j}(c_i)\in \{0\degree,90\degree\}$. 
	\end{itemize}  
	Then it is possible to construct a rectilinear representation $(\mathcal E_{b\rightarrow c},\phi_{b\rightarrow c})$ of $G_{b\rightarrow c}$ such that $c$ is incident to $f^*_{\mathcal E_{b\rightarrow c}}$ and such that $\phi^{\mathrm{int}}_{b\rightarrow c}(c)=\mu$ in $O(h)$ time.
\end{lemma}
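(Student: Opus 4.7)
The plan is to mirror the structure of the proof of Lemma~\ref{le:variable-simply-central-block}, adapted to the fact that now one distinguished cut-vertex $c$ plays the role of an ``anchor'' that must remain incident to the outer face with a prescribed internal angle. Concretely, I would handle the two directions of the if-and-only-if separately, and fold the explicit $O(h)$-time construction into the proof of sufficiency.

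For necessity, I would start from a rectilinear representation $(\mathcal{E}_{b\to c},\phi_{b\to c})$ of $G_{b\to c}$ with $c$ incident to $f^*_{\mathcal{E}_{b\to c}}$ and $\phi^{\mathrm{int}}_{b\to c}(c)=\mu$, and restrict it to $G_b$ and to each $G_{b_{i,j}\to c_i}$. Lemma~\ref{le:chi-necessary} immediately tells me that the restriction to $G_b$ is $\chi_b$-constrained. A planarity argument (an edge of $G$ incident to $f^*_{\mathcal{E}_{b\to c}}$ belongs to some block of $G_b$, otherwise a path from such an edge to $c$ in $G_b$ would cross the walk delimiting the outer face of the restriction) shows that $c$ remains incident to the outer face of the restriction to $G_b$ and that the sum of its internal angles there is still $\mu$, giving $\mu\in\gamma_b(c)$. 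For each $c_i$, the same argument places $c_i$ on the outer face of every $\mathcal{E}_{b_{i,j}\to c_i}$; moreover, the edges of $G_b$ incident to $c_i$ sit inside the outer face of that restriction and contribute at least $180\degree$ or at least $270\degree$ of external angle at $c_i$, depending on whether $G_b$ is trivial or non-trivial. This caps $\phi^{\mathrm{int}}_{i,j}(c_i)$ at $180\degree$ or $90\degree$, respectively, so each $c_i$ is friendly.

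For sufficiency, I would build $(\mathcal{E}_{b\to c},\phi_{b\to c})$ incrementally, defining $H_0:=G_b$ and $H_i:=H_{i-1}\cup\bigcup_{j=1}^{m(i)}G_{b_{i,j}\to c_i}$ for $i=1,\dots,h$, together with rectilinear representations $(\widehat{\mathcal{E}}_i,\widehat{\phi}_i)$ obtained at each step as a join, in the sense of Section~\ref{sse:join}, of $(\widehat{\mathcal{E}}_{i-1},\widehat{\phi}_{i-1})$ with $(\mathcal{E}_{i,1},\phi_{i,1}),\dots,(\mathcal{E}_{i,m(i)},\phi_{i,m(i)})$. Lemma~\ref{le:preliminaries-subgraphs-composition} then certifies $(\widehat{\mathcal{E}}_i,\widehat{\phi}_i)$ to be a rectilinear representation of $H_i$ once Properties~(a)--(d) of a join are verified, and the final representation $(\widehat{\mathcal{E}}_h,\widehat{\phi}_h)$ is the desired $(\mathcal{E}_{b\to c},\phi_{b\to c})$. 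The surgery at each $c_i$ — specifying the cyclic order of the edges at $c_i$, merging the outer-face walks, and assigning the four angles incident to $c_i$ — is exactly the case analysis (1)--(4) carried out in the proof of Lemma~\ref{le:variable-simply-central-block}. The friendliness hypothesis on $c_i$ is precisely what guarantees that the needed free angle at $c_i$ (either $\geq 180\degree$ when $G_b$ is trivial, so that a single pocket can accommodate everything coming in at $c_i$, or $\geq 270\degree$ when $G_b$ is non-trivial and thus $c_i\in\chi_b$, so that the $90\degree/270\degree$-only constraint of the $\chi_b$-constrained representation of $G_b$ meshes with $\phi^{\mathrm{int}}_{i,j}(c_i)\in\{0\degree,90\degree\}$) is available in all participating representations.

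I expect the main obstacle to be verifying the invariant that, throughout the iterative join, $c$ continues to be incident to the outer face of $\widehat{\mathcal{E}}_i$ and $\widehat{\phi}^{\mathrm{int}}_i(c)$ remains equal to $\mu$. Since every join operation is local at some $c_i\neq c$ — it only alters the rotation at $c_i$ and the portion of the outer-face walk traversing $c_i$ — the rotation at $c$ and its incident angles are left untouched; but this needs to be spelled out case-by-case alongside the four join configurations to be sure none of them accidentally ``traps'' $c$ in a newly created internal face. Once this is checked, the $O(h)$ running time is immediate: each $c_i$ has $m(i)\leq 3$ (since $c_i$ has degree at most $4$ in $G$), so each join step specifies $O(1)$ rotation entries and $O(1)$ angle values and splices a constant number of walks, and there are $h$ such steps.
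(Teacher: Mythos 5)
Your proposal is correct and follows essentially the same route as the paper's proof, which likewise reduces both directions to the argument of Lemma~\ref{le:variable-simply-central-block} with $G_b$ playing the role of $G_{b^*}$: necessity via restriction together with Lemma~\ref{le:chi-necessary} and the outer-face angle bounds at each $c_i$, and sufficiency via the iterative join $G^+_0:=G_b$, $G^+_i:=G^+_{i-1}\cup\bigcup_j G_{b_{i,j}\rightarrow c_i}$. The invariant you flag as the main obstacle (that $c$ stays on the outer face with internal angle $\mu$) is exactly the property the paper invokes, namely that every edge of $G_b$ incident to $f^*_{\mathcal E_b}$ remains incident to $f^*_{\mathcal E^+_i}$, which the case analysis of Lemma~\ref{le:variable-simply-central-block} (including the extra care in its Case~(4)) already guarantees.
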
 

\begin{proof}
	The proof of the characterization is very similar to the proof of Lemma~\ref{le:variable-simply-central-block}, hence we only emphasize here the differences with respect to that proof. The catch is that $G_b$ plays here the role that in the proof of Lemma~\ref{le:variable-simply-central-block} is played by of $G_{b^*}$.
	
	$(\Longrightarrow)$ In the proof of necessity of Lemma~\ref{le:variable-simply-central-block}, given a rectilinear representation $(\mathcal E,\phi)$ of $G$, we select a B-node $b^*$ in $T$ such that $G_{b^*}$ contains an edge $e^*$ incident to $f^*_{\mathcal E}$. Here we are given a rectilinear representation $(\mathcal E_{b\rightarrow c},\phi_{b\rightarrow c})$ of $G_{b\rightarrow c}$ in which $c$ is incident to $f^*_{\mathcal E_{b\rightarrow c}}$ and $\phi^{\mathrm{int}}_{b\rightarrow c}(c)=\mu$, and we do not need to select a B-node in $T$, as we focus on $b$ and the corresponding block $G_b$ of $G$. It can be proved exactly as in the proof of Lemma~\ref{le:variable-simply-central-block} that the restriction of $\mathcal E_{b\rightarrow c}$ to each graph $G_{b_{i,j}\rightarrow c_i}$ is a rectilinear representation $(\mathcal E_{i,j},\phi_{i,j})$ of $G_{b_{i,j}\rightarrow c_i}$ such that $c_i$ is incident to $f^*_{\mathcal E_{i,j}}$ and such that $\phi_{i,j}^{\mathrm{int}}(c_i)\in \{0\degree,90\degree,180\degree\}$ if $G_b$ is trivial, while $\phi_{i,j}^{\mathrm{int}}(c_i)\in \{0\degree,90\degree\}$ if $G_b$ is non-trivial; hence, $c_i$ is a friendly neighbor of $b$, for every $i=1,\dots,h$. By Lemma~\ref{le:chi-necessary}, the restriction of $\mathcal E_{b\rightarrow c}$ to $G_b$ is a $\chi_b$-constrained representation $(\mathcal E_b,\phi_b)$ of $G_b$. Further, since $c$ is incident to $f^*_{\mathcal E_{b\rightarrow c}}$ and $\phi^{\mathrm{int}}_{b\rightarrow c}(c)=\mu$, we have that $c$ is incident to $f^*_{\mathcal E_b}$ and $\phi^{\mathrm{int}}_{b}(c)=\mu$, given that every edge of $G_{b\rightarrow c}$ that is incident to $c$ belongs to $G_b$. It follows that $\mu \in \gamma_b(c)$. 
	
	$(\Longleftarrow)$ The proof of sufficiency is almost identical to that in the proof of Lemma~\ref{le:variable-simply-central-block}. For each $i=1,\dots,h$ and $j=1,\dots,m(i)$, we are given a rectilinear representation $(\mathcal E_{i,j},\phi_{i,j})$ of $G_{b_{i,j}\rightarrow c_i}$ such that $c_i$ is incident to $f^*_{\mathcal E_{i,j}}$ and such that $\phi^{\mathrm{int}}_{i,j}(c_i)\in \{0\degree,90\degree,180\degree\}$ if $G_{b}$ is trivial, while $\phi^{\mathrm{int}}_{i,j}(c_i)\in \{0\degree,90\degree\}$ if $G_{b}$ is non-trivial. Further, if $G_{b}$ is non-trivial, we are given a $\chi_{b}$-constrained representation $(\mathcal E_b,\phi_b)$ of $G_{b}$ such that $c$ is incident to $\mathcal E_b$ and such that $\phi_b^{\mathrm{int}}(c)=\mu$ (if $G_{b}$ is trivial, then $\mathcal E_b$ is the unique embedding of the edge $G_{b}$ and $\phi_b$ assigns $360\degree$ to the angle incident to each end-vertex of $G_{b}$). Then we define a sequence of graphs $G^+_0,G^+_1,\dots,G^+_h$, where $G^+_0:=G_{b}$ and $G^+_i:=G^+_{i-1} \cup \left(\bigcup_{j=1}^{m(i)} G_{b_{i,j}\rightarrow c_i}\right)$, for each $i=1,\dots,h$. Further, we define a rectilinear representation $(\mathcal E^+_i,\phi^+_i)$ of $G^+_{i}$, for each $i=0,\dots,h$. The desired rectilinear representation of $G_{b\rightarrow c}$ is then $(\mathcal E_{b\rightarrow c},\phi_{b\rightarrow c}):=(\mathcal E^+_h,\phi^+_h)$. We let $(\mathcal E^+_0,\phi^+_0)$ coincide with $(\mathcal E_b,\phi_b)$; then, as in the proof of Lemma~\ref{le:variable-simply-central-block}, we construct $(\mathcal E^+_i,\phi^+_i)$ as a join of $(\mathcal E^+_{i-1},\phi^+_{i-1})$ and of $(\mathcal E_{i,1},\phi_{i,1}),\dots,(\mathcal E_{i,m(i)},\phi_{i,m(i)})$, for $i=1,\dots,h$. We use here the property that, for $i=0,\dots,h$, every edge of $G_{b}$ incident to $f^*_{\mathcal E_{b}}$ is also incident to $f^*_{\mathcal E^+_i}$; that is, $\mathcal E_{b}$ is in the outer face of each embedding $\mathcal E_{i,j}$. This implies that $c_i$ is incident to $f^*_{\mathcal E^+_h}$ and hence that $\phi_{b\rightarrow c}^{\mathrm{int}}(c)=\mu$. The join of $(\mathcal E^+_{i-1},\phi^+_{i-1}),(\mathcal E_{i,1},\phi_{i,1}),\dots,(\mathcal E_{i,m(i)},\phi_{i,m(i)})$ is defined exactly as in the proof of Lemma~\ref{le:variable-simply-central-block}. As in that proof, the time needed to constructed $(\mathcal E^+_i,\phi^+_i)$ from $(\mathcal E^+_{i-1},\phi^+_{i-1}),(\mathcal E_{i,1},\phi_{i,1}),\dots,(\mathcal E_{i,m(i)},\phi_{i,m(i)})$ is constant, hence the construction of $(\mathcal E_{b\rightarrow c},\phi_{b\rightarrow c})$ takes $O(h)$ time.
\end{proof}


The sets $\mathcal N_{b\rightarrow c}$ will be constructed by performing several traversals of $T$. The following lemma, akin to Lemma~\ref{le:two-no-children}, is used to abruptly terminate these traversals, once we can conclude that $G$ admits no rectilinear representation.

\begin{lemma} \label{le:two-no-neighbors}
	Suppose that a B-node $b$ of $T$ has two unfriendly neighbors. Then $G$ admits no rectilinear representation. 
\end{lemma}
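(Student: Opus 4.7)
My plan is to argue by contradiction using the characterization from Lemma~\ref{le:variable-simply-central-block}. Suppose $G$ admits a rectilinear representation. Then there is a B-node $b^*$ in $T$ all of whose neighbors are friendly. I will show that this is incompatible with the existence of a B-node $b$ having two distinct unfriendly neighbors $c_1$ and $c_2$.

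If $b^* = b$, the assumed friendliness of all neighbors of $b^*$ is immediately contradicted. Otherwise, consider the unique path $b^* = \beta_0, \gamma_1, \beta_1, \gamma_2, \dots, \gamma_k, \beta_k = b$ in $T$, alternating between B-nodes $\beta_i$ and C-nodes $\gamma_i$. Since $\beta_k = b$ has $\gamma_k$ as its unique path-neighbor but has two distinct unfriendly neighbors $c_1, c_2$, at least one of $c_1, c_2$, say $c_1$, is distinct from $\gamma_k$. So $c_1$ is an unfriendly neighbor of $\beta_k$ other than $\gamma_k$, and Lemma~\ref{le:variable-simply-propagation} (applied to $\beta_k$ and $\gamma_k$, which is legitimate because $\beta_k$ has at least the two distinct neighbors $\gamma_k$ and $c_1$) forces $\mathcal N_{\beta_k \rightarrow \gamma_k} = \emptyset$.

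The crux is to propagate this emptiness up the path. I will prove inductively that $\mathcal N_{\beta_i \rightarrow \gamma_i} = \emptyset$ for every $i = k, k-1, \dots, 1$. For the step from $i$ to $i-1$ (with $i \geq 2$), the B-node $\beta_i$ is adjacent to $\gamma_i$ and distinct from $\beta_{i-1}$; since the empty set $\mathcal N_{\beta_i \rightarrow \gamma_i}$ contains no angle at all, it trivially fails both the $\{0\degree,90\degree,180\degree\}$ and the $\{0\degree,90\degree\}$ conditions in the definition of friendliness, so $\beta_i$ witnesses that $\gamma_i$ is an unfriendly neighbor of $\beta_{i-1}$. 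Since $\gamma_i \neq \gamma_{i-1}$, Lemma~\ref{le:variable-simply-propagation} applied to $\beta_{i-1}$ and $\gamma_{i-1}$ yields $\mathcal N_{\beta_{i-1} \rightarrow \gamma_{i-1}} = \emptyset$, closing the induction. At $i = 1$, the same witness argument shows that $\gamma_1$ is an unfriendly neighbor of $\beta_0 = b^*$, contradicting the assumption that all neighbors of $b^*$ are friendly.

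I foresee no serious obstacle. The only mild subtlety is that the friendliness condition splits into two cases depending on whether $G_{\beta_{i-1}}$ is trivial or non-trivial, but the empty-set witness simultaneously violates both variants, so the inductive step goes through uniformly for every B-node along the path.
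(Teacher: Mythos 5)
Your proof is correct and follows essentially the same route as the paper's: invoke Lemma~\ref{le:variable-simply-central-block} to obtain a B-node $b^*$ with all neighbors friendly, then propagate $\mathcal N_{\cdot\rightarrow\cdot}=\emptyset$ along the $b$--$b^*$ path via repeated applications of Lemma~\ref{le:variable-simply-propagation} until the neighbor of $b^*$ on the path is shown to be unfriendly. The only difference is cosmetic (you index the path from $b^*$ to $b$ and make the induction and the empty-set-witness observation explicit, which the paper leaves implicit).
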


\begin{proof}
	Let $b$ be a B-node of $T$ with two unfriendly neighbors $c_i$ and $c_{i'}$. By Lemma~\ref{le:variable-simply-central-block}, we have that $G$ admits a rectilinear representation only if there exists a B-node $b^*$ in $T$ such that every neighbor of $b^*$ is friendly. Suppose, for a contradiction, that such a B-node $b^*$ exists. Let $(b^*_0:=b, c^*_0,b^*_1,c^*_1,\dots,b^*_{x-1},c^*_{x-1},b^*_x:=b^*)$ be the path between $b$ and $b^*$ in $T$. Since $c_i$ and $c_{i'}$ are distinct, one of them is not $c^*_0$. By repeated applications of Lemma~\ref{le:variable-simply-propagation}, we get that $\mathcal N_{b^*_0\rightarrow c^*_0}=\emptyset$, $\mathcal N_{b^*_1\rightarrow c^*_1}=\emptyset$, $\dots$, $\mathcal N_{b^*_{x-1}\rightarrow c^*_{x-1}}=\emptyset$. The last equality implies that $c^*_{x-1}$ is an unfriendly neighbor of $b^*$. This contradiction proves the lemma.
\end{proof}

We now describe our algorithm for computing the sets $\mathcal N_{b\rightarrow c}$. Similarly to Section~\ref{sse:var-2con}, the algorithm consists of several postorder traversals and, in order to achieve linear running time, computes and exploits the following additional information, for each B-node $b$ of $T$. Let  $c_0,c_1,\dots,c_h$ be the neighbors of $b$ in $T$. 


\begin{itemize}
	\item The algorithm computes a label $\eta(b)$ pointing to an unfriendly neighbor $c_i$ of $b$. While an unfriendly neighbor $c_i$ of $b$ has not been found, $\eta(b)$=\nullo.
	\item The algorithm maintains a label $\xi(b)$ which can have three values. \begin{itemize}
		\item The label $\xi(b)$ has value {\sc done} if the algorithm has established that every neighbor $c_i$ of $b$ is friendly.
		\item The label $\xi(b)$ has value $c_i$ if the algorithm has established that every neighbor of $b$ different from $c_i$ is friendly, while the algorithm has either established that $c_i$ is unfriendly or has not yet established whether $c_i$ is friendly or unfriendly.
		\item The label $\xi(b)$ has value \nullo~if no neighbor of $b$ has yet established to be friendly.
	\end{itemize}
\end{itemize}


We start by performing the following initialization. For every B-node $b$ of $T$, we initialize the labels $\eta(b)$ and $\xi(b)$ to \nullo~and, for every neighbor $c$ of $b$, we initialize the set $\mathcal N_{b\rightarrow c}$ to \nullo.    

Let $\beta_1,\dots,\beta_p$ be the leaves of $T$. For $i=1,\dots,h$, let $c_i$ be the unique neighbor of $\beta_i$ in $T$; note that the degree of $c_i$ in $G_{\beta_i}$ is smaller than or equal to $3$. As anticipated earlier, we have $\mathcal N_{\beta_i\rightarrow c_i}=\gamma_{\beta_i}(c_i)$. This allows us to compute the sets $\mathcal N_{\beta_1\rightarrow c_1},\dots,\mathcal N_{\beta_p\rightarrow c_p}$ in total $O(n)$ time. We also propagate the computed information to the B-nodes that are adjacent to the C-nodes $c_1,\dots,c_p$. Namely, for each $i=1,\dots,p$, we consider the B-nodes $b_{i,1},\dots,b_{i,m(i)}$ adjacent to $c_i$ in $T$ and different from $\beta_i$. When we consider $b_{i,j}$ we do the following.

\begin{itemize}
	\item Suppose first that $\eta(b_{i,j})$=\nullo. If $G_{b_{i,j}}$ is trivial and $\mathcal N_{{\beta_i}\rightarrow c_i}\cap \{0\degree,90\degree,180\degree\}= \emptyset$, or if $G_{b_{i,j}}$ is non-trivial and $\mathcal N_{\beta_i\rightarrow c_i}\cap \{0\degree,90\degree\}= \emptyset$, then we set $\eta(b_{i,j})=c_i$ (by Lemma~\ref{le:variable-simply-propagation}), otherwise we do nothing.
	\item Suppose next that $\eta(b_{i,j})=c_x$, where $x\neq i$. If $G_{b_{i,j}}$ is trivial and $\mathcal N_{{\beta_i}\rightarrow c_i}\cap \{0\degree,90\degree,180\degree\}= \emptyset$, or if $G_{b_{i,j}}$ is non-trivial and $\mathcal N_{\beta_i\rightarrow c_i}\cap \{0\degree,90\degree\}= \emptyset$, then we conclude that $G$ admits no rectilinear representation (by Lemma~\ref{le:two-no-neighbors}), otherwise we do nothing.
\end{itemize}

Since the degree of each vertex $c_i$ is at most $4$, there are $O(1)$ B-nodes $b_{i,j}$ to consider for each C-node $c_i$. Hence, the described computation can be performed in total $O(n)$ time.

We now describe the postorder traversals of $T$, which aim at computing the sets $\mathcal N_{b\rightarrow c}$ for every internal B-node $b$ of $T$. We will root $T$ first at $\beta_1$, then at $\beta_2$, and so on. In order to stress the rooting, we denote by $T_h$ the tree $T$ when rooted at $\beta_h$. We say that a (B- or C-) node $s$ of $T$ is a \emph{$T_h$-child} of a (C- or B-) node $t$ of $T$ if $s$ is a child of $t$ in $T_h$; then $t$ is the \emph{$T_h$-parent} of $s$. A \emph{$T_h$-subtree} of a B-node $b$ of $T$ is a tree $T_{b_{i,j}\rightarrow c_i}$ such that $c_i$ is a $T_h$-child of $b$ and $b_{i,j}$ is a $T_h$-child of $c_i$.

We now explain how the postorder traversal of $T_h$ works, for each $h=1,\dots,p$. 

When the postorder traversal is invoked on a subtree $T_{b\rightarrow c}$ of $T_h$ (at first, this is the entire tree $T_h$), it enters a $T_h$-subtree $T_{b_{i,j}\rightarrow c_i}$ of $b$ only if $\mathcal N_{b_{i,j}\rightarrow c_i}=$\nullo. That is, if $b_{i,j}$ is a leaf of $T$ or if $\mathcal N_{b_{i,j}\rightarrow c_i}$ has been constructed during the traversal of $T_g$ with $g<h$, then $T_{b_{i,j}\rightarrow c_i}$ is not visited during the traversal of $T_h$. 

It is crucial here, as far as the running time is concerned, that the algorithm does not check, for each individual $T_h$-child $c_i$ of $b$ and for each $T_h$-child $b_{i,j}$ of $c_i$, whether $\mathcal N_{b_{i,j}\rightarrow c_i}=$\nullo, and thus whether it should enter $T_{b_{i,j}\rightarrow c_i}$; in fact, performing all these checks would imply spending time proportional to the degree of $b$ in $T$ each time a traversal processes $b$, which would result in a quadratic running time. Indeed, the information we need can be recovered for all the $T_h$-children $c_i$ of $b$ and for all the $T_h$-children $b_{i,j}$ of $c_i$ from the labels $\xi(b)$ and $\eta(b)$, in total $O(1)$ time. In particular, if $\xi(b)$={\sc done} or if $\xi(b)=c$, where $c$ is the $T_h$-parent of $b$, then all the $T_h$-children of $b$ are friendly, hence there is no need to check any set $\mathcal N_{b_{i,j}\rightarrow c_i}$. If $\eta(b)=c_x$, where $c_x$ is a $T_h$-child of $b$, then $b$ has an unfriendly $T_h$-child, and the algorithm does not check any set $\mathcal N_{b_{i,j}\rightarrow c_i}$ either. If $\eta(b)\neq c_x$ and $\xi(b)=c_x$, where $c_x$ is a $T_h$-child of $b$, then all the $T_h$-children of $b$ different from $c_x$ are friendly, and the traversal only needs to check the sets $\mathcal N_{b_{x,j}\rightarrow c_x}$; note that there is a constant number of these sets, given that the degree of $c_x$ is at most $4$. Finally, if $\xi(b)$={\sc nullo} and if $\eta(b)$={\sc null} or $\eta(b)=c$, where $c$ is the $T_h$-parent of $b$, then the algorithm does indeed check every set $\mathcal N_{b_{i,j}\rightarrow c_i}$, for each $T_h$-child $c_i$ of $b$ and for each $T_h$-child $b_{i,j}$ of $c_i$ different from $b$, and possibly enters the subtree $T_{b_{i,j}\rightarrow c_i}$; however, this case happens at most twice, namely once with $\eta(b)$={\sc null} and once with $\eta(b)=c$, as will be argued in the proof of the upcoming Theorem~\ref{th:main}.



Now suppose that the traversal of $T_h$ visits an internal B-node $b$ of $T$; let $c_0,c_1,\dots,c_k$ be the C-nodes adjacent to $b$, where $c:=c_0$ is the $T_h$-parent of $b$ and $c_1,\dots,c_k$ are the $T_h$-children of $b$. For $i=1,\dots,k$, let $b_{i,1},\dots,b_{i,m(i)}$ the B-nodes of $T$ adjacent to $c_i$ and different from $b$. When the postorder traversal of $T_h$ visits $b$, after each subtree $T_{b_{i,j}\rightarrow c_i}$ with $i\in \{1,\dots,k\}$ and $j\in \{1,\dots,m(i)\}$ has been processed, it constructs the set $\mathcal N_{b\rightarrow c}$, it possibly updates $\xi(b)$, and it possibly updates $\eta(b_x(c))$, for each B-node $b_x(c)$ adjacent to $c$ and different from $b$. Note that, since every tree $T_{b_{i,j}\rightarrow c_i}$ has been processed, the set $\mathcal N_{b_{i,j}\rightarrow c_i}$ has been computed already, for every $i=1,\dots,k$ and $j=1,\dots,m(i)$. The algorithm performs the following actions.

Suppose first that $\eta(b)=c_i$, for some $i\in \{1,\dots,h\}$, hence $c_i$ is an unfriendly neighbor of $b$. By Lemma~\ref{le:variable-simply-propagation}, we can set $\mathcal N_{b\rightarrow c}=\emptyset$. We also propagate the computed information to each B-node $b_x(c)$ that is adjacent to $c$ and that is different from $b$. Namely, if $\eta(b_x(c))$ points to some C-node different from $c$, by Lemma~\ref{le:two-no-neighbors} we can terminate all the traversals reporting in output that $G$ admits no rectilinear representation. Otherwise, we set $\eta(b_x(c))=c$. Since the degree of $c$ in $G$ (and in $T$) is at most $4$, this can be done in total $O(1)$ time.

Suppose next that $\eta(b)=${\sc null}~or that $\eta(b)=c$. Since every tree $T_{b_{i,j}\rightarrow c_i}$ has been processed and since $\eta(b)\neq c_i$ for every $T_h$-child $c_i$ of $b$, it follows that every $T_h$-child of $b$ is friendly. By Lemma~\ref{le:variable-simply-propagation}, we can set $\mathcal N_{b\rightarrow c}=\gamma_b(c)$. Further, if $\xi(b)=${\sc done} or if $\xi(b)=c_i$, for some $T_h$-child $c_i$ of $b$, we set $\xi(b)=${\sc done}; otherwise, we set $\xi(b)=c$. Finally, as in the previous case, we consider each B-node $b_x(c)$ that is adjacent to $c$ and that is different from $b$; differently from the previous case, $c$ might be a friendly neighbor of $b_x(c)$. We check whether $G_{b_x(c)}$ is trivial and $\mathcal N_{b\rightarrow c}\cap \{0\degree,90\degree,180\degree\}= \emptyset$, or whether $G_{b_x(c)}$ is non-trivial and $\mathcal N_{b\rightarrow c}\cap \{0\degree,90\degree\}= \emptyset$. In the positive case, we have that $c$ is an unfriendly neighbor of $b_x(c)$. If $\eta(b_x(c))$ points to some C-node different from $c$, by Lemma~\ref{le:two-no-neighbors} we can terminate all the traversals reporting in output that $G$ admits no rectilinear representation. Otherwise, we set $\eta(b_x(c))=c$. This concludes the processing of $b$ in $T_h$. 

We are now ready to state and prove the following main theorem.

\begin{theorem} \label{th:main}
	There is an $O(n)$-time algorithm that tests whether an $n$-vertex outerplanar graph admits a planar rectilinear drawing; in the positive case, the algorithm constructs such a drawing in $O(n)$ time.
\end{theorem}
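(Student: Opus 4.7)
The plan is to assemble the tools developed so far. First I would compute in linear time the block-cut-vertex tree $T$ of $G$, together with the outerplane embedding of each block $G_b$. For every non-trivial block $G_b$, I would invoke Theorem~\ref{th:2-con-variable} to test whether $G_b$ admits a $\chi_b$-constrained representation and, in the positive case, to compute the labels $\gamma_b(v)$ for every vertex $v$ of degree at most $3$ in $G_b$; this costs $O(n_b)$ per block, hence $O(n)$ overall. For every trivial block $G_b$ I would directly set $\gamma_b(c)=\{0\degree\}$ for both endpoints. If Theorem~\ref{th:2-con-variable} fails on some non-trivial block, by Lemma~\ref{le:chi-necessary} I can report that $G$ has no rectilinear representation.

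Next I would run the postorder-traversal machinery described before the theorem to compute the sets $\mathcal N_{b\rightarrow c}$ for every directed edge of $T$, processing $T$ rooted successively at the leaves $\beta_1,\ldots,\beta_p$. Correctness of this step reduces, through Lemmata~\ref{le:variable-simply-propagation} and~\ref{le:two-no-neighbors}, to inductively maintaining the invariants that (i) $\mathcal N_{b\rightarrow c}$ correctly stores the achievable values of $\phi^{\mathrm{int}}(c)$ for the subgraph $G_{b\rightarrow c}$, and (ii) the auxiliary labels $\xi(b)$ and $\eta(b)$ track, respectively, whether a friendly/unfriendly neighbor of $b$ has already been identified. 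Once all sets $\mathcal N_{b\rightarrow c}$ are known, Lemma~\ref{le:variable-simply-central-block} tells me that $G$ admits a rectilinear representation iff there exists a B-node $b^*$ with $\gamma_{b^*}\neq\emptyset$ (or trivial) all of whose neighbors in $T$ are friendly, which can be tested in $O(n)$ by scanning the $\xi$ labels.

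The main obstacle is bounding the total work by $O(n)$. A naive traversal of $T_h$ that looks at every child of every visited B-node would cost $\Theta(n)$ per traversal and $\Theta(np)=\Theta(n^2)$ in total. The key amortization is that once $\mathcal N_{b_{i,j}\rightarrow c_i}$ has been computed during some earlier traversal, the corresponding subtree $T_{b_{i,j}\rightarrow c_i}$ is never entered again, so each directed edge of $T$ is traversed at most once. The subtle point is ensuring that when a traversal revisits a node $b$ it does not waste time scanning children whose sets are already known: the labels $\xi(b)$ and $\eta(b)$ encode exactly this information in $O(1)$ space, so that in the revisit the algorithm either skips $b$ entirely (because $\xi(b)$ equals \textsc{done} or the current parent, or $\eta(b)$ equals a $T_h$-child of $b$), or it only inspects the $O(1)$ neighbors of a single C-node pointed to by $\xi(b)$. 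The only case where a full scan over children is done is when $\xi(b)=\textsc{null}$ and $\eta(b)\in\{\textsc{null},c\}$, and this can occur at most twice over the lifetime of $b$ (once when $b$ is first processed and once when its rooting direction flips), so the total work at $b$ is $O(\deg_T(b))$, which telescopes to $O(n)$.

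Finally, if a B-node $b^*$ satisfying the conditions of Lemma~\ref{le:variable-simply-central-block} is found, I would first use Theorem~\ref{th:2-con-variable} to construct a $\chi_{b^*}$-constrained representation of $G_{b^*}$ (incident, when $G_{b^*}$ is trivial, to an arbitrary $v$ with $\gamma_{b^*}(v)\neq\emptyset$) in $O(n_{b^*})$ time, and then, by performing a top-down traversal of $T$ rooted at $b^*$, use Lemma~\ref{le:variable-simply-propagation} at every C-node $c_i$ to recursively build rectilinear representations of each $G_{b_{i,j}\rightarrow c_i}$ satisfying the required angle conditions at $c_i$; each recursive call produces its representation via Theorem~\ref{th:2-con-variable} (when the top block is non-trivial) in time linear in the size of the block, and the $O(1)$-time join of Lemma~\ref{le:variable-simply-propagation} assembles these into a rectilinear representation $(\mathcal E,\phi)$ of $G$ in overall $O(n)$ time. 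Applying Tamassia's construction~\cite{t-eggmnb-87} to $(\mathcal E,\phi)$ yields the planar rectilinear drawing in $O(n)$ additional time, completing the proof.
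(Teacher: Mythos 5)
Your proposal follows essentially the same route as the paper's proof: compute the block-cut-vertex tree and the per-block labels $\gamma_b$ via Theorem~\ref{th:2-con-variable}, run the amortized postorder traversals with the $\eta$/$\xi$ bookkeeping to obtain the sets $\mathcal N_{b\rightarrow c}$, decide existence via the characterization of Lemma~\ref{le:variable-simply-central-block}, and assemble a representation by recursive application of Lemma~\ref{le:variable-simply-propagation} followed by Tamassia's drawing algorithm, with the same amortization argument (each directed tree edge traversed once, and the full child scan occurring at most twice per B-node). The only nit is that the final existence test should inspect the sets $\mathcal N_{b_{i,j}\rightarrow c_i}$ directly (as the paper does) rather than only the $\xi$ labels, since $\xi$ is not maintained for leaf B-nodes, which may well be the required $b^*$; this is an $O(n)$ check either way.
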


\begin{proof}
	We compute in $O(n)$ time the block-cut-vertex tree $T$ of $G$~\cite{h-gt-69,ht-aeagm-73} and label each edge with the block it belongs to. We visit $G$ and label each vertex that is in two non-trivial blocks $G_b$ and $G_{b'}$ as belonging to $\chi_b$ and $\chi_{b'}$. Hence, this visit computes in total $O(n)$ time a set $\chi_b$ for every non-trivial block $G_b$ of $G$. 
	
	By Theorem~\ref{th:2-con-variable}, we can label each vertex $v$ of $G$ that belongs to a non-trivial block $G_b$ and whose degree in $G_b$ is not larger than $3$ with a set $\gamma_b(v)$; this set contains all the values $\mu\in \{90\degree,180\degree,270\degree\}$ such that $G_b$ admits a $\chi_b$-constrained representation $(\mathcal E_b,\phi_b)$ in which $v$ is incident to $f^*_{\mathcal E_b}$ and $\phi_b^{\mathrm{int}}(v)=\mu$. For each non-trivial block $G_b$ of $G$ with $n_b$ vertices, this can be done in total $O(n_b)$ time for all the vertices $v$ of $G_b$ whose degree is not larger than $3$. Hence, this computation takes $O(n)$ time over all the non-trivial blocks of $G$. Further, for each trivial block $G_b$ of $G$, we label in $O(1)$ time each end-vertex $v$ of $G_b$ with a set $\gamma_b(v)=\{0\degree\}$. Hence, this computation takes $O(n)$ time over all the trivial blocks of $G$.
	
	We now equip each edge $bc$ of $T$, where $b$ is a B-node and $c$ is a C-node, with a set $\mathcal N_{b\rightarrow c}$ that contains all the values $\mu\in \{0\degree,90\degree,180\degree,270\degree\}$ such that $G_{b\rightarrow c}$ admits a rectilinear representation $(\mathcal E_{b\rightarrow c}, \phi_{b\rightarrow c})$ in which $c$ is incident to $f^*_{\mathcal E_{b\rightarrow c}}$ and $\phi^{\mathrm{int}}_{b\rightarrow c}(c)=\mu$. In order to do that, we apply the algorithm described before the statement of the theorem, for which we now prove the correctness and analyze the running time. 
	
	The correctness is proved inductively. More precisely, by induction on the number of sets $\mathcal N_{b\rightarrow c}\neq$\nullo, we prove that, before the processing of each B-node, the following statements are satisfied:
	\begin{itemize}
		\item For every set $\mathcal N_{b\rightarrow c}\neq$\nullo~and every value $\mu\in \{0\degree,90\degree,180\degree,270\degree\}$, we have $\mu\in \mathcal N_{b\rightarrow c}$ if and only if $G_{b\rightarrow c}$ admits a rectilinear representation $(\mathcal E_{b\rightarrow c}, \phi_{b\rightarrow c})$ in which $c$ is incident to $f^*_{\mathcal E_{b\rightarrow c}}$ and $\phi^{\mathrm{int}}_{b\rightarrow c}(c)=\mu$.
		\item For each B-node $b$ of $T$, the label $\eta(b)$ is equal to \nullo~if and only if the algorithm has not yet found any unfriendly neighbor $c_i$ of $b$. Further, the label $\eta(b)$ is equal to $c_i$~if and only if the algorithm has established that a neighbor $c_i$ of $b$ is unfriendly.
		\item For each B-node $b$ of $T$, the label $\xi(b)$ is equal to {\sc done} if and only if the algorithm has established that every neighbor $c_i$ of $b$ is friendly. Further, the label $\xi(b)$ has value $c_i$ if and only if the algorithm has established that every neighbor of $b$ different from $c_i$ is friendly, while the algorithm has either established that $c_i$ is unfriendly or has not yet established whether $c_i$ is friendly or unfriendly. Finally, the label $\xi(b)$ has value \nullo~if and only if no neighbor of $b$ has yet established to be friendly.
	\end{itemize}
	
	The inductive hypothesis is trivially true initially, before processing the first B-node.
	
	First, the algorithm determines the set $\mathcal N_{\beta_i\rightarrow c_i}$ for each leaf $\beta_i$ of $T$ with a unique neighbor $c_i$. When processing $\beta_i$, the set $\mathcal N_{\beta_i\rightarrow c_i}$ is correctly defined as $\gamma_{\beta_i}(c_i)$, given that $G_{\beta_i\rightarrow c_i}$ coincides with the block $G_{\beta_i}$. Further, the algorithm considers each B-node $b_{i,j}\neq b$ adjacent to $c_i$; if $c_i$ is determined to be an unfriendly neighbor of $b_{i,j}$, then the algorithm either, by Lemma~\ref{le:variable-simply-propagation}, correctly sets $\eta(b_{i,j})=c_i$ (if previously $\eta(b_{i,j})$=\nullo) or, by Lemma~\ref{le:two-no-neighbors}, correctly concludes that $G$ admits no rectilinear representation (if previously $\eta(b_{i,j})=c_x$, for some neighbor $c_x$ of $b_{i,j}$ different from $c_i$).
	
	The algorithm processes each internal B-node $b$ of $T$ several times. Consider the processing of $b$ during the traversal of $T_h$. Let $c$ be the $T_h$-parent of $b$, let $c_1,\dots,c_k$ be the $T_h$-children of $b$, and, for $i=1,\dots,k$, let $b_{i,1},\dots,b_{i,m(i)}$ be the $T_h$-children of $b_i$. The algorithm processes $b$ in $T_h$ only after each set $\mathcal N_{b_{i,j}\rightarrow c_i}$ has been constructed. We first prove that, after processing $b$ in $T_h$, the set $\mathcal N_{b\rightarrow c}$ contains a value $\mu\in \{0\degree,90\degree,180\degree,270\degree\}$ if and only if $G_{b\rightarrow c}$ admits a rectilinear representation $(\mathcal E_{b\rightarrow c}, \phi_{b\rightarrow c})$ in which $c$ is incident to $f^*_{\mathcal E_{b\rightarrow c}}$ and $\phi^{\mathrm{int}}_{b\rightarrow c}(c)=\mu$. 
	
	Suppose first that $\mu \notin \mathcal N_{b\rightarrow c}$. There are two possible reasons for that. 
	
	\begin{enumerate}
		\item First, when the algorithm processes $b$, it sets $\mathcal N_{b\rightarrow c}=\emptyset$ (which implies that $\mu \notin \mathcal N_{b\rightarrow c}$), since $\mathcal N_{b_{i,j}\rightarrow c_i}\cap \{0\degree,90\degree,180\degree\}=\emptyset$ and $G_b$ is trivial, or since $\mathcal N_{b_{i,j}\rightarrow c_i}\cap \{0\degree,90\degree\}=\emptyset$ and $G_b$ is non-trivial, for some $i\in \{1,\dots,h\}$ and $j\in \{1,\dots,m(i)\}$. Then Lemma~\ref{le:variable-simply-propagation} ensures that $G_{b\rightarrow c}$ admits no rectilinear representation $(\mathcal E_{b\rightarrow c}, \phi_{b\rightarrow c})$ in which $c$ is incident to $f^*_{\mathcal E_{b\rightarrow c}}$ and $\phi^{\mathrm{int}}_{b\rightarrow c}(c)=\mu$. 
		
		\item Second, when the algorithm processes $b$, it sets $\mathcal N_{b\rightarrow c}=\gamma_b(c)$ if, for $i=1,\dots,h$ and $j=1,\dots,m(i)$, it holds true that $\mathcal N_{b_{i,j}\rightarrow c_i}\cap \{0\degree,90\degree,180\degree\}\neq \emptyset$ and $G_b$ is trivial, or that $\mathcal N_{b_{i,j}\rightarrow c_i}\cap \{0\degree,90\degree\}\neq \emptyset$ and $G_b$ is non-trivial. Thus, if $\mu \notin \gamma_b(c)$, then $\mu \notin \mathcal N_{b\rightarrow c}$. Then Lemma~\ref{le:variable-simply-propagation} ensures that $G_{b\rightarrow c}$ admits no rectilinear representation $(\mathcal E_{b\rightarrow c}, \phi_{b\rightarrow c})$ in which $c$ is incident to $f^*_{\mathcal E_{b\rightarrow c}}$ and $\phi^{\mathrm{int}}_{b\rightarrow c}(c)=\mu$. 
	\end{enumerate}     
	
	Suppose next that $\mu\in \mathcal N_{b\rightarrow c}$. The algorithm inserts $\mu$ into $\mathcal N_{b\rightarrow c}$ because $\mu \in \gamma_b(c)$ and because, for $i=1,\dots,h$ and $j=1,\dots,m(i)$, it holds true that $\mathcal N_{b_{i,j}\rightarrow c_i}\cap \{0\degree,90\degree,180\degree\}\neq \emptyset$ and $G_b$ is trivial, or that $\mathcal N_{b_{i,j}\rightarrow c_i}\cap \{0\degree,90\degree\}\neq \emptyset$ and $G_b$ is non-trivial.  Then Lemma~\ref{le:variable-simply-propagation} ensures that $G_{b\rightarrow c}$ admits a rectilinear representation $(\mathcal E_{b\rightarrow c}, \phi_{b\rightarrow c})$ in which $c$ is incident to $f^*_{\mathcal E_{b\rightarrow c}}$ and $\phi^{\mathrm{int}}_{b\rightarrow c}(c)=\mu$.
	
	The construction of the set $\mathcal N_{b\rightarrow c}$ might lead the algorithm to conclude that $c$ is an unfriendly neighbor of some B-node of $T$. Indeed, the algorithm checks each B-node $b_x(c)\neq b$ adjacent to $c$; if $c$ is determined to be an unfriendly neighbor of $b_x(c)$, then the algorithm either correctly sets $\eta(b_x(c))=c$ (if previously $\eta(b_x(c))$=\nullo), or correctly concludes, by Lemma~\ref{le:two-no-neighbors}, that $G$ admits no rectilinear representation (if previously $\eta(b_x(c))=c_x$, for some neighbor $c_x$ of $b_x(c)$ different from $c$).
	
	Finally, when processing $b$ the algorithm might conclude that every neighbor of $b$ is friendly, or that every neighbor of $b$ different from $c$ is friendly. If before processing $b$ we have $\xi(b)=c_i$, for some $T_h$-child $c_i$ of $b$, then every neighbor of $b$ different from $c_i$ is friendly; when $b$ is processed, if $\eta(b)=${\sc null}~or $\eta(b)=c$, then every $T_h$-child of $b$, including $c_i$, is friendly, hence the algorithm can correctly set $\xi(b)=${\sc done}. If before processing $b$ we have $\xi(b)=${\sc null}, then either $c$ has been determined to be an unfriendly neighbor of $b$ or it is unknown whether $c$ is a friendly or unfriendly neighbor of $b$; when $b$ is processed, if $\eta(b)=${\sc null}~or $\eta(b)=c$, then every $T_h$-child of $b$ is friendly, hence the algorithm can correctly set $\xi(b)=c$.
	
	We now show that the algorithm to construct the sets $\mathcal N_{b\rightarrow c}$ runs in $O(n)$ time. 
	
	\begin{itemize}
		\item First, the initialization to \nullo~of each label $\eta(b)$ and $\chi(b)$ and of each set $\mathcal N_{b\rightarrow c}$ can clearly be performed in $O(n)$ time.
		\item Second, all the sets $\mathcal N_{\beta_i\rightarrow c_i}$ such that $\beta_i$ is a leaf of $T$ are constructed in $O(n)$ time, as argued during the algorithm's description. Further, after computing each set $\mathcal N_{\beta_i\rightarrow c_i}$, the labels $\eta(b_{i,j})$ for the B-nodes adjacent to $c_i$ are possibly updated in $O(1)$ time, and hence in total $O(n)$ time over the entire tree, as argued during the algorithm's description.
		\item Now observe that, for each B-node $b$ of $T$, for each neighbor $c_i$ of $b$, and for each neighbor $b_{i,j}$ of $c_i$ different from $b$, there is (at most) one rooting of $T$ for which a postorder traversal is invoked on $T_{b_{i,j}\rightarrow c_i}$. Indeed, once a postorder traversal is invoked on $T_{b_{i,j}\rightarrow c_i}$, the set $\mathcal N_{b_{i,j}\rightarrow c_i}$ is determined; if some subsequent traversal of $T$ processes $b$ with $c_i$ as a child, then no postorder traversal is invoked on $T_{b_{i,j}\rightarrow c_i}$. This implies that, if a B-node has $k+1$ neighbors in $T$, then it is processed $O(k)$ times. 
		\item We now show that the total running time for processing a B-node $b$ of $T$ with $k+1$ neighbors, over all the traversals of $T$, is in $O(k)$. This implies that the overall running time of the algorithm is in $O(n)$, given that the sum of the degrees of the vertices of $T$ is in $O(n)$. 
		
		Let $c_0,c_1,\dots,c_k$ be the C-nodes neighbors of $b$ and, for $i=0,\dots,k$, let $b_{i,0},\dots,b_{i,m(i)}$ be the B-nodes neighbors of $c_i$ and different from $b$. Suppose that a traversal of $T_h$ visits $b$ with $c_i$ as a $T_h$-parent, for some $i\in \{0,\dots,k\}$. We distinguish some cases, based on the values of the labels $\xi(b)$ and $\eta(b)$. Common to all the cases is that the processing of $b$ determines the set $\mathcal N_{b\rightarrow c_i}$; then the computed information $\mathcal N_{b\rightarrow c_i}$ is propagated to each neighbor $b_x(c_i)$ of $c_i$ different from $b$, possibly updating the label $\eta(b_x(c_i))$ in $O(1)$ time. Since $c_i$ has $O(1)$ neighbors in $T$, this is done in $O(1)$ time, and hence in $O(k)$ time over all the traversals of $T$ that visit $b$.
		\begin{itemize}
			\item If $\xi(b)=${\sc done} or if $\xi(b)=c_i$, then no $T_h$-child of $b$ is visited, as all the $T_h$-children of $b$ are already known to be friendly; then we set $\mathcal N_{b\rightarrow c_i}=\gamma_b(c_i)$. Hence, the entire processing of $b$ in $T_h$ takes $O(1)$ time. Over all the traversals of $T$ this amounts to $O(k)$ time.
			\item If $\eta(b)=c_j$, for some $j\in \{0,\dots,k\}$ such that $c_j$ is a $T_h$-child of $b$, then no $T_h$-child of $b$ is visited, as we already know that not all the $T_h$-children of $b$ are friendly; then we set $\mathcal N_{b\rightarrow c_i}=\emptyset$. Hence, the entire processing of $b$ in $T_h$ takes $O(1)$ time. Over all the traversals of $T$ this amounts to $O(k)$ time.
			\item If $\xi(b)=c_j$ and $\eta(b)\neq c_j$, for some $j\in \{0,\dots,k\}$ with $j\neq i$, then traversals of $T_{b_{j,0}\rightarrow c_j},\dots,T_{b_{j,m(j)}\rightarrow c_j}$ are invoked in $O(m(j))\subseteq O(1)$ time and the values $\mathcal N_{b_{j,0}\rightarrow c_j},\dots,\mathcal N_{b_{j,m(j)}\rightarrow c_j}$ are used in order to determine in $O(1)$ time whether $c_j$ is a friendly or unfriendly neighbor of $b$. Then $\mathcal N_{b\rightarrow c_i}$ is set equal to $\gamma_b(c_i)$ if $c_j$ is a friendly neighbor of $b$, and to $\emptyset$, otherwise; in both cases, the computation takes $O(1)$ time. Finally, if $c_j$ is a friendly neighbor of $b$, then we set $\xi(b)=${\sc done}; otherwise, the processing of $c_j$ either sets $\eta(b)=c_j$ (if $\eta(b)$ was previously \nullo), or stops all the traversals of $T$ by concluding that $G$ admits no rectilinear representation. Hence, the entire processing of $b$ in $T_h$ takes $O(1)$ time. Over all the traversals of $T$ this amounts to $O(k)$ time.
			\item If $\xi(b)=${\sc null}, and if $\eta(b)=c_i$ or $\eta(b)=${\sc null}, then, for each $j=0,\dots,k$ with $j\neq i$, traversals of $T_{b_{j,0}\rightarrow c_j},\dots,T_{b_{j,m(j)}\rightarrow c_j}$ are invoked in $O(m(j))\subseteq O(1)$ time and the values $\mathcal N_{b_{j,0}\rightarrow c_j},\dots,\mathcal N_{b_{j,m(j)}\rightarrow c_j}$ are used in order to determine in $O(1)$ time whether $c_j$ is a friendly or unfriendly neighbor of $b$. Hence, this computation takes $O(k)$ time over all the $T_h$-children of $b$. 
			
			If all the $T_h$-children of $b$ are friendly, then $\mathcal N_{b\rightarrow c_i}$ is set equal to $\gamma_b(c_i)$ and $\xi(b)$ is set equal to $c_i$. Otherwise, the processing of an unfriendly child $c_j$ of $b$ either sets $\eta(b)=c_j$ (if $\eta(b)$ was previously \nullo), or stops all the traversals of $T$ by concluding that $G$ admits no rectilinear representation.
			
			The entire processing of $b$ in $T_h$ thus takes $O(k)$ time. However, there are at most two traversals of $T$ for which the conditions of this case are met. Namely, if $\xi(b)=${\sc null} and $\eta(b)=${\sc null}, then after processing $b$ we either have $\xi(b)=c_i$, or we have $\eta(b)=c_j$, for some $T_h$-child $c_j$ of $b$, or the traversals of $T$ have stopped because it was concluded that $G$ admits no rectilinear representation. Hence, $\xi(b)=${\sc null} and $\eta(b)=${\sc null}~occurs in at most one traversal of $T$. Further, if $\xi(b)=${\sc null} and $\eta(b)=c_i$,  then after processing $b$ we either have $\xi(b)=c_i$, or the traversals of $T$ have stopped because it was concluded that $G$ admits no rectilinear representation. Hence, $\xi(b)=${\sc null} and $\eta(b)=c_i$, where $c_i$ is the parent of $b$, happens in at most one traversal of $T$.
		\end{itemize}
	\end{itemize}
	
	This concludes the proof that the running time of the algorithm to construct the sets $\mathcal N_{b\rightarrow c}$ is $O(n)$.
	
	By Lemma~\ref{le:variable-simply-central-block}, we have that $G$ admits a rectilinear representation if and only a B-node $b^*$ exists in $T$ such~that: (i) if $G_{b^*}$ is non-trivial, then it admits a $\chi_{b^*}$-constrained representation; and (ii) every neighbor of $b^*$ in $T$ is friendly. An $O(n)$-time visit of $T$ allows us to find every B-node $b$ satisfying (ii); indeed, whether a neighbor $c_i$ of $b$ is friendly can be tested in $O(1)$ time, given that the sets $\mathcal N_{b_{i,j}\rightarrow c_i}$ for the neighbors $b_{i,j}$ of $c_i$ different from $b$ are known. Further, in order to check whether a block $G_b$ of $G$ admits a $\chi_b$-constrained representation, it suffices to check whether $G_b$ contains a vertex $v$ of maximum degree $3$ such that $\gamma_b(v)\neq \emptyset$. Since every vertex $v$ of $G_b$ has already been labeled with the set $\gamma_b(v)$, this only requires a visit of $G_b$, which takes $O(n_b)$ time if $G_b$ has $n_b$ vertices, and thus $O(n)$ time for all the blocks of $G$.
	
	If we found a B-node $b^*$ such that every neighbor of $b^*$ in $T$ is friendly and such that, if $G_{b^*}$ is non-trivial, then it admits a $\chi_{b^*}$-constrained representation, we can construct a rectilinear representation of $G$ in $O(n)$ time as follows. We root $T$ at $b^*$ and we perform a bottom-up visit of $T$. We now describe the processing of a B-node $b$ of $T$. If $b\neq b^*$, then let $c$ be the parent of $b$ and let $\hat b$ the parent of $c$. Denote by $n_b$ the number of vertices of $G_b$. Then the processing of $b$ constructs a rectilinear representation $(\mathcal E_{b\rightarrow c},\phi_{b\rightarrow c})$ of $G_{b\rightarrow c}$ such that $c$ is incident to $f^*_{\mathcal E_{b\rightarrow c}}$ and such that $\phi^{\mathrm{int}}_{b\rightarrow c}(c)\in \{0\degree,90\degree,180\degree\}$ if $G_{\hat b}$ is trivial, while $\phi^{\mathrm{int}}_{b\rightarrow c}(c)\in \{0\degree,90\degree\}$ if $G_{\hat b}$ is non-trivial. Such a representation exists given that $c$ is a friendly neighbor of $\hat b$; indeed, if $c$ was an unfriendly neighbor of $\hat b$, then the child of $b^*$ that is an ancestor of $c$ would not be a friendly neighbor of $b^*$, by repeated applications of Lemma~\ref{le:variable-simply-propagation}, contradicting the assumptions.
	
	Suppose first that $b$ is a leaf of $T$. If $G_b$ is trivial, then the desired representation is easily constructed. If $G_b$ is non-trivial then, by Theorem~\ref{th:2-con-variable}, we can construct in $O(n_b)$ time a rectilinear representation $(\mathcal E_{b\rightarrow c},\phi_{b\rightarrow c})$ of $G_{b\rightarrow c}=G_b$ such that $c$ is incident to $f^*_{\mathcal E_{b\rightarrow c}}$, and such that $\phi^{\mathrm{int}}_{b\rightarrow c}(c)\in \{0\degree,90\degree,180\degree\}$ if $G_{\hat b}$ is trivial, while $\phi^{\mathrm{int}}_{b\rightarrow c}(c)\in\{0\degree,90\degree\}$ if $G_{\hat b}$ is non-trivial. 
	
	Suppose next that $b\neq b^*$ is an internal node of $T$. Let $c_1,\dots,c_h$ be the children of $b$ and, for $i=1,\dots,h_b$, let $b_{i,1},\dots,b_{i,m(i)}$ be the children of $c_i$ in $T$. Assume that, for every $i=1,\dots,h_b$ and for every $j=1,\dots,m(i)$, we have a rectilinear representation $(\mathcal E_{b_{i,j}\rightarrow c_{i}},\phi_{b_{i,j}\rightarrow c_{i}})$ of $G_{b_{i,j}\rightarrow c_{i}}$ such that $c_i$ is incident to $f^*_{\mathcal E_{b_{i,j}\rightarrow c_i}}$, and such that $\phi^{\mathrm{int}}_{b_{i,j}\rightarrow c_i}(c_i) \in \{0\degree,90\degree,180\degree\}$ if $G_{b}$ is trivial, while $\phi^{\mathrm{int}}_{b_{i,j}\rightarrow c_i}(c_i) \in \{0\degree,90\degree\}$ if $G_{b}$ is non-trivial. Further, if $G_b$ is non-trivial then, by Theorem~\ref{th:2-con-variable}, we can construct in $O(n_b)$ time a $\chi_{b}$-constrained representation $(\mathcal E_{b},\phi_{b})$ of $G_{b}$ such that $c$ is incident to $f^*_{\mathcal E_{b}}$ and such that $\phi^{\mathrm{int}}_{b}(c)\in \{0\degree,90\degree,180\degree\}$ if $G_{\hat b}$ is trivial, while $\phi^{\mathrm{int}}_{b}(c)\in\{0\degree,90\degree\}$ if $G_{\hat b}$ is non-trivial. By Lemma~\ref{le:variable-simply-propagation}, it is possible to construct in $O(h_b)$ time a rectilinear representation $(\mathcal E_{b\rightarrow c},\phi_{b\rightarrow c})$ of $G_{b\rightarrow c}$ such that $c$ is incident to $f^*_{\mathcal E_{b\rightarrow c}}$ and such that $\phi^{\mathrm{int}}_{b\rightarrow c}(c)\in \{0\degree,90\degree,180\degree\}$ if $G_{\hat b}$ is trivial, while $\phi^{\mathrm{int}}_{b\rightarrow c}(c)\in\{0\degree,90\degree\}$ if $G_{\hat b}$ is non-trivial.
	
	The root $b^*$ of $T$ is treated very similarly to any other B-node, with two exceptions. Let $n_{b^*}$ be the number of vertices in $G_{b^*}$. First, if $G_{b^*}$ is non-trivial then, by Theorem~\ref{th:2-con-variable}, we construct in $O(n_{b^*})$ time a $\chi_{b^*}$-constrained representation $(\mathcal E_{b^*},\phi_{b^*})$ of $G_{b^*}$, with no requirement on the angle incident to a specific vertex. Second, the rectilinear representations $(\mathcal E_{b_{i,j}\rightarrow c_{i}},\phi_{b_{i,j}\rightarrow c_{i}})$ of the graphs $G_{b_{i,j}\rightarrow c_i}$ are merged in order to form a rectilinear representation $(\mathcal E,\phi)$ of $G$ by means of Lemma~\ref{le:variable-simply-central-block}, rather than Lemma~\ref{le:variable-simply-propagation}.
	
	The time needed for processing a B-node $b$ is in $O(n_b+h_b)$, where $h_b+1$ is the degree of $b$ in $T$. Since the sum of the degrees of the nodes of a tree with $O(n)$ nodes is in $O(n)$ and since $\sum_b n_b \in O(n)$, the algorithm constructs a rectilinear representation $(\mathcal E,\phi)$ of $G$ in $O(n)$ time.
	
	This concludes the construction of a rectilinear representation $(\mathcal E,\phi)$ of $G$. Finally, a planar rectilinear drawing can be constructed from $(\mathcal E,\phi)$ in $O(n)$ time~\cite{t-eggmnb-87}.
\end{proof}
\subsubsection*{Comparison with~\cite{dlop-ood-20}.} In the proofs of Theorems~\ref{th:2-con-variable} and~\ref{th:main}, we are given a tree describing the structure of an $n$-vertex outerplanar graph $G$, namely the extended dual tree of $G$ in Theorem~\ref{th:2-con-variable} and the block-cut-vertex tree of $G$ in Theorem~\ref{th:main}, and we want to visit the tree in such a way that each edge is traversed in both directions. Didimo et al.~\cite{dlop-ood-20} showed that this can be done in $O(n)$ time. More precisely, they established a general ``Reusability Principle''~\cite[Lemma 6.6]{dlop-ood-20} stating the following. If an algorithm $\mathcal A$ exists that traverses bottom-up a rooted tree\footnote{Didimo et al.~\cite{dlop-ood-20} stated this principle for the SPQR-tree of a $2$-connected graph, however their idea immediately extends to any tree.} and, for each node $s$ with parent $p$, computes in $O(f(n))$ time a value $\mathcal V_{s\rightarrow p}$ from the values $\mathcal V_{s_i\rightarrow s}$ computed for the children $s_i$ of $s$, then it is possible to execute $\mathcal A$ for all possible re-rootings of the tree in total $O(n\cdot f(n))$ time. In practice, one needs to perform a first $O(n)$-time bottom-up visit of the tree (which can be regarded as a pre-processing step), which ensures that, for each node $s$ of the tree, the value $\mathcal V_{s_i\rightarrow s}$ is computed for all the neighbors $s_i$ of $s$, except for one. Then one needs to show how to update in $O(f(n))$ time (indeed in constant time, if one is aiming for total linear time) the value $\mathcal V_{s\rightarrow p}$, once a different rooting of the tree is chosen and hence the parent of $s$ is not $p$ anymore. Key to this approach is the fact that a postorder traversal does not enter a subtree if this has been visited already, hence each value $\mathcal V_{s\rightarrow p}$ is computed just once if the tree is visited by means of postorder traversals.

In the proofs of Theorems~\ref{th:2-con-variable} and~\ref{th:main}, we follow the approach presented by Didimo et al.~\cite{dlop-ood-20}. In particular, no significant variation is made to the order in which the nodes of the tree are visited;  this order is still based on the fact that a postorder traversal does not enter a subtree if this has been visited already, and hence each subtree is visited just once. However, in both our proofs we have to face a possibility that can not occur in the algorithm by Didimo et al., namely that the value we need to compute does not exist. Roughly speaking, while Didimo et al.\ need to compute the minimum number of bends in certain orthogonal representations, and these numbers always exist, we need to compute the angles for which a certain rectilinear representation exists, and have to deal with the possibility that there might not be any such an angle. If that happens, the non-existence of the angles propagates towards the root of the current traversal, resulting in possibly many nodes of the tree for which no substantial information is computed during the traversal. This breaks the approach of having a pre-processing step that ensures that, for each node $s$ of the tree, the value we seek is computed for all the neighbors of $s$, except for one. Rather, we need to handle the possibility that, even when a node $s$ of the tree is visited in a traversal after the first one, we might not have the numerical values we seek available for all the children of $s$. One thing we can not afford, though, is to recurse on each subtree of $s$ every time we visit $s$, or even to just look at all the children of $s$ to see if the required angles have been computed successfully for them. Hence, we need to efficiently keep track of the neighbors of $s$ that have been visited, and of the visited neighbors of $s$ for which the computation of the required angles was successful. Luckily, in both Theorems~\ref{th:2-con-variable} and~\ref{th:main}, we can not have too many neighbors of a node for which the computation of the required angles was unsuccessful, as otherwise we can directly conclude that the sought representation does not exist. Then it suffices to keep track of a single neighbor $\eta(s)$ of each node $s$ for which we have already concluded that the required angles do not exist; hence, the label $\eta(s)$ allows us to avoid looking at the children of $s$ during a traversal of the tree if one of them is $\eta(s)$. Handling the neighbors of $s$ for which the required angles have been computed successfully only amounts to maintain a further label $\xi(s)$ in the proof of Theorem~\ref{th:main}, while it is more involved in the proof of Theorem~\ref{th:2-con-variable}. Namely, in the latter case, each successful computation leaves three neighbors of $s$ for which an optimal pair of angles might not have been established (even if it exists). Then we keep track of an interval $\mathcal I(s)$ of neighbors of $s$ for which such optimal pairs have been computed and update $\mathcal I(s)$  at each traversal that successfully computes a new optimal pair. We also keep track of some aggregate values (these are the labels $a(s)$, $b(s)$, $c(s)$, and $d(s)$) on the values of the angles of the computed optimal pairs; these are used to quickly check for the existence of the required angles after a re-rooting of the tree.

\section{Conclusions} \label{se:conclusions}

In this paper, we proved that the existence of a planar rectilinear drawing of an outerplanar graph can be tested in linear time, both if the plane embedding of the outerplanar graph is prescribed and if it is not; we also described how our algorithm can be adapted to test for the existence of an outerplanar rectilinear drawing. We believe that the tools we established to deal with the fixed and the variable embedding scenarios can be suitably merged in order to also deal with outerplanar graphs with a prescribed \emph{combinatorial embedding}, i.e., with a prescribed clockwise order of the edges incident to each vertex and with no prescribed outer face. Given the length of the paper, we decided to leave this for future research.

We conclude with two questions which are natural generalizations of the ones we answered in this paper.

First, is it possible to determine in $O(n)$ time the minimum number of bends needed for a planar orthogonal drawing of an $n$-vertex outerplanar graph $G$? This question seems challenging both in the fixed and in the variable embedding setting. A generalization of our approach would be feasible if a lemma akin to Lemma~\ref{le:structural-fixed-embedding} could be established. In particular, it is not clear to us whether the minimum number of bends in an orthogonal representation of $G$ in which a certain edge $uv$ is incident to the outer face can be determined  by only looking at a constant number of orthogonal representations for each $uv$-subgraph of $G$.

Second, is it possible to test in $O(n)$ time whether an $n$-vertex series-parallel graph admits a planar rectilinear drawing? The fastest known testing algorithms run in $O(n\log^3 n)$ and $O(n^3 \log n)$ time for the fixed~\cite{bkmnw-msms-17,t-eggmnb-87} and the variable~\cite{glm-sr-19} embedding setting, respectively; an $O(n)$-time algorithm for $n$-vertex $2$-connected series-parallel graphs with fixed embedding was recently presented by Didimo et al.~\cite{dklo-rpt-20}. We believe that the algorithm for outerplanar graphs from this paper could be extended to handle series-parallel graphs in which every parallel composition involves the edge between the poles; however, we are less confident that an extension of our approach to general series-parallel graphs is doable.

\subsubsection*{Acknowledgments} Thanks to Maurizio ``Titto'' Patrignani for many explanations about results and techniques from the state of the art.
\bibliographystyle{splncs03} 
\bibliography{bibliography}

\end{document}